\pdfoutput=1
\documentclass[sigplan,10pt]{acmart}\settopmatter{}

\startPage{1}

\bibliographystyle{ACM-Reference-Format}


\usepackage{booktabs}   
\usepackage[T1]{fontenc}
\usepackage{flushend}
\usepackage[scaled=0.75]{beramono}

\usepackage{graphicx}
\usepackage{amsmath,amssymb,stmaryrd}
\usepackage[inference]{semantic}
\usepackage{nanoml}
\usepackage{xspace}
\usepackage{paralist}
\usepackage{mathpartir}
\usepackage{mathtools}
\usepackage{enumitem}
\usepackage{mathrsfs}
\usepackage{algorithm,algorithmicx}
\usepackage[noend]{algpseudocode}
\usepackage{siunitx}
\usepackage{amsthm}
\theoremstyle{definition}

\newtheorem*{example*}{Example}
\newtheorem{theorem}{Theorem}
\newtheorem{lemma}{Lemma}

\makeatletter
\protected\def\ccell#1#{%
  \kern-\fboxsep
  \@ccell{#1}%
}
\def\@ccell#1#2#3{%
  \colorbox#1{#2}{#3}%
  \kern-\fboxsep
}
\makeatother

\usepackage[all=normal,mathspacing,mathdisplays,floats,paragraphs,wordspacing,charwidths]{savetrees}

\usepackage{multirow}
\usepackage{hhline}

\newcommand{\head}[1]{\emph{#1}}
\newif\iflong
\longtrue

\usepackage{calc}
\usepackage{ifthen}

\newcommand{\li}[2][XX]{%
   \ifthenelse{\equal{#1}{XX}}%
      {\ensuremath{L(#2)}}%
            {\ensuremath{L^{#1}(#2)}}}

\newlength{\rWidth}

\newcommand{\fun}[3][XX]{%
   \ifthenelse{\equal{#1}{XX}}%
      {\ensuremath{#2 {\rightarrow} #3}}%
      { \settowidth{\rWidth}{\ensuremath{#1}}%
        \ensuremath{ #2\hspace{.1em} {\xrightarrow{\hspace{\rWidth}}\hspace{-1.1\rWidth}}^%
         {#1}
         \hspace{0.3\rWidth}\hspace{.1em} #3}}}


\newcommand{\Rule}[4][]{\ensuremath{\inferrule*[lab={\footnotesize{(#2)}},#1]{#3}{#4}}}
\def\MathparLineskip{\lineskip=0.29cm}

\newcommand{\p}[1]{\ensuremath{(#1)}}
\newcommand{\sq}[1]{\ensuremath{[#1]}}
\newcommand{\set}[1]{\ensuremath{\{#1\}}}
\newcommand{\tuple}[1]{\ensuremath{\langle#1\rangle}}
\newcommand{\interp}[1]{\ensuremath{\llbracket #1 \rrbracket}}
\newcommand{\interps}[1]{\ensuremath{\llparenthesis #1 \rrparenthesis}}

\newcommand{\defeq}{\overset{\underset{\text{def}}{}}{=}}

\newcommand{\bbB}{\ensuremath{\mathbb{B}}}
\newcommand{\bbN}{\ensuremath{\mathbb{N}}}
\newcommand{\bbZ}{\ensuremath{\mathbb{Z}}}
\newcommand{\calI}{\ensuremath{\mathcal{I}}}
\newcommand{\calJ}{\ensuremath{\mathcal{J}}}
\newcommand{\calT}{\ensuremath{\mathcal{T}}}
\newcommand{\scrB}{\ensuremath{\mathscr{B}}}

\newcommand{\subst}[3]{\ensuremath{\sq{#1/#2}#3}}
\newcommand{\sharing}{\ensuremath{\mathrel{\curlyveedownarrow}}}
\newcommand{\potv}[3]{\ensuremath{\Phi_{#1}\p{#2 : #3}}}
\newcommand{\potc}[2]{\ensuremath{\Phi_{#1}\p{#2}}}
\newcommand{\pot}[1]{\ensuremath{\Phi\p{#1}}}
\newcommand{\condv}[3]{\ensuremath{\Psi_{#1}\p{#2:#3}}}
\newcommand{\condc}[2]{\ensuremath{\Psi_{#1}\p{#2}}}

\newcommand{\hole}[1]{\ensuremath{\mathring{#1}}}
\newcommand{\varsof}[1]{\ensuremath{\mathsf{fv}(#1)}}
\newcommand{\condcpara}[3]{\ensuremath{\Psi_{#1}^{#2}\p{#3}}}
\newcommand{\fold}[1]{\ensuremath{\mathsf{fold}\p{#1}}}

\newcommand{\etrue}{\ensuremath{\mathsf{true}}}
\newcommand{\efalse}{\ensuremath{\mathsf{false}}}
\newcommand{\econd}[3]{\ensuremath{\mathsf{if}\p{#1,#2,#3}}}
\newcommand{\elet}[3]{\ensuremath{\mathsf{let}\p{#1,#2.#3}}}
\newcommand{\enil}{\ensuremath{\mathsf{nil}}}
\newcommand{\econs}[2]{\ensuremath{\mathsf{cons}\p{#1,#2}}}
\newcommand{\ematl}[5]{\ensuremath{\mathsf{matl}\p{#1,#2,#3.#4.#5}}}
\newcommand{\eabs}[2]{\ensuremath{\lambda\p{#1.#2}}}
\newcommand{\efix}[3]{\ensuremath{\mathsf{fix}\p{#1.#2.#3}}}
\newcommand{\eapp}[2]{\ensuremath{\mathsf{app}\p{#1,#2}}}
\newcommand{\eimp}{\ensuremath{\mathsf{impossible}}\xspace}
\newcommand{\econsumename}{\ensuremath{\mathsf{tick}}\xspace}
\newcommand{\econsume}[2]{\ensuremath{\econsumename\p{#1,#2}}}
\newcommand{\ehole}{\ensuremath{\circ}}
\newcommand{\elets}[2]{\ensuremath{\mathsf{lets}\p{#1.#2}}}

\newcommand{\tprod}[2]{\ensuremath{#1 \times #2}}

\newcommand{\jval}[1]{\ensuremath{#1 \in \mathsf{Val}}}
\newcommand{\jstep}[4]{\ensuremath{\tuple{#1,#3} \mapsto \tuple{#2,#4}}}
\newcommand{\jsteps}[4]{\ensuremath{\tuple{#1,#3} \mapsto^* \tuple{#2,#4}}}
\newcommand{\jstepn}[4]{\ensuremath{\tuple{#1,#3} \mapsto^n \tuple{#2,#4}}}
\newcommand{\jsort}[3]{\ensuremath{#1 \vdash #2 \in #3}}
\newcommand{\jwftype}[2]{\ensuremath{#1 \vdash #2~\mathsf{type}}}
\newcommand{\jwfctxt}[1]{\ensuremath{\vdash #1~\mathsf{context}}}
\newcommand{\jsubty}[3]{\ensuremath{#1 \vdash #2 <: #3}}
\newcommand{\jsharing}[4]{\ensuremath{#1 \vdash #2 \sharing #3 \mid #4}}
\newcommand{\jctxsharing}[3]{\ensuremath{\vdash #1 \sharing #2 \mid #3}}
\newcommand{\jprop}[2]{\ensuremath{#1 \models #2}}
\newcommand{\jatyping}[3]{\ensuremath{#1 \vdash #2 : #3}}
\newcommand{\jstyping}[3]{\ensuremath{#1 \vdash #2 \dblcolon #3}}
\newcommand{\jctxtyping}[3][\cdot]{\ensuremath{#1 \vdash #2 \dblcolon #3}}

\newcommand{\jtunfoldnil}[3]{\ensuremath{#1 \vdash #2 \triangleleft^\mathsf{nil} #3}}
\newcommand{\jtunfoldcons}[3]{\ensuremath{#1 \vdash #2 \triangleleft^\mathsf{cons} #3}}

\newcommand{\jfill}[4]{\ensuremath{#1 \vdash #2 \dblcolon #3 \rightsquigarrow #4}}
\newcommand{\jsynth}[3]{\jfill{#1}{\ehole}{#2}{#3}}
\newcommand{\jafill}[4]{\ensuremath{#1 \vdash #2 \dblcolon #3 \overset{a}{\rightsquigarrow} #4}}
\newcommand{\jasynth}[3]{\jafill{#1}{\ehole}{#2}{#3}}
\newcommand{\jproppara}[3]{\ensuremath{#1 \models_{#3} #2}}

\newcommand{\many}[1]{\overrightarrow{#1}}
\newcommand{\vbind}[2]{\ensuremath{#1:#2}}

\newcommand{\tbool}{\ensuremath{\mathsf{bool}}}

\newcommand{\tlist}[1]{\ensuremath{L\p{#1}}}
\newcommand{\tpot}[2]{\ensuremath{{#1}^{#2}}}
\newcommand{\tsubset}[2]{\ensuremath{\set{ #1 \mid #2 }}}
\newcommand{\tarrow}[3]{\ensuremath{#1{\,:\,}#2 \rightarrow #3}}
\newcommand{\tarrowm}[4]{\ensuremath{{#4}\cdot\p{#1{\,:\,}#2 \rightarrow #3}}}
\newcommand{\trefined}[3]{\tpot{\tsubset{#1}{#2}}{#3}}

\newcommand{\bindvar}[2]{\vbind{#1}{#2}}

\newcommand{\tscalar}[1]{\ensuremath{#1~\mathsf{scalar}}}
\newcommand{\tbot}{\ensuremath{\mathsf{?}}}

\newcommand{\melems}[1]{\mathsf{elems}~#1}

\newcommand{\sol}{\ensuremath{\mathcal{L}}}

\newcommand{\tclist}[1]{\ensuremath{CL\p{#1}}}
\newcommand{\tilist}[1]{\ensuremath{SL\p{#1}}}

\newcommand{\cegissol}{\ensuremath{\mathcal{C}}}
\newcommand{\cegisex}{\ensuremath{\mathcal{E}}}

\newcommand{\simplify}{\ensuremath{\mathbb{C}}}

\algrenewcommand\algorithmicrequire{\textbf{Input:}}
\algrenewcommand\algorithmicensure{\textbf{Output:}}

\begin{document}

\title{Resource-Guided Program Synthesis}
\iflong
\subtitle{Extended Version}
\fi


\author{Tristan Knoth}
\affiliation{
  \institution{University of California, San Diego}
}
\email{tknoth@ucsd.edu}

\author{Di Wang}
\affiliation{
  \institution{Carnegie Mellon University}
}
\email{diw3@cs.cmu.edu}

\author{Nadia Polikarpova}
\affiliation{
  \institution{University of California, San Diego}
}
\email{npolikarpova@ucsd.edu}

\author{Jan Hoffmann}
\affiliation{
  \institution{Carnegie Mellon University}
}
\email{jhoffmann@cmu.edu}

\newcommand{\custompar}[1]{\parskip 0pt \textbf{\textit{#1}}}
\newcommand{\Implies}{\Rightarrow}
\renewcommand{\And}{\wedge}
\newcommand{\Or}{\vee}
\newcommand{\Subt}{<:}
\newcommand{\Consi}{\mathrel{\mathrm{\wedge:}}}
\newcommand{\HasT}{\;\mathrel{\mathrm{::}}\;}
\newcommand{\App}[2]{{#1}\ {#2}}

\newcommand{\etal}{\textit{et al.}\@\xspace}
\newcommand{\eg}{\textit{e.g.}\@\xspace}
\newcommand{\ie}{\textit{i.e.}\@\xspace}
\newcommand{\wrt}{\textit{wrt.}\@\xspace}

\newcommand{\mtimes}{\cdot}
\newcommand{\spl}{\mid \mid}

\newcommand{\todo}[1]{\textcolor{ACMRed}{(TODO: {#1})}}

\newcommand{\tname}[1]{\textsc{#1}\xspace}
\newcommand{\tool}{\tname{ReSyn}}
\newcommand{\synquid}{\tname{Synquid}}
\newcommand{\raml}{RaML}
\newcommand{\typesys}{$\mathrm{Re}^2$\xspace}
\newcommand{\typesysa}{$\mathrm{Re}^2_A$\xspace}

\newcommand{\Omit}[1]{}
\newcommand{\jan}[1]{\textbf{\textcolor{blue}{\ Jan says: #1}}}
\newcommand{\nadia}[1]{\textbf{\textcolor{purple}{\ #1}}}
\newcommand{\tristan}[1]{\textbf{\textcolor{orange}{\ #1}}}

\newcommand{\numMB}{16\xspace}
\newcommand{\numBench}{43\xspace}
\newcommand{\slowdown}{$2.5\times$\xspace}
\newcommand{\nincslowdown}{$2\times$\xspace}

\begin{abstract}
This article presents \emph{resource-guided synthesis},
a technique for synthesizing recursive programs
that satisfy both a functional specification and a symbolic resource bound. 
The technique is type-directed
and rests upon a novel \emph{type system}
that combines polymorphic refinement types with potential annotations
of automatic amortized resource analysis.
The type system enables efficient constraint-based type checking
and can express precise refinement-based resource bounds.
The proof of type soundness shows that synthesized
programs are correct by construction. 
By tightly integrating program exploration and type checking, 
the synthesizer can leverage the user-provided resource bound to guide the search,
eagerly rejecting incomplete programs that consume too many resources.
An implementation in the resource-guided synthesizer \tool 
is used to evaluate the technique on a range of recursive data structure manipulations.
The experiments show that \tool synthesizes programs 
that are asymptotically more efficient than those generated by a resource-agnostic synthesizer.
Moreover, synthesis with \tool is faster than a naive combination of synthesis and resource analysis.
\tool is also able to generate 
implementations that have a constant resource
consumption for fixed input sizes,
which can be used to mitigate side-channel attacks.
\end{abstract}

 \begin{CCSXML}
<ccs2012>
<concept>
<concept_id>10011007.10011074.10011092.10011782</concept_id>
<concept_desc>Software and its engineering~Automatic programming</concept_desc>
<concept_significance>500</concept_significance>
</concept>
<concept>
<concept_id>10003752.10003790.10003794</concept_id>
<concept_desc>Theory of computation~Automated reasoning</concept_desc>
<concept_significance>300</concept_significance>
</concept>
</ccs2012>
\end{CCSXML}

\ccsdesc[500]{Software and its engineering~Automatic programming}
\ccsdesc[300]{Theory of computation~Automated reasoning}

\copyrightyear{2019}
\acmYear{2019}
\setcopyright{acmlicensed}
\acmConference[PLDI '19]{Proceedings of the 40th ACM SIGPLAN Conference on Programming Language Design and Implementation}{June 22--26, 2019}{Phoenix, AZ, USA}
\acmBooktitle{Proceedings of the 40th ACM SIGPLAN Conference on Programming Language Design and Implementation (PLDI '19), June 22--26, 2019, Phoenix, AZ, USA}
\acmPrice{15.00}
\acmDOI{10.1145/3314221.3314602}
\acmISBN{978-1-4503-6712-7/19/06}

\keywords{Program Synthesis, Automated Amortized Resource Analysis, Refinement Types}  

\maketitle

\section{Introduction}\label{sec:intro}

In recent years, \emph{program synthesis} has emerged as a promising
technique for automating low-level aspects of programming~\cite{GulwaniHS12,Solar-Lezama13,TorlakB14}.
Synthesis technology enables users to create programs by describing
desired behavior with
input-output examples~\cite{OseraZd15,FeserChDi15,Smith-Albarghouthi:PLDI16,FengMGDC17,FengM0DR17,FengMBD18,WangDS18,WangCB17}, 
natural language~\cite{Yaghmazadeh-al:OOPSLA17}, 
and partial or complete formal specifications~\cite{SrivastavaGF10,KneussKuKuSu13,PolikarpovaKS16,InalaPQLS17,QiuS17}.
If the input is a formal specification, synthesis algorithms can not only create
a program but also a proof that the program meets the given specification
~\cite{SrivastavaGF10,KneussKuKuSu13,PolikarpovaKS16,QiuS17}.

One of the greatest challenges in software development is to write
programs that are not only correct but also efficient with respect to
memory usage, execution time, or domain specific
resource metrics.
For this reason, automatically optimizing program performance has long been a goal of synthesis,
and several existing techniques tackle this problem 
for \emph{low-level straight-line code}~\cite{Schkufza0A13,Phothilimthana14,Sharma15,PhothilimthanaT16,Bornholt16}
or add efficient synchronization to concurrent programs~\cite{CernyCHRS11,GuptaHRST15,CernyCHRRST15,FerlesGDS18}.
However, the developed techniques are not applicable to recent advances in the 
synthesis of \emph{high-level} looping or recursive programs manipulating custom data structures
~\cite{KneussKuKuSu13,OseraZd15,FeserChDi15,PolikarpovaKS16,InalaPQLS17,QiuS17}.
These techniques lack the means to analyze and 
understand the resource usage of the synthesized programs.
Consequently, they cannot take into account the program's efficiency
and simply return the first program that arises during the search and satisfies the functional specification.

\emph{In this work, we study the problem of synthesizing high-level
  recursive programs given \emph{both} a functional specification of a
  program and a bound on its resource usage.}  A naive solution would
be to first generate a program using conventional program synthesis
and then use existing automatic static resource analyses~\cite{RAML12,TiML,CicekBGGH16} to
check whether its resource usage satisfies the bound.  Note, however,
that for recursive programs, both synthesis and resource analysis 
are undecidable in theory and expensive in practice.  
Instead, in this paper we propose
\emph{resource-guided synthesis}: an approach that tightly integrates
program synthesis and resource analysis, and uses the resource bound
to guide the synthesis process, generating programs that are efficient
by construction.
%

\vspace{-.8ex}
\paragraph{Type-Driven Synthesis}

\emph{In a nutshell, the idea of this work is to combine
  \emph{type-driven program synthesis}, pioneered in the work on
  \synquid~\cite{PolikarpovaKS16}, with type-based automatic amortized
  resource analysis
  (AARA)~\cite{Jost03,Jost10,HoffmannAH10,HoffmannW15} as implemented
  in Resource Aware ML (RaML)~\cite{ramlWeb}}. 
%
Type-driven synthesis and AARA are a perfect match because they are
both based on decidable, constraint-based type systems that can be
easily checked with off-the-shelf constraint solvers.

In \synquid, program specifications
are written as \emph{refinement types}~\cite{VazouRoJh13,KnowlesF09}. The key to
efficient synthesis is \emph{round-trip type checking},
which uses an SMT solver to aggressively prune the search space by rejecting partial
programs that do not meet the specification (see
\autoref{sec:background:synquid}).  Until now, types have only
been used in the context of synthesis to specify functional
properties.

AARA is a type-based technique for automatically deriving symbolic
resource bounds for functional programs. 
The idea is to add resource annotations to data types,
in order to specify a potential function that
maps values of that type to non-negative numbers. 
The type system
ensures that the initial potential is sufficient to cover the cost of
the evaluation. By a priori fixing the shape of the potential
functions, type inference can be reduced to linear programming (see
\autoref{sec:background:aara}).


\vspace{-.8ex}
\paragraph{The\ \ \typesys Type System}

\emph{The first contribution of this paper is a new type system, which we
dub \typesys{}---for \emph{re}finements and \emph{re}sources---that
combines polymorphic refinement types with AARA (\autoref{sec:typesys}).} 
\typesys{} is a conservative extension of \synquid's refinement type system and
RaML's affine type system with linear potential annotations.
As a result, \typesys{} can express logical
assertions that are required for effectively specifying program
synthesis problems. In addition, the type system features annotations
of numeric sort in the same refinement language to express potential
functions. Using such annotations, programmers can express precise resource
bounds that go beyond the template potential functions of RaML. 

The features that distinguish \typesys{} from other refinement-based type
systems for resource analysis~\cite{TiML,CicekBGGH16,Radicek18} are 
\begin{inparaenum}[(1)] 
\item the combination of logical and quantitative refinements
and
\item the use of AARA, 
which simplifies resource constraints
and naturally applies to non-monotone resources like memory that can become
available during the execution.
\end{inparaenum}
These features also pose nontrivial technical challenges:
the interaction between substructural and dependent types is known to be tricky~\cite{Krishnaswami15,LagoG11},
while polymorphism and higher-order functions are challenging for AARA
(one solution is proposed in~\cite{Jost10}, but their treatment of polymorphism is not fully formalized).

In addition to the design of \typesys, we prove the soundness of the
type system with respect to a small-step cost semantics.  
In the
formal development, we focus on a simple call-by-value functional
language with Booleans and lists,
where type refinements are restricted to 
linear inequalities over lengths of lists.
However, we structure the formal development to emphasize that \typesys
can be extended with user-defined data types, more expressive
refinements, or non-linear potential annotations.
The proof strategy itself is a contribution of this paper. The type
soundness of the logical refinement part of the system is
inspired by TiML~\cite{TiML}.
The main novelty is the
soundness proof of the potential annotations using a small-step cost
semantics instead of RaML's big-step evaluation semantics.



\vspace{-.5ex}
\paragraph{Type-Driven Synthesis with \typesys}

\emph{The second contribution of this paper is a resource-guided
  synthesis algorithm based on \typesys}.
In \autoref{sec:synthesis},
we first develop a system of \emph{synthesis rules} that prescribe how to derive well-typed programs
from \typesys types,
and prove its soundness \wrt the \typesys type system.
We then show how to algorithmically derive programs
using a combination of backtracking search
and constraint solving.
In particular this requires solving a new form of constraints we call \emph{resource constraints},
which are constrained linear inequalities over unknown numeric refinement terms.
To solve resource constraints, we develop a custom solver 
based on counter-example guided inductive synthesis~\cite{Solar-LezamaTBSS06}
and SMT~\cite{deMoura-Bjorner:TACAS08}.

\vspace{-.5ex}
\paragraph{The \tool Synthesizer}

\emph{The third contribution of this paper is the implementation and
  experimental evaluation of the first resource-aware synthesizer for recursive programs.}
We implemented our synthesis algorithm in a tool called \tool, 
which takes as input 
\begin{inparaenum}[(1)]
\item a \emph{goal type} that specifies the logical refinements and resource requirements of the program, and
\item types of \emph{components} (\ie library functions that the program may call).
\end{inparaenum}
\tool then synthesizes a program that provably meets the specification
(assuming the soundness of components).
%

To evaluate the scalability of the synthesis algorithm and
the quality of the synthesized programs,
we compare \tool with baseline \synquid
on a variety of general-purpose data structure operations,
such as 
eliminating duplicates from a list or
computing common elements between two lists.
The evaluation (\autoref{sec:eval}) shows that \tool is able to synthesize programs 
that are asymptotically more efficient than those generated by \synquid.
Moreover, the tool scales better than a naive combination of synthesis and resource analysis.



\section{Background and Overview}\label{sec:background}

This section provides the necessary background on type-driven program
synthesis (\autoref{sec:background:synquid}) and automatic resource
analysis (\autoref{sec:background:aara}).  We then describe and
motivate their combination in \typesys{} and showcase novel features
of the type system (\autoref{sec:background:re2}).  Finally, we
demonstrate how \typesys{} can be used for resource-guided synthesis
(\autoref{sec:background:resyn}).

\subsection{Type-Driven Program Synthesis}\label{sec:background:synquid}

Type-driven program synthesis~\cite{PolikarpovaKS16} is a technique for automatically generating
functional programs from their high-level specifications
expressed as \emph{refinement types}~\cite{KnowlesF09,RondonKaJh08}.
For example, a programmer might describe a function
that computes the common elements between two lists
using the following type signature:
\vspace{-.8ex}
\begin{nanoml}
common::l1:List a -> l2:List a 
  -> {_v:List a|elems _v == elems l1 *set elems l2}  
\end{nanoml}
\vspace{-.8ex}
Here, the return type of \T{common} is \emph{refined} with the predicate
\T{elems _v == elems l1 *set elems l2},
which restricts the set of elements of the output list $\nu$%
\footnote{Hereafter the bound variable of the refinement is always called $\nu$ and the binding is omitted.}
to be the intersection of the sets of elements of the two arguments.
Here \T{elems} is a user-defined logic-level function, also called \emph{measure}~\cite{KawaguchiRJ09,VazouRoJh13}.
In addition to the \emph{synthesis goal} above,
the synthesizer takes as input a \emph{component library}: 
signatures of data constructors and functions it can use. 
In our example, the library includes the list constructors \T{Nil} and \T{Cons}
and the function
\vspace{-.8ex}
\begin{nanoml}
member::x:a -> l:List a -> {Bool|_v = (x in elems l)}
\end{nanoml}
\vspace{-.8ex}
which determines whether a given value is in the list.
Given this goal and components, 
the type-driven synthesizer \synquid~\cite{PolikarpovaKS16} produces an implementation of \T{common} in \autoref{fig:inefficient-common}.

\begin{figure}[t]
  \centering
  \small
  \begin{nanoml}[numbers=left, numbersep=0pt]
  common = \l1.\l2.match l1 with Nil -> Nil
    Cons x xs -> if !(member x l2)
        then common xs l2
        else Cons x (common xs l2)
  \end{nanoml}  
  \caption{Synthesized program that computes common elements between two lists}\label{fig:inefficient-common}  
\end{figure}

\paragraph{The Synthesis Mechanism}

Type-driven synthesis works by systematically exploring the space of programs
that can be built from the component library
and validating candidate programs against the goal type
using a variant of liquid type inference~\cite{RondonKaJh08}.
To validate a program against a refinement type, 
liquid type inference generates a system of \emph{subtyping constraints}
over refinement types.
The subtyping constraints are then reduced to implications between refinement predicates.
For example, checking \T{common xs l2} in line 3 of \autoref{fig:inefficient-common}  
against the goal type reduces to validating the following implication:
\small
\begin{multline*}
(\melems{l_1} = \{x\} \cup \melems{xs}) \wedge 
(x \notin \melems{l_2}) \wedge\\
(\melems{\nu} = \melems{xs} \cap \melems{l_2})
 \implies \melems{\nu} = \melems{l_1} \cap \melems{l_2}
\end{multline*}
\normalsize
Since this formula belongs to a decidable theory of uninterpreted functions and arrays,
its validity can be checked by an SMT solver~\cite{deMoura-Bjorner:TACAS08}.
In general, the generated implications may contain unknown predicates.
In this case, type inference reduces to a system of \emph{constrained horn clauses}~\cite{Bjorner-al:15},
which can be solved via predicate abstraction.

\Omit{
\paragraph{Round-Trip Type Checking}

To make synthesis efficient,
\synquid type-checks each candidate incrementally as it is being constructed,
trying to discard an ill-typed program prefix as early as possible.
For example, while enumerating candidates for the function \T{common},
the synthesizer constructs the following prefix
(the missing program part is marked with \T{??}):
\begin{nanoml}
common = \l1.\l2.match l1 with Nil -> l2
                        Cons x xs -> ??
\end{nanoml}  
This prefix can be safely discarded---%
together with all of its numerous possible completions---%
since no version of the \T{Cons} case
can make the \T{Nil} case satisfy the goal type.
To reject program prefixes early,
\synquid extends liquid type inference to \emph{round-trip type checking},
which aggressively propagates type information top-down from the goal
and generates subtyping constraints as close as possible to the leaves. 

\paragraph{Condition Abduction}

The second key ingredient of efficient synthesis
is the \emph{condition abduction} mechanism,
which enables \synquid to generate conditionals---like the one inside \T{common}--%
without blindly enumerating all possible guards.
Instead, \synquid first generates the branch \T{common xs l2}
and uses the constraint solver to infer the weakest path condition (different from $\bot$)
that would make this branch satisfy the goal type;
if such a formula exists---%
in our example, $x \notin \melems{l_2}$---%
it is used to construct a goal type for synthesizing the guard.
\jan{I'm concerned about the length of this section. I think the two previous paragraphs
don't add a lot here. Could they maybe be moved to the synthesis section?
We could focus on how this works in the new type system.}
}

\paragraph{Synthesis and Program Efficiency}

The program in \autoref{fig:inefficient-common} is correct,
but not particularly efficient:
it runs roughly in time $n{\cdot}m$, 
where $m$ is the length of \T{l1} and $n$ is the length of \T{l2},
since it calls the \T{member} function (a linear scan) for every element of \T{l1}.
The programmer might realize that 
keeping the input lists \emph{sorted}
would enable computing common elements in linear time
by scanning the two lists in parallel.
To communicate this intent to the synthesizer,
they can define the type of (strictly) sorted lists
by augmenting a traditional list definition with a simple refinement:
\vspace{-.8ex}
\begin{nanoml}
data SList a where SNil::SList a
  SCons::x:a -> xs:SList {a|x < _v} -> SList a
\end{nanoml}
\vspace{-.8ex}
This definition says that a sorted list is either empty,
or is constructed from a head element \T{x} 
and a tail list \T{xs},
as long as \T{xs} is sorted and all its elements are larger than \T{x}.%
\footnote{Following \synquid, 
our language imposes an implicit constraint on all type variables
to support equality and ordering. Hence, they cannot be instantiated with arrow types.
This could be lifted by adding type classes.}
Given an updated synthesis goal 
(where \T{selems} is a version of \T{elems} for \T{SList})
\vspace{-.8ex}
\begin{nanoml}
common'::l1:SList a -> l2:SList a 
  -> {_v:List a|elems _v == selems l1 *set selems l2}  
\end{nanoml}
\vspace{-.8ex}
and a component library that includes
\T{List}, \T{SList}, and $<$ (but not \T{member}!),
\synquid can synthesize an efficient program shown in in \autoref{fig:efficient-common}.

\begin{figure}[t]  
  \small
  \centering
  \begin{nanoml}[numbers=left, numbersep=0pt]
  common' = \l1.\l2.match l1 with SNil -> Nil
    SCons x xs -> match l2 with SNil -> Nil
      SCons y ys -> 
        if x < y then common' xs l2
                 else if y < x then common' l1 ys
                               else Cons x (common' xs ys)
  \end{nanoml}
  \caption{A more efficient version of the program in \autoref{fig:inefficient-common} for sorted lists}\label{fig:efficient-common}
\end{figure}

However, if the programmer leaves the function \T{member} in the library,
\synquid will synthesize the inefficient implementation in \autoref{fig:inefficient-common}.
In general, \synquid explores candidate programs in the order of size
and returns the first one that satisfies the goal refinement type.
This can lead to suboptimal solutions,
especially as the component library grows larger
and allows for many functionally correct programs.
To avoid inefficient solutions, the synthesizer has
to be aware of the resource usage of the candidate programs.

\subsection{Automatic Amortized Resource Analysis}\label{sec:background:aara}

To reason about the resource usage of programs we take inspiration from 
\emph{automatic amortized resource analysis (AARA)}~\cite{Jost03,Jost10,HoffmannAH10,HoffmannW15}. 
AARA is a state-of-the-art technique for automatically deriving symbolic resource bounds on functional programs,
and is implemented for a subset of OCaml in Resource Aware ML
(RaML)~\cite{HoffmannW15,ramlWeb}.
For example, RaML 
is able to automatically derive the worst-case bound $2m + n{\cdot}m$ on the number of recursive calls
for the function \T{common} and $m+n$ for \T{common'}
~\footnote{In this section we assume for simplicity 
that the resource of interest is the number of recursive calls.
Both AARA and our type system support user-defined cost metrics (see \autoref{sec:typesys} for details).}.


\paragraph{Potential Annotations}
AARA is inspired by the \emph{potential method} for
manually analyzing the worst-case cost of a sequence of operations
\cite{tarjan85}. It uses annotated types to introduce
potential functions that map program states to non-negative
numbers. To derive a bound, we have to statically ensure that the
potential at every program state is sufficient to cover the cost of
the next transition and the potential of the following state. In this
way, we ensure that the initial potential is an upper bound on the
total cost. 

The key to making this approach effective is to closely integrate the
potential functions with data structures~\cite{Jost03,Jost10}. For
instance, in RaML the type \li[1]{\T{int}} stands for a list that contains
one unit of potential for every element. This type
defines the potential function
$\phi(\ell{:}\li[1]{\T{int}}) = 1\cdot|\ell|$. 
The potential can be used to pay for a recursive call
(or, in general, cover resource usage) 
or to assign potential to other data structures.

\paragraph{Bound Inference}
Potential annotations can be derived automatically by starting 
with a symbolic type
derivation that contains fresh variables for the potential annotations
of each type,
and applying syntax directed type rules that impose
local constraints on the annotations.
The integration of data structures and potential ensures
that these constraints are linear even for polynomial potential annotations.


\Omit{
\paragraph{Cost Metrics}

Before we discuss resource analysis, we have to define a cost metric
that defines the resource usage of programs.
AARA is not limited to a specific resource but works for user-defined
cost metrics.  This includes cost metrics that model non-monotone
resources like memory that can become available during an execution.

For simplicity, we assume a fixed cost metric that counts the number
of function applications in an evaluation in this article. Recall the
functions defined in \autoref{fig:inefficient-common} and
\autoref{fig:efficient-common}. With this cost metric, the resource
usage of the evaluation of the expression \T{common' l1 l2} is
$\max(m,n)$, where $m$ is the length of \T{l1} and $n$ is the length
of $l2$. Similarly, the resource usage of the expression \T{common l1
  l2} is $2m + n{\cdot}m$.

In the metatheory of \typesys{}, we represent cost metrics with
\T{tick} annotations that define a constant resource cost at a
particular program point. In this way, resource accounting is limited
to one syntactic form. Such ticks could be available at the user-level
to specify resource usage. However, in the context of synthesis, it
makes more sense to think of them as being part of an intermediate
representation that reflects a given cost metric.
%

\paragraph{Potential Annotations}
AARA is inspired by the \emph{potential method} for
manually analyzing the worst-case cost of a sequence of operations
\cite{tarjan85}. It uses annotated types to introduce
potential functions that map program states to non-negative
numbers. To derive a bound, we have to statically ensure that the
potential at every program state is sufficient to cover the cost of
the next transition and the potential of the following state. In this
way, we ensure that the initial potential is an upper bound on the
total cost. 

The key to making this approach effective is to closely integrate the
potential functions with data structures~\cite{Jost03,Jost10}. For
instance, the type \li[1]{\T{int}} expresses that we have one
potential unit for every element of a list. In other words, the type
defines the potential function
$\phi(\ell{:}\li[1]{\T{int}}) = 1\cdot|\ell|$. The intuition is that
we have one potential unit per element of the list that can be used to
cover the resource usage or to assign potential to other data
structures.


\paragraph{Type Inference}
To automatically derive such a type annotations, we start with a symbolic type
derivation that contains fresh variables for the potential annotations
of each type. We then apply syntax directed type rules that impose
local constraints on the annotations. 
\Omit{
For example, we would generate
the constraint $v = p$ if \T{compress} has the symbolic type as given
before and there is a call \T{compress(x)} where $\li[v]{\T{int}}$ is
the symbolic type of the variable $x$ in the type derivation. In
practice, the constraint set we derive for \T{compress} is larger (41
constraints in Resource Aware ML) but it is equivalent to the two
constraints in the given type schema. 
}
To derive a bound for a function we apply an off-the-shelf
linear-programming (LP) solver with an objective function that
requires to minimize the potential of the argument.

\paragraph{Polynomial Potential}

Maybe surprisingly, this simple idea scales well to different language
features and potential functions. It has been shown how this technique
can be applied to derive multivariate polynomials
bounds~\cite{HoffmannH10,HoffmannAH10} and lower
bounds~\cite{NgoDFH16}, and to derive bounds that depend on user-defined
inductive types~\cite{Jost10,HoffmannW15}. 
AARA is implemented for a subset of OCaml in Resource Aware ML
(RaML)~\cite{HoffmannW15,ramlWeb}. For example, RaML automatically
derives the worst-case bound $2m + n{\cdot}m$
for the function \T{common}.
}

\subsection{Bounding Resources with \typesys{}}
\label{sec:background:re2}

To reason about resource usage in type-driven synthesis, 
we integrate AARA's potential annotations and refinement types
into a novel type system that we call \typesys. 
In \typesys, a refinement type can be annotated with a \emph{potential term} $\phi$ of numeric sort,
which is drawn from the same logic as refinements.
%
Intuitively, the type $\tpot{R}{\phi}$
denotes values of refinement type $R$ with $\phi$ units of potential.
In the rest of this section we illustrate features of \typesys
on a series of examples,
and delay formal treatment to \autoref{sec:typesys}.

With potential annotations, users can specify 
that \T{common'} must run in time at most $m + n$, 
by giving it the following type signature:
\begin{nanoml}
common'::l1:SList a^1 -> l2:SList a^1 
  -> {_v:List a|elems _v == selems l1 *set selems l2}  
\end{nanoml}
%
This type assigns one unit of potential to every element of the arguments \T{l1} and \T{l2},
and hence only allows making one recursive call per element of each list.
Whenever resource annotations are omitted, the potential is implicitly zero:
for example, the elements of the result carry no potential.

Our type checker uses the following reasoning to argue that this potential is sufficient to cover the efficient implementation in \autoref{fig:efficient-common}.
%
Consider the recursive call in line 4, which has a cost of one.
Pattern-matching \T{l1} against \T{SCons x xs} transfers the potential from \T{l1} to the binders,
resulting in types $\T{x}: \tpot{a}{1}$ and $\T{xs}: \T{SList}\ \p{\tpot{\tsubset{a}{\T{x} < \nu}}{1}}$.
The unit of potential associated with \T{x}
can now be used to pay for the recursive call.
Moreover, the types of the arguments, \T{xs} and \T{l2},
match the required type $\T{SList}\ \tpot{a}{1}$,
which guarantees that the potential stored in the tail and the second list
are sufficient to cover the rest of the evaluation.
Other recursive calls are checked in a similar manner.

Importantly, the inefficient implementation in \autoref{fig:inefficient-common} would not type-check against this signature.
Assuming that \T{member} is soundly annotated with
\begin{nanoml}
member::x:a -> l:List a^1 -> {Bool|_v = (x in elems l)}
\end{nanoml}
(requiring a unit of potential per element of \T{l}),
the guard in line 2 consumes all the potential stored in \T{l2};
hence the occurrence of \T{l2} in line 3 has the type $\T{List}\ \tpot{a}{0}$,
which is not a subtype of $\T{List}\ \tpot{a}{1}$.

\paragraph{Dependent Potential Annotations}
In combination with logical refinements and parametric polymorphism, 
this simple extension to the \synquid's type system turns out to be surprisingly powerful.
Unlike in RaML, potential annotations in \typesys can be \emph{dependent},
\ie mention program variables and the special variable $\nu$.
Dependent annotations can encode fine-grained bounds, which are out of reach for RaML.
As one example, consider function \T{range a b} that builds a list of all integers between $a$ and $b$;
we can express that it takes at most $b - a$ steps
by giving the argument $b$ a type $\tpot{\tsubset{\T{Int}}{\nu \geq a}}{\nu - a}$.
As another example, consider insertion into a sorted list \T{insert x xs};
we can express that it takes at most as many steps as there are elements in $xs$ that are smaller than $x$,
by giving $xs$ the type $\T{SList}\ \tpot{\alpha}{\mathsf{ite}(\nu < x, 1, 0)}$
(\ie only assigning potential to elements that are smaller than $x$).
These fine-grained bounds are checked completely automatically in our system,
by reduction to constraints in SMT-decidable theories.

\begin{figure}[t]  
  \small
  \centering
  \begin{nanoml}
  append::xs:List a^1 -> ys:List a 
    -> {List a|len _v = len xs + len ys}

  triple::l:List Int^2 -> {List n|len _v = 3*(len l)}
  triple = \l.append l (append l l)  

  tripleSlow::l:List Int^3 -> {List n|len _v = 3*(len l)}
  tripleSlow = \l.append (append l l) l 
  \end{nanoml}
  \caption{Append three copies of a list.
  The type of \T{append} specifies that it returns a list whose length is the sum of the lengths of its arguments.
  It also requires one unit of potential on each element of the first list.
  Moreover, \T{append} has a polymorphic type and can be applied to lists with different element types,
  which is crucial for type-checking \T{tripleSlow}.}\label{fig:triple}
\end{figure}

\paragraph{Polymorphism}
Another source of expressiveness in \typesys is parametric polymorphism:
since potential annotations are attached to types,
type polymorphism gives us \emph{resource polymorphism} for free.
Consider two functions in \autoref{fig:triple}, \T{triple} and \T{tripleSlow},
which implement two different ways to append a list \T{l} to two copies of itself.
Both of them make use of a component function \T{append}, 
whose type indicates that it makes a linear traversal of its first argument.
Intuitively, \T{triple} is more efficient that \T{tripleSlow}
because in the former both calls to \T{append} traverse a list of length $n$,
whereas in the latter the outer call traverses a list of length $2{\cdot}n$.
This difference is reflected in the signatures of the two functions:
\T{tripleSlow} requires three units of potential per list element,
while \T{triple} only requires two.

Checking that \T{tripleSlow} satisfies this bound is somewhat nontrivial
because the two applications of \T{append} must have \emph{different types}:
the outer application must return \T{List Int},
while the inner application must return $\T{List}\ \T{Int}^1$
(\ie carry enough potential to be traversed by \T{append}).
RaML's monomorphic type system is unable to assign a single general type to \T{append},
which can be used at both call sites. So the function has be reanalyzed at
every (monomorphic) call site. 
\typesys, on the other hand, handles this example out of the box,
since the type variable \T{a} in the type of \T{append} can be instantiated with \T{Int} for the outer occurrence
and with $\T{Int}^1$ for the inner occurrence,
yielding the type
\begin{nanoml}
xs:List Int^2 -> ys:List Int^1 -> {List Int^1|...}
\end{nanoml}


As a final example, consider the standard \T{map} function:
\begin{nanoml}
map::(a -> b) -> List a -> List b
\end{nanoml}
Although this type has no potential annotations,
it implicitly tells us something about the resource behavior of \T{map}:
namely, that \T{map} applies a function to each list element \emph{at most once}.
This is because \T{a} can be instantiated with a type with an arbitrary amount of potential,
and the only way to pay for this potential is with a list element (which also has type \T{a}).


\Omit{
\paragraph{Linear Potential}

To bound the cost of \T{compress} using AARA, we enrich data types with potential
annotations. For example, to express the upper bound $|\ell|$ on the
cost for \T{compress(l)}, we assign the type \li[1]{\T{int}} to the
argument. The type defines the potential function
$\phi(\ell{:}\li[1]{\T{int}}) = 1\cdot|\ell|$. The intuition is that we
have one potential unit per element of the list that can be used to
cover the resource usage. 

To argue that this potential is indeed sufficient, we use the fact
that
$\phi(\ell{:}\li[1]{\T{int}}) = 1 + \phi(\T{xs}{:}\li[1]{\T{int}})$ if
$\T{xs}$ is the tail of $\ell$. In the first branch of the pattern
match, we have no potential available since the list is
empty. However, we also do not have any cost to cover. In the second
branch, we receive potential $1$ from the head of the list and can use
it to cover the cost of $1$ for the tick functions.  Moreover, the
tail $\T{xs}$ of the list has potential
$\phi(\T{xs}{:}\li[1]{\T{int}})$ which is used to cover the cost of
the recursive call. Since the argument type that is required by the
function is $\li[1]{\T{int}}$, we know that the available potential
is sufficient to cover the rest of the evaluation.
%

Such potential would be necessary to bound
the cost of the inner call of \T{compress} in a nested call
$\T{compress}(\T{compress}(\ell))$. The outer call would still be
typed as before and requires an argument of type
$\li[1]{\T{int}}$. So the inner call should have the type
$$
\T{compress} : \fun[0/0]{\li[2]{\T{int}}}{\li[1]{\T{int}}}
$$
which can be justified similarly to the previous typing. The key
observation is that the $2$ potential units of a list element of the
argument can be used to cover the cost ($1$ unit) and to pass on
potential to the result list ($1$ units). Note that we would waste a
potential unit in the \T{then} branch because we do not alter the
result that passed by the recursive call. However, we can conclude
that the \emph{worst-case} cost for the nested call to compress is
$2|\ell|$.

Consider again the function \T{common'} in
\autoref{fig:efficient-common}. The number of function calls in an
evaluation of \T{(common' l1 l2)} is bounded by
$\max(\T{l1},\T{l2})$. However, existing type systems based on AARA can
only express linear~\cite{Jost03} or polynomial
bounds~\cite{HoffmannW15}. For example, we can assign the following
type
$$
\T{common'} : \fun[0/0]{\li[0]{\T{int}}}{\fun[0/0]{\li[1]{\T{int}}}{\li[0]{\T{int}}}}
$$
It states that the number function calls evaluated by \T{common' xs}
is $0$ and that the result of the evaluation is a function $f$ so that
the expression $f \ell$ evaluates at most $|\ell|$ function calls.
This corresponds to the bound $|l2|$ for the expression \T{(common' l1
  l2)}. In this paper, we build on the linear version of
AARA~\cite{Jost10} that is not able to assign the bound $|l1|$ to the
aforementioned expression. The reason is that we do not restrict the
number of times a function is used and the type
$$\fun[0/0]{\li[1]{\T{int}}}{\fun[0/0]{\li[0]{\T{int}}}{\li[0]{\T{int}}}}$$
would allow us to create a function with linear cost that has type
$\fun[0/0]{\li[0]{\T{int}}}{\li[0]{\T{int}}}$. There are several ways
of extending AARA to allow for such types~\cite{HoffmannW15} but we
focus on the simple case for brevity in this article.


In general, we can describe the type of compress using symbolic
annotations and linear constraints.
$$
\T{compress} : \fun[r/r']{\li[p]{\T{int}}}{\li[q]{\T{int}}} \;\;\; \mid \;\;\; r \geq r' \;\; \land \;\;\;\; p \geq q + 1 
$$
The annotation $r/r'$ on the function arrow means that we require
constant potential $r$ when calling \T{compress} and that the
constant potential $r'$ is available after the call. The return
type $\li[q]{\T{int}}$ specifies potential of the resulting list. 
Such potential would be necessary to bound
the cost of the inner call of \T{compress} in a nested call
$\T{compress}(\T{compress}(\ell))$. The outer call would still be
typed as before and requires an argument of type
$\li[1]{\T{int}}$. So we would have $p = 2$, $q=1$, and $r=r'=0$
in the type of the inner call.
}

\subsection{Resource-guided Synthesis with \tool{}}
\label{sec:background:resyn}

We have extended \synquid with support for \typesys types in a new program synthesizer \tool.
Given a resource-annotated signature for \T{common'} from \autoref{sec:background:re2}
and a component library that includes \T{member},
\tool is able to synthesize the efficient implementation in \autoref{fig:efficient-common}.
%
The key to efficient synthesis
is type-checking each program candidate incrementally as it is being constructed,
and discarding an ill-typed program prefix as early as possible.
For example, while enumerating candidates for the function \T{common'},
we can safely discard the inefficient version from \autoref{fig:inefficient-common}
even \emph{before} constructing the second branch of the conditional
(because the first branch together with the guard use up too many resources).
Hence, as we explain in more detail in \autoref{sec:synthesis},
a key technical challenge in \tool
has been a tight integration of resources into \synquid's \emph{round-trip type checking} mechanism,
which aggressively propagates type information top-down from the goal
and solves constraints incrementally as they arise. 

\paragraph{Termination Checking}
In addition to making the synthesizer resource-aware,
\typesys types also subsume and generalize \synquid's termination checking mechanism.
To avoid generating diverging functions, 
\synquid uses a simple \emph{termination metric} (the tuple of function's arguments),
and checks that this metric decreases at every recursive call.
Using this metric,
\synquid is not able to synthesize the function \T{range} from \autoref{sec:background:re2},
because it requires a recursive call that decreases the \emph{difference} between the arguments, $b - a$.
In contrast, \tool need not reason explicitly about termination,
since potential annotations already encode an upper bound on the number of recursive calls.
Moreover, the flexibility of these annotations enables \tool to synthesize programs that require nontrivial termination metrics,
such as \T{range}.

\section{The \typesys Type System}\label{sec:typesys}

\newcommand{\diw}[1]{{\color{ACMOrange}DW: #1}}
\newcommand{\changebar}[1]{{#1}}

In this section, we define a subset of \typesys as a formal calculus to prove type soundness.
This subset includes Booleans that are refined by their values, and lists that are refined by their lengths.
The programs in \autoref{sec:intro} and \autoref{sec:background} use \synquid's surface syntax.
The gap from the surface language to the core calculus involves inductive types 
and refinement-level measures.
The restriction to this subset in the technical development is only for brevity and
proofs carry over to all the features of \synquid.

\paragraph{Syntax}

\autoref{fig:syntax} presents the grammar of terms in \typesys via abstract binding trees~\cite{book:PFPL16}.
The core language is basically the standard lambda calculus augmented with Booleans and lists.
A \emph{value} $\jval{v}$ is either a boolean constant, a list of values, or a function.
\changebar{%
Expressions in \typesys are in a-normal-form~\cite{LFP:SF92}, which means that syntactic forms occurring in non-tail position allow only \emph{atoms} $\hat{a} \in \mathsf{Atom}$, i.e., variables and values;
this restriction simplifies typing rules for applications,
as we explain below.
We identify a subset $\mathsf{SimpAtom}$ of $\mathsf{Atom}$ that contains atoms \emph{interpretable} in the refinement logic.
Intuitively, the value of an $a \in \mathsf{SimpAtom}$ should be either a Boolean or a list.%
}
The syntactic form $\eimp$ is introduced as a placeholder for unreachable code, e.g., the else-branch of a conditional whose predicate is always true.

The syntactic form $\econsume{c}{e_0}$ is used to specify resource usage, 
and it is intended to cost $c \in \bbZ$ units of resource and then reduce to $e_0$.
If the cost $c$ is negative, then $-c$ units of resource will become available in the system.
\econsumename terms support flexible user-defined cost metrics:
for example, to count recursive calls, the programmer may wrap every such call in $\econsume{1}{\cdot}$;
to keep track of memory consumption, they might wrap every data constructor in $\econsume{c}{\cdot}$,
where $c$ is the amount of memory that constructor allocates.

\begin{figure}[t!]
\[
\begin{array}{r@{\hspace{0.2em}}c@{\hspace{0.2em}}l}
    a & \Coloneqq & x \mid \etrue \mid \efalse \mid \enil \mid \econs{\hat{a}_h}{a_t} \\
    \hat{a} & \Coloneqq & a \mid \eabs{x}{e_0} \mid \efix{f}{x}{e_0} \\
	e & \Coloneqq & \hat{a} \mid \econd{a_0}{e_1}{e_2} \mid \ematl{a_0}{e_1}{x_h}{x_t}{e_2} \mid \eapp{\hat{a}_1}{\hat{a}_2} \\
	& \mid & \elet{e_1}{x}{e_2} \mid \eimp \mid \econsume{c}{e_0} \\
	v & \Coloneqq & \etrue \mid \efalse \mid \enil \mid \econs{v_h}{v_t} \mid \eabs{x}{e_0} \mid \efix{f}{x}{e_0}
\end{array}
\]
\vspace{-2ex}
\caption{Syntax of the core calculus}
\label{fig:syntax}
\vspace{-3ex}
\end{figure}

\begin{figure}[t!]
\[
\begin{array}{r@{\hspace{0.2em}}c@{\hspace{0.2em}}l@{\hspace{0.6em}}r@{\hspace{0.2em}}c@{\hspace{0.2em}}l}
	\multicolumn{3}{l}{\fbox{\text{Refinement}}} \\
	\psi,\phi & \Coloneqq & \multicolumn{4}{@{\hspace{0.1em}}l}{x \mid \top \mid \neg\psi \mid \psi_1\wedge \psi_2 \mid n \mid \psi_1 \le \psi_2 \mid \psi_1 + \psi_2 \mid \psi_1 = \psi_2} \\
	\multicolumn{3}{l}{\fbox{\text{Sort}}}  \\
	\Delta & \Coloneqq & \bbB \mid \bbN \mid \delta_\alpha \\
	\multicolumn{3}{l}{\fbox{\text{Base Type}}} & \multicolumn{3}{l}{\fbox{\text{Resource-Annotated Type}}}  \\
	B & \Coloneqq & \tbool \mid \tlist{T} \mid m \cdot \alpha & \quad T & \Coloneqq & \tpot{R}{\phi}   \\
	\multicolumn{3}{l}{\fbox{\text{Refinement Type}}} & \multicolumn{3}{l}{\fbox{\text{Type Schema}}} \\
	R & \Coloneqq & \tsubset{B}{\psi} \mid \tarrowm{x}{T_x}{T}{m} & \quad S & \Coloneqq & T \mid \forall\alpha. S 
\end{array}
\]
\vspace{-2ex}
\caption{Syntax of the type system}
\label{fig:syntaxre2}
\vspace{-3ex}
\end{figure}

\paragraph{Operational Semantics}
The resource usage of a program is determined by a small-step operational cost semantics.
The semantics is a standard one augmented with a \emph{resource} parameter.
A step in the evaluation judgment has the form $\jstep{e}{e'}{q}{q'}$ where $e$ and $e'$ are expressions and $q,q'  \in \bbZ^+_0$ are nonnegative integers.
%
For example, the following is the rule for $\econsume{c}{e_0}$.
\vspace{-.8ex}
\begin{mathpar}\footnotesize
	\inferrule{ }{ \jstep{\econsume{c}{e_0}}{e_0}{q}{q-c} } \vspace{-.8ex}
\end{mathpar}
The \emph{multi-step} evaluation relation $\mapsto^*$ is the reflexive transitive closure of $\mapsto$.
The judgment $\jsteps{e}{e'}{q}{q'}$ expresses that with $q$ units of available resources, $e$ evaluates to $e'$ without running out of resources and $q'$ resources are left.
Intuitively, the high-water mark resource usage of an evaluation of $e$ to $e'$ is the minimal $q$
such that $\jsteps{e}{e'}{q}{q'}$. For monotone resources like time, the cost is 
the sum of costs of all the evaluated $\mathsf{tick}$ expressions.
\changebar{%
In general, this net cost is invariant, that is, $p-p' = q-q'$
if $\jstepn{e}{e'}{p}{p'}$ and $\jstepn{e}{e'}{q}{q'}$,
where $\mapsto^n$ is the relation obtained by self-composing $\mapsto$ for $n$ times.}

\paragraph{Refinements}
We now combine \synquid's type system with AARA to reason about resource usage.
\autoref{fig:syntaxre2} shows the syntax of the \typesys type system.
Refinements $\psi$ are distinct from program terms and classified by sorts $\Delta$.
\typesys's sorts include Booleans $\bbB$, natural numbers $\bbN$, and \emph{uninterpreted symbols} $\delta_\alpha$.
Refinements can be logical formulas and linear expressions, which may reference program variables.
Logical refinements $\psi$ have sort $\bbB$, while potential annotations $\phi$ have sort $\bbN$.
\typesys interprets a variable of Boolean type as its value, list type as its length, and type variable $\alpha$ as an uninterpreted symbol with a corresponding sort $\delta_\alpha$.
\changebar{%
We use the following \emph{interpretation} $\calI(\cdot)$ to reflect interpretable atoms $a \in \mathsf{SimpAtom}$ in the refinement logic:
\[
\begin{array}{r@{\hspace{0.6em}}c@{\hspace{0.6em}}l@{\hspace{3.0em}}r@{\hspace{0.6em}}c@{\hspace{0.6em}}l}
    \calI(x) & = & x \\
	\calI(\etrue) & = & \top & \calI(\enil) & = & 0 \\
	\calI(\efalse) & = & \bot &  \calI(\econs{\_}{a_t}) & = & \calI(a_t) + 1
\end{array}
\]%
}
%

\vspace{-.8ex}
\paragraph{Types}
We classify types into four categories.
Base types $B$ include Booleans, lists and type variables.
Type variables $\alpha$ are annotated with a \emph{multiplicity} $m \in \bbZ^+_0 \cup \{\infty\}$, which denotes 
an upper bound on the number of usages of a variable like in bounded linear logic~\cite{GirardSS92}.
For example, $\tlist{2 \cdot \alpha}$ denotes a universal list whose elements can be used at most twice.

 Refinement types are \emph{subset types} and \emph{dependent arrow types}.
The inhabitants of the subset type $\tsubset{B}{\psi}$ are values of type $B$ that satisfy the refinement $\psi$.
The refinement $\psi$ is a logical predicate over program variables and a special \emph{value variable} $\nu$, which does not appear in the program and stands for the inhabitant itself.
For example, $\tsubset{\tbool}{\nu}$ is a type of $\etrue$, and $\tsubset{\tlist{\tbool}}{\nu \leq 5}$ represents Boolean lists of length at most 5.
Dependent arrow types $\tarrow{x}{T_x}{T}$ are function types whose return type may reference the formal argument $x$.
\changebar{%
As type variables, these function types are also annotated with a multiplicity $m \in \bbZ^+_0 \cup \{\infty\}$ restricting the number of times the function may be applied.%
} 

To apply the potential method of amortized analysis~\cite{kn:Tarjan85}, we need to define potentials with respect to the data structures in the program.
We introduce \emph{resource-annotated types} as a refinement type augmented with a potential annotation, written $\tpot{R}{\phi}$.
Intuitively, $\tpot{R}{\phi}$ assigns $\phi$ units of potential to values of the refinement type $R$.
The potential annotation $\phi$ may also reference the value variable $\nu$.
For example, $\tpot{\tlist{\tbool}}{5 \times \nu}$ describes Boolean lists $\ell$ with $5|\ell|$ units of potential where $|\ell|$ is the length of $\ell$.
The same potential can be expressed by assigning $5$ units of potential
to every element using the type $\tlist{\tpot{\tbool}{5}}$.

Type schemas represent (possibly) polymorphic types. Note that the type quantifier ${\forall}$ can only appear outermost in a type.

Similar to \synquid, we introduce a notion of \emph{scalar} types, which are resource-annotated base types refined by logical constraints.
\changebar{%
Intuitively, interpretable atoms are scalars and \typesys only allows the refinement-level logic to reason about values of scalar types.%
}
We will abbreviate $1 \cdot \alpha$ as $\alpha$, $\tsubset{B}{\top}$ as $B$, $\tarrowm{x}{T_x}{T}{\infty}$ as $\tarrow{x}{T_x}{T}$, and $\tpot{R}{0}$ as $R$.

\begin{figure*}[t]
  \def \MathparLineskip {\lineskip=0.2cm}
  \begin{flushleft}\small
    \fbox{$\jatyping{\Gamma}{a}{B}$}
  \end{flushleft}
  \begin{mathpar}\footnotesize
    \Rule{SimpAtom-Var}
    { \Gamma(x) = \trefined{B}{\psi}{\phi} }
    { \jatyping{\Gamma}{x}{B} }
    \and
    \Rule{SimpAtom-True}
    { }
    { \jatyping{\Gamma}{\etrue}{\tbool} }
    \and
    \Rule{SimpAtom-False}
    { }
    { \jatyping{\Gamma}{\efalse}{\tbool} }
    \and
    \Rule{SimpAtom-Nil}
    { \jwftype{\Gamma}{T} }
    { \jatyping{\Gamma}{\enil}{\tlist{T}} }
    \and
    \Rule{SimpAtom-Cons}
    { \jctxsharing{\Gamma}{\Gamma_1}{\Gamma_2} \\ \jstyping{\Gamma_1}{\hat{a}_h}{T} \\ \jatyping{\Gamma_2}{a_t}{\tlist{T}} }
    { \jatyping{\Gamma}{\econs{\hat{a}_h}{a_t}}{\tlist{T}} }
  \end{mathpar}
	\begin{flushleft}\small
		\fbox{$\jstyping{\Gamma}{e}{S}$}
	\end{flushleft}
	\begin{mathpar}\footnotesize
	\Rule{T-SimpAtom}
	{ \jatyping{\Gamma}{a}{B} }
	{ \jstyping{\Gamma}{a}{\tsubset{B}{\nu = \calI(a)}} }
	\and
		\Rule{T-Var}
		{ \Gamma(x) = S }
		{ \jstyping{\Gamma}{x}{S} }
		\and
		\Rule{T-Imp}
		{ \jprop{\Gamma}{\bot} \\ \jwftype{\Gamma}{T} }
		{ \jstyping{\Gamma}{\eimp}{T} }
		\and
		\Rule{T-Consume-P}
		{  c \ge 0 \\ \jstyping{\Gamma}{e_0}{T} }
		{ \jstyping{\Gamma,c}{\econsume{c}{e_0}}{T} }
		\and
		\Rule{T-Consume-N}
		{ c < 0 \\ \jstyping{\Gamma,-c}{e_0}{T} }
		{ \jstyping{\Gamma}{\econsume{c}{e_0}}{T} }
		\\
		\Rule{T-Cond}
		{ 
		\jatyping{\Gamma}{a_0}{\tbool} \\\\
		\jstyping{\Gamma,\calI(a_0)}{e_1}{T} \\
		\jstyping{\Gamma,\neg\calI(a_0)}{e_2}{T} }
		{ \jstyping{\Gamma}{\econd{a_0}{e_1}{e_2}}{T} }
		\and
		\Rule{T-MatL}
		{ \jctxsharing{\Gamma}{\Gamma_1}{\Gamma_2} \\
		\jwftype{\Gamma}{T'} \\
		\jatyping{\Gamma_1}{a_0}{\tlist{T}} \\\\
		\jstyping{\Gamma_2,\calI(a_0)=0 }{e_1}{T'} \\
		\jstyping{\Gamma_2,\bindvar{x_h}{T},\bindvar{x_t}{\tlist{T}},\calI(a_0)=x_t+1 }{e_2}{T'} \\
		 }
		{ \jstyping{\Gamma}{\ematl{a_0}{e_1}{x_h}{x_t}{e_2}}{T'} }
		\and
		\Rule{T-Let}
		{ \jctxsharing{\Gamma}{\Gamma_1}{\Gamma_2} \\
		\jwftype{\Gamma}{T_2} \\\\
		\jstyping{\Gamma_1}{e_1}{S_1} \\
		\jstyping{\Gamma_2,\bindvar{x}{S_1}}{e_2}{T_2}  }
		{ \jstyping{\Gamma}{\elet{e_1}{x}{e_2}}{T_2} }
		\\
		\Rule{T-App-SimpAtom}
		{ \jctxsharing{\Gamma}{\Gamma_1}{\Gamma_2} \\
		\jstyping{\Gamma_1}{\hat{a}_1}{\tarrowm{x}{\trefined{B}{\psi}{\phi}}{T}{1}} \\
		\jstyping{\Gamma_2}{a_2}{\trefined{B}{\psi}{\phi}} }
		{ \jstyping{\Gamma}{\eapp{\hat{a}_1}{a_2}}{\subst{\calI(a_2)}{x}{T}} }
		\and
		\Rule{T-App}
		{ \jctxsharing{\Gamma}{\Gamma_1}{\Gamma_2} \\
		\jstyping{\Gamma_1}{\hat{a}_1}{ \tarrowm{x}{T_x}{T}{1} } \\
		\jstyping{\Gamma_2}{\hat{a}_2}{T_x} \\
		\jwftype{\Gamma}{T} }
		{ \jstyping{\Gamma}{\eapp{\hat{a}_1}{\hat{a}_2}}{T} }
		\\
		\Rule{T-Abs}
		{ \jwftype{\Gamma}{T_x} \\
		\jstyping{\Gamma,\bindvar{x}{T_x} }{e_0}{T} \\
		\jctxsharing{\Gamma}{\Gamma}{\Gamma} }
		{ \jstyping{\Gamma}{\eabs{x}{e_0}}{ \tarrow{x}{T_x}{T}} }
		\and
		\Rule{T-Abs-Lin}
		{ \jwftype{\Gamma}{T_x} \\
		\jstyping{\Gamma,\bindvar{x}{T_x}}{e_0}{T} }
		{ \jstyping{m \cdot \Gamma}{\eabs{x}{e_0}}{\tarrowm{x}{T_x}{T}{m}} }
		\and
		\Rule{T-Fix}
		{ R = \tarrow{x}{T_x}{T} \\ \jwftype{\Gamma}{R} \\\\
		\jstyping{\Gamma,\bindvar{f}{R},\bindvar{x}{T_x}}{e_0}{T} \\
		\jctxsharing{\Gamma}{\Gamma}{\Gamma} }
		{ \jstyping{\Gamma}{\efix{f}{x}{e_0}}{R} }
 		\\
		\Rule{S-Gen}
		{ \jval{v} \\ \jstyping{\Gamma,\alpha}{v}{S} \\\\ \jsharing{\Gamma,\alpha}{S}{S}{S} }
		{ \jstyping{\Gamma}{v}{\forall\alpha.S} }
		\and
		\Rule{S-Inst}
		{ \jstyping{\Gamma}{e}{\forall\alpha.S} \\ \jwftype{\Gamma}{\tpot{\tsubset{B}{\psi}}{\phi}} }
		{ \jstyping{\Gamma}{e}{\subst{\tpot{\tsubset{B}{\psi}}{\phi}}{\alpha}{S}} }
		\and
		\Rule{S-Subtype}
		{ \jstyping{\Gamma}{e}{T_1} \\ \jsubty{\Gamma}{T_1}{T_2} }
		{ \jstyping{\Gamma}{e}{T_2} }
		\and
		\Rule{S-Transfer}
		{ \jstyping{\Gamma'}{e}{S} \\\\
		\jprop{\Gamma}{\pot{\Gamma} = \pot{\Gamma'}} }
		{ \jstyping{\Gamma}{e}{S} }
		\and
		\Rule{S-Relax}
		{ \jstyping{\Gamma}{e}{\tpot{R}{\phi}} \\ \jsort{\Gamma}{\phi'}{\bbN} }
		{ \jstyping{\Gamma,\phi'}{e}{\tpot{R}{\phi+\phi'}} }
	\end{mathpar}
	\begin{flushleft}\small
		\fbox{$\jsharing{\Gamma}{S}{S_1}{S_2}$}
	\end{flushleft}
	\begin{mathpar}\footnotesize
		\Rule{Share-Bool}
		{ }
		{ \jsharing{\Gamma}{\tbool}{\tbool}{\tbool} }
		\and
		\Rule{Share-List}
		{ \jsharing{\Gamma}{T}{T_1}{T_2} }
		{ \jsharing{\Gamma}{\tlist{T}}{\tlist{T_1}}{\tlist{T_2}} }
		\and
		\Rule{Share-TVar}
		{ \alpha \in \Gamma \\ {m = m_1 + m_2} }
		{ \jsharing{\Gamma}{m \cdot \alpha}{m_1 \cdot \alpha}{m_2 \cdot \alpha} }
		\and
		\Rule{Share-Poly}
		{ \jsharing{\Gamma,\alpha}{S}{S}{S} }
		{ \jsharing{\Gamma}{\forall\alpha.S}{\forall\alpha.S}{\forall\alpha.S} }
		\\
		\Rule{Share-Subset}
		{ \jsharing{\Gamma}{B}{B_1}{B_2} \\ \jwftype{\Gamma}{\tsubset{B}{\psi}} }
		{ \jsharing{\Gamma}{\tsubset{B}{\psi}}{\tsubset{B_1}{\psi}}{\tsubset{B_2}{\psi}} }
		\and
		\Rule{Share-Arrow}
		{  \jwftype{\Gamma}{\p{\tarrow{x}{T_x}{T}}} \\ m = m_1+m_2 }
		{ \jsharing{\Gamma}{\p{\tarrowm{x}{T_x}{T}{m}}}{\p{\tarrowm{x}{T_x}{T}{m_1}}}{\p{\tarrowm{x}{T_x}{T}{m_2}}} }
		\and
		\Rule{Share-Pot}
		{ \jsharing{\Gamma}{R}{R_1}{R_2} \\ \jprop{\Gamma,\bindvar{\nu}{R}}{\phi=\phi_1+\phi_2} }
		{ \jsharing{\Gamma}{\tpot{R}{\phi}}{\tpot{R_1}{\phi_1}}{\tpot{R_2}{\phi_2}} } 
	\end{mathpar}
	\begin{flushleft}\small
		\fbox{$\jsubty{\Gamma}{T_1}{T_2}$}
	\end{flushleft}
	\begin{mathpar}\footnotesize
		\Rule[rightskip=2ex]{Sub-List}
		{ \jsubty{\Gamma}{T_1}{T_2} }
		{ \jsubty{\Gamma}{\tlist{T_1}}{\tlist{T_2}} }
		\and
		\Rule[rightskip=2ex]{Sub-TVar}
		{ \alpha \in \Gamma \\ {m_1 \ge m_2} }
		{ \jsubty{\Gamma}{m_1 \cdot \alpha}{m_2 \cdot \alpha} }
		\and
		\Rule[rightskip=2ex]{Sub-Subset}
		{ \jsubty{\Gamma}{B_1}{B_2} \\\\ \jprop{\Gamma,\bindvar{\nu}{B_1} }{\psi_1 \implies \psi_2} }
		{ \jsubty{\Gamma}{\tsubset{B_1}{\psi_1}}{\tsubset{B_2}{\psi_2}} }
		\and
		\Rule[rightskip=2ex]{Sub-Arrow}
		{ \jsubty{\Gamma}{T_x'}{T_x} \\ \jsubty{\Gamma,\bindvar{x}{T_x'}}{T}{T'} \\ m \ge m' }
		{ \jsubty{\Gamma}{\tarrowm{x}{T_x}{T}{m}}{\tarrowm{x}{T_x'}{T'}{m'}} }
		\and
		\Rule[rightskip=2ex,leftskip=2ex]{Sub-Pot}
		{ \jsubty{\Gamma}{R_1}{R_2} \\\\ \jprop{\Gamma,\bindvar{\nu}{R_1}}{\phi_1 \ge \phi_2} }
		{ \jsubty{\Gamma}{\tpot{R_1}{\phi_1}}{\tpot{R_2}{\phi_2}} }
	\end{mathpar}
	\caption{Selected typing rules of the \typesys type system}
	\label{fig:typingrules}
\end{figure*}

\vspace{-.8ex}
\paragraph{Typing Rules}
In \typesys, the \emph{typing context} $\Gamma$ is a sequence of variable bindings $\bindvar{x}{S}$, type variables $\alpha$, path conditions $\psi$, and free potentials $\phi$.
Our type system consists of five judgments: sorting, well-formedness, subtyping, sharing, and typing.
We omit sorting and well-formedness rules and 
\iflong
include them in \autoref{sec:appendixre2}.
\else 
include them in the technical report~\cite{Techreport}.
\fi
The sorting judgment $\jsort{\Gamma}{\psi}{\Delta}$ states that a refinement $\psi$ has a sort $\Delta$ under a context $\Gamma$.
A type $S$ is said to be well-formed under a context $\Gamma$, written $\jwftype{\Gamma}{S}$, if every referenced variable in it is in the correct scope.

\autoref{fig:typingrules} presents selected typing rules for \typesys.
The typing judgment $\jstyping{\Gamma}{e}{S}$ states that the expression $e$ has type $S$ in context $\Gamma$.
The intuitive meaning is that if there is \emph{at least} the amount resources as indicated by the potential in the context $\Gamma$ then this suffices to evaluate $e$ to a value $v$, and after the evaluation there are \emph{at least} as many resources available as indicated by the potential in $S$.
\changebar{%
The auxiliary typing judgment $\jatyping{\Gamma}{a}{B}$ assigns base types to interpretable atoms.
Atomic typing is useful in the rule \textsc{(T-SimpAtom)}, 
which uses the interpretation $\calI(\cdot)$ to derive a most precise refinement type for interpretable atoms.
}

The \emph{subtyping} judgment $\jsubty{\Gamma}{T_1}{T_2}$ is defined in a standard way, with the extra requirement that the potential in $T_1$ should be greater than or equal to that in $T_2$.
Subtyping is often used to ``forget'' some program variables in the type to ensure the result type does not reference any locally introduced variable, e.g., the result type of $\elet{e_1}{x}{e_2}$ cannot have $x$ in it and the result type of $\ematl{a_0}{e_1}{x_h}{x_t}{e_2}$ cannot reference $x_h$ or $ x_t$.

To reason about logical refinements, we introduce \emph{validity checking}, written $\jprop{\Gamma}{\psi}$, to state that a logical refinement $\psi$ is always true under any instance of the context $\Gamma$.
The validity checking relation is established upon a denotational semantics for refinements.
Validity checking in \typesys is decidable because it can be reduced to Presburger arithmetic.
The full development of validity checking 
\iflong
is included in \autoref{sec:appendixvalidity}.
\else 
is included in the technical report~\cite{Techreport}.
\fi

\changebar{%
We reason about inductive invariants for lists in rule \textsc{(T-MatL)}, using interpretation $\calI(\cdot)$.
In our formalization, lists are refined by their length thus the invariants are: (i) $\enil$ has length $0$, and (ii) the length of $\econs{\_}{a_t}$ is the length of $a_t$ plus one.
The type system can be easily enriched with more refinements and data types (e.g., the elements of a list are the union of its head and those of its tail) by updating the interpretation $\calI(\cdot)$ as well as the premises of rule \textsc{(T-MatL)}.%
}

Finally, notable are the two typing rules for applications:
\textsc{(T-App)} and \textsc{(T-App-SimpAtom)}. 
In the former case,
the function return type $T$ does not mention $x$,
and hence can be directly used as the type of the application
(this is the case \eg for all higher-order applications,
since our well-formedness rules prevent functions from appearing in refinements).
In the latter case,
$T$ mentions $x$,
but luckily any argument of a scalar type must be a simple atom $a$,
so we can substitute $x$ with its interpretation $\calI(a)$.
The ability to derive precise types for dependent applications
motivates the use of a-normal-form in \typesys.

\paragraph{Resources}

The rule \textsc{(T-Consume-P)} states that an expression \econsume{c}{e_0}
is only well-typed in a context that contains a free potential term $c$.
To transform the context into this form, we can use the rule \textsc{(S-Transfer)}
to transfer potential within the context between variable types and free potential terms,
as long as we can prove that the total amount of potential remains the same.
For example, the combination of \textsc{(S-Transfer)} and \textsc{(S-Relax)}
allows us to derive both $\jstyping{x:\tpot{\tbool}{1}}{x}{\tpot{\tbool}{1}}$
and $\jstyping{x:\tpot{\tbool}{1}}{\econsume{1}{x}}{\tbool}$
(but not $\jstyping{x:\tpot{\tbool}{1}}{\econsume{1}{x}}{\tpot{\tbool}{1}}$).


The typing rules of \typesys form an \emph{affine} type system~\cite{kn:Walker02}.
To use a program variable multiple times, we have to introduce explicit \emph{sharing} to ensure that the program cannot gain potential.
The sharing judgment $\jsharing{\Gamma}{S}{S_1}{S_2}$ means that in the context $\Gamma$, the potential indicated by $S$ is apportioned into two parts to be associated with $S_1$ and $S_2$.
We extend this notion to \emph{context sharing}, written $\jctxsharing{\Gamma}{\Gamma_1}{\Gamma_2}$, which states that $\Gamma_1,\Gamma_2$ has the same sequence of bindings as $\Gamma$, but the potentials of type bindings in $\Gamma$ are shared point-wise, and the free potentials in the $\Gamma$ are also split. 
%
A special context sharing $\jctxsharing{\Gamma}{\Gamma}{\Gamma}$ is used in the typing rules \textsc{(T-Abs)} and \textsc{(T-Fix)} for functions.
The self-sharing indicates that the function can only reference potential-free free variables in the context.
This is also used to ensure that the program cannot gain more potential through free variables by applying the same function multiple times.

\changebar{%
Restricting functions to be defined under potential-free contexts is undesirable in some situations.
For example, a curried function of type $\tarrow{x}{T_x}{\tarrow{y}{T_y}{T}}$ might require nonzero units of potential on its first argument $x$, which is not allowed by rule \textsc{(T-Abs)} or \textsc{(T-Fix)} on the inner function type $\tarrow{y}{T_y}{T}$.
We introduce another rule \textsc{(T-Abs-Lin)} to relax the restriction.
The rule associates a multiplicity $m$ with the function type, which denotes the number of times that the function could be applied.
Instead of context self-sharing, we require the potential in the context to be enough for $m$ function applications.
Note that in \tool's surface syntax used in the \autoref{sec:background},
every curried function type implicitly has multiplicity 1 on the inner function: $\tarrow{x}{T_x}{\tarrowm{y}{T_y}{T}{1}}$.
}

\paragraph{Example}\label{sec:typesys:example}


Recall the function \T{triple} from \autoref{fig:triple},
which can be written as follows in \typesys core syntax:
\vspace{-1.2ex}
\[
\begin{array}{r@{\hspace{0.5em}}c@{\hspace{0.5em}}l}
	\T{triple} & \dblcolon & \tarrow{\ell}{\tlist{\tpot{\tbool}{2}}}{\tsubset{\tlist{\tbool}}{\nu = 3 \times \ell}} \\
	\T{triple} & = & \lambda(\ell. \mathsf{let}(\eapp{\eapp{\T{append}}{\ell}}{\ell}, \ell'.\\
	& & \enskip  \eapp{\eapp{\T{append}}{\ell}}{\ell'})\vspace{-1.2ex}
\end{array}
\]
%
%
%
Next, we illustrate 
how \typesys uses the signature of \T{append}: 
\vspace{-1ex}
\[
\begin{array}{@{\hspace{0.2em}}r@{\hspace{0.2em}}c@{\hspace{0.2em}}l@{\hspace{0.2em}}}
	\T{append} & \dblcolon & \forall\alpha. \tarrow{xs}{ \tlist{\tpot{\alpha}{1}} }{ \tarrowm{ys}{\tlist{\alpha}}{\tsubset{\tlist{\alpha}}{\nu = xs + ys}}{1} }
\end{array}
\]
to 
justify the resource bound $2|\ell|$ on \T{triple}.
%
Suppose $\Gamma$ is a typing context that contains the signature of \T{append}.
The argument $\ell$ is used three times, so we need to use sharing relations  to apportion the potential of $\ell$.
We have
$\Gamma \vdash  \tlist{\tpot{\tbool}{2}}  \sharing \tlist{\tpot{\tbool}{1}} \mid \tlist{\tpot{\tbool}{1}}$,
$\Gamma \vdash  \tlist{\tpot{\tbool}{1}}  \sharing \tlist{\tpot{\tbool}{1}} \mid \tlist{\tpot{\tbool}{0}}$,
and we assign $ \tlist{\tpot{\tbool}{1}}$, $\tlist{\tpot{\tbool}{0}}$, and $ \tlist{\tpot{\tbool}{1}}$ to the three occurrences of $\ell$ respectively in the order they appear in the program.
To reason about $e_1=\eapp{\eapp{\T{append}}{\ell}}{\ell}$, 
we instantiate \T{append} with $\alpha \mapsto \tpot{\tbool}{0}$, 
inferring its type as 
\vspace{-.8ex}
\begin{gather*}
{\tarrow{xs}{\tlist{\tpot{\tbool}{1}}}{\tarrowm{ys}{\tlist{\tpot{\tbool}{0}}}{\tsubset{\tlist{\tpot{\tbool}{0}}}{\nu = xs + ys}}{1}}}\vspace{-.8ex}
\shortintertext{and by \textsc{(T-App-SimpAtom)} we derive the following:}
\jstyping{\Gamma,\ell:\tlist{\tpot{\tbool}{1}}}{e_1}{\tsubset{\tlist{\tpot{\tbool}{0}}}{\nu = \ell + \ell}}.\vspace{-.8ex}
\end{gather*}
We then can typecheck $e_2=\eapp{\eapp{\T{append}}{\ell}}{\ell'}$
with the same instantiation of \T{append}:
\[
\jstyping{\Gamma,\ell:\tlist{\tpot{\tbool}{1}},\ell':T_1}{e_2}{\tsubset{\tlist{\tpot{\tbool}{0}}}{\nu = xs + (xs + xs)}}.
\]
(where $T_1$ is the type of $e_1$).
Finally, by subtyping and the following valid judgment in the refinement logic
\vspace{-.8ex}
\begin{gather*}
\jprop{\Gamma,\ell:\tlist{\tpot{\tbool}{2}}, \nu:\tlist{\tpot{\tbool}{0}} }{\nu = \ell + (\ell + \ell) \implies \nu = 3 \times \ell},
\end{gather*}
we conclude
$\jstyping{\Gamma}{\T{triple}}{\tarrow{\ell}{\tlist{\tpot{\tbool}{2}}}{\tsubset{\tlist{\tbool}}{\nu = 3 \times \ell}}}$.

\paragraph{Soundness}

The type soundness for \typesys is based on progress and preservation.
The progress theorem states that if we derive a bound $q$ for an
expression $e$ with the type system and $p\geq q$ resources are
available, then $\tuple{e,p}$ can make a step if $e$ is not a value. In
this way, progress shows that resource bounds are indeed bounds on the
high-water mark of the resource usage since states $\tuple{e,p}$ in
the small step semantics can be stuck based on resource usage if, for
instance, $p=0$ and $e = \econsume{1}{e'}$.

\begin{theorem}[Progress]
	If $\jstyping{q}{e}{S}$ and $p \ge q$, then either $\jval{e}$ or there exist $e'$ and $p'$ such that $\jstep{e}{e'}{p}{p'}$.
\end{theorem}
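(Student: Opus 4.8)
The plan is to prove Progress by induction on the derivation of $\jstyping{q}{e}{S}$, after slightly strengthening the statement so the induction closes: I would show that for \emph{any} context $\Gamma$ that binds no program variables (only free potentials, path conditions, and type variables), if $\jstyping{\Gamma}{e}{S}$ and $p \ge \pot{\Gamma}$, then $\jval{e}$ or $\tuple{e,p}$ takes a step; the theorem is the case $\Gamma = q$. Two auxiliary facts are needed up front. First, a free-variable lemma: a well-typed term's free program variables lie in the domain of its context, so here $e$ is closed, and consequently a closed \emph{atom} is always a value (a closed $a$ can only be $\etrue,\efalse,\enil$, or $\econs{v_h}{v_t}$, and a closed $\hat a$ is in addition possibly $\eabs{x}{e_0}$ or $\efix{f}{x}{e_0}$). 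Second, a canonical-forms lemma, by inversion through \textsc{(S-Subtype)}: a value of type $\tbool$ is $\etrue$ or $\efalse$, a value of type $\tlist{T}$ is $\enil$ or $\econs{v_h}{v_t}$, and a value of an arrow type is $\eabs{x}{e_0}$ or $\efix{f}{x}{e_0}$.

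With these in hand the case split on the last rule is largely mechanical. The value-introduction rules (the atom rules, \textsc{(T-Abs)}, \textsc{(T-Abs-Lin)}, \textsc{(T-Fix)}, \textsc{(S-Gen)}) give $\jval{e}$ outright. For the elimination forms \textsc{(T-Cond)}, \textsc{(T-MatL)}, \textsc{(T-App)}, and \textsc{(T-App-SimpAtom)}, the scrutinee or operator is a closed atom of boolean, list, or arrow type, hence a value of the matching canonical shape, so $e$ fires the corresponding reduction; none of these reductions touches the resource counter, so any $p$ works and no induction hypothesis is needed. For \textsc{(T-Let)}, either $e_1$ is a value and the let-rule fires, or the induction hypothesis applied to $e_1$ yields a step — its context potential $q_1$ satisfies $q_1 \le q \le p$ because context sharing splits $q$ into nonnegative summands — which lifts to a step for $\elet{e_1}{x}{e_2}$ by the congruence rule. \textsc{(T-Imp)} is vacuous, since its premise $\jprop{\Gamma}{\bot}$ cannot hold for a $\Gamma$ with no program-variable bindings and consistent path conditions. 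The remaining structural rules \textsc{(S-Subtype)}, \textsc{(S-Inst)}, \textsc{(S-Relax)}, \textsc{(S-Transfer)} leave $e$ unchanged and either preserve $\pot{\Gamma}$ or only add nonnegative potential, so $p \ge \pot{\Gamma'}$ still holds for the premise's context and the induction hypothesis applies directly.

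The one case that genuinely consumes the hypothesis $p \ge q$ — and the one I expect to be the crux — is \textsc{(T-Consume-P)} for $\econsume{c}{e_0}$ with $c \ge 0$: here I would trace the potential bookkeeping, observing that \textsc{(T-Consume-P)} forces a free potential summand $c$ into the context, so $\pot{\Gamma} \ge c$, hence $p \ge \pot{\Gamma} \ge c$, which makes $p - c$ a nonnegative integer and the tick step $\jstep{\econsume{c}{e_0}}{e_0}{p}{p-c}$ a legal transition (whereas \textsc{(T-Consume-N)}, with $c<0$, always steps since the cost only adds resources). This is precisely the content of the theorem: a resource-typed program started with at least $q$ units never gets stuck for lack of resources, so $q$ is a sound high-water-mark bound. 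The only real technical care is in maintaining the ``binds no program variables, consistent path conditions'' invariant along the derivation and in matching the bookkeeping of \textsc{(S-Transfer)} and \textsc{(S-Relax)} against the single scalar $q$; everything else is routine once canonical forms are established.
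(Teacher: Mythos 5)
Your proof is correct and follows essentially the same route as the paper's: strengthen the judgment to a context containing only free potentials and type variables, induct on the typing derivation, use closedness plus inversion/canonical forms on the atomic scrutinee or operator to fire the reductions in the elimination cases, and isolate \textsc{(T-Consume-P)} as the one place where $p \ge q$ is actually consumed (via the free potential summand $c$ forced into the context). The only notable difference is that the paper's appendix version threads a denotational consistency invariant for values through the same induction (needed later for the Consistency and Soundness results, not for progress itself) and obtains canonical forms by inversion on the typing of closed atoms rather than as a separate lemma.
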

\begin{proof}
	By strengthening the assumption to $\jstyping{\Gamma}{e}{S}$ where $\Gamma$ is a sequence of type variables and free potentials, and then induction on $\jstyping{\Gamma}{e}{S}$.
\end{proof}

The preservation theorem accounts for resource consumption by relating
the left over resources after a computation to the type judgment of the
new term.

\begin{theorem}[Preservation]
	If $\jstyping{q}{e}{S}$, $p \ge q$ and $\jstep{e}{e'}{p}{p'}$, then $\jstyping{p'}{e'}{S}$.
\end{theorem}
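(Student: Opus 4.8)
The plan is to proceed by induction on the small-step evaluation judgment $\jstep{e}{e'}{p}{p'}$, with an inner case analysis on the last rule used in the typing derivation $\jstyping{q}{e}{S}$ (after first strengthening the statement to arbitrary contexts $\Gamma$ consisting of type variables and free potentials, exactly as in Progress). Before the main induction I would establish the standard battery of structural lemmas: a \emph{weakening/strengthening} lemma for unused bindings and free potential, a \emph{substitution lemma} stating that if $\jstyping{\Gamma, x{:}T_x}{e}{T}$ and $\jstyping{\Gamma'}{\hat a}{T_x}$ with $\Gamma \curlyveedownarrow \Gamma_1 \mid \Gamma_2$ appropriately arranged, then $\jstyping{\Gamma''}{[\hat a/x]e}{[\calI(\hat a)/x]T}$ (with the interpretation substituted into refinements when $\hat a \in \mathsf{SimpAtom}$), and an \emph{inversion/canonical-forms} lemma describing the shape of values at each type. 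I would also need a lemma that the ``administrative'' rules \textsc{(S-Subtype)}, \textsc{(S-Gen)}, \textsc{(S-Inst)}, \textsc{(S-Transfer)}, \textsc{(S-Relax)} can be permuted, so that in each case of the induction I may assume the typing derivation ends with the syntax-directed rule for the head constructor of $e$, modulo a final bounded block of subsumption/transfer steps; crucially, \textsc{(S-Relax)} and \textsc{(S-Transfer)} only move potential around without creating it, so they preserve the numerical invariants.

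The heart of the argument is the $\mathsf{tick}$ case, which is where the resource parameter actually changes. If $e = \econsume{c}{e_0}$ with $c \ge 0$, then the step is $\jstep{e}{e_0}{p}{p-c}$, so $p' = p - c$; inversion on \textsc{(T-Consume-P)} gives $\Gamma = \Gamma_0, c$ with $\jstyping{\Gamma_0}{e_0}{S}$, and since $q$ (the total free potential of $\Gamma$) is at least $c$ plus the free potential of $\Gamma_0$, and $p \ge q$, we get $p - c \ge$ (free potential of $\Gamma_0$), i.e. $\jstyping{p'}{e_0}{S}$ as required. The $c < 0$ case via \textsc{(T-Consume-N)} is symmetric: the step makes $-c$ units \emph{available}, $p' = p - c = p + |c|$, and we had the extra $-c$ demanded in the context of the premise. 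For $\beta$-reduction of \textsc{app}, the step consumes no resource, so $p' = p$, and preservation follows from the substitution lemma applied to the body of the $\lambda$ or $\mathsf{fix}$ (for $\mathsf{fix}$, also substituting the recursive binding $f$, using self-sharing $\Gamma \curlyveedownarrow \Gamma \mid \Gamma$ to supply the context for both the unfolding and the residual term); here one must be careful with \textsc{(T-App-SimpAtom)} versus \textsc{(T-App)} — in the former we substitute $\calI(a_2)$ into $T$ and must match it against the refinement $\nu = \calI(a_2)$ produced by \textsc{(T-SimpAtom)} on the argument. The $\mathsf{let}$, $\mathsf{if}$, and $\mathsf{matl}$ cases split into a congruence sub-case (the subterm in evaluation position steps — apply the IH, re-thread the free potential, and reassemble using the same typing rule, which is legitimate because context sharing distributes the potential and the evaluated subexpression carries away exactly the potential the IH accounts for) and a reduction sub-case (the scrutinee is already a value — use canonical forms and the substitution lemma, noting for $\mathsf{matl}$ that the refinement $\calI(a_0) = x_t + 1$ or $\calI(a_0) = 0$ is validated by the actual list value).

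The main obstacle I anticipate is the bookkeeping around potential in the congruence/sharing cases: one must show that when $e = \mathsf{let}(e_1, x.e_2)$ steps via $e_1 \mapsto e_1'$, the free potential $q$ of $\Gamma$ can be re-partitioned so that $\Gamma_1$'s share (which typed $e_1$) becomes the share that types $e_1'$ after the step, while $\Gamma_2$'s share is untouched — this hinges on the \emph{net-cost invariance} remark in the operational-semantics paragraph ($p - p' = q - q'$ for a fixed number of steps) together with the fact that a single $\mapsto$ step only touches potential at a $\mathsf{tick}$, so the congruence rules are resource-neutral and the re-partition is immediate. A secondary subtlety is the interaction of \textsc{(S-Transfer)} with substitution: after substituting a \textsc{SimpAtom} for a variable, potential that was attached to that variable's type in the context must be accounted for, which is why the substitution lemma has to be stated with an explicit sharing of the incoming context and why we need \textsc{(S-Relax)} available to absorb leftover potential. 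Once these invariants are pinned down, every case is a mechanical reassembly, so I would state the substitution and net-cost lemmas carefully up front and then let the induction run.
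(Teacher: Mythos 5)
Your proposal is sound and all the load-bearing ingredients match the paper's: the strengthening to a context of free potentials, a substitution lemma that shares the incoming context and substitutes $\calI(t)$ into refinements and the result type, the accounting $p' = p - c \ge q - c$ in the \econsumename case, self-sharing for \textsc{(T-Fix)}, and the appeal to net-cost invariance (the paper's determinism-plus-cost-invariance proposition) to re-thread the free potential in the \textsc{(T-Let)} congruence case. The one genuine structural difference is the choice of induction measure: you induct on the evaluation judgment $\jstep{e}{e'}{p}{p'}$ and therefore must invert the typing derivation, which forces you to first prove that the non-syntax-directed rules \textsc{(S-Subtype)}, \textsc{(S-Inst)}, \textsc{(S-Transfer)}, \textsc{(S-Relax)} can be pushed into a trailing block so that a syntax-directed core rule is exposed. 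The paper instead inducts on the typing derivation $\jstyping{\Gamma}{e}{S}$ and only then inverts the (nearly premise-free) step relation; the structural rules then become ordinary inductive cases — the sub-derivation types the same expression, so the same step applies, the IH yields the result, and the structural rule is reapplied — which eliminates the need for any permutation or canonical-derivation lemma. Your route is workable (the "permutation" amounts to stripping a suffix of structural rules and showing they can be reapplied after the IH, which is exactly the content of the paper's structural cases), but the paper's orientation gets the same effect with less auxiliary machinery; conversely, your orientation localizes the resource reasoning at the single evaluation rule that touches the counter, which you correctly identify as the heart of the argument.
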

\begin{proof}
	By strengthening the assumption to $\jstyping{\Gamma}{e}{S}$ where $\Gamma$ is a sequence of free potentials, and then induction on $\jstyping{\Gamma}{e}{S}$, followed by inversion on the evaluation judgment $\jstep{e}{e'}{p}{p'}$.
\end{proof}

The proof of preservation makes use of the following crucial substitution lemma.

\begin{lemma}[Substitution]
	If $\jstyping{\Gamma_1,x:\trefined{B}{\psi}{\phi},\Gamma'}{e}{S}$, $\jstyping{\Gamma_2}{t}{\trefined{B}{\psi}{\phi}}$, $\jval{t}$ and $\jctxsharing{\Gamma}{\Gamma_1}{\Gamma_2}$, then $\jstyping{\Gamma,\subst{\calI(t)}{x}{\Gamma'}}{\subst{t}{x}{e}}{\subst{\calI(t)}{x}{S}}$.
\end{lemma}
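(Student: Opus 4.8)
The plan is to prove the substitution lemma by structural induction on the typing derivation $\jstyping{\Gamma_1,x:\trefined{B}{\psi}{\phi},\Gamma'}{e}{S}$. Before the induction, I would establish a handful of auxiliary facts: (1) that substitution commutes with the interpretation $\calI(\cdot)$, i.e. $\calI(\subst{t}{x}{a}) = \subst{\calI(t)}{x}{\calI(a)}$ whenever $a$ and $t$ are interpretable atoms (this follows by a trivial induction on $a$, using that $t$ is a value); (2) that substitution preserves the auxiliary judgments---sorting $\jsort{\cdot}{\cdot}{\Delta}$, well-formedness $\jwftype{\cdot}{\cdot}$, validity $\jprop{\cdot}{\cdot}$, subtyping $\jsubty{\cdot}{\cdot}{\cdot}$, and sharing $\jsharing{\cdot}{\cdot}{\cdot}{\cdot}$---under the same hypotheses; and (3) a context-sharing manipulation lemma saying that if $x$ has a scalar (hence potential-carrying subset) type in $\Gamma$ and $\jctxsharing{\Gamma}{\Gamma_1'}{\Gamma_2'}$, then the bound variable's type splits accordingly, and conversely sharing can be rearranged so that the substituted term's context $\Gamma_2$ threads through. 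The reason we may substitute $\calI(t)$ for $x$ in $\Gamma'$ and $S$ is precisely that $x$ has a scalar type, so every occurrence of $x$ in refinements is an occurrence of an interpretable atom; fact (1) then lets us push the substitution through.

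For the inductive step I would go rule by rule over the typing rules in Figure~\ref{fig:typingrules}. The \textbf{base cases} are \textsc{(T-SimpAtom)}, \textsc{(T-Var)}, and \textsc{(T-Imp)}: for \textsc{(T-Var)} applied to $x$ itself, the goal reduces to the given derivation $\jstyping{\Gamma_2}{t}{\trefined{B}{\psi}{\phi}}$ together with a context-weakening argument (the extra bindings in $\Gamma_1,\subst{\calI(t)}{x}{\Gamma'}$ carry only potential-free or point-wise-split potential, consistent with $\jctxsharing{\Gamma}{\Gamma_1}{\Gamma_2}$); for \textsc{(T-SimpAtom)} on an atom mentioning $x$ I use fact (1) to rewrite the refinement $\nu = \calI(a)$. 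The \textbf{structural cases} \textsc{(T-Cond)}, \textsc{(T-MatL)}, \textsc{(T-Let)}, \textsc{(T-App)}, \textsc{(T-App-SimpAtom)}, \textsc{(T-Abs)}, \textsc{(T-Abs-Lin)}, \textsc{(T-Fix)}: here one inverts the context sharing $\jctxsharing{\Gamma}{\Gamma_1'}{\Gamma_2'}$ used inside the rule, routes the substituted-term context $\Gamma_2$ into whichever branch actually uses $x$ (using an associativity/commutativity property of context sharing), applies the induction hypothesis to the sub-derivations, and reassembles. For \textsc{(T-App-SimpAtom)} and \textsc{(T-MatL)} the commutation fact (1) is again needed because the resulting type contains $\calI(a_2)$ (resp. the refinements $\calI(a_0)=0$, $\calI(a_0)=x_t+1$). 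For \textsc{(T-Abs)}/\textsc{(T-Fix)} one must check the self-sharing premise $\jctxsharing{\Gamma'}{\Gamma'}{\Gamma'}$ is preserved, which holds because $t$ is a value typed in a context $\Gamma_2$ whose potential has already been apportioned. The \textbf{subsumption-style cases} \textsc{(S-Gen)}, \textsc{(S-Inst)}, \textsc{(S-Subtype)}, \textsc{(S-Transfer)}, \textsc{(S-Relax)} follow from fact (2) plus the induction hypothesis; \textsc{(S-Gen)} additionally needs $\alpha \notin \varsof{t}$, obtainable by $\alpha$-renaming.

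The main obstacle I anticipate is the bookkeeping around \emph{context sharing and free potential} in the structural cases, especially \textsc{(T-MatL)} and \textsc{(T-Let)}. When $x$ occurs in, say, both $e_1$ and $e_2$ of a $\mathsf{let}$, the derivation first splits $\Gamma$ via $\jctxsharing{\Gamma}{\Gamma_1'}{\Gamma_2'}$, so $x$'s potential $\phi$ is split as $\phi_1 + \phi_2$; to apply the induction hypothesis I must correspondingly split the term's context $\Gamma_2$ and the term $t$'s potential, which requires a \emph{sharing lemma for the substituted value}: if $\jstyping{\Gamma_2}{t}{\trefined{B}{\psi}{\phi_1+\phi_2}}$ with $t$ a value, then $\Gamma_2$ splits as $\Gamma_2^1, \Gamma_2^2$ with $\jstyping{\Gamma_2^i}{t}{\trefined{B}{\psi}{\phi_i}}$. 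Establishing this auxiliary splitting---and checking it interacts correctly with the interpretation appearing in downstream refinements and with the \textsc{(S-Transfer)} moves that may be interleaved---is where the real work lies; everything else is routine rule-chasing modulo the commutation identity (1). I would prove the auxiliary value-splitting lemma first, by induction on the structure of the value $t$ (cases $\etrue,\efalse,\enil,\econs{\cdot}{\cdot}$, and lambda/fix which are ruled out since $t$ has scalar type), in parallel with, or as a corollary of, the sharing rules \textsc{Share-Bool}, \textsc{Share-List}, \textsc{Share-Subset}, \textsc{Share-Pot}.
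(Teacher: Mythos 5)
Your proposal is correct and follows essentially the same route as the paper: induction on the typing derivation, with the same supporting lemmas — substitution into the non-typing judgments (the paper's Lem.~\ref{lem:substintoref}), an atomic-typing substitution lemma, and crucially the value-splitting property you identify as the main obstacle, which appears in the paper as Prop.~\ref{prop:typingvalsharingsplit} (the paper proves it by induction on the typing derivation of the value rather than on the value's structure, but the content is the same). The commutation of substitution with $\calI(\cdot)$ and the self-sharing bookkeeping for \textsc{(T-Abs)}/\textsc{(T-Fix)} are likewise handled exactly as you describe.
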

\begin{proof}
	By induction on $\jstyping{\Gamma_1,x:\trefined{B}{\psi}{\phi},\Gamma'}{e}{S}$.
\end{proof}

Since we found the purely syntactic soundness statement about results
of computations (they are well-typed values) somewhat unsatisfactory, 
we also introduced a denotational notation of consistency. 
	For example, a list of values $\ell = [v_1,\cdots,v_n]$ is consistent with $\jstyping{q}{\ell}{\tpot{\tlist{\tsubset{\tbool}{\neg\nu}}}{\nu+5}}$, if $q \ge n+5$ and each value $v_i$ of the list is $\efalse$.
%
We then show that well-typed values are \emph{consistent} with their typing judgement.

\begin{lemma}[Consistency]
  If $\jstyping{q}{v}{S}$, then $v$ satisfies the conditions indicated
  by $S$ and $q$ is greater than or equal to the potential stored in
  $v$ with respect to $S$.
\end{lemma}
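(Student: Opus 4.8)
The plan is to prove this by induction on the typing derivation $\jstyping{q}{v}{S}$, where we note that since $v$ is a value, the last rule applied must be one of a restricted set: the value-introduction rules (\textsc{SimpAtom-True}, \textsc{SimpAtom-False}, \textsc{SimpAtom-Nil}, \textsc{SimpAtom-Cons} composed with \textsc{T-SimpAtom}, \textsc{T-Abs}, \textsc{T-Abs-Lin}, \textsc{T-Fix}), the polymorphism rules (\textsc{S-Gen}, \textsc{S-Inst}), or the structural/subsumption rules (\textsc{S-Subtype}, \textsc{S-Transfer}, \textsc{S-Relax}). The first step is to make the statement precise: I would define formally what it means for a value to \emph{satisfy the conditions indicated by $S$} (by recursion on $S$: a Boolean value satisfies $\tsubset{\tbool}{\psi}$ if $\psi[\calI(v)/\nu]$ is valid; a list $[v_1,\dots,v_n]$ satisfies $\tlist{T}$ if each $v_i$ satisfies $T$; a closure satisfies an arrow type vacuously or via a logical-relation-style clause; a $\forall\alpha.S$ is handled by quantifying over instantiations) and what \emph{the potential stored in $v$ with respect to $S$} means (the potential annotation $\phi$ at the top of $S$ instantiated at $\calI(v)$, plus the sum of the potentials stored in the sub-values according to the element types, i.e. $\pot{\cdot}$ applied pointwise).

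\textbf{Key steps.} First, handle the base cases: for $\etrue,\efalse,\enil$ the refinement obtained via \textsc{T-SimpAtom} is exactly $\nu = \calI(v)$, which $v$ trivially satisfies, and the potential is $0 \le q$. For \textsc{SimpAtom-Cons}, use the context-sharing premise $\jctxsharing{\Gamma}{\Gamma_1}{\Gamma_2}$ together with the induction hypotheses on the head and tail to add up the stored potentials and conclude the total is bounded by $q$; this is where the additivity of $\pot{\cdot}$ over $\sharing$ is used. Second, for the function cases \textsc{T-Abs}, \textsc{T-Abs-Lin}, \textsc{T-Fix}, the consistency conditions on arrow types are either vacuous or follow directly, and the stored potential is $0$. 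Third, for the subsumption rules: \textsc{S-Subtype} uses that subtyping preserves the truth of refinements (via \textsc{Sub-Subset}) and can only decrease potential (via \textsc{Sub-Pot}), so consistency is preserved; \textsc{S-Relax} and \textsc{S-Transfer} only rearrange or increase available potential, so the inequality $q \ge (\text{stored potential})$ is maintained — here the key algebraic fact is $\jprop{\Gamma}{\pot{\Gamma}=\pot{\Gamma'}}$ from \textsc{S-Transfer}, which lets us equate the free potential in the empty context (i.e. $q$) across the reindexing. Fourth, for \textsc{S-Gen}/\textsc{S-Inst}, push the property through the quantifier: a polymorphic value is consistent iff all its scalar instantiations are, and \textsc{S-Inst} gives exactly one such instantiation, while the self-sharing premise $\jsharing{\Gamma,\alpha}{S}{S}{S}$ of \textsc{S-Gen} ensures the potential inside $S$ is independent of the (uninterpreted) $\alpha$, so there is no double-counting.

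\textbf{Main obstacle.} I expect the hard part to be getting the \emph{definition} of consistency right so that the induction actually goes through — in particular the treatment of polymorphic types and of the multiplicity annotations $m\cdot\alpha$. For $m\cdot\alpha$ we need the stored potential to account for the fact that a value of this type may be duplicated up to $m$ times, so the "potential stored in $v$ with respect to $m\cdot\alpha$" should really be $m$ times the potential that an instantiation of $\alpha$ would carry; making this consistent with \textsc{Share-TVar} ($m = m_1 + m_2$) and with \textsc{S-Inst} (which instantiates $\alpha$ with an arbitrary $\tpot{\tsubset{B}{\psi}}{\phi}$) is the delicate point. A secondary subtlety is that the lemma as stated quantifies over \emph{all} derivations of $\jstyping{q}{v}{S}$, so we must be careful that the choice of last rule is genuinely exhaustive for values; a short canonical-forms-style case analysis on the shape of $v$ versus the head constructor of $S$ handles this. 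Once the definition is fixed, each inductive case is a routine unfolding, with the arithmetic reasoning discharged by validity checking in Presburger arithmetic as guaranteed earlier in the paper.
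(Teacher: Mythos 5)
Your proposal is correct and matches the paper's argument: the paper formalizes ``conditions indicated by $S$'' and ``potential stored in $v$'' exactly as you sketch (as $\condv{V}{v}{S}$ and $\potv{V}{v}{S}$, with arrow types vacuous, polymorphic schemas quantified over all scalar instantiations, and $m\cdot\alpha$ handled by multiplying the instantiating type's potential by $m$ under type substitution), and then establishes consistency by induction on the typing derivation, using sharing-additivity for the cons case and the subtyping/transfer/relax facts for the structural rules just as you describe. The only cosmetic difference is that the paper folds this induction into the strengthened Progress theorem (proving ``either $e$ steps or $e$ is a consistent value'' simultaneously) rather than running it as a standalone induction over value typings.
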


As a result, we derive the following theorem.

\begin{theorem}[Soundness]
  If $\jstyping{q}{e}{S}$ and $p \ge q$ the either
  \begin{itemize}
  \item $\jsteps{e}{v}{p}{p'}$ and $v$ is consistent with $\jstyping{p'}{v}{S}$ or
  \item for every $n$ there is $\tuple{e',p'}$ such that $\jstepn{e}{e'}{p}{p'}$.
  \end{itemize}
\end{theorem}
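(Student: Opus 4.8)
The plan is to derive Soundness from Progress, Preservation, and Consistency by analyzing a maximal small-step reduction sequence out of $\tuple{e,p}$. Fix a derivation of $\jstyping{q}{e}{S}$, assume $p \ge q$, put $\tuple{e_0,p_0} = \tuple{e,p}$, and let
\[
  \tuple{e_0,p_0} \mapsto \tuple{e_1,p_1} \mapsto \tuple{e_2,p_2} \mapsto \cdots
\]
be a sequence obtained by applying $\mapsto$ for as long as a step is available (any maximal sequence will do; determinism of $\mapsto$ is not needed).

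First I would prove, by induction on $i$, that every residual term is well typed at exactly its current resource level. The base case is the hypothesis $\jstyping{q}{e_0}{S}$ with $p_0 \ge q$. For the step, given $\jstyping{p_i}{e_i}{S}$ (or, at $i = 0$, $\jstyping{q}{e_0}{S}$ with $p_0 \ge q$) and $\tuple{e_i,p_i} \mapsto \tuple{e_{i+1},p_{i+1}}$, Preservation yields $\jstyping{p_{i+1}}{e_{i+1}}{S}$; the slack condition needed to re-invoke Preservation at the next index, namely $p_{i+1} \ge p_{i+1}$, is trivial. Hence $\jstyping{p_i}{e_i}{S}$ holds for all $i \ge 1$, and $\jstyping{q}{e_0}{S}$ with $p_0 \ge q$ holds at $i = 0$.

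Next I would case split on whether the sequence is infinite. If it is, then for each $n$ the length-$n$ prefix witnesses $\tuple{e_n,p_n}$ with $\jstepn{e}{e_n}{p}{p_n}$, which is the second disjunct. If it is finite, terminating at $\tuple{e_k,p_k}$ with no further step possible, apply Progress to the typing of $e_k$ obtained above (with its trivial resource slack, or with $p_0 \ge q$ when $k = 0$): since no step exists, $e_k$ must be a value $v$. Finally, Consistency applied to that typing of $v$ shows that $v$ satisfies the refinements imposed by $S$ and that the remaining resources dominate the potential stored in $v$ relative to $S$; as this last condition is upward closed in the resource parameter, $v$ is consistent with $\jstyping{p_k}{v}{S}$ (and with $\jstyping{p}{v}{S}$ when $k = 0$, using $p \ge q$). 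Combined with $\jsteps{e}{v}{p}{p_k}$, which holds by transitivity of $\mapsto^*$, taking $p' = p_k$ gives the first disjunct.

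The only delicate points in this argument are threading Preservation so that the leftover resource count is always exactly the resource bound of the residual program --- which is precisely the shape of the Preservation statement --- and noting that consistency with $\jstyping{p'}{v}{S}$ is monotone in $p'$, so the slack $p \ge q$ does no harm in the terminating case. I expect the real difficulty to lie entirely in the supporting lemmas I am assuming: Progress, which must treat resource exhaustion as genuine stuckness so that, e.g., $\tuple{\econsume{1}{e'}, 0}$ has no step; Preservation, whose proof rests on the Substitution Lemma and on tracking potential precisely through context sharing and the rules for $\econsumename$; and Consistency, which relies on the denotational reading of resource-annotated types.
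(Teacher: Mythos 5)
Your proof is correct and follows exactly the route the paper intends: the theorem is stated as a corollary of Progress, Preservation, and Consistency, obtained by iterating Preservation along a maximal reduction sequence, invoking Progress at a terminal configuration to conclude it is a value, and applying Consistency (with the observation that consistency is monotone in the resource parameter, which handles the $k=0$ case where the typing is at $q$ rather than $p$). No gaps.
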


Complete proofs can be 
\iflong
found in \autoref{sec:appendixproofs}.
\else 
found in the technical report~\cite{Techreport}.
\fi



\paragraph{Inductive Datatypes and Measures}
We can generalize our development of list types
for inductive types $\mu X. \many{C : \tprod{T}{X^k}}$, where $C$ is the constructor name, $T$ is the element type that does not contain $X$, and $X^k$ is the $k$-element product type $X \times X \times \cdots \times X$.
\changebar{%
The introduction rules and elimination rules are almost the same as \textsc{(T-Nil)}, \textsc{(T-Cons)} and \textsc{(T-MatL)}, respectively, except that we need to capture inductive invariants for each constructor $C$ in the rules correspondingly.%
}
In \synquid, these invariants are specified by inductive \emph{measures} that map values to refinements.
We can introduce new sorting rules for inductive types to embed values as their related measures in the refinement logic.


\paragraph{Constant Resource}
Our type system infers \emph{upper bounds} on resource usage. Recently, AARA has been generalized to verify \emph{constant-resource} behavior~\cite{SP:NDF17}.
A program is said to be constant-resource if its executions on inputs of the same size consume the same amount of resource.
We can adapt the technique in~\cite{SP:NDF17} to \typesys by
\begin{inparaenum}[(i)] 
\item changing the subtyping rules to keep potentials invariant 
(\ie replacing $\ge$ with $=$ in \textsc{(Sub-TVar)}, \textsc{(Sub-Arrow)}, \textsc{(Sub-Pot)}), and 
\item changing the rule \textsc{(Simp-Atom-Var)} to require $\phi = 0$.
\end{inparaenum}
Based on the modified type system, 
our synthesis algorithm can also synthesize constant-time implementations (see \autoref{sec:eval:micro} for more details).


\section{Type-Driven Synthesis with \typesys}\label{sec:synthesis}

In this section, we first show how to turn the type checking rules of \typesys into \emph{synthesis rules},
and then leverage these rules to develop a \emph{synthesis algorithm}.

\subsection{Synthesis Rules}\label{sec:synthesis:rules}


\begin{figure}[t!]
\[
\begin{array}{r@{\hspace{0.2em}}c@{\hspace{0.2em}}l}
  D & \Coloneqq & \cdot \mid D; x \gets e \\ 
	\hole{e} & \Coloneqq & e \mid \ehole \mid \eapp{x}{\ehole} 
  \mid \econd{x}{\ehole}{\ehole} \mid \ematl{x}{\ehole}{x_h}{x_t}{\ehole}
  \mid \elets{D}{\hole{e}}\\
  T & \Coloneqq & \tpot{R}{\phi} \mid \tbot
\end{array}
\]
\vspace{-2ex}
\caption{Extended syntax}
\label{fig:syntax-synth}
\vspace{-3ex}
\end{figure}

\paragraph{Extended Syntax}
To express synthesis rules, we extend \typesys with a new syntactic form \hole{e}
for \emph{expression templates}.
As shown in \autoref{fig:syntax-synth},
templates are expressions that can contain holes \ehole\ in certain positions.
The \emph{flat let} form \elets{D}{\hole{e}}, where $D$ is a sequence of bindings, 
is a shortcut for a nest of let-expressions \elet{x_1}{d_1}{\ldots \elet{x_n}{d_n}{\hole{e}}};
we write $\mathsf{fold}\p{\elets{D}{e}}$ to convert a flat let (without holes) back to the original syntax.
We also extend the language of types with an \emph{unknown type} \tbot,
which is used to build partially defined goal types, as explained below.

\begin{figure*}[t]
  \def \MathparLineskip {\lineskip=0.2cm}
	\begin{flushleft}\small
		\fbox{$\jfill{\Gamma}{\hole{e}}{S}{e}$}
	\end{flushleft}
	\begin{mathpar}\footnotesize
    \Rule{Syn-Gen}
    { \jsharing{\Gamma,\alpha}{S}{S}{S} \\ \jsynth{\Gamma,\alpha}{S}{e} }
    { \jsynth{\Gamma}{\forall\alpha.S}{e} }
    \and      
    \Rule{Syn-Fix}
    { \jsynth{\Gamma,f:\p{\tarrow{x}{T_x}{T}},x:T_x}{T}{e} \\ \jctxsharing{\Gamma}{\Gamma}{\Gamma} }
    { \jsynth{\Gamma}{\p{\tarrow{x}{T_x}{T}}}{\efix{f}{x}{e}} }
    \and
    \Rule{Syn-Abs}
    { \jsynth{\Gamma,x:T_x}{T}{e} }
    { \jsynth{\Gamma}{\p{\tarrowm{x}{T_x}{T}{1}}}{\eabs{x}{e}} }
    \and
    \Rule{Syn-Cond}
    { 
      \jasynth{\Gamma}{\tbool}{\elets{D}{x}} \\ 
      \jfill{\Gamma}{\elets{D}{\econd{x}{\ehole}{\ehole}}}{T}{e} }
    { \jsynth{\Gamma}{T}{e}}
    \and
    \Rule{Syn-MatL}
    { 
      \jwftype{\Gamma}{T}\\  
      \jasynth{\Gamma}{\tlist{T}}{\elets{D}{x}} \\
      \jfill{\Gamma}{\elets{D}{\ematl{x}{\ehole}{x_h}{x_t}{\ehole}}}{T}{e} }
    { \jsynth{\Gamma}{T}{e} }    
    \and
    \Rule{Fill-Cond}
    { \jatyping{\Gamma}{x}{\tbool} \\
    \jsynth{\Gamma, x  }{T}{e1} \\ 
      \jsynth{\Gamma, \neg x  }{T}{e2}  }
    { \jfill{\Gamma}{\econd{x}{\ehole}{\ehole}}{T}{\econd{x}{e1}{e2}}}	  
    \and  
    \Rule{Fill-MatL}
    {  \jctxsharing{\Gamma}{\Gamma_1}{\Gamma_2}  \\ \jatyping{\Gamma_1}{x}{\tlist{T}} \\
      \jsynth{\Gamma_2, x = 0}{T}{e1} \\ 
      \jsynth{\Gamma_2, x_h: T, x_t:\tlist{T}, x = 1 + x_t}{T}{e2}  }
    { \jfill{\Gamma}{\ematl{x}{\ehole}{x_h}{x_t}{\ehole}}{T}{\ematl{x}{e1}{x_h}{x_t}{e_2}}}	  
    \and
    \Rule{Fill-Let}
    { \jctxsharing{\Gamma}{\Gamma_1}{\Gamma_2} \\
      \jstyping{\Gamma_1}{e_1}{T_1} \\ 
      \jfill{\Gamma_2,x:T_1}{\elets{D}{\hole{e}_2}}{T}{e_2}  }
    { \jfill{\Gamma}{\elets{x \gets e_1; D}{\hole{e}_2}}{T}{\elet{e_1}{x}{e_2}} }
    \and  
    \Rule{Syn-Imp}
    { \jprop{\Gamma}{\bot} }
    { \jsynth{\Gamma}{T}{\eimp} }    
    \and  
    \Rule{Syn-Atom}
    { \jasynth{\Gamma}{T}{\elets{D}{a}}  }
    { \jsynth{\Gamma}{T}{\fold{\elets{D}{a}}} }    
	\end{mathpar}
  \begin{flushleft}\small
    \fbox{$\jafill{\Gamma}{\hole{e}}{T}{\elets{D}{a}}$}
  \end{flushleft}
  \begin{mathpar}\footnotesize
    \Rule{ASyn-Var}
    { \jstyping{\Gamma}{x}{T} }
    { \jasynth{\Gamma}{T}{\elets{\cdot}{x}} }
    \and  
    \Rule{ASyn-True}
    { \jstyping{\Gamma}{\etrue}{T} }
    { \jasynth{\Gamma}{T}{\elets{\cdot}{\etrue}} }
    \and
    \Rule{ASyn-False}
    { \jstyping{\Gamma}{\efalse}{T} }
    { \jasynth{\Gamma}{T}{\elets{\cdot}{\efalse}} }
    \and
    \Rule{ASyn-Nil}
    { \jstyping{\Gamma}{\enil}{T} }
    { \jasynth{\Gamma}{T}{\elets{\cdot}{\enil}} }
    \and    
    \Rule{ASyn-App}
    { 
      \jasynth{\Gamma}{\tarrowm{\_}{\tbot}{T}{1}}{\elets{D_1}{x}} \\    
      \jafill{\Gamma}{\elets{D_1}{\eapp{x}{\ehole}}}{T}{\elets{D}{x'}} \\
    }
    { \jasynth{\Gamma}{T}{\elets{D}{x'}} }
    \and  
    \Rule{AFill-Let}
    { \jctxsharing{\Gamma}{\Gamma_1}{\Gamma_2} \\
      \jstyping{\Gamma_1}{e_1}{T_1} \\ 
      \jafill{\Gamma_2,x:T_1}{\elets{D}{\hole{e}_2}}{T}{\elets{D_2}{a}}  }
    { \jafill{\Gamma}{\elets{x \gets e_1; D}{\hole{e}_2}}{T}{\elets{x \gets e_1; D_2}{a}} }
    \and    
    \Rule{AFill-App}
    { \jstyping{\Gamma}{x}{ \tarrowm{\_}{T_1}{T}{1} } \\  
      T_1~\textsf{non-scalar} \\
      \jsynth{\Gamma}{T_1}{\hat{a}} }
    { \jafill{\Gamma,1}{\eapp{x}{\ehole}}{T}{\elets{x' \gets \econsume{1}{\eapp{x}{\hat{a}}}}{x'}} }
    \and    
    \Rule{AFill-App-SimpAtom}
    { \jstyping{\Gamma}{x}{  \tarrowm{y}{T_1}{T'}{1}  } \\
      T_1~\textsf{scalar}  \\
      \jafill{\Gamma}{\ehole}{T_1}{\elets{D}{a}} \\  
      \jstyping{\Gamma}{\fold{ \elets{D}{\eapp{x}{a}}} }{T} }
    { \jafill{\Gamma,1}{\eapp{x}{\ehole}}{T}{\elets{D; x' \gets \econsume{1}{\eapp{x}{a}}}{x'} }}
  \end{mathpar}
	\caption{Selected synthesis rules}
	\label{fig:synthrules}
\end{figure*}

\paragraph{Synthesis for A-Normal-Form}
Our synthesis relation consists of two mutually recursive judgments:
the \emph{synthesis} judgment \jfill{\Gamma}{\hole{e}}{S}{e}
intuitively means that the template \hole{e} can be completed into an expression $e$ 
such that \jstyping{\Gamma}{e}{S};
the purpose of the auxiliary \emph{atomic synthesis} judgment is explained below.
Selected rules for both judgments are given in \autoref{fig:synthrules};
the full technical development can be found in
\iflong
Appendix~\ref{sec:appendixproofs:synth}.
\else 
the technical report~\cite{Techreport}.
\fi

The synthesis rule \textsc{(Syn-Gen)}
handles polymorphic goal types.
The rules \textsc{(Syn-Fix)} and \textsc{(Syn-Abs)} handle arrow types
and derive either a fixpoint term or an abstraction.
The rule \textsc{(Syn-Imp)} derives \eimp in an inconsistent context
(which may arise \eg in a dead branch of a pattern match).
The rest of the rules handle the common case when the goal type $T$ is scalar
and the context is consistent;
in this case the target expression can be
either a conditional, a match, or an \emph{E-term}~\cite{PolikarpovaKS16},
\ie a term made of variables, applications, and constructors. 
Special care must be taken to ensure that these expressions are in a-normal-form:
generally, a-normalizing an expression requires introducing fresh variables and let-bindings for them.
To retain completeness, our synthesis rules need to do the same:
intuitively, in addition to an expression $e$,
a rule might also need to produce a sequence of let-bindings $D$ that define fresh variables in $e$.
To this end, we introduce the \emph{atomic synthesis} judgment \jafill{\Gamma}{\hole{e}}{T}{\elets{D}{a}},
which synthesizes \emph{normalized E-terms},
where $a$ is an atom and each definition in $D$ is an application or a constructor in a-normal-form.

As an example, 
consider the rule \textsc{(Syn-Cond)} for synthesizing conditionals:
ideally, we would like to synthesize a guard $e$ of type \tbool,
and then synthesize the two branches under the assumptions that $e$ evaluates to \etrue\ and \efalse, respectively.
Recall, however, that the guard must be atomic;
hence, to synthesize a well-formed conditional,
we use atomic synthesis to produce a guard \elets{D}{x}.
Now to get a well-scoped program
we must place the whole conditional \emph{inside} the bindings $D$;
to that end, the second premise of \textsc{(Syn-Cond)} uses a nontrivial template
\elets{D}{\econd{x}{\ehole}{\ehole}}.
The rules \textsc{(Fill-Let)} and \textsc{(Fill-Cond)} 
handle this template by integrating it into the typing context and exposing the hole;
along the way \textsc{(Fill-Let)} takes care of context sharing,
which accounts for the potential consumed by the definitions in $D$.
Synthesis of matches works similarly using \text{(Syn-MatL)} and \text{(Fill-MatL)}. 

\paragraph{Atomic Synthesis}
The first four rules of atomic synthesis generate a simple atom if its type matches the goal;
the rest of the rules deal with the hardest part: normalized applications.
Consider the rule \textsc{(ASyn-App)}:
given a goal type $T$ for the application \eapp{e_1}{e_2},
we need to construct goal types for $e_1$ and $e_2$, 
to avoid enumerating them blindly.
Following \synquid's \emph{round-trip type checking} idea, 
we use the type $\tarrow{\_}{\tbot}{T}$ as the goal for $e_1$
(\ie a function from unknown type to $T$).
The subtyping rules for \tbot\
are such that $\jsubty{\Gamma}{\p{\tarrow{y}{T_1}{T_2}}}{\p{\tarrow{\_}{\tbot}{T}}}$ 
holds if $T_2$ and $T$ agree in shape and those refinements that do not mention $y$;
hence this goal type filters out those functions $e_1$ that cannot fulfill the desired goal type $T$,
independently of the choice of $e_2$.
One difference with \synquid is that the goal type for $e_1$ is \emph{linear},
reflecting that we intend to use $e_1$ only once
and allowing it to capture positive potential.

Similarly to the conditional case explained above,
the synthesized left-hand side of the application,  $e_1$, 
has the form \elets{D_1}{x},
and the argument $e_2$ must be synthesized inside the bindings $D_1$.
These bindings are processed by \textsc{(AFill-Let)},
and the actual argument synthesis happens in either \textsc{(AFill-App)} or \textsc{(AFill-SimpAtom)},
depending on whether the argument type is a scalar.
The former corresponds to a higher-order application:
here $T_1$ is an arrow type, and hence the argument cannot occur in the function's return type;
in this case, synthesizing an expression of type $T_1$
must yield an abstraction or fixpoint (since $T_1$ is an arrow),
both of which are atoms.
The latter corresponds to a first-order application:
here the return type $T'$ can mention $y$,
so after synthesizing an argument of type $T_y$,
we still need to check whether the resulting application \elets{D}{\eapp{x}{a}} has the right type $T$.
Note how both \textsc{(AFill-App)} or \textsc{(AFill-SimpAtom)}
return normalized E-terms by generating a fresh variable and binding it to an application.

\vspace{-.8ex}
\paragraph{Cost Metrics}
In the context of synthesis we cannot rely on programmer-written \econsumename terms to model cost.
Instead in our formalization we use a simple cost metric 
where each function application consumes one unit of resource;
hence every application generated by \textsc{(AFill-App)} or \textsc{(AFill-SimpAtom)}
is wrapped in $\econsume{1}{\cdot}$.
Our implementation provides more flexibility
and allows the programmer to annotate any arrow type with a non-negative cost $c$
to denote that applying a function of this type should incur cost $c$.
%


\paragraph{Soundness}

The synthesis rules 
always produce a well-typed expression
\iflong 
(proof can be found in \autoref{sec:appendixproofs:relativesound}).
\else 
(proof can be found in the technical report~\cite{Techreport}).
\fi

\vspace{-.8ex}
\begin{theorem}[Soundness of Synthesis]\label{Thm:SynthSound}
If \jsynth{\Gamma}{S}{e} then \jstyping{\Gamma}{e}{S}.
\end{theorem}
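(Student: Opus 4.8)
The plan is to prove the statement by a straightforward induction on the derivation of the synthesis judgment $\jsynth{\Gamma}{S}{e}$, simultaneously with the analogous claim for the auxiliary atomic-synthesis judgment: namely, that $\jafill{\Gamma}{\hole{e}}{T}{\elets{D}{a}}$ implies $\jstyping{\Gamma}{\fold{\elets{D}{a}}}{T}$, and more generally that the ``fill'' judgment $\jfill{\Gamma}{\hole{e}}{S}{e}$ implies $\jstyping{\Gamma}{e}{S}$ whenever the template $\hole{e}$ is well-formed in $\Gamma$. Since the synthesis, fill, and atomic-synthesis judgments are mutually recursive (\textsc{Syn-Cond} invokes \textsc{ASyn} and then \textsc{Fill-Cond}, etc.), the induction has to be carried out on the combined height of the derivation over all three judgments at once. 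The key observation making this work is that every synthesis rule in \autoref{fig:synthrules} is deliberately designed so that its premises, read as typing obligations, line up with the premises of a corresponding typing rule from \autoref{fig:typingrules}.

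First I would set up the mutual statement precisely and fix the induction metric (derivation height). Then I would go rule by rule. The ``structural'' rules are immediate: \textsc{Syn-Gen} mirrors \textsc{(S-Gen)}, \textsc{Syn-Fix} mirrors \textsc{(T-Fix)}, \textsc{Syn-Abs} mirrors \textsc{(T-Abs-Lin)} with $m=1$, \textsc{Syn-Imp} mirrors \textsc{(T-Imp)} (the well-formedness side condition on $T$ comes from the assumed well-formedness of the template/goal), and \textsc{ASyn-Var}, \textsc{ASyn-True}, \textsc{ASyn-False}, \textsc{ASyn-Nil} literally have the required typing judgment as a premise. For \textsc{Syn-Atom} I would apply the atomic IH to get $\jstyping{\Gamma}{\elets{D}{a}}{T}$ and then argue that $\fold{\cdot}$ preserves typing (a small syntactic lemma: folding a flat let back into nested lets does not change the derivable types, since \textsc{(T-Let)} is exactly the nested form). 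The \textsc{AFill-App} and \textsc{AFill-App-SimpAtom} cases produce $\econsume{1}{\eapp{x}{\cdot}}$, so I would use \textsc{(T-Consume-P)} together with \textsc{(T-App)} or \textsc{(T-App-SimpAtom)} respectively, splitting the context via the context-sharing premise; the extra ``$1$'' in the context of those rules supplies exactly the free potential that \textsc{(T-Consume-P)} demands.

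The genuinely delicate cases are the three template-elimination rules \textsc{Fill-Cond}, \textsc{Fill-MatL}, \textsc{Fill-Let} and the two atomic ``glue'' rules \textsc{AFill-Let}, \textsc{ASyn-App}, because here the synthesized program has a shape (nested lets wrapping a conditional or a match, with fresh variables bound along the way) that does not appear verbatim in the template, and one has to check that the typing context threaded through the premises really is the one required by \textsc{(T-Cond)}, \textsc{(T-MatL)}, \textsc{(T-Let)}. I expect this to be the main obstacle: I would need an auxiliary lemma stating that if $\jfill{\Gamma}{\elets{D}{\hole{e}_0}}{T}{e}$ is derived (by repeated use of \textsc{Fill-Let}), then $e$ has the form $\elet{d_1}{x_1}{\cdots\elet{d_n}{x_n}{e_0}}$ where $e_0$ is whatever the inner template elaborates to, each $x_i$ is typed under the appropriately shared sub-context, and $e_0$ itself is typed in $\Gamma_{\mathrm{inner}}, x_1{:}T_1,\ldots,x_n{:}T_n$; then \textsc{(T-Let)} can be applied $n$ times from the inside out. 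With that lemma in hand, \textsc{Syn-Cond} reduces to: apply the lemma to peel off $D$, then in the innermost position apply \textsc{Fill-Cond}, whose two premises give (by IH) the two branch typings needed by \textsc{(T-Cond)}, with the path conditions $x$ and $\neg x$ correctly added — here I would appeal to the interpretation $\calI(\cdot)$ so that the assumption ``$x$'' in the synthesis context matches the assumption ``$\calI(a_0)$'' with $a_0 = x$ in \textsc{(T-Cond)}. \textsc{Syn-MatL} is analogous, using that $\calI(x_t)=x_t$ and $\calI(\econs{x_h}{x_t}) = x_t + 1$ so the refinement ``$x = 1 + x_t$'' is exactly the inductive invariant premise of \textsc{(T-MatL)}. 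The rest — checking well-formedness side conditions propagate, and that \textsc{ASyn-App} correctly builds the function goal $\tarrowm{\_}{\tbot}{T}{1}$ and then invokes \textsc{AFill-App}/\textsc{AFill-App-SimpAtom} — is then routine bookkeeping, noting that $\tbot$ only ever appears in a position that is discharged by subtyping before a final typing derivation is produced, so no typing rule ever has to ``look inside'' $\tbot$.
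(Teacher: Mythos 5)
Your proposal is correct and matches the paper's proof in all essentials: the paper establishes exactly the two auxiliary claims you state (that atomic synthesis yields a well-typed $\fold{\elets{D}{a}}$, and that the fill judgment yields a well-typed term), by rule-by-rule induction on the derivations, matching each synthesis rule to its typing counterpart and using a small lemma for inserting $\econsume{1}{\cdot}$ at the end of a folded let-chain. The only cosmetic differences are that the paper stages the argument as two separate lemmas (the atomic one first, since redundant typing premises in \textsc{(AFill-App)} and \textsc{(AFill-App-SimpAtom)} break the mutual recursion, making a combined height induction unnecessary), and it does not need your explicit ``peeling'' lemma for let-chains, because \textsc{(Fill-Let)} is itself a derivation step, so the induction hypothesis peels one binding at a time and a single application of \textsc{(T-Let)} reassembles it.
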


\subsection{Synthesis Algorithm}\label{sec:synthesis:algo}

In this section we discuss how to turn the declarative synthesis rules of \autoref{sec:synthesis:rules} 
into a \emph{synthesis algorithm},
which takes as input a \emph{goal type} $S$, a context $\Gamma$, and a bound $k$ on the program depth,
and either returns a program $e$ of depth at most $k$ such that $\jstyping{\Gamma}{e}{S}$,
or determines that no such program exists.
The core algorithm follows the recipe from prior work on type-driven synthesis~\cite{PolikarpovaKS16,OseraZd15}
and performs a fairly standard goal-directed backtracking proof search 
with $\jsynth{\Gamma}{e}{S}$ as the top-level goal.
%
%
In the rest of this section,
we explain how to make such proof search feasible
by reducing the core sources of non-determinism to constraint solving.

\Omit{
\paragraph{Goal-Directed Search}
First, like in prior work on synthesis of functional programs~\cite{GveroKuKuPi13,OseraZd15,PolikarpovaKS16},
we can restrict synthesis to programs in $\eta$-long form (where all functions are fully applied).
Hence, without loss of completeness
we can always apply \textsc{(Syn-Fix)} when the goal type is \tarrow{x}{T_1}{T_2}
and \textsc{(Syn-Abs)} when the goal type is \tarrowm{x}{T_1}{T_2}{1}.
Note that we also do not lose completeness by preferring a fixpoint term whenever possible,
because whenever \jstyping{\Gamma}{\eabs{x}{e}}{\tarrow{x}{T_1}{T_2}},
we also have \jstyping{\Gamma}{\efix{f}{x}{e}}{\tarrow{x}{T_1}{T_2}}.
Given a scalar goal type $T$ and a consistent context,
the synthesis algorithm first tries to generate a normalized E-term using \textsc{(Syn-Atom)},
and, failing that, tries \textsc{(Syn-Cond)}, 
and then \textsc{(Syn-MatL)}.
Atomic synthesis explores normalized E-terms in the order of depth:
it first tries the rules that generate atoms,
and failing that, 
tries to generate an application (unless the depth bound has been reached).
Finally, note that all ``hole-filling'' rules with a non-trivial template $\hole{e}$
are directed by the syntax of the template
and, in the case of \textsc{(AFill-App)} and \textsc{(AFill-App-SimpAtom)},
the types of known sub-expressions inside the template. 

\paragraph{Algorithmic Typing}
Several rules in \autoref{fig:synthrules} rely on the typing judgment in their premises.
Although the type system presented in \autoref{sec:typesys} is declarative, 
it can be transformed into an algorithmic bidirectional type system similar to the one in \synquid
(we omit the formalization in the interest of space).
As is standard for such transformation,
structural rules, such as \textsc{(S-Inst)}, \textsc{(S-Subtype)}, and \textsc{(S-Transfer)},
are integrated with syntax-directed rules.
}

\vspace{-.8ex}
\paragraph{Typing constraints}
%
The main sources of non-determinism in a synthesis derivation stem from the following premises
of synthesis and typing rules:
\begin{inparaenum}[(1)]
\item whenever a given context $\Gamma$ is shared as \jsharing{\Gamma}{\Gamma}{\Gamma_1}{\Gamma_2},
we need to guess how to apportion potential annotations in $\Gamma$;
\item whenever potential in a given context $\Gamma$ is transfered, 
we need to guess potential annotations in $\Gamma'$ such that $\pot{\Gamma} = \pot{\Gamma'}$; and finally
\item whenever \tpot{\tsubset{B}{\psi}}{\phi} is used to instantiate a type variable,
we need to guess both $\phi$ and $\psi$.
\end{inparaenum}
All three amount to inference of unknown refinement terms of either Boolean or numeric sort.
To infer these terms efficiently, we use the following constraint-based approach.
First, we build a symbolic synthesis derivation, 
which may contain \emph{unknown refinement terms} $U^{\Delta}_{\Gamma}$,
and collect all subtyping, sharing, and transfer premises from the derivation
into a system of \emph{typing constraints}.
Here $\Delta$ records the desired sort of the unknown refinement term,
and $\Gamma$ records the context in which it must be well-formed.
A \emph{solution} to a system of typing constraints,
is a map $\sol : U \to \psi$
such that for every unknown $U^{\Delta}_{\Gamma}$,
$\jsort{\Gamma}{\sol(U)}{\Delta}$ 
and substituting $\sol(U)$ for $U$ within the typing constraints yields
valid subtyping, sharing, and transfer judgments.


\begin{figure}
\small
\begin{align*}
&\fbox{\text{Subtyping constraints}}\\
&\simplify(\jsubty{\Gamma}{m_1 \cdot \alpha}{m_2 \cdot \alpha}) = \{\jprop{\Gamma}{m_1 - m_2 \ge 0}\}\\
&\simplify(\jsubty{\Gamma}{\tsubset{B_1}{\psi_1}}{\tsubset{B_2}{\psi_2}}) = \{\jprop{\Gamma,\bindvar{\nu}{B_1} }{\psi_1 \implies \psi_2}\} \cup \simplify(\jsubty{\Gamma}{B_1}{B_2})\\
&\simplify(\jsubty{\Gamma}{\tpot{R_1}{\phi_1}}{\tpot{R_2}{\phi_2}}) = \{\jprop{\Gamma,\bindvar{\nu}{R_1}}{\phi_1 - \phi_2 \ge 0}\} \cup \simplify(\jsubty{\Gamma}{R_1}{R_2})\\
&\fbox{\text{Sharing constraints}}\\
&\simplify(\jsharing{\Gamma}{m \cdot \alpha}{m_1 \cdot \alpha}{m_2 \cdot \alpha}) = \{\jprop{\Gamma}{m - (m_1 + m_2) \ge 0}, \\
& \quad\quad\jprop{\Gamma}{m_1 + m_2 - m \ge 0}\}\\
&\simplify(\jsharing{\Gamma}{\tpot{R}{\phi}}{\tpot{R_1}{\phi_1}}{\tpot{R_2}{\phi_2}}) = \{\jprop{\Gamma}{\phi - (\phi_1 + \phi_2) \ge 0}, \jprop{\Gamma}{\phi_1 + \phi_2 - \phi \ge 0}\}\\ 
& \quad\quad \cup \simplify(\jsharing{\Gamma}{R}{R_1}{R_2})\\
&\fbox{\text{Transfer constraints}}\\
&\simplify(\pot{\Gamma}=\pot{\Gamma'}) = \{\jprop{\Gamma}{\pot{\Gamma} - \pot{\Gamma'} \ge 0, \pot{\Gamma'} - \pot{\Gamma} \ge 0}\}
\end{align*}
\vspace{-5ex}
\caption{Selected cases for translating typing constraints to validity constraints.}\label{fig:simplify}
\end{figure}

\vspace{-.8ex}
\paragraph{Constraint Solving}
To solve typing constraints, the algorithm  first transforms them into validity constraints
of one of two forms:
\jprop{\Gamma}{\psi \implies \psi'} or \jprop{\Gamma}{\phi \geq 0};
the interesting cases of this translation are shown in \autoref{fig:simplify}.
\iflong
Then, using the definition of validity (\autoref{sec:appendixvalidity}),
\else 
Then, using the definition of validity (in the technical report \cite{Techreport}),
\fi
we further reduce these into a system of:
\begin{enumerate} 
\item \emph{Horn constraints} of the form $\psi_1 \wedge \ldots \wedge \psi_n \implies \psi_0$, and
\item \emph{resource constraints} of the form $\psi_1 \wedge \ldots \wedge \psi_n \implies \phi \geq 0$.
\end{enumerate}
Here any $\psi_i$ can be either a Boolean unknown $U^{\bbB}_{\Gamma}$ or a known refinement term,
and $\phi$ is a sum of zero or more numeric unknowns $U^{\bbN}_{\Gamma}$ and a known (linear) refinement term.
While prior work has shown how to efficiently solve Horn constraints using
predicate abstraction~\cite{RondonKaJh08,PolikarpovaKS16},
resource constraints present a new challenge,
since they contain unknown terms of both Boolean and numeric sorts.
In the interest of efficiency,
our synthesis algorithm does not attempt to solve for both Boolean and numeric terms at the same time.
Instead, it uses existing techniques to find a solution for the Horn constraints,
and then plugs this solution into the resource constraints.
Note that this approach does not sacrifice completeness,
as long as the Horn solver returns the least-fixpoint (\ie strongest) solution for each $U^{\bbB}_{\Gamma}$,
since Boolean unknowns only appear negatively in resource constraints%
\footnote{Our implementation uses \synquid's default greatest-fixpoint Horn solver,
which technically renders this technique incomplete,
however we observed that it works well in practice.}.

\paragraph{Resource Constraints}
The main new challenge then is to solve a system of resource constraints
of the form $\psi \implies \phi \geq 0$, 
where $\psi$ is now a known formula of the refinement logic.
Since potential annotations in \typesys are restricted to linear terms over program variables,
we can replace each unknown term $U^{\bbN}_{\Gamma}$ in $\phi$
with a linear template $\sum_{x \in X} C_i\cdot x$,
where each $C_i$ is an unknown integer coefficient
and $X$ is the set of all variables in $\Gamma$ such that $\jsort{\Gamma}{x}{\bbN}$. 
After normalization, the system of resource constraints is reduced 
to the following doubly-quantified system of linear inequalities:
$$
\many{\exists C_i} . \many{\forall x} . \bigwedge_{r\in R} r(\many{C_i}, \many{x})
$$
where each clause $r$ is of the form $\psi(\many{x}) \implies \sum f(\many{C_i})\cdot x \geq 0$,
$\psi$ is a known formula over the program variables $\many{x}$, 
and each $f$ is a linear function over unknown integer coefficients $\many{C_i}$.

Note a crucial difference between these constraints and those generated by RaML:
since RaML's potential annotations are not dependent---%
\ie $r$ cannot mention program variables $\many{x}$---%
its resource constraints reduce to plain linear inequalities: 
$
\many{\exists C_i} . \bigwedge \sum C_i \geq c
$
(where $c$ is a known constant),
which can be handled by an LP solver.
In our case, the challenge stems both from the double quantification
and the fact that individual clauses $r$ are \emph{bounded} by formulas $\psi$,
which are often nontrivial.
For example, synthesizing the function \T{range} from \autoref{sec:background}
gives rise to the following (simplified) resource constraints:
\begin{align*}
&\exists C_0 \ldots C_3 . \forall a, b, \nu. \\
&\quad(\neg(a \ge b) \wedge \nu = b) \implies (C_0 + 1){\cdot}a + C_1{\cdot}b + (C_2 - 1){\cdot}\nu + C_3 \ge 0\\
&\quad(\neg(a \ge b) \wedge \nu = b) \implies C_0{\cdot}a + C_1{\cdot}b + C_2{\cdot}\nu + C_3 \ge 0
\end{align*}
where a solution only exists if the bounds are taken into account.
One solution is $[C_0 \mapsto -1, C_1 \mapsto 0, C_2 \mapsto 1, C_3 \mapsto 0]$, 
which stands for the potential term $\nu - a$.

\setlength{\textfloatsep}{10pt}
\begin{algorithm}[t]
  \begin{algorithmic}[0] 
    \Require{Constraints $R$, current solution $\cegissol$, examples $\cegisex$}
    \Ensure{New solution and examples $(\cegissol, \cegisex)$ or $\bot$ if no solution}    
    \Procedure{Solve}{$R,\ \cegissol,\ \cegisex$}
      \State $e \gets \mathsf{SMT}(\exists \many{x} . \lnot R(\cegissol, \many{x}))$
      \If {$e = \bot$}  \Comment{No counter-example}
        \State \textbf{return} $(\cegissol, \cegisex)$
      \Else
        \State $\cegisex' \gets \ \cegisex \ \cup \ e$
        \State $R' \gets \{ r \in R \mid \lnot r(\cegissol, e) \}$ 
        \State $\cegissol' \gets \mathsf{SMT}(\exists \many{C_i} . \bigwedge_{e \in \cegisex'} \ R'(\many{C_i}, e))$
        \If{$\cegissol' = \bot$} \textbf{return} $\bot$  \Comment{No solution}
        \Else\ \Call{Solve}{$R,\ \cegissol \ \cup \ \cegissol',\ \cegisex'$}
        \EndIf
      \EndIf
    \EndProcedure
  \end{algorithmic}
  \caption{Incremental solver for resource constraints}
  \label{alg:incremental}
\end{algorithm}

\paragraph{Incremental Solving}
Constraints of this form can be solved using \emph{counter-example guided synthesis} (CEGIS)~\cite{Solar-LezamaTBSS06},
which is, however, relatively expensive.
We observe that in the context of synthesis we have to repeatedly solve similar systems of resource constraints
because a program candidate is type-checked \emph{incrementally} as it is being constructed,
which corresponds to an incrementally growing set of clauses $R$.
Moreover, we observe that as new clauses are added,
only a few existing coefficients $C_i$ are typically invalidated,
so we can avoid solving for all the coefficients from scratch.
To this end, we develop an incremental version of the CEGIS algorithm, 
shown in Algorithm~\ref{alg:incremental}. 

The goal of the algorithm is to find a solution
$\cegissol : C_i \to \bbZ$ that maps unknown coefficients to integers
such that $\many{\forall x} . R(\cegissol, \many{x})$ holds
(we write $R(\cegissol, \many{x})$ as a shorthand for $\bigwedge_{r\in R} r(\cegissol, \many{x})$).
The algorithm takes as input a set of clauses $R$
(which includes both old and new clauses),
the current solution $\cegissol$
(new coefficients $C_i$ are mapped to 0)
and the current set of \emph{examples} $\cegisex$, 
where an example $e \in \cegisex$ is a partial assignment to universally-quantified variables $e: X \to \bbN$.
%


The algorithm first queries the SMT solver for a counter-example $e$ to the current solution.
If no such counter-example exists, the solution is still valid
(this happens surprisingly often, since many resource constraints are trivial).
Otherwise, the current solution needs to be updated.
To this end, a traditional CEGIS algorithm would query the SMT solver with the following \emph{synthesis constraint}:
$\exists \many{C_i} . \bigwedge_{e \in \cegisex'} \ R(\many{C_i}, e)$,
which enforces that all clauses are satisfied on the extended set of examples.
Instead, our incremental algorithm picks out only those clauses $R'$
that are actually violated by the new counter-example;
since in our setting $R'$ is typically small,
this optimization significantly reduces the size of the synthesis constraint
and synthesis times for programs with dependent annotations
(as we demonstrate in \autoref{sec:eval}).

\subsection{Implementation}\label{sec:synthesis:impl}

We implemented the resource-guided synthesis algorithm
in \tool,
which extends \synquid with support for resource-annotated types
and a resource constraint solver.
Note that while our formalization is restricted to Booleans
and length-indexed lists,
our implementation supports the full expressiveness of \synquid's types:
types include integers and user-defined algebraic datatypes,
and refinement formulas support sets and can mention arbitrary user-defined measures.
More importantly, resource terms in \tool can mention integer variables
and use subtraction, multiplication, conditional expressions,
and numeric measures;
finally, multiplicities on type variables can be dependent (mention variables).
These changes have the following implications:
\begin{inparaenum}[(1)]
\item resource terms are not syntactically guaranteed to be non-negative, 
so we emit additional well-formedness constraints to enforce this;
\item resource terms are not syntactically restricted to be linear;
our implementation is incomplete, and simply rejects the program if a nonlinear term arises;
\item subtyping and sharing constraints with conditional resource terms
are decomposed into unconditional ones by moving the guard to the context,
so the search space for all numeric unknowns remains unconditional;
\item to handle measure applications in resource constraints,
we replace them with fresh integer variables,
and avoid spurious counter-examples by explicitly instantiating the congruence axiom
with all applications in the constraint.
\end{inparaenum}

\section{Evaluation}\label{sec:eval}


We evaluated \tool using the following criteria:
%
%
\begin{description}[leftmargin=1em]
\item\textbf{Relative performance:} How do \tool's synthesis times compare to \synquid's?
How much does the additional burden of solving resource constraints 
affect its performance?
\item\textbf{Efficacy of resource analysis:} Can \tool discover
more efficient programs than \synquid?
\item\textbf{Value of round-trip type checking:}
Does round-trip type checking afforded by the tight integration of resource analysis into \synquid
effective at pruning the search space?
How does it compare to the naive combination of synthesis and resource analysis?
%
\item\textbf{Value of incremental solving:} To what extent does incremental solving 
of resource constraints improve \tool's performance?
\end{description}

\subsection{Relative Performance}\label{sec:eval:bench}
To evaluate \tool's performance relative to \synquid,
we selected \numBench problems from \synquid's original suite,
annotated them with resource bounds,
and re-synthesized them with \tool.
The rest of the original 64 benchmarks require non-linear bounds,
and thus are out of scope of \typesys. 
The details of this experiment are shown in \autoref{fig:eval:res}, which 
compares \tool's synthesis times against \synquid's on these 
linear-bounded benchmarks.

Unsurprisingly, due to the additional constraint-solving, 
\tool generally performs worse than \synquid:
the median synthesis time is about \slowdown higher.
Note, however, that in return it provides provable guarantees about the performance
of generated code.
\tool was able to discover a more efficient implementation for only \emph{one} of the original 
\synquid benchmarks (\T{compress}, discussed below). 
In general, these benchmarks contain only the minimal set of components 
required to produce a valid implementation,
which makes it hard for \synquid to find a non-optimal version. 
%
\emph{Four} of the benchmarks in \autoref{fig:eval:res} use advanced features of \typesys:
for example, any function using natural numbers to index or construct 
a data structure requires dependent potential annotations.

\begin{table}[t]
\sisetup{round-mode=places}
\footnotesize
  \begin{minipage}{\columnwidth}
    \resizebox{\columnwidth}{!}{
      \begin{tabular}{@{} r|c| c | cS[round-precision=1]S[round-precision=1] @{}}
        \head{Group} & \head{Description}  & \head{Components} & \head{Code} & \head{Time} & \head{TimeNR}  \\	

        \hhline{======}
        \multirow{22}{*}{\parbox{1cm}{\vspace{-0.85\baselineskip}\center{List}}} & is empty & true, false & 16 & 0.20 & 0.16 \\
        & member & true, false, $=$, $\neq$ & 41 & 0.24 & 0.21 \\
        & duplicate each element &  & 39 & 0.52 & 0.27 \\
        & replicate & 0, inc, dec, $\leq$, $\neq$ & 31 & 2.88 & 0.23 \\
        & append two lists &  & 38 & 1.54 & 0.48 \\
        & take first $n$ elements & 0, inc, dec, $\leq$, $\neq$ & 34 & 2.42 & 0.17 \\
        & drop first $n$ elements & 0, inc, dec, $\leq$, $\neq$ & 30 & 20.37 & 0.32 \\
        & concat list of lists & append & 49 & 3.34 & 0.81 \\
        & delete value & $=$, $\neq$ & 49 & 0.77 & 0.31 \\
        & zip &  & 32 & 0.43 & 0.22 \\
        & zip with &  & 35 & 0.45 & 0.24 \\
        & $i$-th element & 0, inc, dec, $\leq$, $\neq$ & 30 & 0.30 & 0.23 \\
        & index of element & 0, inc, dec, $=$, $\neq$ & 43 & 0.54 & 0.29 \\
        & insert at end &  & 42 & 0.43 & 0.32 \\
        & balanced split & fst, snd, abs & 64 & 9.62 & 1.73 \\
        & reverse & insert at end & 35 & 0.44 & 0.32 \\
        & insert (sorted) & $\leq$, $\neq$ & 57 & 1.98 & 0.69 \\
        & extract minimum & $\leq$, $\neq$ & 71 & 18.14 & 8.31 \\
        & foldr &  & 43 & 1.78 & 0.60 \\
        & length using fold & 0, inc, dec & 39 & 0.27 & 0.21 \\
        & append using fold &  & 42 & 0.32 & 0.26 \\
        & map &  & 27 & 0.28 & 0.19 \\
        \hline\multirow{5}{*}{\parbox{1cm}{\vspace{-0.85\baselineskip}\center{Unique list}}} & insert & $=$, $\neq$ & 49 & 0.83 & 0.44 \\
        & delete & $=$, $\neq$ & 45 & 0.53 & 0.34 \\
        & compress & $=$, $\neq$ & 64 & 4.98 & 1.85 \\
        & integer range & 0, inc, dec, $\leq$, $\neq$ & 46 & 88.35 & 5.14 \\
        & partition & $\leq$ & 71 & 13.04 & 5.46 \\
        \hline\multirow{3}{*}{\parbox{1cm}{\vspace{-0.85\baselineskip}\center{Sorted list}}} & insert & $<$ & 64 & 1.62 & 0.57 \\
        & delete & $<$ & 52 & 0.47 & 0.30 \\
        & intersect & $<$ & 71 & 17.01 & 0.76 \\
        \hline\multirow{4}{*}{\parbox{1cm}{\vspace{-0.85\baselineskip}\center{Tree}}} & node count & 0, 1, + & 34 & 3.84 & 0.52 \\
        & preorder & append & 45 & 3.03 & 0.55 \\
        & to list & append & 45 & 2.99 & 0.54 \\
        & member & false, not, or, $=$ & 63 & 2.17 & 0.60 \\
        \hline\multirow{4}{*}{\parbox{1cm}{\vspace{-0.85\baselineskip}\center{BST}}} & member & true, false, $\leq$, $\neq$ & 72 & 0.50 & 0.32 \\
        & insert & $\leq$, $\neq$ & 90 & 4.49 & 1.55 \\
        & delete & $\leq$, $\neq$ & 103 & 26.77 & 9.25 \\
        & BST sort & $\leq$, $\neq$ & 191 & 8.97 & 4.27 \\
        \hline\multirow{5}{*}{\parbox{1cm}{\vspace{-0.85\baselineskip}\center{Binary Heap}}} & insert & $\leq$, $\neq$ & 90 & 3.21 & 1.03 \\
        & member & false, not, or, $\leq$, $\neq$ & 78 & 2.34 & 0.84 \\
        & 1-element constructor & $\leq$, $\neq$ & 44 & 0.21 & 0.21 \\
        & 2-element constructor & $\leq$, $\neq$ & 91 & 0.67 & 0.34 \\
        & 3-element constructor & $\leq$, $\neq$ & 274 & 21.35 & 4.02 \\
        \hline
      \end{tabular}
    }
  \caption{Comparison of \tool and \synquid.
  For each benchmark, we report the set of provided \head{Components};
  cumulative size of synthesized \head{Code} (in AST nodes) for all goals;
  as well as running times (in seconds) for \tool (\head{Time})
  and \synquid (\head{TimeNR}).
  }\label{fig:eval:res}
  \end{minipage}
\end{table}

\subsection{Case Studies}\label{sec:eval:micro}

\begin{table*}[t]
\sisetup{round-mode=places}
\begin{center}
\footnotesize
\resizebox{\textwidth}{!}{
\begin{tabular}{cc | c | c | S[round-precision=1]S[round-precision=1]S[round-precision=1]S[round-precision=1] | cc }
& \head{Description} & \head{Type Signature} & \head{Components} & \head{T} & \head{T-NR} & \head{T-EAC} & \head{T-NInc} & \head{B} & \head{B-NR} \\ 

\hhline{==========}
1 & triple & $\forall \alpha .                    \tarrow{xs}{\tlist{\tpot{\alpha}{2}}}                      {\tsubset{\tlist{\alpha}}{\T{len} \ \nu = \T{len} \ xs + \T{len} \ xs + \T{len} \ xs }}$ & append & 0.88 & 0.36 & 0.42 & {-} & $\mid xs \mid$ & $\mid xs \mid$ \\
2 & triple' & $\forall \alpha .                    \tarrow{xs}{\tlist{\tpot{\alpha}{2}}}                      {\tsubset{\tlist{\alpha}}{\T{len} \ \nu = \T{len} \ xs + \T{len} \ xs + \T{len} \ xs }}$ & append' & 2.80 & 0.40 & 1.22 & {-} & $\mid xs \mid$ & $\mid xs \mid^2$ \\
3 & concat list of lists & $\forall\alpha .             \tarrow{xxs}{\tlist{\tlist{\tpot{\alpha}{1}}}}               {\tarrow{acc}{\tlist{\alpha}}                 {\tsubset{\tlist{\alpha}}{\T{sumLen} \ xs = \T{len} \nu}}}$ & append & 3.21 & 0.86 & 1.10 & {-} & $\mid xxs \mid$ & $\mid xxs \mid^2$ \\
4 & compress & $\forall \alpha .                    \tarrow{xs}{\tlist{\tpot{\alpha}{1}}}                      {\tsubset{\tclist{\alpha}}{\T{elems} \ xs = \T{elems} \ \nu}}$ & $=$,$\neq$ & 3.82 & 1.10 & 4.13 & {-} & $\mid xs \mid$ & $2^{ \mid xs \mid }$ \\
5 & common & $\forall\alpha .             \tarrow{ys}{\tilist{\tpot{\alpha}{1}}}               {\tarrow{zs}{\tilist{\tpot{\alpha}{1}}}                 {\tsubset{\tlist{\alpha}}{\T{elems} \ \nu = \T{elems} \ ys \cap \T{elems} \ zs}}}$ & $<$, member & 30.79 & 1.07 & TO & {-} & $\mid ys \mid + \mid zs \mid$ & $\mid ys \mid \mid zs \mid$ \\
6 & list difference & $\forall\alpha .             \tarrow{ys}{\tilist{\tpot{\alpha}{1}}}               {\tarrow{zs}{\tilist{\tpot{\alpha}{1}}}                 {\tsubset{\tlist{\alpha}}{\T{elems} \ \nu = \T{elems} \ ys - \T{elems} \ zs}}}$ & $<$, member & 173.54 & 1.33 & TO & {-} & $\mid ys \mid + \mid zs \mid$ & $\mid ys \mid \mid zs \mid$ \\
7 & insert & $\forall\alpha .                 \tarrow{x}{\alpha}                 {\tarrow{xs}{\tilist{\tpot{\alpha}{1}}}                   {\tsubset{\tilist{\alpha}}{\T{elems} \ \nu = [x] \cup \T{elems} \ xs}}}$ & $<$ & 1.30 & 0.43 & {-} & {-} & $\mid xs \mid$ & $\mid xs \mid$ \\
8 & insert' & $\forall\alpha .                    \tarrow{x}{\alpha}                     {\tarrow{xs}{ \tpot{ \tilist{ \alpha }}{\mathsf{numgt}(x,\nu)} }                       {\tsubset{\tilist{\alpha}}{\T{elems} \ \nu = [x] \cup \T{elems} \ xs}}}$ & $<$ & 49.59 & 0.67 & {-} & 102.23 & $\T{numgt}(x,xs)$ & $\mid xs \mid$ \\
9 & insert'' & $\forall\alpha .                    \tarrow{x}{\alpha}                     {\tarrow{xs}{\tilist{\tpot{\alpha}{\mathsf{ite}(x > \nu, 1, 0)}}}                       {\tsubset{\tilist{\alpha}}{\T{elems} \ \nu = [x] \cup \T{elems} \ xs}}}$ & $<$ & 7.69 & 0.38 & {-} & 13.74 & $\T{numgt}(x,xs)$ & $\mid xs \mid$ \\
10 & replicate & $\forall\alpha .             \tarrow{n}{\T{Nat}}               {\tarrow{x}{n \times \tpot{\alpha}{n}}}                 {\tsubset{\tlist{\alpha}}{\T{len} \ \nu = n}}$ & zero, inc, dec & 1.43 & 0.15 & {-} & 2.67 & $n$ & $n$ \\
11 & take & $\forall\alpha .                 \tarrow{n}{\T{Nat}}                 {\tarrow{xs}{\tpot{\tsubset{\tlist{\alpha}}{\T{len} \nu \geq n}}{n}}                     {\tsubset{\tlist{\alpha}}{\T{len} \nu = n}}}$ & zero, inc, dec & 1.22 & 0.11 & {-} & 2.42 & $n$ & $n$ \\
12 & drop & $\forall\alpha .                 \tarrow{n}{\T{Nat}}                 {\tarrow{xs}{\tpot{\tsubset{\tlist{\alpha}}{\T{len} \nu \geq n}}{n}}                     {\tsubset{\tlist{\alpha}}{\T{len} \nu = \T{len} xs - n}}}$ & zero, inc, dec & 12.89 & 0.20 & {-} & 17.08 & $n$ & $n$ \\
13 & range & $\tarrow{lo}{\T{Int}}                 {\tarrow{hi}{\tsubset{\tpot{\T{Int}}{\nu - lo}}{\nu \geq lo}}                   {\tsubset{\tilist{\tsubset{\T{Int}}{lo \leq \nu \leq hi}}}{\T{len} \nu = hi - lo}}}                   {}  $ & inc,dec,$\geq$ & 11.78 & 0.16 & {-} & {-} & $hi - lo$ & - \\
14 & CT insert & $\forall\alpha .                 \tarrow{x}{\alpha}                 {\tarrow{xs}{\tilist{\tpot{\alpha}{1}}}                   {\tsubset{\tilist{\alpha}}{\T{elems} \ \nu = [x] \cup \T{elems} \ xs}}}$ & $<$ & 2.24 & 0.63 & 0.79 & {-} & $\mid xs \mid$ & $\mid xs \mid$ \\
15 & CT compare & $\forall\alpha .                       \tarrow{ys}{\tlist{\tpot{\alpha}{1}}}                         {\tarrow{zs}{\tlist{\alpha}}{\tsubset{\tbool}{\nu = ( \T{len} \ ys = \T{len} \ zs )}}} $ & true, false, and & 14.25 & 0.54 & 9.14 & {-} & $\mid ys \mid$ & $\mid ys \mid$ \\
16 & compare & $\forall\alpha .                       \tarrow{ys}{\tlist{\tpot{\alpha}{1}}}                         {\tarrow{zs}{\tlist{\alpha}}{\tsubset{\tbool}{\nu = ( \T{len} \ ys = \T{len} \ zs )}}} $ & true, false, and & 1.01 & 0.34 & {-} & {-} & $\mid ys \mid$ & $\mid ys \mid$ \\
\end{tabular}
}
\end{center}
\caption{Case Studies.
For each synthesis problem, we report: 
the run time of \tool (\head{T}),
\synquid (\head{T-NR}),
naive combination of \synquid and resource analysis (\head{T-EAC}),
\tool without incremental solving (\head{T-NInc});
as well as the tightest resource bound for the code generated by \tool (\head{B})
and by \synquid (\head{B-NR}).
Here, $SL$ is the type of 
sorted lists, and $CL$ refers to the type of lists without adjacent duplicates. 
TO is 10 min; all benchmarks count recursive calls.
}
\label{fig:eval:micro}
\end{table*}

The value of resource-guided synthesis becomes clear when the library 
of components grows.
To confirm this intuition, 
we assembled a suite of \numMB case studies shown 
in \autoref{fig:eval:micro}, each exemplifying some feature of \tool.

\paragraph{Optimization}
The first six benchmarks showcase \tool's ability to generate faster code 
than \synquid
(the cost metric in each case is the number of recursive calls).
Benchmark 1 is \T{triple} from \autoref{sec:background:re2},
where both \synquid and \tool generate the same efficient solution;
benchmark 2 is slight modification of this example:
it uses a component \T{append'}, which traverses its second argument
(unlike \T{append}, which traverses its first).
In this case, \tool generates the efficient solution,
associating the two calls to \T{append'} \emph{to the left},
while \synquid still generates the same---now inefficient---solution,
associating these calls \emph{to the right}.
In benchmark 3 \tool makes the optimal choice of accumulator
to avoid a quadratic-time implementation.
Benchmark 4 is \T{compress} from \autoref{fig:eval:res}:
the task is to remove adjacent duplicated from a list.
Here \synquid makes an unnecessary recursive call,
resulting in a solution that is slightly shorter but runs in exponential time! 

In other cases, \tool drastically changes the structure of the program 
to find an optimal implementation. 
Benchmark 5 is \T{common} from \autoref{sec:background:synquid}, 
where \tool must find an implementation that does not call \T{member}. 
Benchmark 6 works similarly,
but computes the difference between two lists instead of their intersection.
On these benchmarks, the performance disparity between \tool and \synquid is much worse, 
as \tool must reject many more programs before it finds 
an appropriate implementation.
On the other hand, these benchmarks also showcase the value of \emph{round-trip type checking}:
the column \head{T-EAC} reports synthesis times for a naive combination of synthesis and resource analysis,
where we simply ask \synquid to enumerate functionally correct programs
until one type-checks under \typesys.
As you can see, for benchmarks 5 and 6 this naive version times out after ten minutes.  




\paragraph{Dependent Potentials}
Benchmarks 7--13 showcase fine-grained bounds that leverage dependent potential annotations.
The first three of those synthesize a function \T{insert} that inserts an element into a sorted list.
In benchmark 7 we use a simple linear bound (the length of the list),
while benchmarks 8 and 9 specify a tighter bound: 
\T{insert x xs} can only make one recursive call per element of \T{xs} larger than \T{x}.
These two examples showcase two different styles of specifying precise bounds:
in 8 we define a custom measure \T{numgt} that counts list elements greater than a certain value;
in 9, we instead annotate each list element with a conditional term
indicating that it carries potential only if its value is larger than \T{x}.
%
As discussed in \autoref{sec:background}, 
benchmark 13 (\T{range}) cannot be synthesized by \synquid at all,
because of restrictions on its termination checking mechanism,
while \tool handles this benchmark out of the box.

For benchmarks 8--13, which make use of dependent potential annotations, 
we also report the synthesis times without incremental solving of resource constraints (\head{T-NInc}),
which are up to \nincslowdown higher.
%


\paragraph{Constant Resource}
As discussed in \autoref{sec:typesys}, a simple extension to \typesys
enables it to verify constant-resource implementations. 
We showcase this feature in benchmarks 14--16.
Benchmark 15 is an example from~\cite{SP:NDF17},
which compares a public list $ys$ with a secret list $zs$. 
By allotting potential only to $ys$,
we guarantee that the resource consumption of the generated program is independent of the length of $zs$. 
If this requirement is relaxed (as in benchmark 16),
the generated program indeed terminates early,
potentially revealing the length of $zs$ to an adversary
(in case $zs$ is the shorter of the two lists). 
Benchmark 14 is a constant-time version of benchmark 7 (\T{insert}), 
which is forced to make extra recursive calls so as not to reveal the length of the list.

\section{Related Work}\label{sec:related}

\paragraph{Resource Analysis}

Automatic static resource analysis has been extensively studied and is
an active area of research.  Many advanced techniques for imperative
integer programs apply abstract interpretation to generate numerical
invariants. The obtained \emph{size-change information} forms the
basis for the computation of actual bounds on loop iterations and
recursion depths; using counter instrumentation~\cite{GulwaniMC09},
ranking
functions~\cite{AliasDFG10,AlbertAGP11a,BrockschmidtEFFG14,SinnZV14},
recurrence relations~\cite{Albert12,AlbertAGGP12}, and abstract
interpretation itself~\cite{Zuleger11,CernyHKRZ15}.  Automatic
resource analysis techniques for functional programs are based on
sized types~\cite{Vasconcelos08}, recurrence
relations~\cite{DannerLR15}, term-rewriting~\cite{AvanziniLM15}, and
amortized resource
analysis~\cite{Jost03,Jost10,HoffmannAH10,SimoesVFJH12}.  There exist
several tools that can automatically derive loop and recursion bounds
for imperative programs including SPEED~\cite{GulwaniMC09,GulwaniZ10},
KoAT~\cite{BrockschmidtEFFG14}, PUBS~\cite{AlbertAGGP12},
Rank~\cite{AliasDFG10}, ABC~\cite{BlancHHK10} and
LOOPUS~\cite{Zuleger11,SinnZV14}. These techniques are passive in the
sense that they provide feedback about a program without actively
synthesizing or repairing programs.

\paragraph{Domain-Specific Program Synthesis}
Most program synthesis techniques~\cite{OseraZd15,FeserChDi15,Smith-Albarghouthi:PLDI16,FengMGDC17,FengM0DR17,FengMBD18,WangDS18,WangCB17,SrivastavaGF10,KneussKuKuSu13,PolikarpovaKS16,InalaPQLS17,QiuS17}
do not explicitly take resource usage into account during synthesis.
Many of them, however, leverage \emph{domain knowledge} to restrict the search space to only include efficient programs~\cite{GulwaniJTV11,Cheung13}
or to encode domain-specific performance considerations as part of the functional specification~\cite{InalaS16,Loncaric2016,Loncaric2018}.

\paragraph{Synthesis with Quantitative Objectives}
Two lines of prior work on synthesis are explicitly concerned with optimizing 
resource usage. 
One is quantitative \emph{automata-theoretic synthesis},
which has been used to synthesize optimal Mealy machines~\cite{BloemCHJ09}
and place synchronization in concurrent programs~\cite{CernyCHRS11,GuptaHRST15,CernyCHRRST15}.
In contrast, we focus on synthesis of high-level programs
that can manipulate custom data structures,
which are out of reach for automata-theoretic synthesis. 

The second relevant line of work is \emph{synthesis-aided compilation}~\cite{Schkufza0A13,Phothilimthana14,Sharma15,PhothilimthanaT16}.
This work is limited to generating low-level straight-line code,
which is an easy target for correctness validation and cost estimation.
Perhaps the closest work to ours is the Synapse tool~\cite{Bornholt16},
which supports a richer space of programs,
but requires extensive guidance from the user (in the form of meta-sketches),
and relies on bounded reasoning, which can only provide correctness and optimality guarantees for a finite set of inputs.
In contrast,
we use type-based verification and resource analysis techniques,
which enable \tool to handle high-level recursive programs
and provide guarantees for an unbounded set of inputs.

\begin{acks}                            
  This article is based on research supported by the United States Air
  Force under DARPA AA Contract FA8750-18-C-0092 and DARPA STAC
  Contract FA8750-15-C-0082, and by the National Science Foundation
  under SaTC Award 1801369, SHF Award 1812876, and CAREER Award
  1845514.
  Any opinions, findings, and conclusions contained in this document
  are those of the authors and do not necessarily reflect the views of
  the sponsoring organizations.
\end{acks}

\bibliography{references,resource,db}

\iflong

\appendix
%

\mathtoolsset{showonlyrefs,showmanualtags}
\allowdisplaybreaks

\newcommand{\myinferrule}[3][]{\Rule{#1}{#2}{#3}}
\newcommand{\propref}[1]{Prop.~\ref{prop:#1}}
\newcommand{\lemref}[1]{Lem.~\ref{lem:#1}}
\newcommand{\theoref}[1]{Thm.~\ref{the:#1}}

\begin{figure*}[t!]
	\begin{mathpar}\footnotesize
		\Rule{E-Cond-True}{ }{ \jstep{\econd{\etrue}{e_1}{e_2}}{e_1}{q}{q} }
		\and
		\Rule{E-Cond-False}{ }{ \jstep{\econd{\efalse}{e_1}{e_2}}{e_2}{q}{q} }
		\and
		\Rule{E-Let1}{ \jstep{e_1}{e_1'}{q}{q'} }{ \jstep{\elet{e_1}{x}{e_2}}{\elet{e_1'}{x}{e_2}}{q}{q'} }
		\and
		\Rule{E-Let2}{ \jval{v_1} }{ \jstep{\elet{v_1}{x}{e_2}}{\subst{v_1}{x}{e_2}}{q}{q} }
		\and
		\Rule{E-MatL-Nil}{ }{ \jstep{\ematl{\enil}{e_1}{x_h}{x_t}{e_2}}{e_1}{q}{q} }
		\and
		\Rule{E-MatL-Cons}{ \jval{v_h} \\ \jval{v_t} }{ \jstep{\ematl{\econs{v_h}{v_t}}{e_1}{x_h}{x_t}{e_2}}{\subst{v_h,v_t}{x_h,x_t}{e_2}}{q}{q'} }
		\and
		\Rule{E-App-Abs}{ \jval{v_2} }{ \jstep{\eapp{\eabs{x}{e_0}}{v_2}}{\subst{v_2}{x}{e_0}}{q}{q} }
		\and
		\Rule{E-App-Fix}{ \jval{v_2} }{ \jstep{\eapp{\efix{f}{x}{e_0}}{v_2}}{\subst{\efix{f}{x}{e_0},v_2}{f,x}{e_0}}{q}{q} }
		\and
		\Rule{E-Consume}{ }{ \jstep{\econsume{c}{e_0}}{e_0}{q}{q-c} }
	\end{mathpar}
	\caption{Evaluation rules of the small-step operational cost semantics.}
	\label{fig:semantics}
\end{figure*}

\section{The \typesys Type System}
\label{sec:appendixre2}

\subsection{Scalar Types: $\tscalar{S}$}

In \typesys, we define \emph{scalar types} to be annotated subset types.
Neither arrow types nor type schemas are scalar.

\begin{mathpar}\footnotesize
\inferrule{ }{ \tscalar{\trefined{B}{\psi}{\phi}} }
\end{mathpar}

\subsection{Sorting: $\jsort{\Gamma}{\psi}{\Delta}$}

Refinements are classified by sorts.
The \emph{sorting} judgment $\jsort{\Gamma}{\psi}{\Delta}$ states that a refinement $\psi$ has a sort $\Delta$ under a context $\Gamma$.
The typing context is needed because refinements can reference program variables.
To reflect types of program variables in the refinement level, we define a relation $S \rightsquigarrow \Delta$ as follows.
The relation $\rightsquigarrow$ defines a partial function from types to sorts.

\begin{mathpar}\footnotesize
\inferrule{ }{ \trefined{\tbool}{\psi}{\phi} \rightsquigarrow \bbB }
\and
\inferrule{ }{ \trefined{\tlist{T}}{\psi}{\phi} \rightsquigarrow \bbN }
\and
\inferrule{ }{ \trefined{m \cdot \alpha}{\psi}{\phi} \rightsquigarrow \delta_\alpha }
\end{mathpar}

\autoref{fig:sorting} presents the sorting rules.

\begin{figure}[h]
\begin{mathpar}\footnotesize
	\myinferrule[S-Var]
	{ \jwfctxt{\Gamma} \\ \Gamma(x) \rightsquigarrow \Delta }
	{ \jsort{\Gamma}{x}{\Delta} }
	\and
	\myinferrule[S-Top]
	{ \jwfctxt{\Gamma} }
	{ \jsort{\Gamma}{\top}{\bbB} }
	\and
	\myinferrule[S-Neg]
	{ \jsort{\Gamma}{\psi}{\bbB} }
	{ \jsort{\Gamma}{\neg\psi}{\bbB} }
	\and
	\myinferrule[S-And]
	{ \jsort{\Gamma}{\psi_1}{\bbB} \\ \jsort{\Gamma}{\psi_2}{\bbB} }
	{ \jsort{\Gamma}{\psi_1 \wedge \psi_2}{\bbB} }
	\and
	\myinferrule[S-Nat]
	{ \jwfctxt{\Gamma} }
	{ \jsort{\Gamma}{n}{\bbN} }
	\and
	\myinferrule[S-Rel]
	{ \jsort{\Gamma}{\psi_1}{\bbN} \\ \jsort{\Gamma}{\psi_2}{\bbN} }
	{ \jsort{\Gamma}{\psi_1 \leq \psi_2}{\bbB} }
	\and
	\myinferrule[S-Op]
	{ \jsort{\Gamma}{\psi_1}{\bbN} \\ \jsort{\Gamma}{\psi_2}{\bbN} }
	{ \jsort{\Gamma}{\psi_1 + \psi_2}{\bbN} }
	\and
	\myinferrule[S-Eq]
	{ \jsort{\Gamma}{\psi_1}{\Delta} \\ \jsort{\Gamma}{\psi_2}{\Delta} }
	{ \jsort{\Gamma}{\psi_1 = \psi_2}{\bbB} }
\end{mathpar}
\caption{Sorting rules}
\label{fig:sorting}
\end{figure}

\subsection{Type Wellformedness: $\jwftype{\Gamma}{S}$}

A type $S$ is said to be \emph{wellformed} under a context $\Gamma$ if the following three properties hold:
\begin{itemize}
	\item every referenced program variables in $S$ is in the correct scope, and
	\item polymorphic types can never carry positive potential.
\end{itemize}

\autoref{fig:wftype} presents the type wellformedness rules.

\begin{figure}[h]
\begin{mathpar}\footnotesize
	\myinferrule[Wf-Bool]
	{ \jwfctxt{\Gamma} }
	{ \jwftype{\Gamma}{\tbool} }
	\and
	\myinferrule[Wf-List]
	{ \Omit{\tscalar{T} \\} \jwftype{\Gamma}{T} }
	{ \jwftype{\Gamma}{\tlist{T}} }
	\and
	\myinferrule[Wf-TVar]
	{ \jwfctxt{\Gamma} \\ \alpha \in \Gamma  }
	{ \jwftype{\Gamma}{m \cdot \alpha} }
	\and
	\myinferrule[Wf-Refined]
	{ \jwftype{\Gamma}{B} \\ \jsort{\Gamma,\nu:B}{\psi}{\bbB} }
	{ \jwftype{\Gamma}{\tsubset{B}{\psi}} }
	\and
	\myinferrule[Wf-Arrow]
	{ \jwftype{\Gamma}{T_x}  \\ \jwftype{\Gamma,x:T_x}{T} }
	{ \jwftype{\Gamma}{\tarrowm{x}{T_x }{T}{m}} }
	\and
	\myinferrule[Wf-Pot]
	{ \jwftype{\Gamma}{R} \\ \jsort{\Gamma,\nu:R}{\phi}{\bbN} }
	{  \jwftype{\Gamma}{\tpot{R}{\phi}} }
	\and
	\myinferrule[Wf-Poly]
	{ \jsharing{\Gamma,\alpha}{S}{S}{S} }
	{ \jwftype{\Gamma}{\forall\alpha.S} }
\end{mathpar}
\caption{Type wellformedness rules}
\label{fig:wftype}
\end{figure}

Recall that when we defined sorting rules we proposed a relation $S \rightsquigarrow \Delta$ that is a partial function from types to sorts.
With wellformed types, we can interpret $\rightsquigarrow$ as a better-behaved map.

\begin{proposition}
	The relation $S \rightsquigarrow \Delta$ defines a total map from wellformed scalar types into sorts, i.e., if $\jwftype{\Gamma}{S}$ and $\tscalar{S}$, then there exists a unique $\Delta$ such that $S \rightsquigarrow \Delta$.
\end{proposition}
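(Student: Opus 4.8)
The plan is to reduce the claim to a short case analysis on the base type underlying $S$. First I would invert the hypothesis $\tscalar{S}$: the lone rule defining the $\mathsf{scalar}$ predicate forces $S = \trefined{B}{\psi}{\phi}$ for some base type $B$, refinement $\psi$, and potential term $\phi$. In particular $S$ is neither an arrow type nor a type schema, which is exactly where the partial function $\rightsquigarrow$ would be undefined. Since each of the three rules defining $\rightsquigarrow$ inspects only the head constructor of the enclosed base type and ignores $\psi$ and $\phi$, it then suffices to consider the three productions of the base-type grammar: $\tbool$, $\tlist{T}$, and $m \cdot \alpha$.

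For existence I would walk through these cases: if $B = \tbool$ then $S \rightsquigarrow \bbB$; if $B = \tlist{T}$ then $S \rightsquigarrow \bbN$; and if $B = m \cdot \alpha$ then $S \rightsquigarrow \delta_\alpha$. In the last case, well-formedness of $S$ (via rule \textsc{(Wf-TVar)}) guarantees $\alpha \in \Gamma$, so the sort $\delta_\alpha$ is meaningful under $\Gamma$ --- this is essentially the only point at which the well-formedness premise is used. For uniqueness I would observe that the conclusions of the three rules are pairwise disjoint: each is applicable only when the base type's outermost form is $\tbool$, $\tlist{\cdot}$, or $m \cdot \alpha$, respectively, and these are mutually exclusive by the grammar of \autoref{fig:syntaxre2}. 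Hence at most one rule fires, and together with existence exactly one $\Delta$ satisfies $S \rightsquigarrow \Delta$, so $\rightsquigarrow$ restricted to well-formed scalar types is a genuine total function.

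I do not expect any real obstacle here. The only things to be careful about are (i) confirming that $\tscalar{S}$ really does exclude the two cases (arrow, schema) on which $\rightsquigarrow$ is undefined, and (ii) checking that the three $\rightsquigarrow$ rules are non-overlapping and collectively cover all base types; both are immediate from the grammars in \autoref{fig:syntaxre2}. If desired, the same analysis additionally shows that $\rightsquigarrow$ sends distinct base-type constructors to distinct sorts, but this stronger fact is not needed for the statement.
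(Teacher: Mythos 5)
Your proposal is correct and matches the paper's argument: the paper's one-line proof ("by induction on $\jwftype{\Gamma}{S}$") unfolds, in the scalar case, to exactly your case analysis on the base-type constructor, since the well-formedness rules are syntax-directed. Your explicit inversion of $\tscalar{S}$ and the disjointness observation for uniqueness are the right details to spell out.
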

\begin{proof}
	By induction on $\jwftype{\Gamma}{S}$.
\end{proof}

\subsection{Context Wellformedness: $\jwfctxt{\Gamma}$}

A context $\Gamma$ is said to be \emph{wellformed} if every binding in $\Gamma$ is wellformed under a ``prefix'' context before it.
Recall that the context is a sequence of variable bindings, type variables, path conditions, and free potentials.
\autoref{fig:wfctxt} shows these rules.

\begin{figure}[h]
\begin{mathpar}\footnotesize
	\myinferrule[Wf-Empty]
	{ }
	{ \jwfctxt{\cdot} }
	\and
	\myinferrule[Wf-Bind-Type]
	{ \jwfctxt{\Gamma} \\ \jwftype{\Gamma}{S} }
	{ \jwfctxt{\Gamma,\bindvar{x}{ S}} }
	\and
	\myinferrule[Wf-Bind-Cond]
	{ \jwfctxt{\Gamma} \\ \jsort{\Gamma}{\psi}{\bbB} }
	{ \jwfctxt{\Gamma,\psi} }
	\and
	\myinferrule[Wf-Bind-TVar]
	{ \jwfctxt{\Gamma} }
	{ \jwfctxt{\Gamma,\alpha} }
	\and
	\myinferrule[Wf-Bind-Pot]
	{ \jwfctxt{\Gamma} \\ \jsort{\Gamma}{\phi}{\bbN} }
	{ \jwfctxt{\Gamma,\phi} }
\end{mathpar}
\caption{Context wellformedness rules}
\label{fig:wfctxt}
\end{figure}

\subsection{Context Sharing: $\jctxsharing{\Gamma}{\Gamma_1}{\Gamma_2}$}

We have already presented type sharing rules.
To apportion the associated potential of $\Gamma$ properly to two contexts $\Gamma_1,\Gamma_2$ with the same sequence of bindings, we introduce \emph{context sharing} relations.
The rules are summarized in \autoref{fig:ctxsharing}.

\begin{figure}[h]
\begin{mathpar}\footnotesize
	\myinferrule[Share-Empty]
	{ }
	{ \jctxsharing{\cdot}{\cdot}{\cdot} }
	\and
	\myinferrule[Share-Bind-Type]
	{ \jctxsharing{\Gamma}{\Gamma_1}{\Gamma_2} \\ \jsharing{\Gamma}{S}{S_1}{S_2} }
	{ \jctxsharing{\Gamma,\bindvar{x}{ S}}{\Gamma_1,\bindvar{x}{ S_1}}{\Gamma_2,\bindvar{x}{ S_2}} }
	\and
	\myinferrule[Share-Bind-Cond]
	{ \jctxsharing{\Gamma}{\Gamma_1}{\Gamma_2} \\ \jsort{\Gamma}{\psi}{\bbB} }
	{ \jctxsharing{\Gamma,\psi}{\Gamma_1,\psi}{\Gamma_2,\psi} }
	\and
	\myinferrule[Share-Bind-TVar]
	{ \jctxsharing{\Gamma}{\Gamma_1}{\Gamma_2} }
	{ \jctxsharing{\Gamma,\alpha}{\Gamma_1,\alpha}{\Gamma_2,\alpha} }
	\and
	\myinferrule[Share-Bind-Pot]
	{ \jctxsharing{\Gamma}{\Gamma_1}{\Gamma_2} \\ \jprop{\Gamma}{\phi=\phi_1+\phi_2} }
	{ \jctxsharing{\Gamma,\phi}{\Gamma_1,\phi_1}{\Gamma_2,\phi_2} }
\end{mathpar}
\caption{Context sharing rules}
\label{fig:ctxsharing}
\end{figure}

\subsection{Total Free Potential: $\pot{\Gamma}$}

The \emph{free potentials} of a context $\Gamma$, written $\pot{\Gamma}$, include all the potential bindings, as well as outermost annotated potentials of variable bindings.
\begin{alignat}{2}
	\pot{\cdot} & =  0  & \pot{\Gamma,\alpha} & =  \pot{\Gamma} \\
	\pot{\Gamma,x: \tpot{\tsubset{B}{\psi}}{\phi}} & =  \pot{\Gamma} + \subst{x}{\nu}{\phi} \enskip &  \pot{\Gamma,\psi} & = \pot{\Gamma} \\
	\pot{\Gamma,x: \tpot{\p{\tarrowm{y}{T_y}{T}{m}}}{\phi}} & = \pot{\Gamma} + \phi & \pot{\Gamma,\phi} & = \pot{\Gamma} + \phi \\
	\pot{\Gamma,x: \forall\alpha.S} & = \pot{\Gamma}  \Omit{& \pot{\Gamma,x: \tprod{T_l}{T_r}}  & =  \pot{\Gamma} }
\end{alignat}

\subsection{Type Substitution: $\subst{\trefined{B}{\psi}{\phi}}{\alpha}{S}$}

In \typesys, type substitution is restricted to resource-annotated subset types.
The substitution $\subst{\trefined{B}{\psi}{\phi}}{\alpha}{S}$ should take care of logical refinements and potential annotations from both $S$ and $\trefined{B}{\psi}{\phi}$.
Following gives the definition.
\begin{align}
	\subst{U}{\alpha}{\tbool} & =  \tbool \\
	\subst{U}{\alpha}{\tlist{T}} & =  \tlist{\subst{U}{\alpha}{T}} \\
	\subst{U}{\alpha}{m \cdot \beta} & =  m \cdot \beta \\
	\subst{\tpot{\tsubset{B}{\psi}}{\phi}}{\alpha}{m \cdot \alpha} & =   \tpot{\tsubset{m \times B}{\psi}}{ m \times \phi} \\
	\subst{U}{\alpha}{\tsubset{B}{\psi}} & =  \tpot{\tsubset{B'}{\psi \wedge \psi'}}{\phi'}\\
	& \text{where}~\subst{U}{\alpha}{B} =\tpot{\tsubset{B'}{\psi'}}{\phi'} \\
	\subst{U}{\alpha}{\tarrowm{x}{T_x}{T}{m}} & =  \tarrowm{x}{\subst{U}{\alpha}{T_x}}{\subst{U}{\alpha}{T}}{m} \\
	\subst{U}{\alpha}{\tpot{R}{\phi}} & =  \tpot{R'}{\phi + \phi'}\\
	& \text{where}~ \subst{U}{\alpha}{R} = R'^{\phi'} \\
	\subst{U}{\alpha}{\forall\beta.S} & =  \forall\beta. \subst{U}{\alpha}{S}
\end{align}

Type multiplication is defined as follows.
\begin{align}
	m \times \tbool & =  \tbool \\
	m \times \tlist{T} & =  \tlist{m \times T} \\
	m_1 \times (m_2 \cdot \alpha) & =  (m_1 \cdot m_2) \cdot \alpha
\end{align}

\section{Validity Checking in \typesys}
\label{sec:appendixvalidity}

In this section, we define the \emph{validity checking} judgment $\jprop{\Gamma}{\psi}$ where $\Gamma$ is a wellformed context and $\psi$ is a Boolean-sorted refinement.
Intuitively, the judgment states that the formula $\psi$ is always true under any instance of $\Gamma$.
Our approach is to define a set-based denotational semantics for refinements and then reduce the validity checking in \typesys to Presburger arithmetic. 

\paragraph{Semantics of Sorts}
A sort $\Delta$ represents a set $\interps{\Delta}$ of $\Delta$-sorted refinements.
The following gives the definition of $\interps{\Delta}$.
Note that we only define the semantics for sorts that do \emph{not} contain uninterpreted sorts.
We denote such sorts by $\Delta_o$, defined as $\{\bbB,\bbN\}$.
\begin{align}
	\interps{\bbB} & =  \{ \top, \bot \} \\
	\interps{\bbN} & =  \bbZ^+_0 
\end{align}

\paragraph{Semantics of Types}
As we have already done in the sorting rules, scalar types are reflected in the refinement level.
To interpret a wellformed scalar type as a sort without uninterpreted sorts, we define a transformation $\calT_E(\cdot)$ from types to sorts, parametrized by an \emph{environment} that resolves uninterpreted sorts $\delta_\alpha$.
\begin{align}
	\calT_E(\tbool) & = \bbB \\
	\calT_E(\tlist{T}) & = \bbN \\
	\calT_E(m \cdot \alpha) & = E(\delta_\alpha) 
\end{align}

\paragraph{Semantics of Contexts}
To give a meaning to a context $\Gamma$, we need to assign an instance for each variable binding with a scalar type, as well as type variables.
Intuitively, a context $\Gamma$ represents a set of \emph{environments} that resolves both program variables and uninterpreted sorts.
Making use of semantics for sorts and types defined above, we can define $\interps{\Gamma}$ inductively as follows.
\begin{align}
	\interps{\cdot} & = \{ \emptyset \} \\
	\interps{\Gamma,\bindvar{x}{ \tpot{\tsubset{B}{\psi}}{\phi}}} & =  \{ E[x \mapsto \psi] : E \in \interps{\Gamma} \wedge \psi \in \interps{\calT_E(B)}  \} \\
	\interps{\Gamma,\bindvar{x}{ \tpot{(\tarrowm{y}{T_y}{T}{m})}{\phi}}} & =  \interps{\Gamma} \\
	\interps{\Gamma,\bindvar{x}{ \forall\alpha.S}} & = \interps{\Gamma} \\
	\interps{\Gamma,\alpha} & = \{ E[\delta_\alpha \mapsto \Delta] \mid E \in \interps{\Gamma} \wedge \Delta \in \Delta_o  \}   \\
	\interps{\Gamma,\psi} & = \interps{\Gamma} \\
	\interps{\Gamma,\phi} &=  \interps{\Gamma}
\end{align}

\paragraph{Semantics of Refinements}
The meaning of a refinement $\psi$ is defined with respect to its sorting judgment $\jsort{\Gamma}{\psi}{\Delta}$.
The following defines an \emph{evaluation} map $\interp{\psi} : \interps{\Gamma} \to \interps{\Delta}$, by induction on the derivation of the sorting judgment, or essentially structural induction on $\psi$.
\begin{align}
	\interp{x}(E) & = E(x) \\
	\interp{\top}(E) & = \top \\
	\interp{\neg\psi}(E) & = \neg\interp{\psi}(E) \\
	\interp{\psi_1 \wedge \psi_2}(E) & = \interp{\psi_1}(E) \wedge \interp{\psi_2}(E) \\
	\interp{n}(E) & = n \\
	\interp{\psi_1 \le \psi_2}(E) & = \interp{\psi_1}(E) \le \interp{\psi_2}(E) \\
	\interp{\psi_1 + \psi_2}(E) & =\interp{\psi_1}(E) + \interp{\psi_2}(E) \\
	\interp{\psi_1 = \psi_2}(E) & = \interp{\psi_1}(E) = \interp{\psi_2}(E) 
\end{align}

\paragraph{Validity Checking}
Now we show how to assign meanings to contexts and refinements, then the last step to define $\jprop{\Gamma}{\psi}$ is to collect all the refinement constraints mentioned in $\Gamma$.

We first define how to extract constraints from a type binding.
Note that only scalar types (i.e., subset types) can carry logical refinements.
\begin{align}
	\scrB_\Gamma(\bindvar{x}{\trefined{B}{\psi}{\phi}}) & = \subst{x}{\nu}{\psi} \\
	\scrB_\Gamma(\bindvar{x}{\tpot{(\tarrowm{y}{T_y}{T}{m})}{\phi}}) & = \top \\
	\scrB_\Gamma(\bindvar{x}{\forall\alpha.S}) & = \top
\end{align}

Then we define $\scrB(\Gamma)$ to collect all the constraints from variable bindings and path conditions in $\Gamma$.
It is defined inductively on $\Gamma$.
\begin{align}
	\scrB(\cdot) & = \top \\
	\scrB(\Gamma,\bindvar{x}{S}) & = \scrB(\Gamma) \wedge \scrB_\Gamma(\bindvar{x}{S}) \\
	\scrB(\Gamma,\bindvar{x}{\tpot{(\tarrowm{y}{T_y}{T}{m})}{\phi}}) & = \scrB(\Gamma) \\
	\scrB(\Gamma,\alpha) & = \scrB(\Gamma) \\
	\scrB(\Gamma,\psi) & = \scrB(\Gamma) \wedge \psi \\
	\scrB(\Gamma,\phi) & = \scrB(\Gamma)
\end{align}

Now we can define the validity checking judgment $\jprop{\Gamma}{\psi}$.
\[
\jprop{\Gamma}{\psi} \defeq \forall E \in \interps{\Gamma}\!:  \interp{\scrB(\Gamma) \implies \psi}(E)
\]
Further, we can embed our denotational semantics for refinements in Presburger arithmetic, so we can also write the validity checking as the following formula
\[
\forall E \in \interps{\Gamma}\!: E \models \scrB(\Gamma) \implies \psi,
\]
where $\models$ is interpreted in Presburger arithmetic.

\section{Definition of Consistency for \typesys}

To describe soundness of \typesys, we will need a notion of \emph{consistency}.
Basically, given a typing judgment $\jstyping{\Gamma}{v}{S}$ of a value, we want to know that under the context $\Gamma$, $v$ satisfies the logical conditions indicated by $S$, as well as $\Gamma$ has sufficient amount of potential to be stored in $v$ with respect to $S$.

To start with, we need an interpretation $\calI(\cdot)$ that maps interpretable values into refinements.
The following gives an interpretation of our core calculus for \typesys.
\begin{align}
	\calI(\etrue) & =  \top \\
	\calI(\efalse) & =  \bot \\
	\calI(\enil) & = 0 \\
	\calI(\econs{v_h}{v_t}) & = \calI(v_t) + 1 
\end{align}
Note that $\calI(\cdot)$ is only defined on values of scalar types.

Then we can use $\calI(\cdot)$ to transform a \emph{value stack} $V$ to a \emph{refinement environment} $E$ with respect to a context $\Gamma$.
The stack $V$ maps type variables to concrete types and program variables to values.
The environment $E$ is used to define validity checking in former sections.
The following defines the transformation $\calI_V(\Gamma)$ by induction on $\Gamma$.
\begin{align}
	\calI_V(\cdot) & = \emptyset \\
	\calI_V(\Gamma,\bindvar{x}{\tpot{\tsubset{B}{\psi}}{\phi}}) & = \calI_V(\Gamma)[x \mapsto \calI(V(x))] \\
	\calI_V(\Gamma,\bindvar{x}{\tpot{(\tarrowm{y}{T_y}{T}{m})}{\phi}}) & = \calI_V(\Gamma) \\
	\calI_V(\Gamma,\bindvar{x}{\forall\alpha.S}) & = \calI_V(\Gamma) \\
	\calI_V(\Gamma,\alpha) & = \mathbf{let}~E=\calI_V(\Gamma)~\mathbf{in} \\
	& \quad  E[\delta_\alpha \mapsto  \calT_E(V(\alpha))] \\
	\calI_V(\Gamma,\psi) & =\calI_V(\Gamma) \\
	\calI_V(\Gamma,\phi) & = \calI_V(\Gamma)
\end{align}

Now we define how to extract constraints from a value with respect to its type.
It is similar to how we extract constraints from a typing binding in the refinement level.
The differences are that (i) we need to use the interpretation $\calI(\cdot)$ to map values to refinements, (ii) we need to take care of list elements and pair components, (iii) we need to substitute type variables with concrete types, and (iv) for polymorphic type schemas, we assert that the constraints hold for all instantiations.
\begin{align}
	\condv{V}{b}{\trefined{\tbool}{\psi}{\phi}} & = \subst{\calI(b)}{\nu}{\psi} \\
	\condv{V}{[v_1,\cdots,v_n]}{\trefined{\tlist{T}}{\psi}{\phi}} & = \subst{n}{\nu}{\psi}  \wedge {\bigwedge_{i=1}^n \condv{V}{v_i}{T}} \\
	\condv{V}{v}{\trefined{m \cdot \alpha}{\psi}{\phi}} & = \condv{V}{v}{ \subst{V(\alpha)}{\alpha}{\tsubset{m \cdot \alpha}{\psi}} } \\
 	\condv{V}{v}{\tpot{(\tarrowm{x}{T_x}{T}{m})}{\phi}} & = \top \\
 	\condv{V}{v}{\forall\alpha.S} & = \forall \trefined{B}{\psi}{\phi} \!:  \condv{V'}{v}{S} \\
 	& \text{where}~\jwftype{\Gamma}{\tpot{\tsubset{B}{\psi}}{\phi}} \\
 	& \text{and} ~V'=V[\alpha \mapsto \tpot{\tsubset{B}{\psi}}{\phi}]
\end{align}

The following defines how to collect path conditions of a stack $V$ with respect to its typing context $\Gamma$, written $\condc{V}{\Gamma}$.
\begin{align}
	\condc{V}{\cdot} & = \top \\
	\condc{V}{\Gamma,\bindvar{x }{ \tpot{\tsubset{B}{\psi}}{\phi} }} & = \condc{V}{\Gamma} \wedge \condv{V}{V(x)}{\trefined{B}{\psi}{\phi}} \\
	\condc{V}{\Gamma,\bindvar{x }{ \tpot{(\tarrowm{y}{T_y}{T}{m})}{\phi}}} & = \condc{V}{\Gamma} \\
	\condc{V}{\Gamma,\bindvar{x}{ \forall\alpha.S}} & = \condc{V}{\Gamma} \\
	\condc{V}{\Gamma,\alpha} & = \condc{V}{\Gamma} \\
	\condc{V}{\Gamma,\psi} & = \condc{V}{\Gamma} \wedge \psi\\
	\condc{V}{\Gamma,\phi} & = \condc{V}{\Gamma}
\end{align}

Similar to logical refinements, we can also collect potential annotations.
The following defines $\potv{V}{v}{S}$ as the potential stored in the value $v$ with respect to the type $S$ under the stack $V$.
\begin{align}
	\potv{V}{b}{\trefined{\tbool}{\psi}{\phi}} & =  \subst{\calI(b)}{\nu}{\phi} \\
	\potv{V}{[v_1,\cdots,v_n]}{ \trefined{\tlist{T}}{\psi}{\phi} } & =  \subst{n}{\nu}{\phi}  + {\sum_{i=1}^n \potv{V}{v_i}{ T }} \\
	\potv{V}{v}{ \trefined{m \cdot \alpha}{\psi}{\phi} } & =  \potv{V}{v }{ \subst{V(\alpha)}{\alpha}{ \tpot{(m \cdot \alpha)}{\phi}} } \\
	\potv{V}{v}{ \tpot{(\tarrowm{x}{T_x}{T}{m})}{\phi}} & =  \phi\\
	\potv{V}{v}{ \forall\alpha.S } & =  0
\end{align}

Also we have a stack version for potentials $\potc{V}{\Gamma}$.
\begin{align}
	\potc{V}{\cdot} & =  0 \\
	\potc{V}{\Gamma,\bindvar{x }{ \tpot{\tsubset{B}{\psi}}{\phi}}} & =  \potc{V}{\Gamma} + \potv{V}{V(x) }{ \trefined{B}{\psi}{\phi} } \\
	\potc{V}{\Gamma,\bindvar{x}{ \tpot{(\tarrowm{y}{T_y}{T}{m})}{\phi}}}& =  \potc{V}{\Gamma}+ \phi  \\
	\potc{V}{\Gamma,\bindvar{x}{\forall\alpha.S}} & =  \potc{V}{\Gamma} \\
	\potc{V}{\Gamma,\alpha} & =  \potc{V}{\Gamma} \\
	\potc{V}{\Gamma,\psi} & = \potc{V}{\Gamma} \\
	\potc{V}{\Gamma,\phi} & =  \potc{V}{\Gamma} + \phi
\end{align}

Finally, we are able to define two notions of consistency for values and stacks, respectively.

\begin{definition}[Value consistency]\label{de:valconsistency}
 A value $\jval{v}$ is said to be \emph{consistent} with $\jstyping{\Gamma}{v}{S}$, if for all $\jctxtyping{V}{\Gamma}$, $E=\calI_V(\Gamma)$ such that $E \models \condc{V}{\Gamma}$, we have $E \models \condv{V}{v}{S} \wedge \potc{V}{\Gamma} \ge \potv{V}{v}{S}$.
\end{definition}

\begin{definition}[Stack consistency]\label{de:envconsistency}
An environment $V'$ is said to be \emph{consistent} with $\jctxtyping[\Gamma]{V'}{\Gamma'}$, if for for all $\jctxtyping{V}{\Gamma}$, $E = \calI_V(\Gamma)$ such that $E \models \condc{V}{\Gamma}$, we have $E' \models \condc{V,V'}{\Gamma'} \wedge \potc{V}{\Gamma} \ge \potc{V,V'}{\Gamma'}$ where $E' \defeq \calI_{V,V'}(\Gamma,\Gamma')$.
\end{definition}

\makeatletter
\tagsleft@true
\makeatother
\newtagform{nobrackets}[\underline]{}{}
\usetagform{nobrackets}

\section{Proofs for Soundness}
\label{sec:appendixproofs}

\subsection{Progress}

\begin{lemma}\label{lem:consistentcons}
%
	Let $\Gamma = \overline{q \mid \alpha}$.
	\begin{enumerate}
	  \item If $\jwftype{\Gamma}{T}$, then $\enil$ is consistent with $\jstyping{\Gamma}{\enil}{\tsubset{\tlist{T}}{\nu = \calI(\enil)}}$.
	  \item If $\jctxsharing{\Gamma}{\Gamma_1}{\Gamma_2}$, $v_h$ is consistent with $\jstyping{\Gamma_1}{v_h}{T}$, and $v_t$ is consistent with $\jstyping{\Gamma_2}{v_t}{\tsubset{\tlist{T}}{\nu = \calI(v_t)}}$, then $\econs{v_h}{v_t}$ is consistent with $\jstyping{\Gamma}{\econs{v_h}{v_t}}{\tsubset{\tlist{T}}{\nu = \calI(\econs{v_h}{v_t})}}$.
	\end{enumerate}
\end{lemma}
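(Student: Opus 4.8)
The plan is to prove both parts by directly unfolding Definition~\ref{de:valconsistency} together with the defining equations of $\condc{V}{\Gamma}$, $\condv{V}{v}{S}$, $\potc{V}{\Gamma}$, and $\potv{V}{v}{S}$. The first step, shared by both parts, is to see what consistency amounts to for a context of the special shape $\Gamma = \overline{q \mid \alpha}$. Since such a $\Gamma$ has no variable bindings of scalar type and no path conditions, for every stack $V$ with $\jctxtyping{V}{\Gamma}$ we get $\condc{V}{\Gamma} = \top$, so the premise $E \models \condc{V}{\Gamma}$ in the definition of consistency is vacuous; moreover $\potc{V}{\Gamma}$ is simply the sum of the free potentials $q$, each of which is an $\bbN$-sorted refinement well-formed in a context without program variables, hence a nonnegative constant, so $\potc{V}{\Gamma} \ge 0$. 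Thus proving that a value $w$ is consistent with $\jstyping{\Gamma}{w}{S}$ reduces to checking, for every $V$ with $\jctxtyping{V}{\Gamma}$ and $E = \calI_V(\Gamma)$, that $E \models \condv{V}{w}{S}$ and $\potc{V}{\Gamma} \ge \potv{V}{w}{S}$.

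For part 1, I would use $\calI(\enil) = 0$, so the target type is $\trefined{\tlist{T}}{\nu = 0}{0}$, and the fact that $\enil$ is the empty list. The nil cases of the two definitions then give $\condv{V}{\enil}{\trefined{\tlist{T}}{\nu=0}{0}} = \subst{0}{\nu}{(\nu = 0)} = \top$ and $\potv{V}{\enil}{\trefined{\tlist{T}}{\nu=0}{0}} = \subst{0}{\nu}{0} = 0$. The condition holds trivially under any $E$, and the potential obligation is exactly $\potc{V}{\Gamma} \ge 0$, already established; well-formedness of $T$ is used only to make the typing statement meaningful.

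For part 2, I would first record how context sharing acts on $\Gamma = \overline{q \mid \alpha}$: by the context-sharing rules, $\Gamma_1$ and $\Gamma_2$ carry exactly the same type variables as $\Gamma$ and differ only in how the free potentials are split, with $\potc{V}{\Gamma} = \potc{V}{\Gamma_1} + \potc{V}{\Gamma_2}$ (by rule \textsc{(Share-Bind-Pot)}); hence a single stack $V$ with $\jctxtyping{V}{\Gamma}$ is also a stack for $\Gamma_1$ and $\Gamma_2$, inducing the same environment $E = \calI_V(\Gamma) = \calI_V(\Gamma_1) = \calI_V(\Gamma_2)$, with $\condc{V}{\Gamma_1} = \condc{V}{\Gamma_2} = \top$. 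Since the second hypothesis mentions $\calI(v_t)$ and types $v_t$ at a list type, $v_t$ is a list value, say $v_t = [v_1,\cdots,v_n]$, so $\calI(v_t) = n$ and $\econs{v_h}{v_t} = [v_h, v_1,\cdots,v_n]$ with $\calI(\econs{v_h}{v_t}) = n+1$. I then instantiate the consistency of $v_h$ at this $V$, getting $E \models \condv{V}{v_h}{T}$ and $\potc{V}{\Gamma_1} \ge \potv{V}{v_h}{T}$, and the consistency of $v_t$ at $V$, getting $E \models \bigwedge_{i=1}^n \condv{V}{v_i}{T}$ and $\potc{V}{\Gamma_2} \ge \sum_{i=1}^n \potv{V}{v_i}{T}$. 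Unfolding the cons cases of $\Psi_V$ and $\Phi_V$ on $\econs{v_h}{v_t}$, the required condition is $(n+1 = n+1) \wedge \condv{V}{v_h}{T} \wedge \bigwedge_{i=1}^n \condv{V}{v_i}{T}$, which $E$ satisfies, and the required potential is $\potv{V}{v_h}{T} + \sum_{i=1}^n \potv{V}{v_i}{T} \le \potc{V}{\Gamma_1} + \potc{V}{\Gamma_2} = \potc{V}{\Gamma}$, which closes the case.

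The argument is essentially bookkeeping; the one step that needs a little care is the behaviour of context sharing, namely that splitting a context of shape $\overline{q \mid \alpha}$ leaves the type variables untouched and merely partitions the free potentials additively, so that a single $V$ and a single $E$ can be reused across $\Gamma$, $\Gamma_1$, and $\Gamma_2$. I do not expect any real obstacle beyond unfolding $\Psi_V$ and $\Phi_V$ on cons cells carefully --- in particular making sure the invariant $\calI(\econs{v_h}{v_t}) = \calI(v_t) + 1$ lines up with the refinement $\nu = n+1$ and that the potential on the cons is the sum of the potentials on $v_h$ and on $v_t$.
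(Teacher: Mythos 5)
Your proposal is correct and follows essentially the same route as the paper's proof: fix a stack $V$ and environment $E$, unfold the defining equations of $\Psi_V$ and $\Phi_V$ at $\enil$ and at $\econs{v_h}{v_t}$, and discharge the potential obligation for the cons case via the additive split $\potc{V}{\Gamma} = \potc{V}{\Gamma_1} + \potc{V}{\Gamma_2}$ induced by context sharing. The only cosmetic difference is that you unfold $v_t$ into its elements $[v_1,\ldots,v_n]$ whereas the paper keeps $\condv{V}{v_t}{\tlist{T}}$ and $\potv{V}{v_t}{\tlist{T}}$ abstract, and your observation that $\potc{V}{\Gamma} \ge 0$ (rather than the paper's asserted $= 0$) is if anything the more careful statement.
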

\begin{proof}[Proof of (1)]
		\begin{alignat}{2}
			& \text{Fix}~\jctxtyping{V}{\Gamma}, E=\calI_V(\Gamma)~\text{s.t.}~E\models \condc{V}{\Gamma} \\
			& \jwftype{\Gamma}{T}  & \text{[premise]} \\
			& \jatyping{\Gamma}{\enil}{\tlist{T}} & \text{[typing]} \\
			& \jstyping{\Gamma}{\enil}{\tsubset{\tlist{T}}{\nu = \calI(\enil)}} & \text{[typing]} \\
            & \condv{V}{\enil}{\tsubset{\tlist{T}}{\nu = \calI(\enil)}} \\
            & \quad = \subst{\calI(\enil)}{\nu}{(\nu = \calI(\enil))}  =\top\\
            & \potv{V}{\enil}{\tpot{\tlist{T}}{0}} = 0 \\
            & \potc{V}{\Gamma} = 0 \\
            & E \models \top \wedge 0 \ge 0 \\
            & \text{done}
		\end{alignat}
\end{proof}
\begin{proof}[Proof of (2)]
		\begin{alignat}{2}
			& \text{Fix}~\jctxtyping{V}{\Gamma}, E=\calI_V(\Gamma)~\text{s.t.}~E\models \condc{V}{\Gamma} \\
			& \jctxsharing{\Gamma}{\Gamma_1}{\Gamma_2} & \text{[premise]} \\
			& \quad \implies \potc{V}{\Gamma} = \potc{V}{\Gamma_1}+\potc{V}{\Gamma_2} & \label{eq:foldcons:sharing} \\
			&  \jstyping{\Gamma_1}{v_h}{T}~\text{consistent} & \text{[premise]} \label{eq:foldcons:vhconsist}\\
			& \jstyping{\Gamma_2}{v_t}{\tsubset{\tlist{T}}{\nu = \calI(v_t)}} ~\text{consistent} & \text{[premise]} \label{eq:foldcons:vtconsist} \\
			& \jatyping{\Gamma}{\econs{v_h}{v_t}}{\tlist{T}} & \text{[typing]} \\
			& \jstyping{\Gamma}{\econs{v_h}{v_t}}{\tsubset{\tlist{T}}{\nu = \calI(\econs{v_h}{v_t}}} & \text{[typing]} \\
			& \condv{V}{\econs{v_h}{v_t}}{\tsubset{\tlist{T}}{\nu = \calI(\econs{v_h}{v_t}}} \\
			& \quad = \subst{\calI(\econs{v_h}{v_t}}{\nu}{(\nu = \calI(\econs{v_h}{v_t}))} \wedge \\
			& \qquad \condv{V}{v_h}{T} \wedge \condv{V}{v_t}{\tlist{T}} \\
			& \quad = \condv{V}{v_h}{T} \wedge \condv{V}{v_t}{\tlist{T}} \\
			& \potv{V}{\econs{v_h}{v_t}}{\tpot{\tlist{T}}{0}} = 0 +\\
			& \quad \potv{V}{v_h}{T} + \potv{V}{v_t}{\tlist{T}} \\
			& \quad = \potv{V}{v_h}{T} + \potv{V}{v_t}{\tlist{T}} \\
			& E \models \condv{V}{v_h}{T} \wedge \potc{V}{\Gamma_1} \ge \potv{V}{v_h}{T} & \text{[\eqref{eq:foldcons:vhconsist}]} \\ 
			& E \models \condv{V}{v_t}{\tlist{T}} \wedge \potc{V}{\Gamma_2} \ge \potv{V}{v_t}{\tpot{\tlist{T}}{0}} & \text{[\eqref{eq:foldcons:vtconsist}]} \\
			& \text{done} & \text{[\eqref{eq:foldcons:sharing}]}
		\end{alignat}
\end{proof}

\begin{proposition}\label{prop:stepdelta}
	If $\jstep{e}{e'}{p}{p'}$ and $c \ge 0$, then $\jstep{e}{e'}{p+c}{p'+c}$.
\end{proposition}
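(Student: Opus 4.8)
If $\jstep{e}{e'}{p}{p'}$ and $c \ge 0$, then $\jstep{e}{e'}{p+c}{p'+c}$.

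\medskip

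The plan is to proceed by a straightforward case analysis on the derivation of the single-step judgment $\jstep{e}{e'}{p}{p'}$, appealing to the evaluation rules in \autoref{fig:semantics}. The key observation is that the resource parameter only ever plays a role in exactly one rule, namely \textsc{(E-Consume)}, and that rule merely subtracts a constant $c_0$ from the available resources; all other structural rules either leave the resource parameter untouched or simply thread it through a premise. So the proposition should follow by inspecting how each rule manipulates the resource component.

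First I would handle the rules that do not inspect or modify the resource parameter at all, such as \textsc{(E-Cond-True)}, \textsc{(E-Cond-False)}, \textsc{(E-Let2)}, \textsc{(E-MatL-Nil)}, \textsc{(E-MatL-Cons)}, \textsc{(E-App-Abs)}, and \textsc{(E-App-Fix)}. For these, the rule is stated for \emph{arbitrary} resource levels $q,q$ with $q = q'$, so instantiating the rule at $p+c$ instead of $p$ immediately gives $\jstep{e}{e'}{p+c}{p+c}$, and since $p = p'$ in these cases this is exactly $\jstep{e}{e'}{p+c}{p'+c}$. Next, for \textsc{(E-Consume)}, we have $e = \econsume{c_0}{e_0}$, $e' = e_0$, and $p' = p - c_0$; re-instantiating the rule at resource level $p+c$ gives $\jstep{\econsume{c_0}{e_0}}{e_0}{p+c}{(p+c)-c_0}$, and $(p+c)-c_0 = (p-c_0)+c = p'+c$, as desired. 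Note that here $c_0$ may be negative, but that causes no difficulty since the arithmetic identity holds regardless; the side condition $c \ge 0$ is what we are given, not a constraint on $c_0$.

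Finally, for the one congruence-style rule \textsc{(E-Let1)}, which has the shape $\inferrule{\jstep{e_1}{e_1'}{p}{p'}}{\jstep{\elet{e_1}{x}{e_2}}{\elet{e_1'}{x}{e_2}}{p}{p'}}$, I would apply the induction hypothesis to the premise $\jstep{e_1}{e_1'}{p}{p'}$ to obtain $\jstep{e_1}{e_1'}{p+c}{p'+c}$, and then re-apply \textsc{(E-Let1)} to conclude $\jstep{\elet{e_1}{x}{e_2}}{\elet{e_1'}{x}{e_2}}{p+c}{p'+c}$. I do not anticipate any real obstacle here: the proof is a routine structural induction whose only subtlety is simply being careful that the rules in \autoref{fig:semantics} are genuinely polymorphic in the resource level (they are, by construction of the operational semantics) and that the net-cost arithmetic works out in the \textsc{(E-Consume)} case, which is immediate. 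The mild care needed is to ensure every congruence rule in the semantics — there is essentially only \textsc{(E-Let1)}, since the language is in a-normal form — is covered by an appeal to the induction hypothesis.
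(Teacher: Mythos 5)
Your proof is correct and follows exactly the paper's approach: the paper proves this proposition by induction on the derivation of $\jstep{e}{e'}{p}{p'}$, which is precisely the case analysis you carry out (resource-oblivious rules instantiated at $p+c$, the arithmetic shift for \textsc{(E-Consume)}, and the inductive appeal for the congruence rule \textsc{(E-Let1)}). Your write-up simply spells out the details that the paper leaves implicit.
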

\begin{proof}
	By induction on $\jstep{e}{e'}{p}{p'}$.
\end{proof}

\begin{proposition}\label{prop:subtyping}
	If $\jval{v}$, $\jstyping{\Gamma}{v}{T_1}$, $\jsubty{\Gamma}{T_1}{T_2}$, $\jctxtyping{V}{\Gamma}$ and $E=\calI_V(\Gamma)$ such that $E \models \condc{V}{\Gamma}$, then $E \models \condv{V}{v}{T_1} \implies (\condv{V}{v}{T_2} \wedge \potv{V}{v}{T_1} \ge \potv{V}{v}{T_2})$.
\end{proposition}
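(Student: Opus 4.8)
The plan is to prove the statement by induction on the derivation of $\jsubty{\Gamma}{T_1}{T_2}$, carried out uniformly for the three flavours of the subtyping judgment (base types $B$, refinement types $R$, and resource-annotated types $\tpot{R}{\phi}$); in the base and refinement cases the potential annotations that do not appear syntactically are read as $0$, so $\condv{V}{v}{\cdot}$ and $\potv{V}{v}{\cdot}$ still make sense. Throughout I fix $V$ with $\jctxtyping{V}{\Gamma}$, set $E = \calI_V(\Gamma)$ with $E \models \condc{V}{\Gamma}$, and assume $E \models \condv{V}{v}{T_1}$; the goal is $E \models \condv{V}{v}{T_2}$ and $\potv{V}{v}{T_1} \ge \potv{V}{v}{T_2}$. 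First I would record a shape observation: from $\jval{v}$ and $\jstyping{\Gamma}{v}{T_1}$ (via canonical forms), $v$ has the syntactic shape dictated by the head of $T_1$ (a Boolean, a list $[v_1,\dots,v_n]$, an abstraction or fixpoint, a value of a type variable), every subtyping rule preserves this head, and the shape is inherited by the sub-values $v_i$ with respect to the element types — this is the only use of the typing premise, and it is what makes $\condv{V}{v}{\cdot}$ and $\potv{V}{v}{\cdot}$ well defined at both $T_1$ and $T_2$ and keeps the induction going.

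Next I would dispatch the two transparent cases. For an arrow-headed type (rule \textsc{(Sub-Arrow)} itself, and any enclosing \textsc{(Sub-Pot)} wrapping an arrow): $\condv{V}{v}{\cdot} = \top$ regardless, so the condition half is immediate, and $\potv{V}{v}{\tpot{(\tarrowm{x}{T_x}{T}{m})}{\phi}} = \phi$, so the inequality reduces to the $\phi_1 \ge \phi_2$ premise of the outer \textsc{(Sub-Pot)} — the structural premises on domain and codomain are irrelevant to this proposition. For \textsc{(Sub-Pot)} with $T_i = \tpot{R_i}{\phi_i}$, I strip the outer annotation: the definition of $\condv{V}{v}{\cdot}$ never mentions the outermost $\phi$, so $\condv{V}{v}{T_i}$ depends only on $R_i$ and the induction hypothesis on $\jsubty{\Gamma}{R_1}{R_2}$ transfers it; and $\potv{V}{v}{\tpot{R}{\phi}}$ splits uniformly (for Booleans, lists, and arrows) as $\phi[\calI(v)/\nu]$ plus the potential of $v$ at $R$ with outer annotation erased. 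The induction hypothesis on $\jsubty{\Gamma}{R_1}{R_2}$ handles the second summand, and for the first I instantiate the validity premise $\jprop{\Gamma,\nu:R_1}{\phi_1 \ge \phi_2}$ at the environment $E[\nu\mapsto\calI(v)]$ — which lies in $\interps{\Gamma,\nu:R_1}$ and satisfies $\scrB(\Gamma,\nu:R_1)$ precisely because $E \models \condc{V}{\Gamma}$ and, by assumption, $E \models \condv{V}{v}{R_1}$. Adding the two non-negative pieces gives the potential inequality.

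For \textsc{(Sub-Subset)}, $\condv{V}{v}{\tsubset{B_i}{\psi_i}}$ is the conjunction of $\psi_i[\calI(v)/\nu]$ with the structural conditions coming from $B_i$ (empty for $\tbool$, the pointwise conjunction over elements for $\tlist{\cdot}$): the first conjunct transfers by instantiating $\jprop{\Gamma,\nu:B_1}{\psi_1\implies\psi_2}$ at $E[\nu\mapsto\calI(v)]$ as above, and the structural conditions — together with all associated potential, since a bare subset type carries no outer annotation — transfer by the induction hypothesis on $\jsubty{\Gamma}{B_1}{B_2}$, with the reflexive base cases (e.g. $\tbool <: \tbool$) trivial. \textsc{(Sub-List)} is just the induction hypothesis applied to each element $v_i$ and recombined, using that $\condv{V}{\cdot}{\tlist{\cdot}}$ and $\potv{V}{\cdot}{\tlist{\cdot}}$ are built pointwise. \textsc{(Sub-TVar)} is the fiddly one: unfolding $\condv{V}{v}{\tsubset{m_i\cdot\alpha}{\top}}$ and $\potv{V}{v}{\tsubset{m_i\cdot\alpha}{\top}}$ through the type-substitution and type-multiplication clauses, substituting the concrete type $V(\alpha)$ for $\alpha$ leaves every logical refinement of $V(\alpha)$ syntactically unchanged (so the extracted conditions at $m_1\cdot\alpha$ and $m_2\cdot\alpha$ are literally the same formula) and scales every potential annotation of $V(\alpha)$ by the multiplicity; hence $\potv{V}{v}{\tsubset{m_i\cdot\alpha}{\top}} = m_i \cdot \potv{V}{v}{V(\alpha)}$, and since $m_1 \ge m_2$ and $\potv{V}{v}{V(\alpha)} \ge 0$ by non-negativity of potential annotations (sort $\bbN$), the potential can only decrease.

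The step I expect to be the main obstacle is precisely \textsc{(Sub-TVar)} and its interaction with \textsc{(Sub-Pot)} inside list elements: making rigorous, from the definitions of $\subst{\cdot}{\alpha}{\cdot}$, type multiplication, $\condv{V}{\cdot}{\cdot}$, and $\potv{V}{\cdot}{\cdot}$, that ``larger multiplicity $\Rightarrow$ pointwise-larger potential, identical conditions'', while simultaneously tracking that a nested element type $\tpot{R}{\phi}$ has its $R$-part compared recursively whereas the enclosing list contributes the separate summand $\phi[n/\nu]$. A secondary point is to discharge, in each appeal to a validity premise of the form $\jprop{\Gamma,\nu:R_1}{\cdot}$, the semantic side conditions that $E[\nu\mapsto\calI(v)] \in \interps{\Gamma,\nu:R_1}$ and $E[\nu\mapsto\calI(v)] \models \scrB(\Gamma,\nu:R_1)$, both of which follow from $E\models\condc{V}{\Gamma}$ and the running assumption $E\models\condv{V}{v}{T_1}$. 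Neither point is conceptually deep — it is a matter of matching up definitional clauses and reusing non-negativity of potentials plus the validity premises — but it is the most error-prone part of the case analysis, so I would write it out in full rather than eliding it.
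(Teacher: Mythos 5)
Your proposal is correct and follows exactly the route the paper takes: the paper's entire proof of this proposition is ``By induction on $\jsubty{\Gamma}{T_1}{T_2}$,'' and your case analysis (stripping \textsc{(Sub-Pot)}, instantiating the validity premises at $E[\nu\mapsto\calI(v)]$, pointwise recursion for \textsc{(Sub-List)}, and the multiplicity-scaling argument for \textsc{(Sub-TVar)}) is a faithful and accurate expansion of that induction.
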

\begin{proof}
	By induction on $\jsubty{\Gamma}{T_1}{T_2}$.
\end{proof}

\begin{proposition}\label{prop:sharingsplit}
	If $\jval{v}$, $\jstyping{\Gamma}{v}{S}$, $\jsharing{\Gamma}{S}{S_1}{S_2}$, $\jctxtyping{V}{\Gamma}$ and $E = \calI_V(\Gamma)$ such that $E \models \condc{V}{\Gamma}$, then $E \models \potv{V}{v}{S} = \potv{V}{v}{S_1} + \potv{V}{v}{S_2}$.
\end{proposition}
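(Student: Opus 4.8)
The plan is to prove \propref{sharingsplit} by induction on the derivation of the sharing judgment $\jsharing{\Gamma}{S}{S_1}{S_2}$, using the typing hypothesis $\jstyping{\Gamma}{v}{S}$ only to pin down the shape of $v$ and, in the list case, to obtain typings for its elements. First I would dispatch the ``leaf'' rules. For \textsc{(Share-Bool)}, \textsc{(Share-Arrow)}, and \textsc{(Share-Poly)} the value carries no potential at all: $\potv{V}{v}{\tbool}=0$, $\potv{V}{v}{\tpot{(\tarrowm{x}{T_x}{T}{m})}{\phi}}=\phi$ with $\phi=0$, and $\potv{V}{v}{\forall\alpha.S}=0$, so the desired equation is just $0=0+0$; note that the multiplicities split by \textsc{(Share-Arrow)} do not even occur in $\potv{V}{v}{\cdot}$. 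For \textsc{(Share-TVar)}, where $m=m_1+m_2$, I would unfold $\potv{V}{v}{m\cdot\alpha}$, which substitutes $V(\alpha)$ for $\alpha$; a short calculation with the definitions of type substitution and type multiplication shows $\potv{V}{v}{m\cdot\alpha}$ equals $m$ times the element potential determined by $V(\alpha)$ (and likewise for $m_1,m_2$), so the split follows from $m=m_1+m_2$.

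Next I would handle the structural rules. For \textsc{(Share-List)} and \textsc{(Share-Subset)} the outermost potential annotation is $0$ and the logical refinement is unchanged, so (by a standard canonical-forms argument on $\jstyping{\Gamma}{v}{S}$, $v$ is a list $[v_1,\ldots,v_n]$ and) $\potv{V}{v}{\cdot}$ reduces to a sum over the $v_i$; I then apply the induction hypothesis pointwise to the sub-derivation $\jsharing{\Gamma}{T}{T_1}{T_2}$ together with the element typings $\jstyping{\Gamma}{v_i}{T}$ obtained by inversion, and sum up. The crucial case is \textsc{(Share-Pot)}: here $S=\tpot{R}{\phi}$, $S_i=\tpot{R_i}{\phi_i}$ with $\jsharing{\Gamma}{R}{R_1}{R_2}$ and side condition $\jprop{\Gamma,\nu{:}R}{\phi=\phi_1+\phi_2}$. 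I would split $\potv{V}{v}{\tpot{R}{\phi}}$ into its ``$R$ part'', namely $\potv{V}{v}{\tpot{R}{0}}$, whose split is given directly by the induction hypothesis on $\jsharing{\Gamma}{R}{R_1}{R_2}$, and its ``$\phi$ part'', which is $\subst{\calI(v)}{\nu}{\phi}$ in either the bool or the list-length case, whose split follows by instantiating the side condition at $\nu:=\calI(v)$; adding the two halves yields the result.

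The main obstacle is exactly that last instantiation: to specialize $\jprop{\Gamma,\nu{:}R}{\phi=\phi_1+\phi_2}$ at $\nu:=\calI(v)$ I need $E[\nu\mapsto\calI(v)]\in\interps{\Gamma,\nu{:}R}$ and $E[\nu\mapsto\calI(v)]\models\scrB(\Gamma,\nu{:}R)$. The former is immediate from canonical forms (so $\calI(v)$ has the right sort) and the hypothesis $E\models\condc{V}{\Gamma}$ (equivalently $E\models\scrB(\Gamma)$, since $E=\calI_V(\Gamma)$); the latter additionally requires that $v$ satisfies the top-level refinement of $R$. That fact follows from the consistency of well-typed values, or—to avoid any circularity with the consistency lemma—from a direct induction on $\jstyping{\Gamma}{v}{R}$ showing a well-typed value validates its type's outermost refinement. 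A secondary technical point is the bookkeeping in \textsc{(Share-TVar)}: one must check that the recursive unfolding of $\potv{V}{v}{\cdot}$ through $V(\alpha)$ terminates and commutes with scaling by $m$, which is routine and relies only on the well-foundedness of the definition of $\potv{V}{v}{\cdot}$.
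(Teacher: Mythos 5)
Your proposal is correct and takes essentially the same route as the paper, whose entire proof is the one-line ``by induction on $\jsharing{\Gamma}{S}{S_1}{S_2}$.'' Your case analysis matches the sharing rules, and the subtlety you flag in the \textsc{(Share-Pot)} case---needing $v$ to validate the outermost refinement of $R$ before instantiating $\jprop{\Gamma,\bindvar{\nu}{R}}{\phi=\phi_1+\phi_2}$ at $\nu:=\calI(v)$---is a real obligation that the paper elides but that your suggested auxiliary induction on the typing derivation discharges.
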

\begin{proof}
	By induction on $\jsharing{\Gamma}{S}{S_1}{S_2}$.
\end{proof}

\begin{lemma}\label{lem:progressatom}
  If $\Gamma = \overline{q \mid \alpha}$, $\jatyping{\Gamma}{a}{B}$, $\jctxtyping{V}{\Gamma}$ and $p \ge \potc{V}{\Gamma}$, then $\jval{a}$ and $a$ is consistent with $\jstyping{\Gamma}{a}{\tsubset{B}{\nu = \calI(a)}}$.
\end{lemma}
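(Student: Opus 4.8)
The plan is to proceed by induction on the derivation of $\jatyping{\Gamma}{a}{B}$, using throughout that $\Gamma = \overline{q \mid \alpha}$ has no variable bindings and that its free-potential terms are nonnegative. Since \textsc{(T-SimpAtom)} already gives $\jstyping{\Gamma}{a}{\tsubset{B}{\nu = \calI(a)}}$, what remains is to establish $\jval{a}$ together with the two conditions of Definition~\ref{de:valconsistency}: for every $V$ with $\jctxtyping{V}{\Gamma}$ and $E = \calI_V(\Gamma)$ such that $E \models \condc{V}{\Gamma}$, one has $E \models \condv{V}{a}{\tsubset{B}{\nu = \calI(a)}}$ and $\potc{V}{\Gamma} \ge \potv{V}{a}{\tsubset{B}{\nu = \calI(a)}}$. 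Note that $\potc{V}{\Gamma}$ is simply the sum of the free potentials of $\Gamma$, hence $\ge 0$; the hypothesis $p \ge \potc{V}{\Gamma}$ is only there so that the induction hypothesis can be re-invoked on sub-derivations. I would then case on the last rule.

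The straightforward cases come first. \textsc{(SimpAtom-Var)} cannot occur, since it requires a binding $\Gamma(x) = \trefined{B}{\psi}{\phi}$ that $\Gamma$ does not have. For \textsc{(SimpAtom-True)} and \textsc{(SimpAtom-False)}, $a \in \{\etrue,\efalse\}$ is a value, $\condv{V}{a}{\tsubset{\tbool}{\nu = \calI(a)}}$ unfolds to the valid formula $\calI(a) = \calI(a)$, and $\potv{V}{a}{\tsubset{\tbool}{\nu = \calI(a)}} = 0 \le \potc{V}{\Gamma}$. For \textsc{(SimpAtom-Nil)}, $\enil$ is a value and, since $\calI(\enil) = 0$, the required consistency is precisely the conclusion of Lemma~\ref{lem:consistentcons}(1), whose single premise $\jwftype{\Gamma}{T}$ is in hand.

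The interesting case is \textsc{(SimpAtom-Cons)}, where $a = \econs{\hat{a}_h}{a_t}$, $B = \tlist{T}$, with premises $\jctxsharing{\Gamma}{\Gamma_1}{\Gamma_2}$, $\jstyping{\Gamma_1}{\hat{a}_h}{T}$, and $\jatyping{\Gamma_2}{a_t}{\tlist{T}}$. Context sharing keeps the binding sequence fixed, so $\Gamma_1$ and $\Gamma_2$ are again of the form $\overline{q \mid \alpha}$; thus the induction hypothesis applies to $a_t$, yielding $\jval{a_t}$ and consistency of $a_t$ with $\jstyping{\Gamma_2}{a_t}{\tsubset{\tlist{T}}{\nu = \calI(a_t)}}$. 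For the head, $\hat{a}_h$ is closed (no variable bindings in $\Gamma_1$), and a closed extended atom is always a value — an easy structural induction, since $\etrue$, $\efalse$, $\enil$, a $\mathsf{cons}$ of closed atoms, and abstractions and fixpoints are all values, while a variable is not closed — so $\jval{\hat{a}_h}$ and hence $\jval{\econs{\hat{a}_h}{a_t}}$. Consistency of the list then follows from Lemma~\ref{lem:consistentcons}(2) instantiated with $v_h := \hat{a}_h$ and $v_t := a_t$: its second hypothesis is the consistency of $a_t$ obtained from the IH, and its first hypothesis is consistency of $\hat{a}_h$ with $\jstyping{\Gamma_1}{\hat{a}_h}{T}$, which I would discharge by a nested application of this lemma when $\hat{a}_h$ is an interpretable atom, and in general by the Consistency lemma for well-typed values in $\overline{q \mid \alpha}$ contexts.

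I expect the only real obstacle to be that last point: $\hat{a}_h$ need not be interpretable — it may be a function literal whose type $T$ is an arrow type annotated with some potential $\phi$ — so proving it consistent (in particular that $\potc{V}{\Gamma_1} \ge \phi$) requires either invoking the general value-consistency lemma, or inverting the derivation of $\jstyping{\Gamma_1}{\hat{a}_h}{T}$ through the structural rules \textsc{(S-Relax)}, \textsc{(S-Subtype)}, \textsc{(S-Transfer)}, and \textsc{(S-Inst)} to locate the free potential in $\Gamma_1$ that was relaxed onto the arrow. Everything else reduces to routine unfolding of the definitions of $\condv{\cdot}{\cdot}{\cdot}$ and $\potv{\cdot}{\cdot}{\cdot}$ together with the context-sharing identity $\potc{V}{\Gamma} = \potc{V}{\Gamma_1} + \potc{V}{\Gamma_2}$.
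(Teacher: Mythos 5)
Your proposal is correct and follows essentially the same route as the paper: induction on $\jatyping{\Gamma}{a}{B}$, with the \textsc{(SimpAtom-Var)} case vacuous, the Boolean and $\enil$ cases discharged by unfolding $\Psi$/$\Phi$ and by Lemma~\ref{lem:consistentcons}(1), and the $\mathsf{cons}$ case combining the induction hypothesis on the tail with Lemma~\ref{lem:consistentcons}(2). The one point you flag as an obstacle --- consistency of the (possibly non-interpretable) head $\hat{a}_h$ --- is resolved in the paper exactly by the first of your two options: since $\Gamma_1$ has no variable bindings, $\hat{a}_h$ is a value, and the mutually-inductive Progress theorem (Theorem~\ref{the:progress}), applied to the subderivation $\jstyping{\Gamma_1}{\hat{a}_h}{T}$, yields its consistency directly.
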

\begin{proof}
  By induction on $\jatyping{\Gamma}{a}{B}$:
  \begin{alignat}{2}\footnotesize
    \shortintertext{\bf\textsc{(SimpAtom-True)}}
    & \text{SPS}~a=\etrue,B=\tbool \\
    & \jval{\etrue} & \text{[value]} \\
    & \condv{V}{\etrue}{\tsubset{\tbool}{\nu = \calI(\etrue)}} \\
    & \quad = \subst{\calI(\etrue)}{\nu}{(\nu = \calI(\etrue))} = \top \\
    & \potv{V}{\etrue}{\tpot{\tbool}{0}} = 0 \le \potc{V}{\Gamma}
		\shortintertext{\bf\textsc{(SimpAtom-False)}}
		& \text{SPS}~a=\efalse, B=\tbool \\
		& \jval{\efalse} & \text{[value]} \\
		& \condv{V}{\efalse}{\tsubset{\tbool}{\nu = \calI(\efalse)}} \\
		& \quad = \subst{\calI(\efalse)}{\nu}{(\nu = \calI(\efalse))} = \top \\
		& \potv{V}{\efalse}{\tpot{\tbool}{0}} = 0 \le \potc{V}{\Gamma}
	\shortintertext{\bf\textsc{(SimpAtom-Nil)}}
	& \text{SPS}~a=\enil,B = \tlist{T} \\
	& \jval{\enil} & \text{[value]} \\
	& \enil~\text{consistent} & \text{[\lemref{consistentcons}]}
	\shortintertext{\bf\textsc{(SimpAtom-Cons)}}
	& \text{SPS}~a=\econs{\hat{a}_h}{a_t}, B = \tlist{T} \\
	& \Gamma~\text{contains no variables} \implies \jval{\hat{a}_h} \label{eq:proga:conshead} \\
	& \jctxsharing{\Gamma}{\Gamma_1}{\Gamma_2} & \text{[premise]} \label{eq:proga:consshare} \\
	& \jstyping{\Gamma_1}{\hat{a}_h}{T} & \text{[premise]} \label{eq:proga:conspremh} \\
	& \jatyping{\Gamma_2}{a_t}{\tlist{T}} & \text{[premise]} \label{eq:proga:conspremt} \\
	& \hat{a}_h~\text{consistent} & \text{[\theoref{progress}, \eqref{eq:proga:conshead}, \eqref{eq:proga:conspremh}]} \\
	& \jval{a_t}, a_t~\text{consistent} & \text{[ind. hyp., \eqref{eq:proga:conspremt}]} \\
	& \jval{\econs{\hat{a}_t}{a_t}} & \text{[value]} \\
	& \econs{\hat{a}_h}{a_t}~\text{consistent} & \text{[\lemref{consistentcons}]}
  \end{alignat}  
\end{proof}

\begin{theorem}[Progress]\label{the:progress}
	If  $\Gamma=\overline{q \mid \alpha}$, $\jstyping{\Gamma}{e}{S}$, $\jctxtyping{V}{\Gamma}$ and $p \ge \potc{V}{\Gamma}$, then either $\jval{e}$ and $e$ is consistent with $\jstyping{\Gamma}{e}{S}$, or there exist $e'$ and $p'$ such that $\jstep{e}{e'}{p}{p'}$.
\end{theorem}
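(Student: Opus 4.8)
The plan is to prove the Progress theorem by induction on the typing derivation $\jstyping{\Gamma}{e}{S}$, using the fact that $\Gamma$ contains only type variables and free potentials (so there are no program-variable bindings to worry about). First I would handle the structural rules \textsc{(S-Gen)}, \textsc{(S-Inst)}, \textsc{(S-Subtype)}, \textsc{(S-Transfer)}, and \textsc{(S-Relax)}: in each case the subject expression $e$ is unchanged, so I invoke the induction hypothesis on the premise typing judgment. Some care is needed for \textsc{(S-Transfer)} and \textsc{(S-Relax)}, where the context changes from $\Gamma'$ (or $\Gamma$) to $\Gamma$ (or $\Gamma,\phi'$); I would use the hypothesis $\jprop{\Gamma}{\pot{\Gamma}=\pot{\Gamma'}}$ together with the fact that the available resource $p$ satisfies $p \ge \potc{V}{\Gamma}$, and argue that this implies $p \ge \potc{V}{\Gamma'}$ as well, so the induction hypothesis applies; consistency of a value then transfers back via \propref{subtyping} and the potential-invariance facts. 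For \textsc{(S-Gen)} I additionally need that a value typed at $\forall\alpha.S$ is consistent, which follows by unfolding \deref{valconsistency} over all instantiations and appealing to the induction hypothesis on each.

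Next I would dispatch the "already a value" leaf cases: \textsc{(T-SimpAtom)} / \textsc{(SimpAtom-*)} are exactly \lemref{progressatom} (which is proved by mutual induction with this theorem, on the atom typing derivation); \textsc{(T-Abs)}, \textsc{(T-Abs-Lin)}, \textsc{(T-Fix)} produce lambda/fix values, and I verify consistency directly — for these, $\condv{V}{v}{S}$ reduces to $\top$ and the potential stored in a function value is just the outermost annotation, which is covered because the self-sharing premise forces the relevant context potential to be available. Then \textsc{(T-Imp)}: the premise $\jprop{\Gamma}{\bot}$ is contradictory under any $V$ with $E=\calI_V(\Gamma)\models\condc{V}{\Gamma}$ — but here I must be careful, since $\Gamma$ contains only type variables and free potentials, so $\condc{V}{\Gamma}=\top$ and $\jprop{\Gamma}{\bot}$ genuinely gives $\top\implies\bot$, i.e.\ this case is vacuous because no valid $V$ exists; hence the statement holds trivially. (If one insists on a $V$, the hypotheses are unsatisfiable and the conclusion is free.)

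The remaining cases are the genuine step-taking ones: \textsc{(T-Consume-P)}, \textsc{(T-Consume-N)}, \textsc{(T-Cond)}, \textsc{(T-MatL)}, \textsc{(T-Let)}, \textsc{(T-App)}, \textsc{(T-App-SimpAtom)}. For \textsc{(T-Consume-P)} with cost $c\ge 0$: the context has the form $\Gamma,c$, so $p \ge \potc{V}{\Gamma,c} = \potc{V}{\Gamma} + c \ge c$, hence $p - c \ge 0$ and the rule \textsc{(E-Consume)} fires giving $\jstep{\econsume{c}{e_0}}{e_0}{p}{p-c}$; \textsc{(T-Consume-N)} is similar with $c<0$ and $p-c = p+|c|\ge 0$ automatically. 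For \textsc{(T-Cond)}: by \lemref{progressatom} applied to the premise $\jatyping{\Gamma}{a_0}{\tbool}$, $a_0$ is a value of boolean type, so it is either $\etrue$ or $\efalse$, and the matching \textsc{(E-Cond-*)} rule applies with the resource unchanged. For \textsc{(T-MatL)}: again $a_0$ is a value of list type, so it is $\enil$ or $\econs{v_h}{v_t}$, and \textsc{(E-MatL-Nil)} or \textsc{(E-MatL-Cons)} fires. For \textsc{(T-Let)}: if $e_1$ is a value, \textsc{(E-Let2)} steps; otherwise the induction hypothesis on $e_1$ (with the sub-context $\Gamma_1$ from the sharing split, noting $\potc{V}{\Gamma_1}\le\potc{V}{\Gamma}\le p$) gives a step of $e_1$, and \textsc{(E-Let1)} lifts it. For \textsc{(T-App)} and \textsc{(T-App-SimpAtom)}: both arguments are atoms hence values, so by \lemref{progressatom} (or the IH on the function premise — but the function is an atom $\hat{a}_1$, so it is a value of arrow type, hence a lambda or fix), and then \textsc{(E-App-Abs)} or \textsc{(E-App-Fix)} fires. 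The main obstacle I anticipate is getting the bookkeeping for the mutual induction exactly right — ensuring the induction is well-founded across the $\jatyping{}{}{}$ and $\jstyping{}{}{}$ judgments (it is, since in \textsc{(SimpAtom-Cons)} the head $\hat{a}_h$ is typed by a strictly smaller $\jstyping{}{}{}$ derivation and the tail by a smaller $\jatyping{}{}{}$ derivation), and correctly propagating the resource bound $p \ge \potc{V}{\Gamma}$ through every context-splitting and context-rewriting step so that the induction hypothesis is always applicable.
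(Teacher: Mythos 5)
Your proposal is correct and follows essentially the same route as the paper's proof: induction on the typing derivation, with the atom cases discharged by the mutually inductive Lemma~\ref{lem:progressatom}, the value cases verified directly for consistency, \textsc{(T-Imp)} dispatched by exfalso from $\jprop{\Gamma}{\bot}$, and the step-taking cases resolved by canonical-forms reasoning on atoms plus the resource accounting $p \ge \potc{V}{\Gamma}$ through context sharing and transfer. The case analysis, key lemma, and bookkeeping all match the paper's argument.
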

\begin{proof}
	By induction on $\jstyping{\Gamma}{e}{S}$:
	\begin{alignat}{2}\footnotesize
	   \shortintertext{\bf\textsc{(T-SimpAtom)}}
	   & \text{SPS}~e=a, S = \tsubset{B}{\nu = \calI(a)} \\
	   & \jval{a}, a~\text{consistent} & \text{[\lemref{progressatom}]}
	   %
		%
		\shortintertext{\bf\textsc{(T-Imp)}}
		& \text{SPS}~e=\eimp,S=T \\
		& \jprop{\Gamma}{\bot} & \text{[premise]} \\
		&  \top \implies \bot\\
		& \text{exfalso} 
		\shortintertext{\bf\textsc{(T-Consume-P)}}
		& \text{SPS}~\Gamma=(\Gamma',c), e=\econsume{c}{e_0}, c \ge 0 \\
		& p \ge \potc{V}{\Gamma}= \potc{V}{\Gamma'} + c \ge c \\
		& \jstep{e}{e_0}{p}{p-c} & \text{[eval.]}
		\shortintertext{\bf\textsc{(T-Consume-N)}}
		& \text{SPS}~e=\econsume{c}{e_0},c<0 \\
		& \jstep{e}{e_0}{p}{p-c} & \text{[eval.]}
		\shortintertext{\bf\textsc{(T-Cond)}}
		& \text{SPS}~e=\econd{a_0}{e_1}{e_2},S=T \\
        & \jatyping{\Gamma}{a_0}{\tbool} & \text{[premise]} \label{eq:prog:conde0} \\
        & \jval{a_0} & \text{[\lemref{progressatom}]} \label{eq:prog:conde0val} \\
        & \text{inv. on \eqref{eq:prog:conde0} with \eqref{eq:prog:conde0val}} \\
        & \textbf{case}~a_0=\etrue \\
        & \enskip \jstep{e}{e_1}{p}{p} & \text{[eval.]} \\
        & \textbf{case}~a_0=\efalse \\
        & \enskip \jstep{e}{e_2}{p}{p} & \text{[eval.]}
		\shortintertext{\bf\textsc{(T-MatL)}}
		& \text{SPS}~e=\ematl{a_0}{e_1}{x_h}{x_t}{e_2},S=T' \\
		& \jctxsharing{\Gamma}{\Gamma_1}{\Gamma_2}  & \text{[premise]} \\
		& \jatyping{\Gamma_1}{a_0}{\tlist{T}} & \text{[premise]} \label{eq:prog:matle0} \\
		& \jval{a_0} & \text{[\lemref{progressatom}]} \label{eq:prog:matle0val} \\
        & \text{inv. on \eqref{eq:prog:matle0} with \eqref{eq:prog:matle0val}} \\
        & \textbf{case}~a_0=\enil \\
        & \enskip \jstep{e}{e_1}{p}{p} & \text{[eval.]} \\
        & \textbf{case}~a_0=\econs{v_h}{v_t} \\
        & \enskip \jstep{e}{\subst{v_h,v_t}{x_h,x_t}e_2}{p}{p} & \text{[eval.]} 
		\shortintertext{\bf\textsc{(T-Let)}}
		& \text{SPS}~e=\elet{e_1}{x}{e_2},S=T_2 \\
		& \jctxsharing{\Gamma}{\Gamma_1}{\Gamma_2}  & \text{[premise]}  \\
		& \quad \implies \potc{V}{\Gamma}=\potc{V}{\Gamma_1}+\potc{V}{\Gamma_2} \label{eq:prog:letsplit} \\
		& \jstyping{\Gamma_1}{e_1}{S_1} & \text{[premise]} \label{eq:prog:lete1} \\
		& p \ge \potc{V}{\Gamma_1} & \text{[asm., \eqref{eq:prog:letsplit}]} \label{eq:prog:letindhyp} \\
		& \text{ind. hyp. on \eqref{eq:prog:lete1} with \eqref{eq:prog:letindhyp}} \\
		& \textbf{case}~\jstep{e_1}{e_1'}{p}{p'}  \\
		& \enskip \jstep{e}{\elet{e_1'}{x}{e_2}}{p}{p'} & \text{[eval.]} \\
		& \textbf{case}~\jval{e_1}  \\
		& \enskip \jstep{e}{\subst{e_1}{x}{e_2}}{p}{p} & \text{[eval.]}
		\shortintertext{\bf\textsc{(T-App)}}
		& \text{SPS}~e=\eapp{\hat{a}_1}{\hat{a}_2}, S=T \\
		& \jctxsharing{\Gamma}{\Gamma_1}{\Gamma_2} & \text{[premise]} \\
		& \jstyping{\Gamma_1}{\hat{a}_1}{\tarrowm{x}{T_x}{T}{1}} & \text{[premise]} \label{eq:prog:appe1} \\
		& \jstyping{\Gamma_2}{\hat{a}_2}{T_x} & \text{[premise]} \label{eq:prog:appe2} \\
		&\Gamma~\text{contains no variables} \\
		& \quad \implies \jval{\hat{a}_1,\hat{a}_2} \label{eq:prog:appevals} \\
        & \text{inv. on \eqref{eq:prog:appe1} with \eqref{eq:prog:appevals}} \\
        & \textbf{case}~e_1=\eabs{x}{e_0} \\
        & \enskip \jstep{e}{\subst{\hat{a}_2}{x}{e_0}}{p}{p} & \text{[eval.]} \\
        & \textbf{case}~e_2 =\efix{f}{x}{e_0} \\
        & \enskip \jstep{e}{\subst{\efix{f}{x}{e_0},\hat{a}_2}{f,x}{e_0}}{p}{p} & \text{[eval.]}
        \shortintertext{\bf\textsc{(T-App-SimpAtom)}}
        & \text{SPS}~e=\eapp{\hat{a}_1}{a_2},S=\subst{\calI(a_2)}{x}{T} \\
        & \jctxsharing{\Gamma}{\Gamma_1}{\Gamma_2} & \text{[premise]} \\
        & \jstyping{\Gamma_1}{\hat{a}_1}{\tarrowm{x}{\trefined{B}{\psi}{\phi}}{T}{1}} & \text{[premise]} \label{eq:prog:appatome1} \\
        & \Gamma~\text{contains no variables} \\
        & \quad \implies \jval{\hat{a}_1} \label{eq:prog:appatome1val} \\
        & \jstyping{\Gamma_2}{\hat{a}_2}{\trefined{B}{\psi}{\phi}} & \text{[premise]} \\
        & \jval{a_2} & \text{[\lemref{progressatom}]} \label{eq:prog:appatome2val} \\
        & \text{inv. on \eqref{eq:prog:appatome1} with \eqref{eq:prog:appatome1val}} \\
        & \textbf{case}~e_1=\eabs{x}{e_0} \\
        & \enskip \jstep{e}{\subst{a_2}{x}{e_0}}{p}{p} & \text{[eval.]} \\
        & \textbf{case}~e_2 =\efix{f}{x}{e_0} \\
        & \enskip \jstep{e}{\subst{\efix{f}{x}{e_0},a_2}{f,x}{e_0}}{p}{p} & \text{[eval.]}
		\shortintertext{\bf\textsc{(T-Abs)}}
		& \text{SPS}~e =\eabs{x}{e_0},S = \tarrow{x}{T_x}{T} \\
		& \jval{\eabs{x}{e_0}} & \text{[value]} \\
		& \condv{V}{\eabs{x}{e_0}}{\tarrow{x}{T_x}{T}} = \top \\
		& \potv{V}{\eabs{x}{e_0}}{\tpot{(\tarrow{x}{T_x}{T})}{0}} = 0 \le \potc{V}{\Gamma}
		\shortintertext{\bf\textsc{(T-Abs-Lin)}}
		& \text{SPS}~\Gamma=m\cdot \Gamma', e =\eabs{x}{e_0},S = \tarrowm{x}{T_x}{T}{m} \\
		& \jval{\eabs{x}{e_0}} & \text{[value]} \\
		& \condv{V}{\eabs{x}{e_0}}{\tarrowm{x}{T_x}{T}{m}} = \top \\
		& \potv{V}{\eabs{x}{e_0}}{\tpot{(\tarrowm{x}{T_x}{T}{m})}{0}} = 0 \le \potc{V}{\Gamma}
		\shortintertext{\bf\textsc{(T-Fix)}}
		& \text{SPS}~e=\efix{f}{x}{e_0},S =R,R=\tarrow{x}{T_x}{T} \\
		& \jstyping{\Gamma,\bindvar{f}{ R},\bindvar{x}{T_x} }{e_0}{T} & \text{[premise]} \label{eq:prog:fixlambda} \\
		& \jval{\efix{f}{x}{e_0}} & \text{[value]} \\
		& \condv{V}{\efix{f}{x}{e_0}}{R} = \top \\
		& \potv{V}{\efix{f}{x}{e_0}}{\tpot{R}{0}} = 0 \le \potc{V}{\Gamma}
		%
		%
		\shortintertext{\bf\textsc{(S-Gen)}}
		& \text{SPS}~e=v,S=\forall\beta.S' \\
		& \jstyping{\Gamma,\beta}{v}{S'} & \text{[premise]} \label{eq:prog:genprem} \\
		& \jval{v} & \text{[premise]} \\
		& \potv{V}{v}{\forall\beta.S'} = 0 \le \potc{V}{\Gamma} \\
		& \textbf{for all $\jwftype{\Gamma}{\tpot{\tsubset{B}{\psi}}{\phi}}$} \\
		& \enskip \textbf{let}~V'=V[\beta \mapsto \tpot{\tsubset{B}{\psi}}{\phi}] \\
		& \enskip \potc{V'}{\Gamma,\beta} = \potc{V}{\Gamma} \\
		& \enskip \text{ind. hyp. on \eqref{eq:prog:genprem} with $p \ge \potc{V'}{\Gamma,\beta}$} \\
		& \enskip \textbf{case}~\jstep{v}{e'}{p}{p'} \\
		& \quad \text{contradict}~\jval{v} \\
		& \enskip \textbf{case}~\jval{v} \\
		& \quad \condv{V'}{v}{S'} = \top & \text{[ind. hyp.]} \\
		& \implies \condv{V}{v}{\forall\beta.S'} = \top
		\shortintertext{\bf\textsc{(S-Inst)}}
		& \text{SPS}~S=\subst{\tpot{\tsubset{B}{\psi}}{\phi}}{\alpha'}{S'} \\
		& \jstyping{\Gamma}{e}{\forall\alpha'.S'} & \text{[premise]} \label{eq:prog:instprem} \\
		& \text{ind.~hyp. on \eqref{eq:prog:instprem} with $p \ge \potc{V}{\Gamma}$} \\
		& \textbf{case}~\jstep{e}{e'}{p}{p'} \\
		& \enskip \text{done} \\
		& \textbf{case}~\jval{e} \\
		& \enskip \condv{V}{e}{\forall\alpha'.S'} = \top & \text{[ind. hyp.]} \\
		& \enskip  \condv{V[\alpha' \mapsto \tpot{\tsubset{B}{\psi}}{\phi}]}{e}{S'} = \top \\
		& \enskip  \condv{V}{e}{\subst{\tpot{\tsubset{B}{\psi}}{\phi}}{\alpha'}{S'}} = \top \\
		& \enskip \jsharing{\Gamma,\alpha'}{S'}{S'}{S'} & \text{[wellformed.]} \\
		& \enskip  \potv{V[\alpha' \mapsto \tpot{\tsubset{B}{\psi}}{\phi}]}{e}{S'} = 0 & \text{[Prop.~\ref{prop:sharingsplit}]} \\
		& \enskip  \potv{V}{e}{\subst{\tpot{\tsubset{B}{\psi}}{\phi}}{\alpha'}{S'}} = 0 \\
		& \enskip  \potv{V}{e}{\subst{\tpot{\tsubset{B}{\psi}}{\phi}}{\alpha'}{S'}} \le \potc{V}{\Gamma}
		\shortintertext{\bf\textsc{(S-Subtype)}}
		& \text{SPS}~S=T_2 \\
		& \jstyping{\Gamma}{e}{T_1} & \text{[premise]} \label{eq:prog:subtypeprem} \\
		& \jsubty{\Gamma}{T_1}{T_2} & \text{[premise]} \label{eq:prog:subtyperel} \\
		& \text{ind. hyp. on \eqref{eq:prog:subtypeprem} with $p \ge \potc{V}{\Gamma}$} \\
		& \textbf{case}~\jstep{e}{e'}{p}{p'} \\
		& \enskip \text{done} \\
		& \textbf{case}~\jval{e} \\
		& \enskip \condv{V}{e}{T_1} = \top & \text{[ind. hyp.]} \\
		& \enskip \condv{V}{e}{T_1} \implies \condv{V}{e}{T_2} & \text{[Prop.~\ref{prop:subtyping}, \eqref{eq:prog:subtyperel}]} \\
		& \enskip \condv{V}{e}{T_2} = \top \\
		& \enskip \potv{V}{e}{T_1} \le \potc{V}{\Gamma} & \text{[ind. hyp.]} \\ 
		& \enskip \condv{V}{e}{T_1} \implies ( \potv{V}{e}{T_1} \ge \potv{V}{e}{T_2}) & \text{[Prop.~\ref{prop:subtyping}, \eqref{eq:prog:subtyperel}]} \\
		& \enskip \potv{V}{e}{T_2} \le \potc{V}{\Gamma}
		\shortintertext{\bf\textsc{(S-Transfer)}}
		& \jstyping{\Gamma'}{e}{S} & \text{[premise]} \label{eq:prog:transe} \\
		& \jprop{\Gamma}{\pot{\Gamma}=\pot{\Gamma'}} & \text{[premise]} \\
		& \Gamma'=\overline{q' \mid \alpha} \wedge \potc{V}{\Gamma}=\potc{V}{\Gamma'} \label{eq:prog:transequal} \\
		& p \ge \potc{V}{\Gamma'} \label{eq:prog:transindhyp} \\
		& \text{ind. hyp. on \eqref{eq:prog:transe} with \eqref{eq:prog:transindhyp}} \\
		& \textbf{case}~\jstep{e}{e'}{p}{p'} \\
		& \enskip \text{done} \\
		& \textbf{case}~\jval{e} \\
		& \enskip \condv{V}{e}{S} = \top & \text{[ind. hyp.]} \\
		& \enskip \potv{V}{e}{S} \le \potc{V}{\Gamma'} & \text{[ind. hyp.]} \\
		& \enskip \potv{V}{e}{S} \le \potc{V}{\Gamma} & \text{[\eqref{eq:prog:transequal}]}
		\shortintertext{\bf\textsc{(S-Relax)}}
		& \text{SPS}~\Gamma=(\Gamma',\phi'), S=\tpot{R}{\phi+\phi'} \\
		& \jstyping{\Gamma'}{e}{\tpot{R}{\phi}} & \text{[premise]}  \label{eq:prog:relaxprem} \\
		& p \ge \potc{V}{\Gamma',\phi'} = \potc{V}{\Gamma'}+\phi' \label{eq:prog:relaxindhyp} \\
		& \text{ind. hyp. on \eqref{eq:prog:relaxprem} with \eqref{eq:prog:relaxindhyp}} \\
		& \textbf{case}~\jstep{e}{e'}{p}{p'} \\
		& \enskip \text{done} \\
		& \textbf{case}~\jval{e} \\
		& \enskip \condv{V}{e}{R} = \top & \text{[ind. hyp.]} \\
		& \enskip \potv{V}{e}{\tpot{R}{\phi}} \le \potc{V}{\Gamma'} & \text{[ind. hyp.]} \\
		& \enskip \potv{V}{e}{\tpot{R}{\phi+\phi'}} \le \potc{V}{\Gamma',\phi'} & \text{[\eqref{eq:prog:relaxindhyp}]}
	\end{alignat}
\end{proof}

\subsection{Substitution}

\begin{proposition}\label{prop:ctxweaken}
	If $\jstyping{\Gamma}{e}{S}$ and $\jwfctxt{\Gamma,\Gamma'}$, then $\jstyping{\Gamma,\Gamma'}{e}{S}$.
\end{proposition}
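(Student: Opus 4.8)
The plan is to prove the proposition by structural induction on the derivation of $\jstyping{\Gamma}{e}{S}$, after first recording that every auxiliary judgment of the system is closed under the same form of weakening. Concretely, I would first establish, by short inductions on the respective derivations, that whenever $\jwfctxt{\Gamma,\Gamma'}$ holds one has: $\jsort{\Gamma}{\psi}{\Delta} \Rightarrow \jsort{\Gamma,\Gamma'}{\psi}{\Delta}$; $\jwftype{\Gamma}{S} \Rightarrow \jwftype{\Gamma,\Gamma'}{S}$; $\jsubty{\Gamma}{T_1}{T_2} \Rightarrow \jsubty{\Gamma,\Gamma'}{T_1}{T_2}$; $\jsharing{\Gamma}{S}{S_1}{S_2} \Rightarrow \jsharing{\Gamma,\Gamma'}{S}{S_1}{S_2}$; and $\jprop{\Gamma}{\psi} \Rightarrow \jprop{\Gamma,\Gamma'}{\psi}$. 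The validity case is immediate from the denotational definition, since appending $\Gamma'$ only strengthens $\scrB(\cdot)$ on the variables already in scope and shrinks $\interps{\cdot}$, so a valid implication stays valid. I would also need the same statement with $\Gamma'$ inserted at an arbitrary position rather than only at the end --- equivalently, a companion exchange lemma allowing a freshly-named, $\Gamma$-well-formed block to commute past unrelated bindings --- which is likewise routine; throughout I adopt the usual freshness convention so that binders inside $e$ are disjoint from the domain of $\Gamma'$.

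Granting these, the induction on $\jstyping{\Gamma}{e}{S}$ is mostly mechanical. Rules that consult $\Gamma$ only pointwise --- \textsc{(SimpAtom-Var)}, \textsc{(T-Var)}, \textsc{(T-SimpAtom)} --- go through because $(\Gamma,\Gamma')(x) = \Gamma(x)$ for $x$ in the domain of $\Gamma$. For the structural rules \textsc{(S-Inst)}, \textsc{(S-Subtype)}, \textsc{(S-Relax)}, \textsc{(S-Gen)} one applies the induction hypothesis to the subderivation and re-derives the side conditions via the auxiliary weakening lemmas; for \textsc{(S-Transfer)}, with premises $\jstyping{\Gamma''}{e}{S}$ and $\jprop{\Gamma}{\pot{\Gamma}=\pot{\Gamma''}}$, the induction hypothesis gives $\jstyping{\Gamma'',\Gamma'}{e}{S}$ and one concludes using $\pot{\Gamma,\Gamma'}=\pot{\Gamma}+\pot{\Gamma'}$, $\pot{\Gamma'',\Gamma'}=\pot{\Gamma''}+\pot{\Gamma'}$, and the weakened validity of the potential equation. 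For the binder-introducing rules \textsc{(T-Let)}, \textsc{(T-Cond)}, \textsc{(T-MatL)}, \textsc{(T-Abs)}, \textsc{(T-Fix)} one applies the induction hypothesis to each premise in the context obtained by inserting $\Gamma'$ immediately after the prefix $\Gamma$; well-formedness of the enlarged context follows from context-weakening, and the placement is legitimate by the exchange fact above.

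The delicate part --- and the step I expect to be the real obstacle --- is the affine bookkeeping for the context-splitting rules and for the function rules. Whenever a premise reads $\jctxsharing{\Gamma}{\Gamma_1}{\Gamma_2}$, as in \textsc{(SimpAtom-Cons)}, \textsc{(T-MatL)}, \textsc{(T-Let)}, \textsc{(T-App)}, \textsc{(T-App-SimpAtom)}, I would extend the split by sending all of $\Gamma'$ to the left and a zeroed copy $\Gamma'_0$ of $\Gamma'$ to the right --- replacing every outermost potential annotation, every multiplicity on a type variable or arrow, and every free-potential binding by $0$. Then the per-binding obligations of $\jctxsharing{\Gamma,\Gamma'}{\Gamma_1,\Gamma'}{\Gamma_2,\Gamma'_0}$ are instances of $\phi = \phi + 0$ and $m = m + 0$, discharged by the structural \textsc{Share-*} rules together with the weakened sharing and validity judgments, and the induction hypotheses apply to the two branch subderivations. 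The genuinely awkward cases are \textsc{(T-Abs)}, \textsc{(T-Abs-Lin)}, and \textsc{(T-Fix)}, whose self-sharing premise $\jctxsharing{\Gamma}{\Gamma}{\Gamma}$ (resp.\ the multiplicative context $m \cdot \Gamma$) does not survive the addition of an arbitrary $\Gamma'$: here I would first normalize $\Gamma'$ to a resource-free context --- peeling a trailing free-potential binding with \textsc{(S-Relax)} followed by \textsc{Sub-Pot} subtyping on the necessarily $\tpot{R}{\phi}$-shaped result type, and using \textsc{(S-Transfer)} to migrate potential out of variable bindings before peeling --- after which the self-sharing of $\Gamma,\Gamma'$ (resp.\ the identity $m \cdot (\Gamma,\Gamma'_0) = (m\cdot\Gamma),\Gamma'_0$) holds and the rule reapplies. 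In every use of this proposition made by the soundness development, $\Gamma'$ is in fact a sequence of type variables and free potentials, which trivializes this normalization; so I would either prove exactly that special case, or carry out the normalization sketched here for the fully general statement.
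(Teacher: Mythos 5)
Your proof takes the same route as the paper, whose entire argument for this proposition is ``by induction on $\jstyping{\Gamma}{e}{S}$''; your elaboration of the auxiliary weakening/exchange lemmas, the extension of the context-sharing premises with a zeroed copy of $\Gamma'$, and the normalization needed to recover the self-sharing premises of \textsc{(T-Abs)} and \textsc{(T-Fix)} is sound and considerably more explicit than the paper's one-line proof. One caveat: your closing remark that every use of this proposition has $\Gamma'$ consisting only of type variables and free potentials is not accurate --- in the substitution lemma it is invoked with $\Gamma' = \subst{\calI(t)}{x}{\Gamma''}$ for an arbitrary context suffix $\Gamma''$ containing variable bindings --- so you must indeed carry out the general normalization you sketch rather than retreat to the special case.
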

\begin{proof}
	By induction on $\jstyping{\Gamma}{e}{S}$.
\end{proof}

\begin{proposition}\label{prop:ctxrelax}
	If $\jstyping{\Gamma_1}{e}{S}$ and $\jctxsharing{\Gamma}{\Gamma_1}{\Gamma_2}$, then $\jstyping{\Gamma}{e}{S}$.
\end{proposition}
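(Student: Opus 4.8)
The plan is to prove this proposition by induction on the derivation of $\jstyping{\Gamma_1}{e}{S}$, exploiting the fact that $\jctxsharing{\Gamma}{\Gamma_1}{\Gamma_2}$ only redistributes resources: $\Gamma$, $\Gamma_1$, and $\Gamma_2$ carry the same sequence of variables, type variables, and path conditions, with identical logical refinements. First I would isolate three auxiliary facts. (i) Since $\scrB(\Gamma) = \scrB(\Gamma_1)$, every sorting, validity ($\jprop{\Gamma_1}{\cdot}$), and well-formedness judgment over $\Gamma_1$ holds verbatim over $\Gamma$ — the only subtlety is \textsc{Wf-Poly}, which again mentions sharing and is handled by a side induction. (ii) For each type binding, $\Gamma$'s type is a subtype of $\Gamma_1$'s: taking the left summand of each sharing rule only weakens a multiplicity ($m \ge m_1$) or a potential ($\phi \ge \phi_1$) while fixing refinements, so a short lemma gives $\jsharing{\Gamma}{S}{S_1}{S_2} \Rightarrow \jsubty{\Gamma}{S}{S_1}$, and in particular $\pot{\Gamma} \ge \pot{\Gamma_1}$. (iii) Context sharing re-associates: from $\jctxsharing{\Gamma}{\Gamma_1}{\Gamma_2}$ and a further split $\jctxsharing{\Gamma_1}{\Gamma_1'}{\Gamma_1''}$ one can build $\jctxsharing{\Gamma}{\Gamma_1'}{\Gamma''}$ with $\jctxsharing{\Gamma''}{\Gamma_1''}{\Gamma_2}$, since $+$ on potentials and multiplicities is associative.

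With these in hand, most cases are mechanical. The structural rules (\textsc{S-Inst}, \textsc{S-Subtype}, \textsc{S-Transfer}, \textsc{S-Relax}, \textsc{S-Gen}) commute with enrichment directly — \eg for \textsc{S-Transfer} I would compose its hypothesis $\pot{\Gamma_1} = \pot{\Gamma_1'}$ with (i)/(ii). The atom and variable rules (\textsc{T-SimpAtom}, \textsc{T-Var}, \textsc{SimpAtom-*}) re-derive the same type under $\Gamma$ by (i), possibly after one use of \textsc{S-Subtype} justified by (ii) to shed surplus potential on the looked-up binding. The rules carrying a context-sharing premise — \textsc{T-App}, \textsc{T-App-SimpAtom}, \textsc{T-Let}, \textsc{T-MatL}, \textsc{SimpAtom-Cons}, and \textsc{T-Consume-P/N} — are handled by pushing the whole surplus of $\Gamma$ into one sub-context of that premise via (iii) and applying the induction hypothesis there; \textsc{T-Cond} is the same, threading $\calI(a_0)$ through both branches.

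The step I expect to be the real obstacle is the pair of function-introduction rules \textsc{T-Abs} and \textsc{T-Fix}. Their self-sharing premise $\jctxsharing{\Gamma_1}{\Gamma_1}{\Gamma_1}$ forces $\Gamma_1$ to carry \emph{no} potential at all, even buried inside list elements, whereas the enriched $\Gamma$ generally does; and since an unrestricted-multiplicity arrow type can only be produced by \textsc{T-Abs}/\textsc{T-Fix} (\textsc{S-Subtype} can only \emph{increase} a multiplicity, and \textsc{T-Abs-Lin} yields a finite one), the rule cannot simply be re-applied under $\Gamma$. The way out is to observe that a self-sharing $\Gamma_1$ occurring in $\jctxsharing{\Gamma}{\Gamma_1}{\Gamma_2}$ also satisfies $\jctxsharing{\Gamma}{\Gamma}{\Gamma_1}$, so $\Gamma$ contains a potential-free copy of itself and the surplus is inessential for the closure $\eabs{x}{e_0}$ (or $\efix{f}{x}{e_0}$); turning this observation into an actual derivation — either by strengthening the induction hypothesis, or by noting that in the Substitution Lemma, where this proposition is applied, the relevant $e$ always has scalar type and is therefore never a bare abstraction or fixpoint — is the one place where the bookkeeping is genuinely delicate. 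Every other case follows the pattern above.
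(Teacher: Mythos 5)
Your overall strategy---induction on the derivation of $\jstyping{\Gamma_1}{e}{S}$, supported by the facts that sharing preserves logical refinements (so validity, sorting, and well-formedness transfer), that $\jsharing{\Gamma}{S}{S_1}{S_2}$ yields $\jsubty{\Gamma}{S}{S_1}$, and that context sharing re-associates---is exactly the approach the paper takes (its entire proof is the line ``by induction on $\jstyping{\Gamma_1}{e}{S}$''), and every case you sketch other than \textsc{(T-Abs)}/\textsc{(T-Fix)} goes through as you describe.

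The case you flag as delicate is, however, a genuine gap, and neither of your two proposed escapes closes it. The observation that $\jctxsharing{\Gamma}{\Gamma}{\Gamma_1}$ holds does not help: \textsc{(T-Abs)} demands the self-sharing $\jctxsharing{\Gamma}{\Gamma}{\Gamma}$ of the \emph{conclusion} context, and possessing a potential-free summand inside $\Gamma$ gives you no rule with which to discard the surplus. \textsc{(S-Transfer)} only redistributes $\pot{\Gamma}$, which by definition counts free potential terms and \emph{outermost} annotations, so potential apportioned by \textsc{(Share-List)}/\textsc{(Share-Pot)} into a list element type---e.g.\ $\ell:\tlist{\tpot{\tbool}{1}}$ shared as $\tlist{\tpot{\tbool}{0}}$ and $\tlist{\tpot{\tbool}{1}}$---is invisible to it, and no subtyping or multiplication rule recovers an $\infty$-multiplicity arrow under a context that fails to self-share. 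Your second escape, restricting attention to scalar goal types, does not match the actual call sites either: the Substitution Theorem invokes this proposition in its \textsc{(T-Abs)} case on an abstraction at an $\infty$-arrow type, and even at scalar goal types the \textsc{(T-Abs)} subcase resurfaces inside \textsc{(SimpAtom-Cons)} subderivations because list values may contain closures. So you have correctly located the point where the statement, read against the rules as given, resists proof; finishing requires either a strengthened simultaneous induction (e.g.\ a separate weakening property for abstractions under non-self-sharing contexts) or a restriction of the proposition to surpluses that live in $\pot{\Gamma}$---and it is worth noting that the paper's own one-line proof elides exactly this difficulty.
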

\begin{proof}
	By induction on $\jstyping{\Gamma_1}{e}{S}$.
\end{proof}

\begin{proposition}\label{prop:typingvalinterp}
	If $\jstyping{\Gamma}{v}{\tpot{\tsubset{B}{\psi}}{\phi}}$ and $\jval{v}$, then $\jstyping{\Gamma}{v}{\tpot{\tsubset{B}{\nu = \calI(v)}}{\phi}}$.
\end{proposition}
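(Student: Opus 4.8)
The plan is to prove a mild generalization of the statement by induction on the typing derivation: for every value $v$ and every type $S$ whose head (after stripping any leading $\forall$-quantifiers) is a subset-with-potential type $\tpot{\tsubset{B}{\psi}}{\phi}$, if $\jstyping{\Gamma}{v}{S}$ then $\jstyping{\Gamma}{v}{S'}$, where $S'$ is $S$ with the head refinement $\psi$ replaced by $\nu = \calI(v)$ and everything else (in particular $B$ and $\phi$, and the leading quantifiers) left unchanged. The proposition is the case with no leading quantifiers. This generalization is what makes the induction go through, because the derivation may end in \textsc{(S-Inst)}, whose premise assigns $v$ a polymorphic type.

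First I would record two structural facts. A value whose type has a subset-type head must be a \emph{data} value, i.e.\ $v \in \{\etrue, \efalse, \enil, \econs{v_h}{v_t}\}$: functions and fixpoints receive only arrow types (via \textsc{(T-Abs)}, \textsc{(T-Abs-Lin)}, \textsc{(T-Fix)}) or quantifications thereof, and no subtyping rule relates an arrow or a variable type $m\cdot\alpha$ to $\tbool$ or $\tlist{\cdot}$. Hence $\calI(v)$ is defined. Moreover, by the same kind of argument (a short sub-induction on derivation size, noting \textsc{(SimpAtom-Nil)}/\textsc{(SimpAtom-Cons)}/\textsc{(SimpAtom-True/False)} always produce $\tbool$ or $\tlist{\cdot}$ and nothing connects these to $m\cdot\alpha$), a data value is never assigned a type whose head base type is a type variable; this will tame the \textsc{(S-Inst)} case. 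Then I would case on the last rule of $\jstyping{\Gamma}{v}{S}$. The base case \textsc{(T-SimpAtom)} is immediate: there $S = \tsubset{B}{\nu = \calI(v)}$ and $\phi = 0$, so $S' = S$. The rule \textsc{(T-Var)} is vacuous since a value is not a variable, and \textsc{(T-Imp)} does not conclude a value judgment.

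For the structural rules \textsc{(S-Subtype)}, \textsc{(S-Transfer)}, \textsc{(S-Relax)}, and \textsc{(S-Gen)}, I would apply the induction hypothesis to the (strictly smaller) premise and reapply the same rule. None of their side conditions inspects the head refinement except \textsc{(S-Subtype)} through \textsc{(Sub-Subset)}, where the obligation becomes $\jprop{\Gamma,\nu:B}{(\nu = \calI(v)) \implies (\nu = \calI(v))}$, trivially valid; the base-type and potential obligations are untouched, and the well-formedness/sharing obligations (e.g.\ \textsc{(Share-Subset)} inside \textsc{(S-Gen)}) only mention the base type and remain valid because $\nu = \calI(v)$ is well-sorted under $\nu:B$ for $B \in \{\tbool,\tlist{\cdot}\}$. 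The only delicate case is \textsc{(S-Inst)}: there $S = \subst{\tpot{\tsubset{B_U}{\psi_U}}{\phi_U}}{\alpha}{S_0}$ from a premise $\jstyping{\Gamma}{v}{\forall\alpha.S_0}$, and $S_0$ necessarily has a subset-type head. Inspecting the definition of type substitution, and using that this head base type is not a type variable, substitution leaves the head refinement and the head potential of $S_0$ untouched — it only instantiates element types nested inside — so applying the induction hypothesis to $\forall\alpha.S_0$ and then reapplying \textsc{(S-Inst)} with the same instance yields exactly $S'$.

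I expect the main obstacle to be the \textsc{(S-Inst)} case: getting the generalization to quantified types phrased so the induction is well-founded, and then justifying from the type-substitution equations (together with the observation that data values never inhabit type-variable base types) that instantiation commutes with the head-refinement rewrite. Every other case is routine rule-replaying whose side conditions are either unchanged or trivially valid.
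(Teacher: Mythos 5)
Your proof is correct and follows essentially the same route as the paper, whose entire argument is ``by induction on $\jstyping{\Gamma}{v}{\tpot{\tsubset{B}{\psi}}{\phi}}$''; you have simply supplied the generalization to quantified types and the case analysis (in particular the \textsc{(S-Inst)} case via the observation that data values never inhabit type-variable heads) that the paper leaves implicit.
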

\begin{proof}
	By induction on $\jstyping{\Gamma}{v}{\tpot{\tsubset{B}{\psi}}{\phi}}$.
\end{proof}

\begin{proposition}\label{prop:freetofree}
	If $\jstyping{\Gamma}{v}{\tpot{R}{\phi}}$ and $\jval{v}$, then $\jprop{\Gamma}{\pot{\Gamma} \ge \subst{\calI(v)}{\nu}{\phi}}$.
\end{proposition}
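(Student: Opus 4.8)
The plan is to prove the statement by structural induction on the derivation of $\jstyping{\Gamma}{v}{S}$ for a value $v$, after generalizing it so that $S$ may also be a type schema: precisely, I will show $\jprop{\Gamma}{\pot{\Gamma} \ge \Phi_0}$, where $\Phi_0$ is the \emph{head potential} of $v$ at $S$, defined as $\subst{\calI(v)}{\nu}{\phi}$ when $S = \tpot{R}{\phi}$ and as $0$ when $S$ is a schema (the original claim is the case $S = \tpot{R}{\phi}$; the generalization is needed so the induction can descend through \textsc{(S-Gen)} and \textsc{(S-Inst)}). Two preliminary observations are used throughout. First, every summand of $\pot{\Gamma}$ --- whether it originates from a potential binding, from the outermost annotation of a variable binding, or from a function-typed binding --- is $\bbN$-sorted, hence evaluates to a non-negative integer; therefore $\jprop{\Gamma}{\pot{\Gamma} \ge 0}$ holds unconditionally. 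Second, $\interps{\Gamma}$ and $\scrB(\Gamma)$ depend only on the refinement skeleton of $\Gamma$, not on how potential is apportioned, so a validity judgment can be transported between two contexts differing only in their potential annotations.

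The routine cases are the value-introduction rules and the potential-bookkeeping rules. The rules \textsc{(T-SimpAtom)}, \textsc{(T-Abs)}, \textsc{(T-Abs-Lin)}, \textsc{(T-Fix)}, and \textsc{(S-Gen)} all conclude with a type of head potential $0$, so the goal collapses to $\jprop{\Gamma}{\pot{\Gamma} \ge 0}$. For \textsc{(S-Relax)} we have $\Gamma = \Gamma',\phi'$ and the type passes from $\tpot{R}{\phi}$ to $\tpot{R}{\phi + \phi'}$; since $\phi'$ does not mention $\nu$ and $\pot{\Gamma',\phi'} = \pot{\Gamma'} + \phi'$, adding $\phi'$ to both sides of the inductive hypothesis $\jprop{\Gamma'}{\pot{\Gamma'} \ge \subst{\calI(v)}{\nu}{\phi}}$ gives the result (validity over $\Gamma',\phi'$ coincides with validity over $\Gamma'$). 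For \textsc{(S-Transfer)} the subderivation lives over some $\Gamma'$ with $\jprop{\Gamma}{\pot{\Gamma} = \pot{\Gamma'}}$; since $\Gamma$ and $\Gamma'$ have the same refinement skeleton, the inductive hypothesis over $\Gamma'$ transports to $\Gamma$ and is combined with the transfer equality in the logic.

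The two delicate cases are \textsc{(S-Subtype)} and \textsc{(S-Inst)}, and I expect \textsc{(S-Subtype)} to be the main obstacle. There the subderivation gives $\jstyping{\Gamma}{v}{\tpot{R_1}{\phi_1}}$, and the subtyping derivation must end in \textsc{(Sub-Pot)}, which supplies $\jsubty{\Gamma}{R_1}{R_2}$ and $\jprop{\Gamma,\nu{:}R_1}{\phi_1 \ge \phi_2}$. The inductive hypothesis yields $\jprop{\Gamma}{\pot{\Gamma} \ge \subst{\calI(v)}{\nu}{\phi_1}}$, so it suffices to specialize $\jprop{\Gamma,\nu{:}R_1}{\phi_1 \ge \phi_2}$ at $\nu := \calI(v)$; this instantiation is sound precisely because a well-typed value satisfies the logical refinement of its type --- when $R_1$ is a subset type this is what Proposition~\ref{prop:typingvalinterp} records (after which the refinement is $\nu = \calI(v)$, trivially satisfied at $\nu := \calI(v)$), and when $R_1$ is an arrow type, $\phi_1,\phi_2$ cannot mention $\nu$ and the step is immediate. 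For \textsc{(S-Inst)} the subderivation types $v$ at a schema $\forall\alpha.S'$; a value can only acquire a schema type via \textsc{(S-Gen)}, whose self-sharing premise forces every potential annotation inside $S'$ to be provably $0$, and a routine induction shows no value inhabits a bare type-variable type, so the type substitution $\subst{\tpot{\tsubset{B}{\psi}}{\phi}}{\alpha}{S'}$ can insert potential only strictly below the top of the type. Hence the head potential of the instantiated type remains provably $0$, and the claim again reduces to $\jprop{\Gamma}{\pot{\Gamma} \ge 0}$. The remaining work is the fiddly but routine bookkeeping of matching up $\jprop{\cdot}{\cdot}$ judgments across contexts that differ only in how free potential is split.
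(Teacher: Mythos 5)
Your proof is correct and takes the same route as the paper, which disposes of this proposition with the single line ``by induction on $\jstyping{\Gamma}{v}{\tpot{R}{\phi}}$''; your case analysis (generalizing to schemas so the induction passes through \textsc{(S-Gen)}/\textsc{(S-Inst)}, and using non-negativity of $\bbN$-sorted potentials for the value-introduction rules) is a faithful elaboration of that same induction. The one step you rightly single out --- instantiating $\jprop{\Gamma,\nu{:}R_1}{\phi_1 \ge \phi_2}$ at $\nu := \calI(v)$ in the \textsc{(S-Subtype)} case --- is exactly the kind of semantic fact about well-typed values that the paper itself delegates to its auxiliary propositions (e.g.\ Prop.~\ref{prop:typingvalinterp} and Prop.~\ref{prop:subtyping}), so your treatment is at least as detailed as the paper's.
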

\begin{proof}
	By induction on $\jstyping{\Gamma}{v}{\tpot{R}{\phi}}$.
\end{proof}

\begin{proposition}\label{prop:typingvalsharingsplit}
	If $\jstyping{\Gamma}{v}{S}$, $\jsharing{\Gamma}{S}{S_1}{S_2}$ and $\jval{v}$, then there exist $\Gamma_1$ and $\Gamma_2$ such that $\jctxsharing{\Gamma}{\Gamma_1}{\Gamma_2}$, and $\jstyping{\Gamma_1}{v}{S_1}$, $\jstyping{\Gamma_2}{v}{S_2}$.
\end{proposition}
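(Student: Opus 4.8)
The plan is to proceed by induction on the derivation of $\jstyping{\Gamma}{v}{S}$, carried out simultaneously with the analogous statement for the atomic judgment (if $\jatyping{\Gamma}{v}{B}$, $\jval{v}$, and $\jsharing{\Gamma}{B}{B_1}{B_2}$, then there are $\Gamma_1,\Gamma_2$ with $\jctxsharing{\Gamma}{\Gamma_1}{\Gamma_2}$ and $\jatyping{\Gamma_i}{v}{B_i}$); the simultaneous form is what makes the \textsc{(SimpAtom-Cons)} recursion land on genuine subderivations. Since $\jval{v}$, the last rule is either a value-introduction rule — \textsc{(T-SimpAtom)} on a value atom ($\etrue$, $\efalse$, $\enil$, or $\econs{v_h}{v_t}$), \textsc{(T-Abs)}, \textsc{(T-Abs-Lin)}, or \textsc{(T-Fix)} — or a structural rule — \textsc{(S-Gen)}, \textsc{(S-Inst)}, \textsc{(S-Subtype)}, \textsc{(S-Transfer)}, or \textsc{(S-Relax)}. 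In each case the shape of $S$ (or $B$) forces, by inversion, which sharing rule concludes $\jsharing{\Gamma}{S}{S_1}{S_2}$, so $S_1$ and $S_2$ have the same shape as $S$ and differ only in potential annotations and in tyvar/arrow multiplicities; for $S=\forall\alpha.S'$ one gets $S_1=S_2=S$ by \textsc{(Share-Poly)}.

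For the value-introduction rules I would argue as follows. For $\etrue$, $\efalse$, $\enil$ the split carries no potential beyond $S$, so it suffices to take $\Gamma_1\defeq\Gamma$ and $\Gamma_2$ a \emph{zeroed} copy of $\Gamma$ (every outermost potential annotation, every free potential, and every tyvar/arrow multiplicity set to $0$): then $\jctxsharing{\Gamma}{\Gamma_1}{\Gamma_2}$ is routine, and both $\jstyping{\Gamma_1}{v}{S_1}$ and $\jstyping{\Gamma_2}{v}{S_2}$ are re-derived by the corresponding \textsc{(SimpAtom-$\cdots$)} rule and \textsc{(T-SimpAtom)}, with Proposition~\ref{prop:typingvalinterp} used to line up the refinements. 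For $\econs{v_h}{v_t}$ the premises of \textsc{(SimpAtom-Cons)} give $\jctxsharing{\Gamma}{\Gamma'}{\Gamma''}$ with $\jstyping{\Gamma'}{v_h}{T}$ and $\jatyping{\Gamma''}{v_t}{\tlist{T}}$, while inversion on the sharing gives $\jsharing{\Gamma}{T}{T_1}{T_2}$; applying the scalar IH to $v_h$ splits $\Gamma'$ and the atomic IH to $v_t$ (against $\jsharing{\Gamma}{\tlist{T}}{\tlist{T_1}}{\tlist{T_2}}$) splits $\Gamma''$, and I then reassemble $\Gamma_1,\Gamma_2$ as the pointwise \emph{merges} of the two left (resp.\ right) pieces, using associativity and commutativity of context sharing. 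For \textsc{(T-Abs)} and \textsc{(T-Fix)} the premise $\jctxsharing{\Gamma}{\Gamma}{\Gamma}$ already lets me take $\Gamma_1=\Gamma_2=\Gamma$ and recover the split (which only lowers the arrow multiplicity from $\infty$) via \textsc{(T-Abs-Lin)}/\textsc{(S-Subtype)}; for \textsc{(T-Abs-Lin)} with $\Gamma=m\cdot\Gamma_0$ and $m=m_1+m_2$, take $\Gamma_i\defeq m_i\cdot\Gamma_0$ and re-apply \textsc{(T-Abs-Lin)}.

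For the structural rules the plan is uniform: push the sharing of the conclusion type back through the structural step, apply the IH, and re-apply the structural rule to each side. \textsc{(S-Transfer)} is immediate (re-transfer inside each $\Gamma_i$). For \textsc{(S-Gen)}/\textsc{(S-Inst)} one uses that a polymorphic type self-shares (by \textsc{(Wf-Poly)}) and that type substitution commutes with sharing, reducing to the trivial sharing $\forall\alpha.S'\sharing\forall\alpha.S'\mid\forall\alpha.S'$, on which the IH applies directly. The substantive cases are \textsc{(S-Subtype)} and \textsc{(S-Relax)}: from $\jsubty{\Gamma}{T_1}{T_2}$ and $\jsharing{\Gamma}{T_2}{S_1}{S_2}$ I construct $\jsharing{\Gamma}{T_1}{S_1'}{S_2'}$ with $\jsubty{\Gamma}{S_i'}{S_i}$, apply the IH to $\jstyping{\Gamma}{v}{T_1}$, and finish with \textsc{(S-Subtype)}; from \textsc{(S-Relax)}, which passes from $\jstyping{\Gamma}{v}{\tpot{R}{\phi}}$ to $\jstyping{\Gamma,\phi'}{v}{\tpot{R}{\phi+\phi'}}$, and $\jsharing{\Gamma,\phi'}{\tpot{R}{\phi+\phi'}}{\tpot{R_1}{\psi_1}}{\tpot{R_2}{\psi_2}}$ (so $\psi_i\le\phi+\phi'$), I split $\phi$ and $\phi'$ into shares that fund $\psi_1$ and $\psi_2$, apply the IH to $\jsharing{\Gamma}{R}{R_1}{R_2}$, and re-apply \textsc{(S-Relax)} with the chosen shares of $\phi'$. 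Throughout, Propositions~\ref{prop:ctxweaken} and~\ref{prop:ctxrelax} handle context weakening/relaxation, Proposition~\ref{prop:freetofree} certifies that $\Gamma$ has enough free potential, and Proposition~\ref{prop:sharingsplit} gives the semantic sanity check on the potential split.

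The main obstacle is exactly this last bit of bookkeeping. The sharing and context-sharing rules demand splits of refinement-level potentials ($\jprop{\Gamma}{\phi=\phi_1+\phi_2}$ and the like), but the $\bbN$-sorted fragment of the refinement logic has no subtraction, so one cannot literally subtract off the share allotted to one side. The saving observation is that every $\bbN$-sorted refinement term normalizes to $n_0+\sum_j c_j x_j$ with \emph{known}, non-negative integer coefficients; hence a valid $\phi\ge\phi'$ is witnessed coefficientwise by $\phi=\phi'+\phi''$ for an explicit non-negative $\phi''$, and any total-preserving split (as needed for \textsc{(Share-Pot)}, for matching $\jsubty{}{}{}$ in the \textsc{(S-Subtype)} case, and for distributing $\phi,\phi'$ in the \textsc{(S-Relax)} case) can be realized by taking coefficientwise minima. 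I expect the bulk of the real proof to be spent verifying that these coefficientwise choices are well-scoped in $\Gamma_1$ and $\Gamma_2$ and that the various reassembled context-sharing judgments typecheck; the remaining cases are routine inversion on the shape-directed sharing rules.
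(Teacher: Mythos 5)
Your proposal is correct and takes essentially the same route as the paper: the paper's entire proof is ``by induction on $\jstyping{\Gamma}{v}{S}$,'' and your case analysis (value-introduction rules plus structural rules, with the atomic judgment handled by a simultaneous induction and the potential splits discharged by explicit coefficientwise arithmetic on the linear $\bbN$-sorted terms) is a faithful, more detailed elaboration of exactly that induction.
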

\begin{proof}
	By induction on $\jstyping{\Gamma}{v}{S}$.
\end{proof}

\begin{proposition}\label{prop:typingvalzero}
	If $\jstyping{\Gamma}{v}{S}$, $\jsharing{\Gamma}{S}{S}{S}$ and $\jval{v}$, then there exists $\Gamma'$ such that $\jctxsharing{\Gamma}{\Gamma}{\Gamma'}$ (so $\jctxsharing{\Gamma'}{\Gamma'}{\Gamma'}$), and $\jstyping{\Gamma'}{v}{S}$.
\end{proposition}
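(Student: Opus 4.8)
The plan is to prove this by induction on the derivation of $\jstyping{\Gamma}{v}{S}$, using throughout that $v$ is a value. The first observation is that $\jsharing{\Gamma}{S}{S}{S}$ constrains $S$ severely: \textsc{(Share-Pot)} forces every potential annotation occurring in $S$ to satisfy $\phi = \phi + \phi$, hence $\Gamma,\nu{:}R \models \phi = 0$, while \textsc{(Share-TVar)} and \textsc{(Share-Arrow)} force every multiplicity in $S$ to satisfy $m = m + m$, hence $m \in \{0,\infty\}$. So a self-shareable type carries no usable potential and only trivial or unbounded multiplicities. The witness $\Gamma'$ I have in mind is obtained from $\Gamma$ by setting every free-potential binding and every potential annotation inside a type binding to $0$, and replacing every \emph{finite} multiplicity (on a type variable or an arrow) by $0$ while leaving $\infty$ in place. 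For this $\Gamma'$, the judgment $\jctxsharing{\Gamma}{\Gamma}{\Gamma'}$ is immediate from the sharing rules ($n = n + 0$ for \textsc{(Share-Bind-Pot)} and \textsc{(Share-Pot)}, $m = m + 0$ for finite multiplicities, $\infty = \infty + \infty$ for infinite ones), and $\jctxsharing{\Gamma'}{\Gamma'}{\Gamma'}$ then follows since all of $\Gamma'$'s potentials are $0$ and its multiplicities lie in $\{0,\infty\}$, all idempotent under $+$.

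It remains to re-derive $\jstyping{\Gamma'}{v}{S}$, which I would also do by induction on $\jstyping{\Gamma}{v}{S}$. The value-introduction rules (\textsc{(SimpAtom-True/False/Nil/Cons)}, \textsc{(T-SimpAtom)}, \textsc{(T-Abs)}, \textsc{(T-Abs-Lin)}, \textsc{(T-Fix)}, \textsc{(S-Gen)}) are handled by applying the IH to their premises and reassembling; in the one case that splits the context, \textsc{(SimpAtom-Cons)} with $\jctxsharing{\Gamma}{\Gamma_1}{\Gamma_2}$, I would invoke \propref{typingvalsharingsplit} (or the IH on the two subderivations, whose goal types $T$ and $\tlist{T}$ are sub-components of the self-shareable $\tlist{T}$ and hence themselves self-shareable) and observe that the stripped subcontexts fit together point-wise into $\Gamma'$. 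The structural rules \textsc{(S-Subtype)}, \textsc{(S-Transfer)}, \textsc{(S-Relax)}, \textsc{(S-Inst)} are handled by IH followed by re-applying the same rule, since stripping commutes with their side-conditions: for \textsc{(S-Transfer)} the equality $\pot{\Gamma} = \pot{\Gamma'_{\mathrm{old}}}$ survives since both sides become $0$; for \textsc{(S-Relax)} the extra free potential $\phi'$ is simply dropped; for \textsc{(S-Inst)} well-formedness of the (now self-shareable) instantiated type already forces the instantiating annotation $\tpot{\tsubset{B}{\psi}}{\phi}$ to have $\phi = 0$.

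The main obstacle I anticipate is the bounded-linear bookkeeping around the $m \cdot \Gamma$-shaped contexts of \textsc{(T-Abs-Lin)} and \textsc{(T-Fix)}: when $\Gamma = m \cdot \Gamma_0$ and $S = \tarrowm{x}{T_x}{T}{m}$ with $m \in \{0,\infty\}$, I must check that the stripping operation produces a context of the form $m \cdot \Gamma_0'$ with $\Gamma_0'$ again self-shareable, so that \textsc{(T-Abs-Lin)} can be re-applied. The other delicate point is making ``stripping preserves typing of values'' precise rule by rule: one has to verify that no value-introduction rule ever consumes context potential (only $\econsume{c}{e_0}$ does, and it never occurs inside a value), and that every finite-multiplicity variable or arrow that $v$ actually depends on is forced by self-shareability of $S$ to have multiplicity $\infty$ — i.e., the interplay between the context and the goal type is exactly where the argument must be watertight.
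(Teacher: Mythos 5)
Your overall strategy---induction on the typing derivation, with the witness $\Gamma'$ obtained by zeroing every potential in $\Gamma$ and every finite multiplicity---is the same route the paper takes (its proof is stated only as ``by induction on $\jstyping{\Gamma}{v}{S}$''), and your two key observations are correct: \textsc{(Share-Pot)} forces every potential annotation of a self-shareable type to be semantically zero, and \textsc{(Share-TVar)}/\textsc{(Share-Arrow)} force every multiplicity to be idempotent, so the stripped $\Gamma'$ is exactly what $\jctxsharing{\Gamma}{\Gamma}{\Gamma'}$ and $\jctxsharing{\Gamma'}{\Gamma'}{\Gamma'}$ demand. The syntax-directed value cases also work essentially as you describe; note in particular that \textsc{(T-Abs)} and \textsc{(T-Fix)} already carry the premise $\jctxsharing{\Gamma}{\Gamma}{\Gamma}$, so in those cases you may simply take $\Gamma' = \Gamma$.

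The one step that would fail as written is your treatment of \textsc{(S-Subtype)}. You claim the structural rules are ``handled by IH followed by re-applying the same rule,'' but your induction hypothesis requires the goal type of the premise to be self-shareable, and in \textsc{(S-Subtype)} the premise types $v$ at some $T_1$ with $\jsubty{\Gamma}{T_1}{T_2}$: subtyping only gives $\phi_1 \ge \phi_2$, so $T_1$ may carry strictly positive potential even though $T_2$ carries none (for instance $\jstyping{5}{\etrue}{\tpot{\tbool}{5}}$ followed by subsumption to $\tbool$). The IH is therefore inapplicable to that premise. You need either (i) a strengthened induction statement that drops the self-shareability hypothesis on the goal type and concludes typing at a potential-stripped version of it, recovering $S$ at the end by one application of \textsc{(S-Subtype)}---sound because $0 \ge \phi$ holds for the semantically-zero annotations of a self-shareable $S$, whose multiplicities stripping leaves unchanged---or (ii) an appeal to the preceding sharing-split proposition instantiated with the given self-sharing of $S$. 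Your two explicitly flagged worries (the $m\cdot\Gamma$ bookkeeping in \textsc{(T-Abs-Lin)}, and checking that no value-introduction rule consumes context potential) are indeed the remaining delicate points, but those you have correctly identified rather than gotten wrong.
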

\begin{proof}
	By induction on $\jstyping{\Gamma}{v}{S}$.
\end{proof}

\begin{lemma}\label{lem:substprop}
	If $\Gamma,\psi,\Gamma' \vdash \calJ$ and $\jprop{\Gamma}{\psi}$, then $\Gamma,\Gamma' \vdash \calJ$.
\end{lemma}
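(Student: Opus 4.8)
The plan is to prove this by induction on the derivation of $\Gamma,\psi,\Gamma' \vdash \calJ$, after first isolating the only place where a path-condition binding in a context actually has any effect, namely validity checking. Recall from the definitions of $\scrB(\cdot)$ and $\interps{\cdot}$ that a binding $\psi$ contributes $\psi$ as a conjunct to $\scrB$ but leaves $\interps{\cdot}$ untouched and introduces no program variable; concretely $\scrB(\Gamma,\psi,\Gamma') = \scrB(\Gamma,\Gamma') \wedge \psi$ and $\interps{\Gamma,\psi,\Gamma'} = \interps{\Gamma,\Gamma'}$. Moreover, for the original context to be well-formed we must have $\jsort{\Gamma}{\psi}{\bbB}$, so $\psi$ only mentions variables bound in $\Gamma$, and its value under any $E \in \interps{\Gamma,\Gamma'}$ depends only on the restriction of $E$ to $\Gamma$.

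First I would dispatch the base case, in which $\calJ$ is a validity judgment $\psi'$: from $\jprop{\Gamma,\psi,\Gamma'}{\psi'}$ and $\jprop{\Gamma}{\psi}$ we must derive $\jprop{\Gamma,\Gamma'}{\psi'}$. Unfolding the semantics, $\jprop{\Gamma}{\psi}$ says $\scrB(\Gamma)$ entails $\psi$ on every $E \in \interps{\Gamma}$; by the scoping remark this lifts to ``$\scrB(\Gamma,\Gamma')$ entails $\psi$ on every $E \in \interps{\Gamma,\Gamma'}$'', since $\scrB(\Gamma,\Gamma')$ implies $\scrB(\Gamma)$ and the truth of $\psi$ only depends on the $\Gamma$-part of the environment. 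Hence $\scrB(\Gamma,\Gamma') \wedge \psi$ and $\scrB(\Gamma,\Gamma')$ are interchangeable as hypotheses, and using $\scrB(\Gamma,\psi,\Gamma') = \scrB(\Gamma,\Gamma')\wedge\psi$ together with $\interps{\Gamma,\psi,\Gamma'}=\interps{\Gamma,\Gamma'}$, the implication defining $\jprop{\Gamma,\psi,\Gamma'}{\psi'}$ becomes exactly the one defining $\jprop{\Gamma,\Gamma'}{\psi'}$.

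Next I would treat the rule-based judgments — sorting, type and context well-formedness, subtyping, sharing, context sharing, atomic typing, and typing — by a single mutual induction on the derivation of $\Gamma,\psi,\Gamma' \vdash \calJ$. In each rule the conclusion's context has the shape $\Gamma,\psi,\Gamma'$, and every premise is a judgment whose context is either (i) an extension $\Gamma,\psi,\Gamma',\Gamma''$ of it (as in \textsc{(Wf-Arrow)}, \textsc{(T-Abs)}, the second typing premise of \textsc{(T-Let)}, \textsc{(S-Gen)}, \textsc{(T-MatL)}, \dots), (ii) literally $\Gamma,\psi,\Gamma'$ (most side premises), or (iii) a summand of a context split (as in the premise $\jstyping{\Gamma_1}{e_1}{S_1}$ of \textsc{(T-Let)}, \textsc{(T-App)}, \textsc{(Share-Bind-Type)}, \dots). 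In cases (i) and (ii) the induction hypothesis applies directly — with $\Gamma'$ replaced by $\Gamma',\Gamma''$ in case (i) — since $\Gamma$, and hence the side hypothesis $\jprop{\Gamma}{\psi}$, is unchanged; any validity premise is discharged by the base case. In case (iii), inversion on the context-sharing derivation (its rules are syntax-directed by the context) shows that $\psi$ is copied verbatim into both summands, so each summand has the form $\Gamma^{(i)},\psi,\Gamma'^{(i)}$ with $\jctxsharing{\Gamma}{\Gamma^{(1)}}{\Gamma^{(2)}}$ and $\jctxsharing{\Gamma'}{\Gamma'^{(1)}}{\Gamma'^{(2)}}$; because context sharing only redistributes potential and leaves all base types, reflected sorts, and logical refinements intact, $\scrB$ and $\interps{\cdot}$ of each summand agree with those of $\Gamma$, whence $\jprop{\Gamma^{(i)}}{\psi}$ still holds and the induction hypothesis applies. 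Re-applying the same rule to the transformed premises yields the derivation of $\Gamma,\Gamma' \vdash \calJ$.

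I expect the only mildly delicate point to be the bookkeeping in case (iii): one must note up front that passing to a sub-context via context sharing preserves the validity-relevant data $\scrB(\cdot)$ and $\interps{\cdot}$, so that the hypothesis $\jprop{\Gamma}{\psi}$ survives the split; this fact is essentially immediate from the definitions (and is needed elsewhere anyway), so it is no real obstacle. Combining the base case with the inductive cases gives the lemma.
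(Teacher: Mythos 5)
Your proof is correct and follows essentially the same route as the paper, which simply performs induction on the derivation of $\Gamma,\psi,\Gamma' \vdash \calJ$; your additional observations (that $\psi$ only matters through $\scrB(\cdot)$, and that context sharing copies path conditions verbatim and preserves the validity-relevant data) are exactly the facts that make that induction go through.
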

\begin{proof}
	By induction on $\Gamma,\psi,\Gamma' \vdash \calJ$.
\end{proof}

\begin{lemma}\label{lem:substintoref}
	Suppose $\calJ$ is a judgment other than typing.
	\begin{enumerate}
		\item If $\Gamma_1,\bindvar{x}{\tpot{\tsubset{B}{\psi}}{\phi}},\Gamma' \vdash \calJ$, $\jstyping{\Gamma_2}{t}{\tpot{\tsubset{B}{\psi}}{\phi}}$, $\jval{t}$ and $\jctxsharing{\Gamma}{\Gamma_1}{\Gamma_2}$, then $\Gamma,\subst{\calI(t)}{x}{\Gamma'} \vdash \subst{\calI(t)}{x}{\calJ}$.
		\item If $\Gamma_1,\bindvar{x}{S_x},\Gamma' \vdash \calJ$, $S_x$ is non-scalar/poly, $\jstyping{\Gamma_2}{t}{S_x}$, $\jval{t}$ and $\jctxsharing{\Gamma}{\Gamma_1}{\Gamma_2}$, then $\Gamma,\Gamma' \vdash \calJ$.
	\end{enumerate}
\end{lemma}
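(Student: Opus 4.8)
The plan is to prove both parts by induction on the derivation of the hypothesis judgment, uniformly across the five forms of non-typing judgments (sorting, type well-formedness, context well-formedness, subtyping, sharing, and validity). Part~(2) is the quick one: when $S_x$ is an arrow type or a type schema, the partial map $\rightsquigarrow$ is undefined on it, so rule \textsc{(S-Var)} can never read $x$ and, by well-formedness, no refinement, potential term, or path condition occurring in $\calJ$ or in $\Gamma'$ can mention $x$. Thus the statement reduces to a strengthening claim — the binding $\bindvar{x}{S_x}$ can be deleted — together with the observation that adding potential to context bindings (which is all $\jctxsharing{\Gamma}{\Gamma_1}{\Gamma_2}$ does to the $\Gamma_1$ part) preserves all non-typing judgments, since none of their rules inspects the numeric \emph{values} of context potentials. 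Both facts follow by a routine induction on the derivation, analogous to the weakening property \propref{ctxweaken}.

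For Part~(1) I would first set up three auxiliary facts about the scalar value $t$. First, inverting $\jstyping{\Gamma_2}{t}{\trefined{B}{\psi}{\phi}}$ and applying \propref{typingvalinterp} shows that $t$ is interpretable, so $\calI(t)$ is defined and $\jsort{\Gamma_2}{\calI(t)}{\Delta}$, where $\trefined{B}{\psi}{\phi}\rightsquigarrow\Delta$; since sorting ignores potential and $\jctxsharing{\Gamma}{\Gamma_1}{\Gamma_2}$ only apportions potential, $\jsort{\Gamma}{\calI(t)}{\Delta}$ holds as well. Second, a semantic substitution fact for the refinement logic: for every $\chi$ with $\jsort{\Gamma_1,\bindvar{x}{\trefined{B}{\psi}{\phi}},\Gamma'}{\chi}{\Delta'}$ and every environment $E\in\interps{\Gamma,\subst{\calI(t)}{x}{\Gamma'}}$, letting $E^{+}$ extend $E$ by $x\mapsto\interp{\calI(t)}(E)$, one has $E^{+}\in\interps{\Gamma_1,\bindvar{x}{\trefined{B}{\psi}{\phi}},\Gamma'}$ and $\interp{\subst{\calI(t)}{x}{\chi}}(E)=\interp{\chi}(E^{+})$ — proved by structural induction on $\chi$, with the base case $\chi=x$ being the first fact. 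Third, $t$ satisfies its own refinement: from $\jstyping{\Gamma_2}{t}{\trefined{B}{\psi}{\phi}}$, \propref{typingvalinterp} and the notion of value consistency (Def.~\ref{de:valconsistency}) give $\jprop{\Gamma_2}{\subst{\calI(t)}{\nu}{\psi}}$, hence $\jprop{\Gamma}{\subst{\calI(t)}{\nu}{\psi}}$.

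Given these, the inductive cases are mechanical. For \textsc{(S-Var)} on $x$ itself the conclusion is the sorting judgment for $\calI(t)$ from the first fact; every other sorting, type-well-formedness, and context-well-formedness rule is handled by pushing the substitution $\subst{\calI(t)}{x}{}$ through the premises, invoking the induction hypothesis, and re-checking scope side conditions with the first fact. Subtyping and sharing recurse on their structural premises, while their validity premises (the implication in \textsc{(Sub-Subset)}, the potential inequality in \textsc{(Sub-Pot)}, the potential equalities in \textsc{(Share-Pot)}, and so on) are routed to the validity case. For validity, $\jprop{\Gamma_1,\bindvar{x}{\cdots},\Gamma'}{\chi}$ unfolds to a statement of the form $\forall E.\,E\models\scrB(\cdots)\implies\chi$; the second fact transports environments along the renaming $x\mapsto\calI(t)$, and the third fact discharges the conjunct $\subst{x}{\nu}{\psi}$ that $\scrB$ contributes from the binding of $x$, so the goal follows from the hypothesis — essentially \lemref{substprop} composed with that renaming.

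The main obstacle is the validity-checking case together with the semantic substitution fact and the sort-matching it depends on: one must verify that $\calI$ commutes with substitution, that the sort of $\calI(t)$ is \emph{exactly} the one induced by the declared base type $B$ so that the environment sets $\interps{\cdot}$ line up under the renaming, and that the path condition $\scrB$ records for $\bindvar{x}{\trefined{B}{\psi}{\phi}}$ is genuinely discharged by the typing of $t$ rather than merely assumed. A secondary point is the interplay with the typing substitution lemma stated just above, which invokes this lemma on its non-typing premises: the restriction that $\calJ$ is not a typing judgment is what breaks the circularity, and it does, because the rules for the five non-typing judgments have only non-typing (in particular, validity) premises, never a typing premise.
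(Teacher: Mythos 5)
Your proposal is correct and follows essentially the same strategy as the paper, whose entire proof is the one-line ``by induction on $\Gamma,\bindvar{x}{S_x},\Gamma' \vdash \calJ$''; your three auxiliary facts and case analysis are simply a faithful elaboration of that induction. The delicate points you flag (the variable case via $\calI(t)$, the semantic transport of environments in the validity case, and discharging the recorded path condition from the typing of $t$) are exactly the ones a full write-up would need, and your treatment of them is sound.
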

\begin{proof}
	By induction on $\Gamma,\bindvar{x}{S_x},\Gamma' \vdash \calJ$.
\end{proof}

\begin{lemma}\label{lem:substatom}\
  \begin{enumerate}
    \item If $\jatyping{\Gamma_1,\bindvar{x}{\trefined{B_x}{\psi}{\phi}},\Gamma'}{e}{B}$, $\jstyping{\Gamma_2}{t}{\trefined{B_x}{\psi}{\phi}}$, $\jval{t}$ and $\jctxsharing{\Gamma}{\Gamma_1}{\Gamma_2}$, then $\jatyping{\Gamma,\subst{\calI(t)}{x}{\Gamma'}}{\subst{t}{x}{a}}{\subst{\calI(t)}{x}{B}}$. 
    \item If $\jatyping{\Gamma_1,\bindvar{x}{S_x},\Gamma'}{a}{B}$, $S_x$ is non-scalar/poly, $\jstyping{\Gamma_2}{t}{S_x}$, $\jval{t}$ and $\jctxsharing{\Gamma}{\Gamma_1}{\Gamma_2}$, then $\jatyping{\Gamma,\Gamma'}{\subst{t}{x}{a}}{B}$.
  \end{enumerate}
\end{lemma}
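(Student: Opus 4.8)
The proof goes by induction on the derivation of the atomic typing judgment $\jatyping{\Gamma_1,\bindvar{x}{S_x},\Gamma'}{a}{B}$, run simultaneously with the Substitution Lemma for the structured typing judgment: the two are mutually recursive, since \textsc{(SimpAtom-Cons)} appeals to structured typing for the list head while \textsc{(T-SimpAtom)} appeals back to atomic typing. For each case we invert on the last rule of the atomic derivation. Substitution into $\Gamma'$ and into $B$, and the well-formedness side conditions needed to re-apply the rules, are supplied by the substitution lemma for non-typing judgments (\lemref{substintoref}: part (1) when $S_x$ is scalar, part (2) when it is non-scalar/poly).

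\textbf{Leaf rules.} For \textsc{(SimpAtom-True)}/\textsc{(SimpAtom-False)} the atom is a constant, $\subst{t}{x}{a}=a$ and $\subst{\calI(t)}{x}{\tbool}=\tbool$, so we re-apply the same rule after discharging $\jwfctxt{\Gamma,\subst{\calI(t)}{x}{\Gamma'}}$ via \lemref{substintoref}. For \textsc{(SimpAtom-Nil)} we transport the well-formedness of the element type $T$ through the substitution with \lemref{substintoref} and re-apply \textsc{(SimpAtom-Nil)}.

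\textbf{Variable rule.} For \textsc{(SimpAtom-Var)}, $a$ is a variable $y$. If $y$ is bound in $\Gamma'$, the substitution (present only in part (1)) keeps its binding scalar and we re-apply \textsc{(SimpAtom-Var)}; if $y$ is bound in $\Gamma_1$, well-formedness forbids $x$ from occurring in its type, so the binding is untouched and has the same refinement structure in $\Gamma$ as in $\Gamma_1$ (context sharing only apportions potentials and multiplicities), and \textsc{(SimpAtom-Var)} applies again. The one interesting sub-case is $y=x$, which arises only in part (1) since \textsc{(SimpAtom-Var)} requires a scalar binding. Then $B=B_x$ is well-formed in $\Gamma_1$, hence mentions no $x$, so $\subst{\calI(t)}{x}{B_x}=B_x$ and $\subst{t}{x}{x}=t$; from $\jval{t}$ and $\jstyping{\Gamma_2}{t}{\trefined{B_x}{\psi}{\phi}}$, inversion on the typing of a value of scalar type (peeling the structural rules \textsc{(S-Subtype)}, \textsc{(S-Relax)}, \textsc{(S-Transfer)} down to a \textsc{(T-SimpAtom)} leaf) yields $\jatyping{\Gamma_2}{t}{B_x}$, and \propref{ctxrelax} together with \propref{ctxweaken} (and their obvious atomic analogues, using symmetry of context sharing) lifts this to $\jatyping{\Gamma,\subst{\calI(t)}{x}{\Gamma'}}{t}{B_x}$.

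\textbf{The \textsc{Cons} case and the main obstacle.} The case \textsc{(SimpAtom-Cons)}, $a=\econs{\hat a_h}{a_t}$, carries the real weight. There an \emph{inner} split $\jctxsharing{\Gamma_1,\bindvar{x}{S_x},\Gamma'}{\Delta_1}{\Delta_2}$ gives $\jstyping{\Delta_1}{\hat a_h}{T}$ and $\jatyping{\Delta_2}{a_t}{\tlist{T}}$, and this must be reconciled with the \emph{outer} split $\jctxsharing{\Gamma}{\Gamma_1}{\Gamma_2}$ that carries the typing of $t$. The plan is: (i) split $S_x$ the way it is shared across $\Delta_1,\Delta_2$ and use \propref{typingvalsharingsplit} to split $\jstyping{\Gamma_2}{t}{S_x}$ accordingly, obtaining $\jstyping{\Gamma_{2,h}}{t}{S_{x,h}}$ and $\jstyping{\Gamma_{2,t}}{t}{S_{x,t}}$; (ii) re-associate the inner and outer sharings into a single split $\jctxsharing{\Gamma,\subst{\calI(t)}{x}{\Gamma'}}{\Theta_h}{\Theta_t}$; (iii) apply the Substitution Lemma for structured typing to $\hat a_h$ under $\Theta_h$ and the induction hypothesis of this lemma to $a_t$ under $\Theta_t$; (iv) recombine with \textsc{(SimpAtom-Cons)}. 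Step (ii) is the main obstacle: it requires an auxiliary ``associativity/commutativity'' lemma for context sharing---each binding's potential and multiplicity is apportioned additively, and addition is associative and commutative, so the inner and outer apportionments can be merged---together with the observation that substituting $\calI(t)$ for $x$ commutes with sharing. Everything else is routine bookkeeping of the same flavor as in the proofs of \theoref{progress} and \lemref{substintoref}.
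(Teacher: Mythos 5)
Your proposal is correct and follows essentially the same route as the paper's proof: induction on the atomic typing derivation, mutually recursive with the structured Substitution Lemma, with \textsc{(SimpAtom-Cons)} handled exactly as you describe --- splitting the typing of $t$ via \propref{typingvalsharingsplit}, merging the inner and outer context sharings, and pushing the substitution through $\Gamma'$ with \lemref{substintoref}. The only cosmetic difference is in the $y=x$ sub-case, where the paper reaches the atomic judgment for $t$ via \propref{typingvalinterp} and weakening rather than by peeling structural rules, but the substance is the same.
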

\begin{proof}[Proof of (1)]
  By induction on $\jatyping{\Gamma_1,\bindvar{x}{\trefined{B_x}{\psi}{\phi}},\Gamma'}{a}{B}$:
  \begin{alignat}{2}
    \shortintertext{\bf{\textsc{(SimpAtom-Var)}$=$}}
    & \text{SPS}~a=x, B=B_x\\
		& \subst{t}{x}{a} = t, \subst{\calI(t)}{x}{B} = B_x \\
		& \jstyping{\Gamma}{t}{\tpot{\tsubset{B_x}{\psi}}{\phi}} & \text{[Prop.~\ref{prop:ctxrelax}]} \\
		& \jstyping{\Gamma}{t}{\tpot{\tsubset{B_x}{\nu = \calI(t)}}{\phi}} & \text{[Prop.~\ref{prop:typingvalinterp}]} \\
		& \jstyping{\Gamma,\subst{\calI(t)}{x}{\Gamma'}}{t}{\tpot{\tsubset{B_x}{\nu = \calI(t)}}{\phi}} & \text{[Prop.~\ref{prop:ctxweaken}]} \\
		& \jatyping{\Gamma,\subst{\calI(t)}{x}{\Gamma'}}{t}{B_x} & \text{[typing]}
		\shortintertext{\bf{\textsc{(SimpAtom-Var)}$\neq$}}
		& \text{SPS}~a=y \\
		& \subst{t}{x}{a} = y \\
		& \textbf{case}~y \in \Gamma \\
		& \enskip B = \text{base of}~ \Gamma_1(y) \\
		& \enskip \jsharing{\Gamma}{\Gamma(y)}{\Gamma_1(y)}{\Gamma_2(y)} \\
		& \enskip \Gamma(y) = \trefined{B}{\psi'}{\phi'} & \\
		& \enskip \jatyping{\Gamma,\subst{\calI(t)}{x}{\Gamma'}}{y}{B} & \text{[typing]} \\
		& \textbf{case}~y \in \Gamma' \\
		& \enskip B = \text{base of}~\Gamma'(y), \Gamma'(y) = \trefined{B}{\psi'}{\phi'} \\
		& \enskip (\subst{\calI(t)}{x}{\Gamma'})(y) =\\
		& \enskip \quad \trefined{\subst{\calI(t)}{x}{B}}{\subst{\calI(t)}{x}{\psi'}}{\subst{\calI(t)}{x}{\phi'}} \\
		& \enskip \jatyping{\Gamma,\subst{\calI(t)}{x}{\Gamma'}}{y}{\subst{\calI(t)}{x}{B}} & \text{[typing]}
    \shortintertext{\bf\textsc{(SimpAtom-Nil)}}
    & \text{SPS}~a=\enil,B=\tlist{T} \\
    & \jwftype{\Gamma_1,\bindvar{x}{\trefined{B_x}{\psi}{\phi}},\Gamma'}{T} & \text{[premise]} \\
    & \jwftype{\Gamma,\subst{\calI(t)}{x}{\Gamma'}}{\subst{\calI(t)}{x}{T}} & \text{[\lemref{substintoref}]} \\
    & \jatyping{\Gamma,\subst{\calI(t)}{x}{\Gamma'}}{\enil}{\tlist{\subst{\calI(t)}{x}{T}}} & \text{[typing]}
    \shortintertext{\bf\textsc{(SimpAtom-Cons)}}
    & \text{SPS}~a=\econs{\hat{a}_h}{a_t},B=\tlist{T} \\
    & \vdash \Gamma_1,\bindvar{x}{\trefined{B_x}{\psi}{\phi}},\Gamma' \sharing \\
    & \quad \Gamma_{11},\bindvar{x}{\trefined{B_1}{\psi}{\phi_1},\Gamma_1'} \mid \\
    & \quad \Gamma_{12},\bindvar{x}{\trefined{B_2}{\psi}{\phi_2},\Gamma_2'} & \text{[premise]} \\
    & \jstyping{\Gamma_{11},\bindvar{x}{\trefined{B_1}{\psi}{\phi_1}},\Gamma_1'}{\hat{a}_h}{T} & \text{[premise]}  \label{eq:substa:conspremh}\\
    & \jatyping{\Gamma_{12},\bindvar{x}{\trefined{B_2}{\psi}{\phi_2}},\Gamma_2'}{a_t}{\tlist{T}} & \text{[premise]} \label{eq:substa:conspremt} \\
    & \text{exist}~\Gamma_{21},\Gamma_{22}~\text{s.t.}~\jctxsharing{\Gamma_2}{\Gamma_{21}}{\Gamma_{22}}, \\
    & \quad \jstyping{\Gamma_{21}}{t}{\trefined{B_1}{\psi}{\phi_1}},\jstyping{\Gamma_{22}}{t}{\trefined{B_2}{\psi}{\phi_2}} & \text{[\propref{typingvalsharingsplit}]} \\
    & \sharing(\Gamma_{11},\Gamma_{21}), \subst{\calI(t)}{x}{\Gamma_1'} \vdash \\
    & \quad \subst{t}{x}{\hat{a}_h} \dblcolon \subst{\calI(t)}{x}{T} & \text{[\theoref{substitution}, \eqref{eq:substa:conspremh}]} \\
    & \text{ind. hyp. on \eqref{eq:substa:conspremt}} \\
    & \sharing(\Gamma_{12},\Gamma_{22}), \subst{\calI(t)}{x}{\Gamma_2'} \vdash \\
    & \quad \subst{t}{x}{a_t} : \tlist{\subst{\calI(t)}{x}{T}} \\
    & \jctxsharing{\Gamma}{\Gamma_1}{\Gamma_2} \implies & \\
    & \quad \jctxsharing{\Gamma}{(\sharing(\Gamma_{11},\Gamma_{21})}{\sharing(\Gamma_{12},\Gamma_{22})} \\
    & \Gamma,\bindvar{x}{\trefined{B_x}{\psi}{\phi}} \vdash \Gamma' \sharing \Gamma_1'\mid \Gamma_2' \implies  \\
    & \quad \Gamma \vdash \subst{\calI(t)}{x}{\Gamma'} \sharing \subst{\calI(t)}{x}{\Gamma_1'} \mid \subst{\calI(t)}{x}{\Gamma_2'} & \text{[\lemref{substintoref}]} \\
    & \jatyping{\Gamma,\subst{\calI(t)}{x}{\Gamma'}}{a}{\tlist{\subst{\calI(t)}{x}{T}}} & \text{[typing]}
  \end{alignat}
\end{proof}

\begin{theorem}[Substitution]\label{the:substitution}\
	\begin{enumerate}
		\item If $\jstyping{\Gamma_1,\bindvar{x}{\tpot{\tsubset{B}{\psi}}{\phi}},\Gamma'}{e}{S}$, $\jstyping{\Gamma_2}{t}{\tpot{\tsubset{B}{\psi}}{\phi}}$, $\jval{t}$ and $\jctxsharing{\Gamma}{\Gamma_1}{\Gamma_2}$, then $\jstyping{\Gamma,\subst{\calI(t)}{x}{\Gamma'}}{\subst{t}{x}{e}}{\subst{\calI(t)}{x}{S}}$.
		\item If $\jstyping{\Gamma_1,\bindvar{x}{S_x},\Gamma'}{e}{S}$, $S_x$ is non-scalar/poly, $\jstyping{\Gamma_2}{t}{S_x}$, $\jval{t}$ and $\jctxsharing{\Gamma}{\Gamma_1}{\Gamma_2}$, then $\jstyping{\Gamma,\Gamma'}{\subst{t}{x}{e}}{S}$.
	\end{enumerate}
\end{theorem}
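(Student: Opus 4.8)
The plan is to prove parts (1) and (2) simultaneously, by induction on the derivation of the typing premise $\jstyping{\Gamma_1,\bindvar{x}{S_x},\Gamma'}{e}{S}$ (with $S_x = \trefined{B}{\psi}{\phi}$ in case (1)), case-splitting on the last rule applied. The guiding principle is that substituting a \emph{value} $t$ for $x$ in a term mirrors, at the refinement level, the substitution of its interpretation $\calI(t)$ for $x$ in types, in the path conditions and potential bindings of $\Gamma'$, and in $S$; a small compositionality fact, $\calI(\subst{t}{x}{a}) = \subst{\calI(t)}{x}{\calI(a)}$ for interpretable atoms $a$, makes this bookkeeping go through in the rules that mention $\calI(\cdot)$. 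Every non-typing side condition of a rule --- sorting, well-formedness, subtyping, sharing, context sharing --- is discharged uniformly via \lemref{substintoref} (and \lemref{substprop} when a path condition becomes valid), so only the typing premises need the induction hypothesis; atomic premises $\jatyping{\cdot}{a}{B}$ are handled by \lemref{substatom}. Case (2), where $S_x$ is an arrow type or a type schema, is the easy half: well-formedness guarantees that a variable of non-scalar type never occurs in any refinement, hence $\subst{\calI(t)}{x}{\Gamma'} = \Gamma'$ and $\subst{\calI(t)}{x}{S} = S$, and the $\calI(\cdot)$-bookkeeping below trivialises.

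The interesting cases for (1) fall into three groups. \emph{Variable rules} (\textsc{(T-Var)}, \textsc{(T-SimpAtom)}, \textsc{(SimpAtom-Var)}): if the looked-up variable is $x$, I would take the hypothesis $\jstyping{\Gamma_2}{t}{\trefined{B}{\psi}{\phi}}$, sharpen it to $\jstyping{\Gamma_2}{t}{\trefined{B}{\nu = \calI(t)}{\phi}}$ by \propref{typingvalinterp}, and move it to the required context using \propref{ctxrelax} and \propref{ctxweaken}; if it is some $y \neq x$, then $y$'s binding is merely apportioned by the context sharing and the axiom is re-derived. \emph{Binding-introducing rules} (\textsc{(T-Let)}, \textsc{(T-MatL)}, \textsc{(T-Cond)}, \textsc{(T-Abs)}, \textsc{(T-Abs-Lin)}, \textsc{(T-Fix)}): here the binding for $x$ must be permuted past the newly introduced binders, $\Gamma'$ is extended accordingly, the induction hypothesis is applied to the body, and \lemref{substintoref} rebuilds the surrounding non-typing premises before re-assembly; in the function rules one additionally invokes \propref{typingvalzero} so that $t$ can be typed under a self-sharing context. \emph{Rules that split the context} (\textsc{(T-App)}, \textsc{(T-App-SimpAtom)}, \textsc{(SimpAtom-Cons)}): when $\Gamma_1$ is split, the type of $x$ is apportioned too, so I would split $\jstyping{\Gamma_2}{t}{\trefined{B}{\psi}{\phi}}$ along the same sharing via \propref{typingvalsharingsplit}, feed one piece to each premise, recombine the resulting context splits into a context sharing of $\Gamma$, and close with \lemref{substintoref} for the well-formedness side condition. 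The structural rules (\textsc{(S-Subtype)}, \textsc{(S-Inst)}, \textsc{(S-Gen)}, \textsc{(S-Transfer)}, \textsc{(S-Relax)}), together with \textsc{(T-Consume-P)}, \textsc{(T-Consume-N)}, \textsc{(T-Imp)}, are routine --- apply the induction hypothesis to the single typing premise and \lemref{substintoref}/\lemref{substprop} to the rest --- noting for \textsc{(S-Transfer)} that the equality $\pot{\Gamma} = \pot{\Gamma'}$ is preserved under substituting the value $t$.

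The main obstacle is keeping the term-level substitution and the refinement-level substitution in lockstep while simultaneously respecting how context sharing apportions the potential carried by $x$. The sharpest instances are the application rules: $\calI(a_2)$ already occurs in the conclusion type, so after substituting one must verify that $\subst{\calI(\subst{t}{x}{a_2})}{y}{T}$ equals $\subst{\calI(t)}{x}{\subst{\calI(a_2)}{y}{T}}$; combined with the need, in the context-sharing cases, to split $t$'s own typing derivation along the very sharing that splits $\Gamma_1$, this is where essentially all the real content lies. Everything else is a mechanical traversal of the typing rules.
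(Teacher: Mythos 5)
Your proposal follows essentially the same route as the paper's proof: induction on the typing derivation, with non-typing premises discharged by the substitution lemmas for refinement-level judgments, atomic premises by the atomic substitution lemma, the variable case by sharpening $t$'s type to $\nu = \calI(t)$ and weakening, and the context-splitting cases by splitting $t$'s own typing derivation along the same sharing. The one place you undersell the work is \textsc{(S-Transfer)}: the transfer may change the potential annotation on $x$'s own binding from $\phi$ to some $\phi'$, so the paper must split $t$'s typing into a potential-carrying part and a zero part, convert the former into free potential (via the fact that a value typed at $\tpot{R}{\phi}$ forces $\pot{\Gamma} \ge \subst{\calI(t)}{\nu}{\phi}$), and re-type $t$ at $\phi'$ by relaxation before the global potential accounting closes --- this is more than "the equality is preserved," though it does not change the overall structure of the argument.
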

\begin{proof}[Proof of (1)]
	By induction on $\jstyping{\Gamma_1,\bindvar{x}{\tpot{\tsubset{B}{\psi}}{\phi}},\Gamma'}{e}{S}$:
	\begin{alignat}{2}
	     \shortintertext{\bf\textsc{(T-SimpAtom)}}
	     & \text{SPS}~e=a,S=\tsubset{B'}{\nu = \calI(a)} \\
	     & \jatyping{\Gamma_1,\bindvar{x}{\trefined{B}{\psi}{\phi}},\Gamma'}{a}{B'} & \text{[premise]} \\
	     & \jatyping{\Gamma,\subst{\calI(t)}{x}{\Gamma'}}{\subst{t}{x}{a}}{\subst{\calI(t)}{x}{B'}} & \text{[\lemref{substatom}]} \\
	     &  {\Gamma,\subst{\calI(t)}{x}{\Gamma'}} \vdash {\subst{t}{x}{a}} \dblcolon \\
	     & \quad {\tsubset{\subst{\calI(t)}{x}{B'}}{\nu = \calI(\subst{t}{x}{a})}} & \text{[typing]} \\
	     & \tsubset{\subst{\calI(t)}{x}{B'}}{\nu = \calI(\subst{t}{x}{a})} = \\
	     & \quad \subst{\calI(t)}{x}{(\tsubset{B'}{\nu = \calI(a)})}
		\shortintertext{\bf{\textsc{(T-Var)}$=$}}
		& \text{SPS}~e=x,S= \tpot{\tsubset{B}{\psi}}{\phi} \\
		& \subst{t}{x}{e} = t, \subst{\calI(t)}{x}{S} = \tpot{\tsubset{B}{\psi}}{\phi} \\
		& \jstyping{\Gamma}{t}{\tpot{\tsubset{B}{\psi}}{\phi}} & \text{[Prop.~\ref{prop:ctxrelax}]} \\
		& \jstyping{\Gamma,\subst{\calI(t)}{x}{\Gamma'}}{t}{\tpot{\tsubset{B}{\psi}}{\phi}} & \text{[Prop.~\ref{prop:ctxweaken}]}
		\shortintertext{\bf{\textsc{(T-Var)}$\ne$}}
		& \text{SPS}~e=y,S=\Gamma(y) \\
		& \subst{t}{x}{e} = y \\
		& \textbf{case}~y \in \Gamma \\
		& \enskip \text{WLOG}~\Gamma(y) = \tpot{\tsubset{B'}{\psi'}}{\phi'} \\
		& \enskip \jsharing{\Gamma}{\Gamma(y)}{\Gamma_1(y)}{\Gamma_2(y)} \\
		& \enskip \textbf{let}~\Gamma_1(y) = \tpot{\tsubset{B_1'}{\psi_1'}}{\phi_1'} \\
		& \enskip \subst{\calI(t)}{x}{S} = S = \tpot{\tsubset{B_1'}{\psi_1'}}{\phi_1'} \\
		& \enskip \jstyping{\Gamma,\subst{\calI(t)}{x}{\Gamma'}}{y}{\tpot{\tsubset{B'}{\psi'}}{\phi'}} & \text{[typing]} \\
		& \textbf{case}~y \in \Gamma' \\
		& \enskip \text{WLOG}~\Gamma'(y) = \tpot{\tsubset{B'}{\psi'}}{\phi'} \\
		& \enskip S = \tpot{\tsubset{B'}{\psi'}}{\phi'} \\
		& \enskip \subst{\calI(t)}{x}{S} = \\
		& \enskip \quad \tpot{\tsubset{\subst{\calI(t)}{x}{B'}}{\subst{\calI(t)}{x}{\psi'}}}{\subst{\calI(t)}{x}{\phi'}} \\
		& \enskip (\subst{\calI(t)}{x}{\Gamma'})(y) = \\
		& \enskip \quad \tpot{\tsubset{\subst{\calI(t)}{x}{B'}}{\subst{\calI(t)}{x}{\psi'}}}{\subst{\calI(t)}{x}{\phi'}} \\
		& \enskip \Gamma,\subst{\calI(t)}{x}{\Gamma'} \vdash y \dblcolon \\
		& \enskip \quad  \tpot{\tsubset{\subst{\calI(t)}{x}{B'}}{ \subst{\calI(t)}{x}{\psi'} }}{\subst{\calI(t)}{x}{\phi'}} & \text{[typing]}
		\shortintertext{\bf\textsc{(T-Imp)}}
		& \text{SPS}~e=\eimp, S=T \\
		& \subst{t}{x}{e} = \eimp \\
		& \subst{\calI(t)}{x}{S} = \subst{\calI(t)}{x}{T} \\
		& \jprop{\Gamma_1,\bindvar{x}{\tpot{\tsubset{B}{\psi}}{\phi}},\Gamma'}{\bot} & \text{[premise]} \label{eq:subst:impbot} \\
		& \jwftype{\Gamma_1,\bindvar{x}{\tpot{\tsubset{B}{\psi}}{\phi}},\Gamma'}{T} & \text{[premise]} \label{eq:subst:impwftype} \\
		& \jprop{\Gamma,\subst{\calI(t)}{x}{\Gamma'}}{\bot} & \text{[Lem.~\ref{lem:substintoref}, \eqref{eq:subst:impbot}]} \\
		& \jwftype{\Gamma,\subst{\calI(t)}{x}{\Gamma'}}{\subst{\calI(t)}{x}{T}} & \text{[Lem.~\ref{lem:substintoref}, \eqref{eq:subst:impwftype}]} \\
		& \jstyping{\Gamma,\subst{\calI(t)}{x}{\Gamma'}}{\eimp}{\subst{\calI(t)}{x}{T}} & \text{[typing]}
		\shortintertext{\bf\textsc{(T-Consume-P)}}
		& \text{SPS}~e=\econsume{c}{e_0}, c\ge 0, S=T \\
		& \text{SPS}~\Gamma'= \Gamma'',c & \text{[premise]} \\
		& \subst{t}{x}{e} = \econsume{c}{\subst{t}{x}{e_0}} \\
		& \subst{\calI(t)}{x}{S} = \subst{\calI(t)}{x}{T} \\
		& \jstyping{\Gamma_1,\bindvar{x}{\tpot{\tsubset{B}{\psi}}{\phi}},\Gamma''}{e_0}{T} & \text{[premise]} \label{eq:subst:consumepe0} \\
		& \text{ind. hyp. on \eqref{eq:subst:consumepe0}} \\
		& \jstyping{\Gamma,\subst{\calI(t)}{x}{\Gamma''}}{\subst{t}{x}{e_0}}{\subst{\calI(t)}{x}{T}} \\
		& \jstyping{\Gamma,\subst{\calI(t)}{x}{\Gamma''},c}{\econsume{c}{\subst{t}{x}{e_0}}}{\subst{\calI(t)}{x}{T}} & \text{[typing]} \\
		& \jstyping{\Gamma,\subst{\calI(t)}{x}{\Gamma',c}}{\econsume{c}{\subst{t}{x}{e_0}}}{\subst{\calI(t)}{x}{T}}
		\shortintertext{\bf\textsc{(T-Consume-N)}}
		& \text{SPS}~e=\econsume{c}{e_0},c<0,S=T \\
		& \subst{t}{x}{e} = \econsume{c}{\subst{t}{x}{e_0}} \\
		& \subst{\calI(t)}{x}{S} = \subst{\calI(t)}{x}{T} \\
		& \jstyping{\Gamma_1,\bindvar{x}{\tpot{\tsubset{B}{\psi}}{\phi}},\Gamma',-c}{e_0}{T} & \text{[premise]} \label{eq:subst:consumene0} \\
		& \text{ind. hyp. on \eqref{eq:subst:consumene0}} \\
		& \jstyping{\Gamma,\subst{\calI(t)}{x}{\Gamma'},-c}{\subst{t}{x}{e_0}}{\subst{\calI(t)}{x}{T}} \\
		& \jstyping{\Gamma,\subst{\calI(t)}{x}{\Gamma'}}{\econsume{c}{\subst{t}{x}{e_0}}}{\subst{\calI(t)}{x}{T}} & \text{[typing]}
		\shortintertext{\bf\textsc{(T-Cond)}}
		& \text{SPS}~e=\econd{a_0}{e_1}{e_2},S=T \\
		& \subst{t}{x}{e} = \econd{\subst{t}{x}{a_0}}{\subst{t}{x}{e_1}}{\subst{t}{x}{e_2}} \\
		& \subst{\calI(t)}{x}{S} = \subst{\calI(t)}{x}{T} \\
		& \jatyping{\Gamma_{1},\bindvar{x}{\tpot{\tsubset{B}{\psi}}{\phi}},\Gamma'}{a_0}{\tbool} & \text{[premise]} \label{eq:subst:conde0} \\
		& \Gamma_{1},\bindvar{x}{\tpot{\tsubset{B}{\psi}}{\phi}},\Gamma',\calI(a_0) \vdash \\
		& \quad 	e_1 \dblcolon T & \text{[premise]} \label{eq:subst:conde1} \\
		& \Gamma_{1},\bindvar{x}{\tpot{\tsubset{B}{\psi}}{\phi}},\Gamma', \neg\calI(a_0) \vdash \\
		& \quad e_2 \dblcolon T & \text{[premise]} \label{eq:subst:conde2} \\
       & \sharing(\Gamma_{1},\Gamma_{2}), \subst{\calI(t)}{x}{\Gamma'} \vdash \subst{t}{x}{a_0} : \tbool & \text{[\lemref{substatom}]} \label{eq:subst:conde0subst}  \\
		& \text{ind. hyp. on \eqref{eq:subst:conde1}} \\
		& \sharing(\Gamma_{1},\Gamma_{2}), \subst{\calI(t)}{x}{\Gamma'}, \subst{\calI(t)}{x}{\calI(a_0)} \vdash \\
		& \quad \subst{t}{x}{e_1} \dblcolon \subst{\calI(t)}{x}{T} \label{eq:subst:conde1subst} \\
		& \text{ind. hyp. on \eqref{eq:subst:conde2}} \\
		& \sharing(\Gamma_{1},\Gamma_{2}), \subst{\calI(t)}{x}{\Gamma'}, \subst{\calI(t)}{x}{\neg\calI(a_0)} \vdash \\
		& \quad \subst{t}{x}{e_2} \dblcolon \subst{\calI(t)}{x}{T} \label{eq:subst:conde2subst} \\
		& \text{typing on \eqref{eq:subst:conde0subst}, \eqref{eq:subst:conde1subst}, \eqref{eq:subst:conde2subst}} \\
		& \Gamma,\subst{\calI(t)}{x}{\Gamma'} \vdash \\
		& \quad \econd{\subst{t}{x}{e_0}}{\subst{t}{x}{e_1}}{\subst{t}{x}{e_2}} \dblcolon \subst{\calI(t)}{x}{T}
		\shortintertext{\bf\textsc{(T-MatL)}}
		& \text{SPS}~e=\ematl{a_0}{e_1}{x_h}{x_t}{e_2}=T' \\
		& \subst{t}{x}{e} = \ematl{\subst{t}{x}{a_0}}{\subst{t}{x}{e_1}}{x_h}{x_t}{\subst{t}{x}{e_2}} \\
		& \subst{\calI(t)}{x}{S} = \subst{\calI(t)}{x}{T'} \\
		& \vdash \Gamma_1,\bindvar{x}{\tpot{\tsubset{B}{\psi}}{\phi}},\Gamma' \sharing \\
		& \quad \Gamma_{11},\bindvar{x}{\tpot{\tsubset{B_1}{\psi}}{\phi_1}},\Gamma_1' \mid \\
		& \quad \Gamma_{12},\bindvar{x}{\tpot{\tsubset{B_2}{\psi}}{\phi_2}},\Gamma_2' & \text{[premise]} \\
		& \jatyping{\Gamma_{11},\bindvar{x}{\tpot{\tsubset{B_1}{\psi}}{\phi_1}},\Gamma'_1}{a_0}{\tlist{T}} & \text{[premise]} \label{eq:subst:matle0} \\
		& \jstyping{\Gamma_{12},\bindvar{x}{\tpot{\tsubset{B_2}{\psi}}{\phi_2}},\Gamma'_2,\calI(a_0)=0}{e_1}{T'} & \text{[premise]} \label{eq:subst:matle1} \\
		& \Gamma_{12},\bindvar{x}{\tpot{\tsubset{B_2}{\psi}}{\phi_2}},\Gamma'_2, \\
		& \quad x_h:T,x_t:\tlist{T},\calI(a_0)=x_t+1 \vdash {e_2} \dblcolon {T'} & \text{[premise]} \label{eq:subst:matle2} \\
		& \text{exist}~\Gamma_{21},\Gamma_{22}~\text{s.t.}~\jctxsharing{\Gamma_2}{\Gamma_{21}}{\Gamma_{22}}, \\
		& \quad \jstyping{\Gamma_{21}}{t}{\tpot{\tsubset{B_1}{\psi}}{\phi_1}}, \jstyping{\Gamma_{22}}{t}{\tpot{\tsubset{B_2}{\psi}}{\phi_2}} & \text{[Prop.~\ref{prop:typingvalsharingsplit}]} \label{eq:subst:matlsplit} \\
		& \sharing(\Gamma_{11},\Gamma_{21}), \subst{\calI(t)}{x}{\Gamma'_1} \vdash \\
		& \quad \subst{t}{x}{a_0} : \tlist{\subst{\calI(t)}{x}{T}} & \text{[\lemref{substatom}]} \label{eq:subst:matle0subst} \\
		& \text{ind. hyp. on \eqref{eq:subst:matle1},\eqref{eq:subst:matle2} with \eqref{eq:subst:matlsplit}} \\
		& \sharing(\Gamma_{12},\Gamma_{22}), \subst{\calI(t)}{x}{\Gamma'_2,\subst{\calI(t)}{x}{(\calI(a_0)=0)}} \vdash \\
		& \quad \subst{t}{x}{e_1} \dblcolon \subst{\calI(t)}{x}{T'} \label{eq:subst:matle1subst} \\
		& \sharing(\Gamma_{12},\Gamma_{22}), \subst{\calI(t)}{x}{\Gamma'_2},\\
		& \quad x_h:\subst{\calI(t)}{x}{T}, x_t:\tlist{\subst{\calI(t)}{x}{T}}, \\
		& \quad \subst{\calI(t)}{x}{(\calI(a_0)=x_t+1)} \vdash \\
		& \quad \subst{t}{x}{e_2} \dblcolon \subst{\calI(t)}{x}{T'} \label{eq:subst:matle2subst} \\
		& \jctxsharing{\Gamma}{\Gamma_1}{\Gamma_2} \implies \\
		& \quad \jctxsharing{\Gamma}{(\sharing(\Gamma_{11},\Gamma_{21}))}{(\sharing(\Gamma_{12},\Gamma_{22}))} \\
		& \Gamma,\bindvar{x}{\tpot{\tsubset{B}{\psi}}{\phi}} \vdash \Gamma' \sharing \Gamma'_1 \mid \Gamma'_2 \implies \\
		& \quad \Gamma \vdash \subst{\calI(t)}{x}{\Gamma'} \sharing \subst{\calI(t)}{x}{\Gamma'_1} \mid \subst{\calI(t)}{x}{\Gamma'_2} & \text{[Lem.~\ref{lem:substintoref}]} \\
		& \text{typing on \eqref{eq:subst:matle0subst}, \eqref{eq:subst:matle1subst}, \eqref{eq:subst:matle2subst}} \\ 
		& \Gamma,\subst{\calI(t)}{x}{\Gamma'} \vdash \\
		& \quad \ematl{\subst{t}{x}{a_0}}{\subst{t}{x}{e_1}}{x_h}{x_t}{\subst{t}{x}{e_2}} \\
		& \quad \dblcolon \subst{\calI(t)}{x}{T'}
		\shortintertext{\bf\textsc{(T-Let)}}
		& \text{SPS}~e=\elet{e_1}{y}{e_2},S=T_2 \\
		& \subst{t}{x}{e}=\elet{\subst{t}{x}{e_1}}{y}{\subst{t}{x}{e_2}} \\
		& \subst{\calI(t)}{x}{S} = \subst{\calI(t)}{x}{T_2} \\
		& \vdash \Gamma_1,\bindvar{x}{\tpot{\tsubset{B}{\psi}}{\phi}},\Gamma' \sharing \\
		& \quad \Gamma_{11},\bindvar{x}{\tpot{\tsubset{B_1}{\psi}}{\phi_1}},\Gamma_1' \mid \\
		& \quad \Gamma_{12},\bindvar{x}{\tpot{\tsubset{B_2}{\psi}}{\phi_2}},\Gamma_2' & \text{[premise]} \\
		& \jstyping{\Gamma_{11},\bindvar{x}{\tpot{\tsubset{B_1}{\psi}}{\phi_1}},\Gamma_1'}{e_1}{S_1} & \text{[premise]} \label{eq:subst:lete1} \\
		& \jstyping{\Gamma_{12},\bindvar{x}{\tpot{\tsubset{B_2}{\psi_2}}{\phi_2}},\Gamma_2',\bindvar{y}{S_1}}{e_2}{T_2} & \text{[premise]} \label{eq:subst:lete2} \\
		& \text{exist}~\Gamma_{21},\Gamma_{22}~\text{s.t.}~\jctxsharing{\Gamma_2}{\Gamma_{21}}{\Gamma_{22}}, \\
		& \quad \jstyping{\Gamma_{21}}{t}{\tpot{\tsubset{B_1}{\psi}}{\phi_1}}, \jstyping{\Gamma_{22}}{t}{\tpot{\tsubset{B_2}{\psi}}{\phi_2}} & \text{[Prop.~\ref{prop:typingvalsharingsplit}]} \label{eq:subst:letsplitval} \\
		& \text{ind. hyp. on \eqref{eq:subst:lete1} with \eqref{eq:subst:letsplitval}} \\
		& \jstyping{\sharing(\Gamma_{11},\Gamma_{21}),\subst{\calI(t)}{x}{\Gamma_1'} }{\subst{t}{x}{e_1}}{\subst{\calI(t)}{x}{S_1}} \label{eq:subst:lete1subst} \\
		& \text{ind. hyp. on \eqref{eq:subst:lete2} with \eqref{eq:subst:letsplitval}} \\
		& \sharing(\Gamma_{12},\Gamma_{22}), \subst{\calI(t)}{x}{\Gamma_2'}, \bindvar{y}{\subst{\calI(t)}{x}{S_1}} \vdash \\
		& \quad  \subst{t}{x}{e_2} \dblcolon \subst{\calI(x)}{t}{T_2} \label{eq:subst:lete2subst} \\
		& \jctxsharing{\Gamma}{\Gamma_1}{\Gamma_2} \implies \\
		& \quad \jctxsharing{\Gamma}{(\sharing(\Gamma_{11},\Gamma_{21}))}{(\sharing(\Gamma_{12},\Gamma_{22}))} \\
		& \Gamma,\bindvar{x}{\tpot{\tsubset{B}{\psi}}{\phi	}} \vdash \Gamma' \sharing \Gamma'_1 \mid \Gamma'_2 \implies \\
		& \quad \Gamma \vdash \subst{\calI(t)}{x}{\Gamma'} \sharing \subst{\calI(t)}{x}{\Gamma'_1} \mid \subst{\calI(t)}{x}{\Gamma'_2} & \text{[Lem.~\ref{lem:substintoref}]} \\
		& \text{typing on \eqref{eq:subst:lete1subst}, \eqref{eq:subst:lete2subst}} \\
		& \Gamma,\subst{\calI(t)}{x}{\Gamma'} \vdash \\
		 & \quad \elet{\subst{t}{x}{e_1}}{y}{\subst{t}{x}{e_2}} \dblcolon \subst{\calI(t)}{x}{T_2}
		\shortintertext{\bf\textsc{(T-Abs)}}
		& \text{SPS}~e=\eabs{y}{e_0},S=\tpot{(\tarrow{y}{T_y}{T})}{0} \\
		& \subst{t}{x}{e} = \eabs{y}{\subst{t}{x}{e_0}} \\
		&  \subst{\calI(t)}{x}{S} = \tpot{(\tarrow{y}{\subst{\calI(t)}{x}{T_y}}{\subst{\calI(t)}{x}{T}})}{0} \\
		& \jstyping{\Gamma_1,\bindvar{x}{\tpot{\tsubset{B}{\psi}}{\phi}},\Gamma',\bindvar{y}{T_y}}{e_0}{T} & \text{[premise]} \label{eq:subst:abse0} \\
		& \vdash {\Gamma_1,\bindvar{x}{\tpot{\tsubset{B}{\psi}}{\phi}},\Gamma'} \sharing \\
		& \quad {\Gamma_1,\bindvar{x}{\tpot{\tsubset{B}{\psi}}{\phi}},\Gamma'} \mid \\
		& \quad {\Gamma_1,\bindvar{x}{\tpot{\tsubset{B}{\psi}}{\phi}},\Gamma'} & \text{[premise]} \\
		& \jsharing{\Gamma}{\tpot{\tsubset{B}{\psi}}{\phi}}{\tpot{\tsubset{B}{\psi}}{\phi}}{\tpot{\tsubset{B}{\psi}}{\phi}} \\
		& \text{exist}~\Gamma_2'~\text{s.t.}~\jstyping{\Gamma_2'}{t}{\tpot{\tsubset{B}{\psi}}{\phi}}, \jctxsharing{\Gamma_2}{\Gamma_2}{\Gamma_2'} & \text{[Prop.~\ref{prop:typingvalzero}]} \\
		& \text{ind. hyp. on \eqref{eq:subst:abse0}} \\
		& \sharing(\Gamma_1,\Gamma_2'), \subst{\calI(t)}{x}{\Gamma'}, \bindvar{y}{\subst{\calI(t)}{x}{T_y}} \vdash \\
		& \quad \subst{t}{x}{e_0} \dblcolon \subst{\calI(t)}{x}{T} \label{eq:subst:abse0subst} \\
		& \jctxsharing{(\sharing(\Gamma_1,\Gamma_2'))}{(\sharing(\Gamma_1,\Gamma_2'))}{(\sharing(\Gamma_1,\Gamma_2'))} \\
		& \Gamma, x:\tpot{\tsubset{B}{\psi}}{\phi} \vdash \Gamma' \sharing \Gamma' \mid \Gamma' \implies \\
		& \quad \Gamma \vdash \subst{\calI(t)}{x}{\Gamma'} \sharing \subst{\calI(t)}{x}{\Gamma'} \mid \subst{\calI(t)}{x}{\Gamma'} & \text{[Lem.~\ref{lem:substintoref}]} \\
		& \vdash \sharing(\Gamma_1,\Gamma_2'),\subst{\calI(t)}{x}{\Gamma'} \sharing \\
		& \quad \sharing(\Gamma_1,\Gamma_2'),\subst{\calI(t)}{x}{\Gamma'} \mid \\
		& \quad \sharing(\Gamma_1,\Gamma_2'),\subst{\calI(t)}{x}{\Gamma'} \\
		& \text{typing on \eqref{eq:subst:abse0subst}} \\
		& \sharing(\Gamma_1,\Gamma_2') ,\subst{\calI(v)}{x}{\Gamma'} \vdash \\
		& \quad \eabs{y}{\subst{t}{x}{e_0}} \dblcolon \tarrow{y}{\subst{\calI(t)}{x}{T_y}}{\subst{\calI(t)}{x}{T}} \\
		& \Gamma,\subst{\calI(v)}{x}{\Gamma'} \vdash \\
		& \quad \eabs{y}{\subst{t}{x}{e_0}} \dblcolon \tarrow{y}{\subst{\calI(t)}{x}{T_y}}{\subst{\calI(t)}{x}{T}} & \text{[Prop.~\ref{prop:ctxrelax}]}
		\shortintertext{\bf\textsc{(T-Abs-Lin)}}
		& \text{SPS}~ e=\eabs{y}{e_0},S=\tarrowm{y}{T_y}{T}{m} \\
		& \text{SPS}~\Gamma_1,\bindvar{x}{\trefined{B}{\psi}{\phi}},\Gamma' = \\
		& \quad m \cdot (\Gamma_1'',\bindvar{x}{\trefined{B''}{\psi}{\phi''}},\Gamma''' ) \\
		& \subst{t}{x}{e} = \eabs{y}{\subst{t}{x}{e_0}} \\
		&  \subst{\calI(t)}{x}{S} = \tarrowm{y}{\subst{\calI(t)}{x}{T_y}}{\subst{\calI(t)}{x}{T}}{m} \\
		& \jstyping{\Gamma_1'',\bindvar{x}{\tpot{\tsubset{B''}{\psi}}{\phi''}},\Gamma''',\bindvar{y}{T_y}}{e_0}{T} & \text{[premise]} \label{eq:subst:absline0} \\
		& \text{exist}~\Gamma_2''~\text{s.t.}~\Gamma_2=\sharing(m\cdot\Gamma_2'',\_),~\text{and} & \text{[\propref{typingvalsharingsplit},} \\
		& \quad \jstyping{\Gamma_2''}{t}{\tpot{\tsubset{B''}{\psi}}{\phi''}} & \text{ \ref{prop:typingvalzero}]} \\
		& \text{ind. hyp. on \eqref{eq:subst:absline0}} \\
		& \sharing(\Gamma_1'',\Gamma_2''), \subst{\calI(t)}{x}{\Gamma'''}, \bindvar{y}{\subst{\calI(t)}{x}{T_y}} \vdash \\
		& \quad \subst{t}{x}{e_0} \dblcolon \subst{\calI(t)}{x}{T} \label{eq:subst:absline0subst} \\
		& \text{typing on \eqref{eq:subst:absline0subst}} \\
		& m \cdot (\sharing(\Gamma_1'',\Gamma_2'') ,\subst{\calI(v)}{x}{\Gamma'''}) \vdash \\
		& \quad \eabs{y}{\subst{t}{x}{e_0}} \dblcolon \\
		& \quad \tarrowm{y}{\subst{\calI(t)}{x}{T_y}}{\subst{\calI(t)}{x}{T}}{m} \\
		& \Gamma,\subst{\calI(v)}{x}{\Gamma'} \vdash \\
		& \quad \eabs{y}{\subst{t}{x}{e_0}} \\
		& \quad \dblcolon \tarrowm{y}{\subst{\calI(t)}{x}{T_y}}{\subst{\calI(t)}{x}{T}}{m} 
		\shortintertext{\bf\textsc{(T-Fix)}}
		& \text{SPS}~e=\efix{f}{y}{e_0},S=\tpot{R}{0},R=\tarrow{y}{T_y}{T} \\
		& \subst{t}{x}{e} = \efix{f}{y}{\subst{t}{x}{e_0}} \\
		& \subst{\calI(t)}{x}{\tpot{R}{0}} = \tpot{\subst{\calI(t)}{x}{R}}{0} \\
		& \vdash \Gamma_1,\bindvar{x}{\trefined{B}{\psi}{\phi}},\Gamma' \sharing \\
		& \quad \Gamma_1,\bindvar{x}{\trefined{B}{\psi}{\phi}},\Gamma' \mid \\
		& \quad \Gamma_1,\bindvar{x}{\trefined{B}{\psi}{\phi}},\Gamma' & \text{[premise]} \\
		& \jstyping{\Gamma_1,\bindvar{x}{\tpot{\tsubset{B}{\psi}}{\phi}},\Gamma',f:\tpot{R}{0},y:T_y}{e_0}{T} \\
		& \text{ind. hyp.} \\
		& \Gamma,\subst{\calI(t)}{x}{\Gamma'},f:\tpot{\subst{\calI(t)}{x}{R}}{0},y:T_y \vdash \\
		& \quad {\subst{\calI(t)}{x}{e_0}} \dblcolon \subst{\calI(t)}{x}{T} \\
		& \jstyping{\Gamma,\subst{\calI(t)}{x}{\Gamma'}}{\efix{f}{y}{\subst{t}{x}{e_0}}}{\tpot{\subst{\calI(t)}{x}{R}}{0}}
		\shortintertext{\bf\textsc{(T-App-SimpAtom)}}
		& \text{SPS}~e=\eapp{\hat{a}_1}{a_2},S=\subst{\calI(a_2)}{y}{T} \\
		& \subst{t}{x}{e} = \eapp{\subst{t}{x}{\hat{a}_1}}{\subst{t}{x}{a_2}} \\
		& \subst{\calI(t)}{x}{S} = \subst{\calI(t)}{x}{\subst{\calI(a_2)}{y}{T}} \\
		& \vdash \Gamma_1,\bindvar{x}{\tpot{\tsubset{B}{\psi}}{\phi}},\Gamma' \sharing \\
		& \quad \Gamma_{11},\bindvar{x}{\tpot{\tsubset{B_1}{\psi}}{\phi_1}},\Gamma_1' \mid \\
		& \quad \Gamma_{12},\bindvar{x}{\tpot{\tsubset{B_2}{\psi}}{\phi_2}},\Gamma_2' & \text{[premise]} \\
		& {\Gamma_{11},\bindvar{x}{\tpot{\tsubset{B_1}{\psi}}{\phi_1}},\Gamma_1'} \vdash {\hat{a}_1} \\
		& \quad \dblcolon {\tarrowm{y}{\trefined{B_y}{\psi_y}{\phi_y}}{T}{1}} & \text{[premise]} \label{eq:subst:appae1} \\ 
		& \jstyping{\Gamma_{12},\bindvar{x}{\tpot{\tsubset{B_2}{\psi}}{\phi_2}},\Gamma_2'}{a_2}{\trefined{B_y}{\psi_y}{\phi_y}} & \text{[premise]} \label{eq:subst:appae2} \\
		& \text{exist}~\Gamma_{21},\Gamma_{22}~\text{s.t.}~\jctxsharing{\Gamma_2}{\Gamma_{21}}{\Gamma_{22}}, \\
		& \quad \jstyping{\Gamma_{21}}{t}{\tpot{\tsubset{B_1}{\psi}}{\phi_1}}, \jstyping{\Gamma_{22}}{t}{\tpot{\tsubset{B_2}{\psi}}{\phi_2}} & \text{[Prop.~\ref{prop:typingvalsharingsplit}]} \label{eq:subst:appasplitval} \\
		& \text{ind. hyp. on \eqref{eq:subst:appae1} with \eqref{eq:subst:appasplitval}} \\
		& \sharing(\Gamma_{11},\Gamma_{21}), \subst{\calI(t)}{x}{\Gamma'_1} \vdash \subst{t}{x}{\hat{a}_1} \dblcolon \\
		& \quad  {\tarrowm{y}{\subst{\calI(t)}{x}{ \trefined{B_y}{\psi_y}{\phi_y} }}{\subst{\calI(t)}{x}{T}}{1}}\label{eq:subst:appae1subst} \\
		& \text{ind. hyp. on \eqref{eq:subst:appae2} with \eqref{eq:subst:appasplitval}} \\
		& \sharing(\Gamma_{12},\Gamma_{22}), \subst{\calI(t)}{x}{\Gamma'_2} \vdash \\
		& \quad \subst{t}{x}{a_2} \dblcolon \subst{\calI(t)}{x}{ \trefined{B_y}{\psi_y}{\phi_y} } \label{eq:subst:appae2subst} \\
		& \jctxsharing{\Gamma}{\Gamma_1}{\Gamma_2} \implies \\
		& \quad \jctxsharing{\Gamma}{(\sharing(\Gamma_{11},\Gamma_{21}))}{(\sharing(\Gamma_{12},\Gamma_{22}))} \\
		& \Gamma,\bindvar{x}{\tpot{\tsubset{B}{\psi}}{\phi	}} \vdash \Gamma' \sharing \Gamma'_1 \mid \Gamma'_2 \implies \\
		& \quad \Gamma \vdash \subst{\calI(t)}{x}{\Gamma'} \sharing \subst{\calI(t)}{x}{\Gamma'_1} \mid \subst{\calI(t)}{x}{\Gamma'_2} & \text{[Lem.~\ref{lem:substintoref}]} \\
		& \text{typing on \eqref{eq:subst:appae1subst}, \eqref{eq:subst:appae2subst}} \\
		& \Gamma,\subst{\calI(t)}{x}{\Gamma'} \vdash \\
		& \quad \eapp{\subst{t}{x}{\hat{a}_1}}{\subst{t}{x}{a_2}} \dblcolon \subst{\calI(t),\calI(a_2)}{x,y}{T}
		\shortintertext{\bf\textsc{(T-App)}}
		& \text{SPS}~e=\eapp{e_1}{e_2},S=T \\
		& \subst{t}{x}{e} = \eapp{\subst{t}{x}{\hat{a}_1}}{\subst{t}{x}{\hat{a}_2}} \\
		& \subst{\calI(t)}{x}{S} = \subst{\calI(t)}{x}{T} \\
		& \vdash \Gamma_1,\bindvar{x}{\tpot{\tsubset{B}{\psi}}{\phi}},\Gamma' \sharing \\
		& \quad \Gamma_{11},\bindvar{x}{\tpot{\tsubset{B_1}{\psi}}{\phi_1}},\Gamma_1' \mid \\
		& \quad \Gamma_{12},\bindvar{x}{\tpot{\tsubset{B_2}{\psi}}{\phi_2}},\Gamma_2' & \text{[premise]} \\
		& \jstyping{\Gamma_{11},\bindvar{x}{\tpot{\tsubset{B_1}{\psi}}{\phi_1}},\Gamma_1'}{\hat{a}_1}{{\tarrowm{y}{T_y}{T}{1}}} & \text{[premise]} \label{eq:subst:appe1} \\ 
		& \jstyping{\Gamma_{12},\bindvar{x}{\tpot{\tsubset{B_2}{\psi}}{\phi_2}},\Gamma_2'}{\hat{a}_2}{T_y} & \text{[premise]} \label{eq:subst:appe2} \\
		& \text{exist}~\Gamma_{21},\Gamma_{22}~\text{s.t.}~\jctxsharing{\Gamma_2}{\Gamma_{21}}{\Gamma_{22}}, \\
		& \quad \jstyping{\Gamma_{21}}{t}{\tpot{\tsubset{B_1}{\psi}}{\phi_1}}, \jstyping{\Gamma_{22}}{t}{\tpot{\tsubset{B_2}{\psi}}{\phi_2}} & \text{[Prop.~\ref{prop:typingvalsharingsplit}]} \label{eq:subst:appsplitval} \\
		& \text{ind. hyp. on \eqref{eq:subst:appe1} with \eqref{eq:subst:appsplitval}} \\
		& \sharing(\Gamma_{11},\Gamma_{21}), \subst{\calI(t)}{x}{\Gamma'_1} \vdash \subst{t}{x}{\hat{a}_1} \dblcolon \\
		& \quad  {\tarrowm{y}{\subst{\calI(t)}{x}{T_y}}{\subst{\calI(t)}{x}{T}}{1}} \label{eq:subst:appe1subst} \\
		& \text{ind. hyp. on \eqref{eq:subst:appe2} with \eqref{eq:subst:appsplitval}} \\
		& \sharing(\Gamma_{12},\Gamma_{22}), \subst{\calI(t)}{x}{\Gamma'_2}\vdash \\
		& \quad \subst{t}{x}{\hat{a}_2} \dblcolon \subst{\calI(t)}{x}{T_y} \label{eq:subst:appe2subst} \\
		& \jctxsharing{\Gamma}{\Gamma_1}{\Gamma_2} \implies \\
		& \quad \jctxsharing{\Gamma}{(\sharing(\Gamma_{11},\Gamma_{21}))}{(\sharing(\Gamma_{12},\Gamma_{22}))} \\
		& \Gamma,\bindvar{x}{\tpot{\tsubset{B}{\psi}}{\phi	}} \vdash \Gamma' \sharing \Gamma'_1 \mid \Gamma'_2 \implies \\
		& \quad \Gamma \vdash \subst{\calI(t)}{x}{\Gamma'} \sharing \subst{\calI(t)}{x}{\Gamma'_1} \mid \subst{\calI(t)}{x}{\Gamma'_2} & \text{[Lem.~\ref{lem:substintoref}]} \\
		& \text{typing on \eqref{eq:subst:appe1subst}, \eqref{eq:subst:appe2subst}} \\
		& \Gamma,\subst{\calI(t)}{x}{\Gamma'} \vdash \\
		& \quad \eapp{\subst{t}{x}{\hat{a}_1}}{\subst{t}{x}{\hat{a}_2}} \dblcolon \subst{\calI(t)}{x}{T}
		%
		%
		\shortintertext{\bf\textsc{(S-Gen)}}
		& \text{SPS}~e=v, S=\forall\alpha.S' \\
		& \subst{t}{x}{e} = \subst{t}{x}{v} \\
		& \subst{\calI(t)}{x}{S} = \forall\alpha. \subst{\calI(t)}{x}{S'} \\
		& \jstyping{\Gamma_1,\bindvar{x}{\trefined{B}{\psi}{\phi}},\Gamma',\alpha}{v}{S'} & \text{[premise]} \\
		& \text{ind. hyp.} \\
		& \jstyping{\Gamma,\subst{\calI(t)}{x}{\Gamma'},\alpha}{\subst{t}{x}{v}}{\subst{\calI(t)}{x}{S'}} \\
		& \jstyping{\Gamma,\subst{\calI(t)}{x}{\Gamma'}}{\subst{t}{x}{v}}{\forall\alpha.\subst{\calI(t)}{x}{S'}} & \text{[typing]}
		\shortintertext{\bf\textsc{(S-Inst)}}
		& \text{SPS}~S=\subst{\trefined{B'}{\psi'}{\phi'}}{\alpha}{S'} \\
		& \subst{\calI(t)}{x}{S} = \\
		& \quad \subst{\subst{\calI(t)}{x}{\trefined{B'}{\psi'}{\phi'}}}{\alpha}{\subst{\calI(t)}{x}{S'}} \\
		& \jstyping{\Gamma_1,\bindvar{x}{\trefined{B}{\psi}{\phi}},\Gamma'}{e}{\forall\alpha.S'} & \text{[premise]} \\
		& \text{ind. hyp.} \\
		& \jstyping{\Gamma,\subst{\calI(t)}{x}{\Gamma'}}{\subst{t}{x}{e}}{\forall\alpha. \subst{\calI(t)}{x}{S'}} \\
		& \jwftype{\Gamma_1,\bindvar{x}{\trefined{B}{\psi}{\phi}},\Gamma'}{\trefined{B'}{\psi'}{\phi'}} & \text{[premise]} \\
		& \jwftype{\Gamma,\subst{\calI(t)}{x}{\Gamma'}}{\subst{\calI(t)}{x}{\trefined{B'}{\psi'}{\phi'}}} & \text{[Lem.~\ref{lem:substintoref}]} \\
		& \Gamma,\subst{\calI(t)}{x}{\Gamma'} \vdash \subst{t}{x}{e} \dblcolon \\
		& \quad  \subst{\subst{\calI(t)}{x}{\trefined{B'}{\psi'}{\phi'}}}{\alpha}{\subst{\calI(t)}{x}{S'}} & \text{[typing]}
		\shortintertext{\bf\textsc{(S-Subtype)}}
		& \text{SPS}~S=T_2 \\
		& \subst{\calI(t)}{x}{S} = \subst{\calI(t)}{x}{T_2} \\
		& \jstyping{\Gamma_1,\bindvar{x}{\trefined{B}{\psi}{\phi}},\Gamma'}{e}{T_1} & \text{[premise]} \\
		& \text{ind. hyp.} \\
		& \jstyping{\Gamma,\subst{\calI(t)}{x}{\Gamma'}}{\subst{t}{x}{e}}{\subst{\calI(t)}{x}{T_1}} \\
		& \jsubty{\Gamma_1,\bindvar{x}{\trefined{B}{\psi}{\phi}},\Gamma'}{T_1}{T_2} & \text{[premise]} \\
		& \jsubty{\Gamma,\subst{\calI(t)}{x}{\Gamma'}}{\subst{\calI(t)}{x}{T_1}}{\subst{\calI(t)}{x}{T_2}} & \text{[Lem.~\ref{lem:substintoref}]} \\
		& \jstyping{\Gamma,\subst{\calI(t)}{x}{\Gamma'}}{\subst{t}{x}{e}}{\subst{\calI(t)}{x}{T_2}} & \text{[typing]}
		\shortintertext{\bf\textsc{(S-Transfer)}}
		& \text{SPS}~\Gamma_o = \Gamma_1',\bindvar{x}{\trefined{B}{\psi}{\phi'}},\Gamma'' \\
		& \textbf{let}~\tilde{\Gamma} = \Gamma_1,\bindvar{x}{\trefined{B}{\psi}{\phi}},\Gamma' \\
		& \jstyping{\Gamma_o}{e}{S} & \text{[premise]} \label{eq:subst:transprem} \\
		& \jprop{\tilde{\Gamma}}{\pot{\tilde{\Gamma}} = \pot{\Gamma_o}} & \text{[premise]} \label{eq:subst:transequal} \\
		& \jsharing{\Gamma}{\trefined{B}{\psi}{\phi}}{\trefined{B_0}{\psi}{\phi}}{\trefined{B}{\psi}{0}} \\
		& \text{Lem.~\ref{prop:sharingsplit}, exist $\Gamma_2'$ and $\Gamma_2''$ s.t.} \\
		& \quad \jctxsharing{\Gamma_2}{\Gamma_2'}{\Gamma_2''} \\
		& \quad \jstyping{\Gamma_2'}{t}{\trefined{B_0}{\psi}{\phi}} \\
		& \qquad \implies \jprop{\Gamma_2}{\pot{\Gamma_2'} \ge \subst{\calI(t)}{\nu}{\phi}}  & \text{[Prop.~\ref{prop:freetofree}]} \\
		& \quad \jstyping{\Gamma_2''}{t}{\trefined{B}{\psi}{0}} \\
		& \jstyping{\Gamma_2'',\subst{\calI(t)}{\nu}{\phi'}}{t}{\trefined{B}{\psi}{\phi'}} & \text{[relax]} \\
		& \text{ind. hyp. on \eqref{eq:subst:transprem}} \\
		& \sharing(\Gamma_1',\Gamma_2'',\subst{\calI(t)}{\nu}{\phi'}) ,\subst{\calI(t)}{x}{\Gamma''} \vdash \\
		& \quad \subst{t}{x}{e} \dblcolon \subst{\calI(t)}{x}{S} \label{eq:subst:transind} \\
		& \text{Lem.~\ref{lem:substintoref} on \eqref{eq:subst:transequal}} \\
		& \Gamma,\subst{\calI(t)}{x}{\Gamma'} \models \\
		& \quad \subst{\calI(t)}{x}{\pot{\tilde{\Gamma}}} = \subst{\calI(t)}{x}{\pot{\Gamma_o}} \\
		& \subst{\calI(t)}{x}{ \pot{\tilde{\Gamma}}  } = \pot{\Gamma_1} +\\
		& \quad  \subst{\calI(t)}{\nu}{\phi} + \pot{\subst{\calI(t)}{x}{\Gamma'}} & \text{[def.]} \label{eq:subst:transsubst1} \\
		& \subst{\calI(t)}{x}{\pot{\Gamma_o}} = \pot{\Gamma_1'} + \\
		& \quad \subst{\calI(t)}{\nu}{\phi'} + \pot{\subst{\calI(t)}{x}{\Gamma''}} & \text{[def.]} \label{eq:subst:transsubst2} \\
		& \pot{\Gamma,\subst{\calI(t)}{x}{\Gamma'}} = \\
		& \quad \pot{\Gamma_1}+\pot{\Gamma_2} + \pot{\subst{\calI(t)}{\nu}{\Gamma'}} = \\
		& \quad \pot{\Gamma_1'}+\pot{\Gamma_2'} + \pot{\Gamma_2''} +\pot{\subst{\calI(t)}{\nu}{\Gamma''}} + \\
		& \qquad \subst{\calI(t)}{\nu}{(\phi' - \phi)} \ge & \text{[\eqref{eq:subst:transsubst1}, \eqref{eq:subst:transsubst2}]} \\
		& \quad \pot{\Gamma_1'}+\pot{\Gamma_2''}+\pot{\subst{\calI(t)}{x}{\Gamma''}} + \\
		& \qquad \subst{\calI(t)}{x}{\phi'} = \\
		& \quad \pot{\sharing(\Gamma_1',\Gamma_2'',\subst{\calI(t)}{\nu}{\phi'} ), \subst{\calI(t)}{x}{\Gamma''}} \\
		& \text{recall \eqref{eq:subst:transind}, and then typing, relax} \\
		& \jstyping{\Gamma,\subst{\calI(t)}{x}{\Gamma'}}{\subst{\calI(t)}{x}{e}}{\subst{\calI(t)}{x}{S}}
		\shortintertext{\bf\textsc{(S-Relax)}}
		& \text{SPS}~S=\tpot{R}{\phi+\phi'} \\
		& \subst{\calI(t)}{x}{S} = \tpot{\subst{\calI(t)}{x}{R}}{\subst{\calI(t)}{x}{\phi} +\subst{\calI(t)}{x}{\phi'}} \\
		& \jstyping{\Gamma_1,\bindvar{x}{\trefined{B}{\psi}{\phi}},\Gamma'}{e}{\tpot{R}{\phi}} & \text{[premise]} \\
		& \text{ind. hyp.} \\
		& \jstyping{\Gamma,\subst{\calI(t)}{x}{\Gamma'}}{\subst{t}{x}{e}}{\tpot{\subst{\calI(t)}{x}{R}}{\subst{\calI(t)}{x}{\phi}}} \\
		& \jsort{\Gamma_1,\bindvar{x}{\trefined{B}{\psi}{\phi}},\Gamma'}{\phi'}{\bbN} & \text{[premise]} \\
		& \jsort{\Gamma,\subst{\calI(t)}{x}{\Gamma'}}{\subst{\calI(t)}{x}{\phi'}}{\bbN} & \text{[Lem.~\ref{lem:substintoref}]} \\
		& \Gamma,\subst{\calI(t)}{x}{\Gamma'},\subst{\calI(t)}{x}{\phi'} \vdash \subst{t}{x}{e} \dblcolon \\
		& \quad \tpot{\subst{\calI(t)}{x}{R}}{\subst{\calI(t)}{x}{\phi}+\subst{\calI(t)}{x}{\phi'}} & \text{[typing]}
	\end{alignat}
\end{proof}

\subsection{Preservation}

\Omit{
\begin{lemma}\label{lem:unfoldconst}
	Let $\Gamma= \overline{q}$ and $\jctxsharing{\Gamma}{\Gamma_1}{\Gamma_2}$.
	\begin{enumerate}
		\item If $\jtunfoldnil{\Gamma}{\Gamma_n}{T_\ell}$ and $\jstyping{\Gamma_1}{\enil}{T_\ell}$ , then $\emptyset$ is consistent with $\jctxtyping[\Gamma_1]{\emptyset}{\Gamma_n}$.
		
		\item If $\jtunfoldcons{\Gamma}{\Gamma_c}{T_\ell}$ and $\jstyping{\Gamma_1}{\econs{v_h}{v_t}}{T_\ell}$, then $V' \defeq \{ x_h \mapsto v_h, x_t \mapsto v_t\}$ is consistent with $\jctxtyping[\Gamma_1]{V'}{\Gamma_c}$.
	\end{enumerate}
\end{lemma}
\begin{proof}[Proof of (1)]
		\begin{alignat}{2}
			& \textbf{let}~V=\emptyset, E =\emptyset \\
			& \text{Obs.}~\jctxtyping{V}{\Gamma_1}, E=\calI_V(\Gamma_1), E \models \condc{V}{\Gamma_1} \\
			& \text{SPS}~\jtunfoldnil{\Gamma}{\Gamma_n}{T_\ell} \\
			& \Gamma_n = \psi',\phi'; T_\ell = \tpot{\tsubset{\tlist{T}}{\psi}}{\phi} & \text{[premise]} \\
			& \jsubty{\Gamma}{\tpot{\tsubset{\tlist{T}}{\psi \wedge \nu = 0}}{\phi}}{\tpot{\tsubset{\tlist{T}}{\psi'}}{\phi'}} & \text{[premise]} \label{eq:unfoldnil:subtype} \\
			& \textbf{let}~V'=\emptyset,E'=\emptyset \\
			& \text{Obs.}~\jctxtyping[\Gamma_1]{V'}{\Gamma_n}, E' = \calI_{V,V'}(\Gamma_1,\Gamma_n) \\
			& \text{inv. on \eqref{eq:unfoldnil:subtype}} \\
			& \jprop{\Gamma,\nu:\tsubset{\tlist{T}}{\psi \wedge \nu = 0}}{\psi' \wedge (\phi \ge \phi')} \label{eq:unfoldnil:invsubtype} \\
			& \mathbf{let}~V''=\{ \nu \mapsto \enil \},E'' = \{ \nu \mapsto 0 \} \\
			& E'' \models (\psi \wedge \nu = 0) \implies \psi' \wedge (\phi \ge \phi') & \text{[\eqref{eq:unfoldnil:invsubtype}]} \\
			& E' \models \subst{0}{\nu}{\psi} \implies \psi' \wedge (\subst{0}{\nu}{\phi} \ge \phi') \label{eq:unfoldnil:1} \\
			& \text{SPS}~\jstyping{\Gamma_1}{\enil}{T_\ell},~\text{then by Thm.~\ref{the:progress}} \\
			& E \models \condv{V}{\enil}{T_\ell} \wedge \potc{V}{\Gamma_1} \ge \potv{V}{\enil}{T_\ell} \\
			& E \models \subst{0}{\nu}{\psi} \wedge \potc{V}{\Gamma_1} \ge \subst{0}{\nu}{\phi} \\
			& E' \models \psi' \wedge \potc{V}{\Gamma_1} \ge \phi' & \text{[\eqref{eq:unfoldnil:1}]} \\
			& \condc{V,V'}{\Gamma_n} = \psi' & \text{[def.]} \\
			& \potc{V,V'}{\Gamma_n} = \phi' & \text{[def.]} \\
			& \text{done}
		\end{alignat}
\end{proof}
\begin{proof}[Proof of (2)]
		\begin{alignat}{2}
			& \textbf{let}~V=\emptyset,E=\emptyset \\
			& \text{Obs.}~\jctxtyping{V}{\Gamma_1}, E=\calI_V(\Gamma_1), E \models \condc{V}{\Gamma_1} \\
			& \text{SPS}~\jctxtyping{\Gamma}{\Gamma_c}{T_\ell} \\
			& \Gamma_c=\bindvar{x_h}{T},\bindvar{x_t}{\tlist{T}}, \psi',\phi';  T_\ell = \tpot{\tsubset{\tlist{T}}{\psi}}{\phi} & \text{[premise]} \\
			& \Gamma,\bindvar{x_h}{T},\bindvar{x_t}{\tlist{T}} \vdash \\
			& \quad \tpot{\tsubset{\tlist{T}}{\psi \wedge \nu = x_t+1}}{\phi} <: \tpot{\tsubset{\tlist{T}}{\psi'}}{\phi'} & \text{[premise]} \label{eq:unfoldcons:subtype} \\
			& \textbf{let}~V'=\{x_h \mapsto v_h, x_t \mapsto v_t\} \\
			& \textbf{let}~E' = \{ x_h \mapsto \calI(v_h), x_t \mapsto \calI(v_t) \} \\
			& \text{SPS}~\jstyping{\Gamma_1}{\econs{v_h}{v_t}}{T_\ell} \label{eq:unfoldcons:vht} \\
			& \text{Obs.}~\jctxtyping[\Gamma_1]{V'}{\Gamma_c},E'=\calI_{V,V'}(\Gamma_1,\Gamma_c) & \text{[inv. on \eqref{eq:unfoldcons:vht}]} \\
			& \text{inv. on \eqref{eq:unfoldcons:subtype}} \\
			& \Gamma,\bindvar{x_h}{T},\bindvar{x_t}{\tlist{T}},\bindvar{\nu}{\tsubset{\tlist{T}}{\psi \wedge \nu = x_t+1}} \models \\
			& \quad \psi' \wedge (\phi \ge \phi') \label{eq:unfoldcons:invsubtype} \\
			& \textbf{let}~V''=V'[\nu \mapsto \econs{v_h}{v_t}] \\
			& \textbf{let}~E''=E'[\nu \mapsto \calI(v_t) + 1] \\
			& E'' \models (\condv{V''}{v_h}{T} \wedge \condv{V''}{v_t}{\tlist{T}} \wedge \\
			& \quad \psi \wedge \nu = x_t+ 1) \implies (\psi' \wedge \phi \ge \phi') & \text{[\eqref{eq:unfoldcons:invsubtype}]} \\
			& E' \models (\condv{V}{v_h}{T} \wedge \condv{V}{v_t}{\tlist{T}} \wedge \\
			& \quad \subst{x_t+1}{\nu}{\psi}) \implies (\psi' \wedge \subst{x_t+1}{\nu}{\phi} \ge \phi') \label{eq:unfoldcons:1} \\
			& \text{Thm.~\ref{the:progress} on \eqref{eq:unfoldcons:vht}} \\
			& E \models \condv{V}{\econs{v_h}{v_t}}{T_\ell} \wedge \\
			& \quad \potc{V}{\Gamma_1} \ge \potv{V}{\econs{v_h}{v_t}}{T_\ell} \\
			& E \models \condv{V}{v_h}{T} \wedge \condv{V}{v_t}{\tlist{T}} \wedge  \\
			& \quad \subst{\calI(v_t)+1}{\nu}{\psi} \wedge\potc{V}{\Gamma_1} \ge  \\
			& \quad  \potv{V}{v_h}{T} + \potv{V}{v_t}{\tlist{T}} + \subst{\calI(v_t)+1}{\nu}{\phi} \\
			& E' \models \condv{V}{v_h}{T} \wedge \condv{V}{v_t}{\tlist{T}} \wedge \psi'  \wedge \\
			& \quad \potc{V}{\Gamma_1} \ge \\
			& \quad \potv{V}{v_h}{T}+\potv{V}{v_T}{\tlist{T}} + \phi' & \text{[\eqref{eq:unfoldcons:1}]} \\
			& \condc{V,V'}{\Gamma_c} = \condv{V}{v_h}{T} \wedge \condv{V}{v_t}{\tlist{T}} \wedge \psi' & \text{[def.]} \\
			& \potc{V,V'}{\Gamma_c} = \potv{V}{v_h}{T} + \potv{V}{v_t}{T} + \phi' & \text{[def.]} \\
			& \text{done}
		\end{alignat}
\end{proof}
}

\begin{proposition}\label{prop:evaldeter}
	If $\jstep{e}{e'}{p}{p'}$ and $\jstep{e}{e''}{q}{q'}$, then $e' = e''$ and $q - p = q' - p'$.
\end{proposition}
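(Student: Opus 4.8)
The plan is to prove \autoref{prop:evaldeter} by induction on the derivation of $\jstep{e}{e'}{p}{p'}$, doing inversion on the second derivation $\jstep{e}{e''}{q}{q'}$ in every case. The structural fact that makes this work is that the cost semantics of \autoref{fig:semantics} is syntax-directed: the head form of $e$ uniquely determines which rule can be the last rule of a step out of $e$. Because expressions are in a-normal-form, the only rule with a recursive stepping premise is \textsc{(E-Let1)}; every other rule is an axiom whose conclusion determines both the reduct $e'$ and the net cost $p - p'$ as functions of $e$ alone --- this net cost being $0$ for all rules except \textsc{(E-Consume)}, where it equals the annotation $c$.

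For each axiom case --- \textsc{(E-Cond-True)}, \textsc{(E-Cond-False)}, \textsc{(E-Let2)}, \textsc{(E-MatL-Nil)}, \textsc{(E-MatL-Cons)}, \textsc{(E-App-Abs)}, \textsc{(E-App-Fix)}, and \textsc{(E-Consume)} --- I would observe that $e$ has a fixed head shape (a conditional on $\etrue$ or $\efalse$, a $\mathsf{let}$ with a value in head position, a list match on $\enil$ or $\econs{v_h}{v_t}$, an application of an abstraction or fixpoint to a value, or a $\econsume{c}{e_0}$). Inverting $\jstep{e}{e''}{q}{q'}$ then forces the same rule, and reading off its conclusion gives $e'' = e'$ directly (the reduct is a syntactic function of $e$, e.g. $\subst{v_h,v_t}{x_h,x_t}{e_2}$ in the $\econs{v_h}{v_t}$ case). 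Comparing the resource components then yields $q - q' = p - p'$, i.e. $q - p = q' - p'$: for \textsc{(E-Consume)} both sides equal $c$, and for all other axioms both sides equal $0$.

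The one inductive case is \textsc{(E-Let1)}: here $e = \elet{e_1}{x}{e_2}$ with a sub-derivation $\jstep{e_1}{e_1'}{p}{p'}$ and $e' = \elet{e_1'}{x}{e_2}$. Inverting the second judgment, it is derived either by \textsc{(E-Let1)} or by \textsc{(E-Let2)}; I would rule out \textsc{(E-Let2)} because it requires $\jval{e_1}$, whereas a value is irreducible (immediate by inspection of \autoref{fig:semantics}, since no left-hand side of a rule is a value), contradicting the sub-derivation for $e_1$. Hence the second judgment also uses \textsc{(E-Let1)}, giving $\jstep{e_1}{e_1''}{q}{q'}$ and $e'' = \elet{e_1''}{x}{e_2}$; applying the induction hypothesis to the two steps out of $e_1$ yields $e_1' = e_1''$ and $q - p = q' - p'$, so $e' = e''$ and we are done.

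There is essentially no obstacle in this proof: it is routine case analysis. The only point that needs a separate (trivial) observation is the exclusion of \textsc{(E-Let2)} in the \textsc{(E-Let1)} case, which relies on the fact that values take no step --- a lemma that is needed anyway and is used implicitly throughout the metatheory.
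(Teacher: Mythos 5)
Your proof is correct and follows exactly the paper's stated strategy (induction on the first step judgment followed by inversion on the second), merely spelling out the routine case analysis that the paper leaves implicit. Nothing further is needed.
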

\begin{proof}
	By induction on $\jstep{e}{e'}{p}{p'}$ and then inversion on $\jstep{e}{e''}{q}{q'}$.
\end{proof}

\begin{theorem}[Preservation]\label{the:preservation}
	If $\Gamma=\overline{q}$, $\jstyping{\Gamma}{e}{S}$, $p \ge \potc{\emptyset}{\Gamma}$ and $\jstep{e}{e'}{p}{p'}$, then $\jstyping{p'}{e'}{S}$.
\end{theorem}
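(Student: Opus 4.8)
The plan is to prove the statement by induction on the derivation of $\jstyping{\Gamma}{e}{S}$, with an inner inversion on $\jstep{e}{e'}{p}{p'}$. Since $\Gamma$ contains only free potentials, $\potc{\emptyset}{\Gamma}$ is simply the closed sum of those potentials, and every sorting/scoping side condition is vacuous. I would first record two routine auxiliary facts. \emph{Normalization:} if $\Gamma$ is a context of free potentials and $p \ge \potc{\emptyset}{\Gamma}$, then $\jstyping{\Gamma}{e}{S}$ implies $\jstyping{p}{e}{S}$ — merge adjacent free potentials with \textsc{(S-Transfer)}, then add the slack $p - \potc{\emptyset}{\Gamma}$ with \textsc{(S-Relax)}. \emph{Step descent:} if $\jstep{e}{e'}{p}{p'}$ then $\jstep{e}{e'}{r}{r-(p-p')}$ for every $r \ge \max(0, p-p')$ (induction on the step); moreover, whenever $\jstyping{\Gamma'}{e}{S}$ and $e$ is not a value, the net cost $p-p'$ of its first step satisfies $p-p' \le \potc{\emptyset}{\Gamma'}$ (side induction on the typing derivation, bottoming out at \textsc{(T-Consume-P)}). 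Together these let me re-budget an evaluation step to exactly the resources supplied by any sub-context.

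The structural rules are absorbed by the induction itself: if the last rule is \textsc{(S-Subtype)}, \textsc{(S-Inst)}, \textsc{(S-Gen)}, \textsc{(S-Transfer)}, or \textsc{(S-Relax)}, I apply the IH to the typing premise — whose context is again a sequence of free potentials with total potential $\le p$ — and re-apply the same rule. The value-typing rules (\textsc{(T-SimpAtom)} on values, \textsc{(T-Var)}, \textsc{(T-Abs)}, \textsc{(T-Abs-Lin)}, \textsc{(T-Fix)}) are vacuous, since a value admits no step. For \textsc{(T-Consume-P)}/\textsc{(T-Consume-N)} the only step is \textsc{(E-Consume)}, giving $e' = e_0$ and $p' = p - c$; the premise $\jstyping{\Gamma_0}{e_0}{S}$ (where $\Gamma = \Gamma_0, c$, resp.\ $\Gamma, -c = \Gamma_0$) together with $p - c \ge \potc{\emptyset}{\Gamma_0}$ yields $\jstyping{p-c}{e_0}{S}$ by normalization.

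The remaining (syntax-directed) cases split into three kinds. \emph{Congruence} is \textsc{(T-Let)} with \textsc{(E-Let1)}: from $\jctxsharing{\Gamma}{\Gamma_1}{\Gamma_2}$, $\jstyping{\Gamma_1}{e_1}{S_1}$, $\jstyping{\Gamma_2,x{:}S_1}{e_2}{T_2}$ and $\jstep{e_1}{e_1'}{p}{p'}$, use step descent to obtain $\jstep{e_1}{e_1'}{p-\potc{\emptyset}{\Gamma_2}}{p'-\potc{\emptyset}{\Gamma_2}}$ (valid because $p-p' \le \potc{\emptyset}{\Gamma_1}$ and $p-\potc{\emptyset}{\Gamma_2} \ge \potc{\emptyset}{\Gamma_1}$), apply the IH to get $\jstyping{p'-\potc{\emptyset}{\Gamma_2}}{e_1'}{S_1}$, and rebuild with \textsc{(T-Let)} over the free-potential context $(p'-\potc{\emptyset}{\Gamma_2}),\Gamma_2$, which normalizes to $p'$. \emph{Substitution reductions} — \textsc{(E-Let2)}, \textsc{(E-App-Abs)}, \textsc{(E-App-Fix)} (and the $\mathsf{cons}$ branch of \textsc{(E-MatL-Cons)}) — invert the corresponding introduction typing down to a body judgment $\jstyping{\Gamma',y{:}T_y}{e_0}{T}$ (for $\mathsf{fix}$ also binding $f$, eliminated first as a non-scalar by clause (2) of \theoref{substitution}) and conclude by \theoref{substitution}; the well-formedness premises ($\jwftype{\Gamma}{T_2}$ in \textsc{(T-Let)}, $\jwftype{\Gamma}{T}$ in \textsc{(T-App)}, etc.) guarantee the bound variable does not occur in the result type, so the substitution induced on $S$ is the identity. \emph{Selection reductions} — \textsc{(E-Cond-True/False)}, \textsc{(E-MatL-Nil)}, \textsc{(E-MatL-Cons)} — invoke the branch premises of \textsc{(T-Cond)}/\textsc{(T-MatL)}: e.g.\ $a_0 = \etrue$ forces $\calI(a_0) = \top$, so $\jstyping{\Gamma,\calI(a_0)}{e_1}{T}$ collapses to $\jstyping{\Gamma}{e_1}{T}$ via \lemref{substprop}, while for \textsc{(E-MatL-Cons)} the refinement $\calI(a_0) = x_t + 1$ is exactly $\calI(\econs{v_h}{v_t}) = \calI(v_t) + 1$ after substituting $v_h,v_t$ for $x_h,x_t$.

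The step I expect to be the main obstacle is the resource accounting in the congruence case (and its smaller cousins inside the substitution cases): matching the small-step leftover $p'$ to a residual typing context whose free potential is apportioned exactly as the context-sharing premise demands. This is precisely what forces the two auxiliary facts above — step descent together with the bound $p-p' \le \potc{\emptyset}{\Gamma'}$ on the cost of a single step — and repeated shuffling of free potential between the context and the top-level annotation through \textsc{(S-Transfer)} and \textsc{(S-Relax)}. A secondary annoyance is that \theoref{substitution} threads $\calI(t)$ through both the residual context and the result type, so in the substitution cases one must check explicitly that these substitutions act trivially on $S$ and merely re-index free potential, which is where the well-formedness premises are used.
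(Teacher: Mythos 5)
Your proposal is correct and follows essentially the same route as the paper: induction on the typing derivation with inversion on the step, the Substitution lemma for the redex cases, \lemref{substprop} for the branch/path-condition cases, and \textsc{(S-Transfer)}/\textsc{(S-Relax)} shuffling to re-budget free potential. The only (immaterial) difference is in the congruence and \textsc{(S-Relax)} cases, where the paper re-budgets the step by invoking Progress together with determinism of the net cost (\propref{evaldeter}), whereas you package the same fact as a direct ``step descent'' lemma plus a bound on the first-step cost.
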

\begin{proof}
	By induction on $\jstyping{\Gamma}{e}{S}$:
	\begin{alignat}{2}\footnotesize
		\shortintertext{\bf\textsc{(T-Consume-P)}}
		& \text{SPS}~\Gamma=(\Gamma',c),e=\econsume{c}{e_0},c\ge 0 \\
		& \text{SPS}~S=T \\
		& \jstyping{\Gamma'}{e_0}{T} & \text{[premise]} \label{eq:pres:consumepe0} \\
		& \text{inv. on}~\jstep{e}{e'}{p}{p'} \\
		& e' = e_0, p' = p-c \ge \potc{\emptyset}{\Gamma} - c = \potc{\emptyset}{\Gamma'} \\
		& \jstyping{p'}{e_0}{T} & \text{[relax, \eqref{eq:pres:consumepe0}]}
		\shortintertext{\bf\textsc{(T-Consume-N)}}
		& \text{SPS}~e=\econsume{c}{e_0},c<0,S=T \\
		& \jstyping{\Gamma,-c}{e_0}{T} & \text{[premise]} \label{eq:pres:consumene0} \\
		& \text{inv. on}~\jstep{e}{e'}{p}{p'} \\
		& e' = e_0, p' = p -c \ge \potc{\emptyset}{\Gamma}-c \\
		& \jstyping{p'}{e_0}{T} & \text{[relax, \eqref{eq:pres:consumene0}]}
		\shortintertext{\bf\textsc{(T-Cond)}}
		& \text{SPS}~e=\econd{a_0}{e_1}{e_2},S=T \\
		& \jatyping{\Gamma}{a_0}{ \tbool } & \text{[premise]} \label{eq:pres:conde0} \\
		& \jstyping{\Gamma, \calI(a_0) }{e_1}{T} & \text{[premise]} \label{eq:pres:conde1} \\
		& \jstyping{\Gamma, \neg\calI(a_0)}{e_2}{T} & \text{[premise]} \label{eq:pres:conde2} \\
		& \text{inv. on}~\jstep{e}{e'}{p}{p'} \\
		& \textbf{case}~\jstep{e}{e_1}{p}{p} \\
		& \enskip a_0 = \etrue & \text{[premise]} \\
		& \enskip \calI(a_0) = \top \\
		& \enskip  \jprop{\Gamma}{\top} \\
		& \enskip \jstyping{\Gamma}{e_1}{T} & \text{[\lemref{substprop}, \eqref{eq:pres:conde1}]} \\
		& \enskip p \ge \potc{\emptyset}{\Gamma} & \text{[asm.]} \\
		& \enskip \jstyping{p}{e_1}{T} & \text{[relax]} \\
		& \textbf{case}~\jstep{e}{e_2}{p}{p} \\
		& \enskip a_0 = \efalse & \text{[premise]} \\
		& \enskip \text{similar to $a_0=\etrue$}
		\shortintertext{\bf\textsc{(T-MatL)}}
		& \text{SPS}~e=\ematl{a_0}{e_1}{x_h}{x_t}{e_2},S=T' \\
		& \jctxsharing{\Gamma}{\Gamma_1}{\Gamma_2} & \text{[premise]} \\
		& \quad \implies \potc{\emptyset}{\Gamma}=\potc{\emptyset}{\Gamma_1}+\potc{\emptyset}{\Gamma_2} \label{eq:pres:matlsplit} \\
		& \jatyping{\Gamma_1}{a_0}{\tlist{T}} & \text{[premise]} \label{eq:pres:matle0} \\
		& \jstyping{\Gamma_2,\calI(a_0)=0}{e_1}{T'} & \text{[premise]} \label{eq:pres:matle1} \\
		&\jstyping{\Gamma_2,x_h:T,x_t:\tlist{T},\calI(a_0)=x_t+1}{e_2}{T'} & \text{[premise]} \label{eq:pres:matle2} \\
		& \text{inv. on}~\jstep{e}{e'}{p}{p'} \\
		& \textbf{case}~\jstep{e}{e_1}{p}{p} \\
		& \enskip a_0 = \enil & \text{[premise]} \\
      & \enskip \calI(a_0) = 0 \\
      & \enskip \jprop{\Gamma_2}{\calI(a_0) = 0} \\
      & \enskip \jstyping{\Gamma_2}{e_1}{T'} & \text{[\lemref{substprop}, \eqref{eq:pres:matle1}]} \\
      & \enskip p \ge \potc{\emptyset}{\Gamma_1} + \potc{\emptyset}{\Gamma_2} & \text{[asm., \eqref{eq:pres:matlsplit}]} \\
      & \enskip  \jstyping{p}{e_1}{T'} & \text{[relax]} \\
		& \textbf{case}~\jstep{e}{\subst{v_h,v_t}{x_h,x_t}{e_2}}{p}{p} \\
		& \enskip a_0 = \econs{v_h}{v_t} & \text{[premise]} \\
          & \enskip \calI(a_0) = \calI(v_t) + 1 \\
          & \enskip \jstyping{\Gamma_{11}}{v_h}{T}, \jatyping{\Gamma_{12}}{v_t}{\tlist{T}}, \jctxsharing{\Gamma_1}{\Gamma_{11}}{\Gamma_{12}} & \text{[inv.]} \label{eq:pres:matlinv} \\
          & \enskip \sharing(\Gamma_2,\Gamma_1) ,\calI(a_0) = \calI(v_t) + 1 \vdash & \text{[Thm.~\ref{the:substitution},} \\
          & \enskip\quad \subst{v_h,v_t}{x_h,x_t}{e_2} \dblcolon T' & \text{\eqref{eq:pres:matle2}, \eqref{eq:pres:matlinv}]}  \\
& \enskip \jprop{\Gamma}{\calI(a_0)=\calI(v_t)+1} \\
& \enskip \Gamma \vdash \subst{v_h,v_t}{x_h,x_t}{e_2} \dblcolon T' & \text{[\lemref{substprop}]} \\
		& \enskip p \ge \potc{\emptyset}{\Gamma} & \text{[asm., \eqref{eq:pres:matlsplit}]} \\
		& \enskip \jstyping{p}{e'}{T'} & \text{[relax]}
		\shortintertext{\bf\textsc{(T-Let)}}
		& \text{SPS}~e=\elet{e_1}{x}{e_2},S=T_2 \\
		& \jctxsharing{\Gamma}{\Gamma_1}{\Gamma_2} \\
		& \quad \implies \potc{\emptyset}{\Gamma} = \potc{\emptyset}{\Gamma_1}+\potc{\emptyset}{\Gamma_2} & \text{[premise]} \label{eq:pres:letsplit} \\
		& \jstyping{\Gamma_1}{e_1}{S_1} & \text{[premise]} \label{eq:pres:lete1} \\
		& \jstyping{\Gamma_2,\bindvar{x}{S_1}}{e_2}{T_2} & \text{[premise]} \label{eq:pres:lete2} \\
		& \text{inv. on}~\jstep{e}{e'}{p}{p'} \\
		& \textbf{case}~\jstep{e}{\elet{e_1'}{x}{e_2}}{p}{p'} \\
		& \enskip \jstep{e_1}{e_1'}{p}{p'} & \text{[premise]} \label{eq:pres:lete1step} \\
		& \enskip p-\potc{\emptyset}{\Gamma_2} \ge \potc{\emptyset}{\Gamma_1} & \text{[asm., \eqref{eq:pres:letsplit}]} \label{eq:pres:letindhyp} \\
		& \enskip \text{Thm.~\ref{the:progress} on \eqref{eq:pres:lete1} with \eqref{eq:pres:letindhyp}} \\
		& \enskip \jstep{e_1}{e_1'}{p-\potc{\emptyset}{\Gamma_2}}{p'-\potc{\emptyset}{\Gamma_2}} & \text{[Prop.~\ref{prop:evaldeter}, \eqref{eq:pres:lete1step}]} \label{eq:pres:lete1step2} \\
		& \enskip \text{ind. hyp. on \eqref{eq:pres:lete1} with \eqref{eq:pres:lete1step2}, \eqref{eq:pres:letindhyp}} \\
		& \enskip \jstyping{p'-\potc{\emptyset}{\Gamma_2}}{e_1'}{S_1} \\
		& \enskip \jstyping{\sharing(p'-\potc{\emptyset}{\Gamma_2},\Gamma_2)}{\elet{e_1'}{x}{e_2}}{T_2} & \text{[typing]} \\
		& \enskip \jstyping{p'}{e'}{T_2} & \text{[transfer]} \\
		& \textbf{case}~\jstep{e}{\subst{e_1}{x}{e_2}}{p}{p} \\
		& \enskip \jval{e_1} & \text{[premise]} \label{eq:pres:lete1val} \\
		& \enskip \jstyping{\sharing(\Gamma_1,\Gamma_2)}{\subst{e_1}{x}{e_2}}{T_2} & \text{[Thm.~\ref{the:substitution}, \eqref{eq:pres:lete2}]} \\
		& \enskip \jstyping{\potc{\emptyset}{\Gamma_1}+\potc{\emptyset}{\Gamma_2}}{e'}{T_2} & \text{[transfer]} \\
		& \enskip \jstyping{p}{e'}{T_2 } & \text{[relax]}
		\shortintertext{\bf\textsc{(T-App-SimpAtom)}}
		& \text{SPS}~e=\eapp{\hat{a}_1}{a_2},S=T \\
		& \jctxsharing{\Gamma}{\Gamma_1}{\Gamma_2} \\
		& \quad \implies \potc{\emptyset}{\Gamma} = \potc{\emptyset}{\Gamma_1}+\potc{\emptyset}{\Gamma_2} & \text{[premise]} \label{eq:pres:appasplit} \\
		& \jstyping{\Gamma_1}{\hat{a}_1}{\tarrowm{x}{\trefined{B_x}{\psi_x}{\phi_x}}{T}{1}} & \text{[premise]} \label{eq:pres:appae1} \\
		& \jstyping{\Gamma_2}{a_2}{\trefined{B_x}{\psi_x}{\phi_x}} & \text{[premise]} \label{eq:pres:appae2} \\
		& \text{inv. on}~\jstep{e}{e'}{p}{p'} \\
		& \textbf{case}~\jstep{e}{\subst{a_2}{x}{e_0}}{p}{p} \\
		& \enskip \hat{a}_1=\eabs{x}{e_0}, \jval{a_2} & \text{[premise]} \\
		& \enskip \text{inv. on}~\eqref{eq:pres:appae1} \\
		& \enskip \jstyping{\Gamma_1, \bindvar{x}{\trefined{B_x}{\psi_x}{\phi_x}}}{e_0}{T} \Omit{\potc{\emptyset}{\Gamma_1} \ge \phi_1} \label{eq:pres:appainvlambda} \\
		& \enskip \jstyping{\Gamma}{\subst{a_2}{x}{e_0}}{\subst{\calI(a_2)}{x}{T}} & \text{[Thm.~\ref{the:substitution}, \eqref{eq:pres:appainvlambda}]} \\
        & \enskip p \ge \potc{\emptyset}{\Gamma} & \text{[asm.]} \\
		& \enskip \jstyping{p}{e'}{T} & \text{[relax]} \\
		& \textbf{case}~\jstep{e}{\subst{e_1,a_2}{f,x}{e_0}}{p}{p} \\
		& \enskip e_1=\efix{f}{x}{e_0}, \jval{a_2} & \text{[premise]} \\
		& \enskip \text{similar to $e_1=\eabs{x}{e_0}$}
		\shortintertext{\bf\textsc{(T-App)}}
		& \text{SPS}~e=\eapp{\hat{a}_1}{\hat{a}_2},S=T \\
		& \jctxsharing{\Gamma}{\Gamma_1}{\Gamma_2} \\
		& \quad \implies \potc{\emptyset}{\Gamma} = \potc{\emptyset}{\Gamma_1}+\potc{\emptyset}{\Gamma_2} & \text{[premise]} \label{eq:pres:appsplit} \\
		& \jstyping{\Gamma_1}{\hat{a}_1}{\tarrowm{x}{T_x}{T}{1}} & \text{[premise]} \label{eq:pres:appe1} \\
		& \jstyping{\Gamma_2}{\hat{a}_2}{T_x} & \text{[premise]} \label{eq:pres:appe2} \\
		& \text{inv. on}~\jstep{e}{e'}{p}{p'} \\
		& \textbf{case}~\jstep{e}{\subst{\hat{a}_2}{x}{e_0}}{p}{p} \\
		& \enskip \hat{a}_1=\eabs{x}{e_0}, \jval{\hat{a}_2} & \text{[premise]} \\
		& \enskip \text{inv. on}~\eqref{eq:pres:appe1} \\
		& \enskip \jstyping{\Gamma_1, \bindvar{x}{T_x}}{e_0}{T} \Omit{\potc{\emptyset}{\Gamma_1} \ge \phi_1} \label{eq:pres:appinvlambda} \\
		& \enskip \jstyping{\Gamma}{\subst{\hat{a}_2}{x}{e_0}}{T} & \text{[Thm.~\ref{the:substitution}, \eqref{eq:pres:appinvlambda}]} \\
        & \enskip p \ge \potc{\emptyset}{\Gamma} & \text{[asm.]} \\
		& \enskip \jstyping{p}{e'}{T} & \text{[relax]} \\
		& \textbf{case}~\jstep{e}{\subst{e_1,\hat{a}_2}{f,x}{e_0}}{p}{p} \\
		& \enskip e_1=\efix{f}{x}{e_0}, \jval{\hat{a}_2} & \text{[premise]} \\
		& \enskip \text{similar to $e_1=\eabs{x}{e_0}$}
		%
		%
		\shortintertext{\bf\textsc{(S-Inst)}}
		& \text{SPS}~S=\subst{\tpot{\tsubset{B}{\psi}}{\phi}}{\alpha}{S'} \\
		& \jstyping{\Gamma}{e}{\forall\alpha.S' } & \text{[premise]} \label{eq:pres:instprem} \\
		& \text{ind. hyp. on \eqref{eq:pres:instprem}} \\
		& \jstyping{p'}{e'}{\forall\alpha.S'} \\
		& \jstyping{p'}{e'}{\subst{\tpot{\tsubset{B}{\psi}}{\phi}}{\alpha}{S'}} & \text{[typing]}
		\shortintertext{\bf\textsc{(S-Subtype)}}
		& \text{SPS}~S=T_2 \\
		& \jstyping{\Gamma}{e}{T_1} & \text{[premise]} \label{eq:pres:subtypeprem} \\
		& \jsubty{\Gamma}{T_1}{T_2} & \text{[premise]} \label{eq:pres:subtyperel} \\
		& \text{ind. hyp. on \eqref{eq:pres:subtypeprem}} \\
		& \jstyping{p'}{e'}{T_1} \\
		& \jstyping{p'}{e'}{T_2} & \text{[typing]}
		\shortintertext{\bf\textsc{(S-Transfer)}}
		& \jstyping{\Gamma'}{e}{S}, \jprop{\Gamma}{\pot{\Gamma}=\pot{\Gamma'}} & \text{[premise]} \label{eq:pres:transprem} \\
		& \Gamma' = \overline{q'} \wedge \potc{\emptyset}{\Gamma}=\potc{\emptyset}{\Gamma'} \\
		& p \ge \potc{\emptyset}{\Gamma'} \label{eq:pres:transindhyp} \\
		& \text{ind. hyp. on \eqref{eq:pres:transprem} with \eqref{eq:pres:transindhyp}} \\
		& \jstyping{p'}{e'}{S}
		\shortintertext{\bf\textsc{(S-Relax)}}
		& \text{SPS}~\Gamma=(\Gamma',\phi'), S=\tpot{R}{\phi+\phi'} \\
		& \jstyping{\Gamma'}{e}{\tpot{R}{\phi}}  & \text{[premise]} \label{eq:pres:relaxprem} \\
		& p -\phi' \ge \potc{\emptyset}{\Gamma'} & \text{[asm.]} \label{eq:pres:relaxindhyp} \\
		& \text{Thm.~\ref{the:progress} on \eqref{eq:pres:relaxprem} with \eqref{eq:pres:relaxindhyp}} \\
		& \jstep{e}{e'}{p-\phi'}{p'-\phi'} & \text{[Prop.~\ref{prop:evaldeter}, asm.]} \label{eq:pres:relaxstep} \\
		& \text{ind. hyp. on \eqref{eq:pres:relaxprem} with \eqref{eq:pres:relaxstep}, \eqref{eq:pres:relaxindhyp}} \\
		& \jstyping{p'-\phi'}{e'}{\tpot{R}{\phi}} \\
		& \jstyping{p'-\phi',\phi'}{e'}{\tpot{R}{\phi+\phi'}} & \text{[relax]} \\
		& \jstyping{p'}{e'}{\tpot{R}{\phi+\phi'}} & \text{[transfer]}
	\end{alignat}
\end{proof}

\section{Synthesis Rules}
\label{sec:appendixproofs:synth}

\subsection{Program Templates}

\begin{eqnarray*}
  D & \Coloneqq & \cdot \mid D; x \gets e \\ 
	\hole{e} & \Coloneqq & e \mid \ehole \mid \eapp{x}{\ehole} 
  \mid \econd{x}{\ehole}{\ehole} \mid \ematl{x}{\ehole}{x_h}{x_t}{\ehole}
  \mid \elets{D}{\hole{e}} \\
\end{eqnarray*}

\subsection{Types}

\begin{eqnarray*}
  T & \Coloneqq & \tpot{R}{\phi} \mid \tbot
\end{eqnarray*}

\subsection{Type Wellformedness: $\jwftype{\Gamma}{T}$}

\begin{mathpar}\footnotesize
	\inferrule[Wf-TUnk]
	{ }
	{ \jwftype{\Gamma}{\tbot} }
\end{mathpar}

\subsection{Restricted denotation \jproppara{\Gamma}{\psi}{X}}

\begin{align*}
	\condcpara{V}{X}{\Gamma,x : \tbot } & = 
     \begin{cases}
     \bot \quad\text{if}\ x\in X\\
     \condcpara{V}{X}{\Gamma} \quad\text{otherwise}
     \end{cases}\\
  \condcpara{V}{X}{\Gamma} &= \condc{V}{\Gamma} \quad\text{otherwise}
\end{align*}

\[
\jproppara{\Gamma}{\theta}{X} \defeq \forall V \in \interp{\Gamma}. \condcpara{V}{X}{\Gamma} \implies \interp{\theta}^\Gamma_\bbB(V)
\]

\subsection{Subtyping: $\jsubty{\Gamma}{T}{T}$}

\begin{mathpar}\footnotesize
  \inferrule[Sub-TBot]
	{  }
	{ \jsubty{\Gamma}{\tbot}{T} }
  \and
	\inferrule[Sub-Refined]
	{ \jsubty{\Gamma}{B_1}{B_2} \\ \jproppara{\Gamma,\nu:B_1}{\psi_1 \implies \psi_2}{\varsof{\psi_1 \implies \psi_2}} }
	{ \jsubty{\Gamma}{\tsubset{B_1}{\psi_1}}{\tsubset{B_2}{\psi_2}} }  
\end{mathpar}

\subsection{Atomic synthesis: $\jafill{\Gamma}{\hole{e}}{T}{\elets{D}{a}}$}

\begin{mathpar}\footnotesize
	\inferrule[ASyn-Var]
	{ \jstyping{\Gamma}{x}{T} }
	{ \jasynth{\Gamma}{T}{\elets{\cdot}{x}} }
	\and  
	\inferrule[ASyn-True]
	{ \jstyping{\Gamma}{\etrue}{T} }
	{ \jasynth{\Gamma}{T}{\elets{\cdot}{\etrue}} }
	\and
	\inferrule[ASyn-False]
	{ \jstyping{\Gamma}{\efalse}{T} }
	{ \jasynth{\Gamma}{T}{\elets{\cdot}{\efalse}} }
	\and
	\inferrule[ASyn-Nil]
	{ \jstyping{\Gamma}{\enil}{T} }
	{ \jasynth{\Gamma}{T}{\elets{\cdot}{\enil}} }
	\and    
	\inferrule[ASyn-Cons]{ \jasynth{\Gamma}{T}{\elets{D_h}{a_h}} \\ \jafill{\Gamma}{\elets{D_h}{\econs{a_h}{\ehole}}}{\trefined{\tlist{T}}{\psi}{\phi}}{\elets{D}{a}}  }{ \jasynth{\Gamma}{\trefined{\tlist{T}}{\psi}{\phi}}{\elets{D}{a}} }
	\and  
	\inferrule[ASyn-App]
	{ 
    \jasynth{\Gamma}{\tarrowm{\_}{\tbot}{T}{1}}{\elets{D_1}{x}} \\    
    \jafill{\Gamma}{\elets{D_1}{\eapp{x}{\ehole}}}{T}{\elets{D}{x'}} \\
  }
	{ \jasynth{\Gamma}{T}{\elets{D}{x'}} }
	\and
	\inferrule[AFill-Cons]{ \jatyping{\Gamma}{a_h}{T} \\ \jasynth{\Gamma}{\trefined{\tlist{T}}{\psi'}{\phi'}}{\elets{D}{a_t}} \\ \jstyping{\Gamma}{\fold{\elets{D}{\econs{a_h}{a_t}}}}{\trefined{\tlist{T}}{\psi}{\phi}} \\ \psi' = \subst{\nu+1}{\nu}{\psi} \\ \phi' = \subst{\nu+1}{\nu}{\phi} }{ \jafill{\Gamma}{\econs{a_h}{\ehole}}{\trefined{\tlist{T}}{\psi}{\phi}}{\elets{D}{\econs{a_h}{a_t}}} }

  \and    
	\inferrule[AFill-App-SimpAtom]
	{  \jstyping{\Gamma}{x}{  \tarrowm{y}{T_1}{T'}{1}  } \\
     T_1~\textsf{scalar}  \\
    \jafill{\Gamma}{\ehole}{T_1}{\elets{D}{a}} \\  
    \jstyping{\Gamma}{\fold{\elets{D}{\eapp{x}{a}}}}{T}   }
	{ \jafill{\Gamma,1}{\eapp{x}{\ehole}}{T}{\elets{D; x' \gets \econsume{1}{\eapp{x}{a}}}{x'} }}
	\and    
	\inferrule[AFill-App]
	{ \jstyping{\Gamma}{x}{ \tarrowm{\_}{T_1}{T}{1} } \\  
    T_1~\text{non-scalar} \\
    \jsynth{\Gamma}{T_1}{\hat{a}} \\ \jstyping{\Gamma}{\eapp{x}{\hat{a}}}{T} }
	{ \jafill{\Gamma,1}{\eapp{x}{\ehole}}{T}{\elets{x' \gets \econsume{1}{\eapp{x}{\hat{a}}}}{x'}} }
	\and  
	\inferrule[AFill-Let]
	{ \jctxsharing{\Gamma}{\Gamma_1}{\Gamma_2} \\
    \jstyping{\Gamma_1}{e_1}{S_1} \\ 
    \jafill{\Gamma_2,x:S_1}{\elets{D}{\hole{e}_2}}{T}{\elets{D_2}{a}}  }
	{ \jafill{\Gamma}{\elets{x \gets e_1; D}{\hole{e}_2}}{T}{\elets{x \gets e_1; D_2}{a}} }
	\and  
	\inferrule[AFill-Let-Emp]
	{ \jafill{\Gamma}{\hole{e}}{T}{\elets{D}{a}}  }
	{ \jafill{\Gamma}{\elets{\cdot}{\hole{e}}}{T}{\elets{D}{a}} }  
	\and  
	\inferrule[AFill-Transfer]
	{\jprop{\Gamma}{ \pot{\Gamma} = \pot{\Gamma'}} \\
    \jafill{\Gamma'}{\hole{e}}{T}{\elets{D}{a}}  }
	{ \jafill{\Gamma}{\hole{e}}{T}{\elets{D}{a}} }    
\end{mathpar}

\subsection{Synthesis: $\jfill{\Gamma}{\hole{e}}{S}{e}$}

\begin{mathpar}\footnotesize
	\inferrule[Syn-Imp]
	{ \jprop{\Gamma}{\bot} }
	{ \jsynth{\Gamma}{T}{\eimp} }    
  \and  
	\inferrule[Syn-Cond]
	{ 
    \jasynth{\Gamma}{\tbool}{\elets{D}{x}} \\ 
    \jfill{\Gamma}{\elets{D}{\econd{x}{\ehole}{\ehole}}}{T}{e} }
  { \jsynth{\Gamma}{T}{e}}
	\and
	\inferrule[Syn-MatL]
	{ 
    \jwftype{\Gamma}{T}\\  
    \jasynth{\Gamma}{\tlist{T}}{\elets{D}{x}} \\
	  \jfill{\Gamma}{\elets{D}{\ematl{x}{\ehole}{x_h}{x_t}{\ehole}}}{T}{e} }
	{ \jsynth{\Gamma}{T}{e} }    
	\and
	\inferrule[Syn-Fix]
  { \jsynth{\Gamma,f:\p{\tarrow{x}{T_x}{T}},x:T_x}{T}{e} \\ \jctxsharing{\Gamma}{\Gamma}{\Gamma} \\ \jwftype{\Gamma}{( \tarrow{x}{T_x}{T} ) } }
  { \jsynth{\Gamma}{\p{\tarrow{x}{T_x}{T}}}{\efix{f}{x}{e}} }
	\and
	\inferrule[Syn-Abs-Lin]
	{ \jsynth{\Gamma,x:T_x}{T}{e} \\ \jwftype{\Gamma}{T_x} }
	{ \jsynth{\Gamma}{\p{\tarrowm{x}{T_x}{T}{1}}}{\eabs{x}{e}} }
  \and
	\inferrule[Syn-Gen]
	{ \jsharing{\Gamma,\alpha}{S}{S}{S} \\ \jsynth{\Gamma,\alpha}{S}{e} \\ \jval{e} }
	{ \jsynth{\Gamma}{\forall\alpha.S}{e} }
	\and    
	\inferrule[Fill-Cond]
	{ \jatyping{\Gamma}{x}{\tbool} \\
	\jsynth{\Gamma, x  }{T}{e_1} \\ 
    \jsynth{\Gamma, \neg x  }{T}{e_2}  }
  { \jfill{\Gamma}{\econd{x}{\ehole}{\ehole}}{T}{\econd{x}{e_1}{e_2}}}	  
	\and  
	\inferrule[Fill-MatL]
	{  \jctxsharing{\Gamma}{\Gamma_1}{\Gamma_2}  \\ \jatyping{\Gamma_1}{x}{\tlist{T}} \\
    \jsynth{\Gamma_2, x = 0}{T}{e_1} \\ 
    \jsynth{\Gamma_2, x_h: T, x_t:\tlist{T}, x = x_t+1}{T}{e_2}  }
  { \jfill{\Gamma}{\ematl{x}{\ehole}{x_h}{x_t}{\ehole}}{T}{\ematl{x}{e_1}{x_h}{x_t}{e_2}}}	  
  \and
	\inferrule[Fill-Let]
	{ \jctxsharing{\Gamma}{\Gamma_1}{\Gamma_2} \\
    \jstyping{\Gamma_1}{e_1}{S_1} \\ 
    \jfill{\Gamma_2,x:S_1}{\elets{D}{\hole{e}_2}}{T}{e_2} \\ \jwftype{\Gamma}{T}  }
	{ \jfill{\Gamma}{\elets{D; x \gets e_1}{\hole{e}_2}}{T}{\elet{e_1}{x}{e_2}} }
	\and  
	\inferrule[Fill-Let-Emp]
	{ \jfill{\Gamma}{\hole{e}}{T}{e}  }
	{ \jfill{\Gamma}{\elets{\cdot}{\hole{e}}}{T}{e} }  
	\and  
	\inferrule[Syn-Atom]
	{ \jasynth{\Gamma}{T}{\elets{D}{a}}  }
	{ \jsynth{\Gamma}{T}{\fold{\elets{D}{a}}} }    
\end{mathpar}

\section{Soundness of Synthesis}
\label{sec:appendixproofs:relativesound}

\begin{proposition}\label{prop:insert-tick}
  If $\jstyping{\Gamma}{\fold{\elets{D}{e}}}{T}$, then $\jstyping{\Gamma,c}{\fold{\elets{D; x' \gets \econsume{c}{ e}}{x'}}}{T}$ for $c \ge 0$.
\end{proposition}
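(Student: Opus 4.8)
The plan is to prove Proposition~\ref{prop:insert-tick} by induction on the length of the binding sequence $D$, using the identity
\[
  \fold{\elets{D; x' \gets \econsume{c}{e}}{x'}} \;=\; \fold{\elets{D}{\elet{\econsume{c}{e}}{x'}{x'}}},
\]
which shows that the transformation rewrites only the innermost body $e$ of the nested let into $\elet{\econsume{c}{e}}{x'}{x'}$, while prepending a single free-potential binding $c$ to the context (here $x'$ is fresh, and the argument will be uniform in whether $T=\tpot{R}{\phi}$ or $T=\tbot$). A small auxiliary induction on typing derivations, which I carry out first, shows that a $\mathsf{let}$-expression is never assigned a type schema: of the rules whose conclusion can match a $\mathsf{let}$, namely \textsc{(T-Let)}, \textsc{(S-Subtype)}, \textsc{(S-Relax)}, \textsc{(S-Transfer)}, and \textsc{(S-Inst)}, the first three yield a non-schema type outright, \textsc{(S-Transfer)} does so by the induction hypothesis, and \textsc{(S-Inst)} is then impossible since its premise would need a $\forall$-type (and \textsc{(S-Gen)} cannot apply because a $\mathsf{let}$ is not a value).

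For the base case $D=\cdot$ we are given $\jstyping{\Gamma}{e}{T}$ and must derive $\jstyping{\Gamma,c}{\elet{\econsume{c}{e}}{x'}{x'}}{T}$. Since $c\ge 0$, rule \textsc{(T-Consume-P)} gives $\jstyping{\Gamma,c}{\econsume{c}{e}}{T}$; routing all potential (including $c$) to the left component of a context split $\jctxsharing{\Gamma,c}{(\Gamma,c)}{\Gamma^{0}}$, where $\Gamma^{0}$ zeroes every outermost potential and multiplicity of $\Gamma$, we type the body with \textsc{(T-Var)} as $\jstyping{\Gamma^{0},x'{:}T}{x'}{T}$, and \textsc{(T-Let)} combines the two (its side condition $\jwftype{\Gamma,c}{T}$ follows by weakening from well-formedness of $T$ under $\Gamma$).

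For the inductive step $D=(x_1\gets d_1; D')$, the target is $\elet{d_1}{x_1}{\fold{\elets{D'}{\elet{\econsume{c}{e}}{x'}{x'}}}}$ and the hypothesis types $\elet{d_1}{x_1}{\fold{\elets{D'}{e}}}$; I invert on its last rule, which by the auxiliary observation is one of \textsc{(T-Let)}, \textsc{(S-Subtype)}, \textsc{(S-Relax)}, \textsc{(S-Transfer)}. The structural cases are uniform: apply the induction hypothesis to the relevant premise and re-apply the same rule, using weakening for the subtyping/sorting side conditions, $\pot{\Gamma,c}=\pot{\Gamma',c}$ for \textsc{(S-Transfer)}, and the soundness of permuting the constant binding $c$ past an adjacent binding for \textsc{(S-Relax)}. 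In the \textsc{(T-Let)} case, from $\jctxsharing{\Gamma}{\Gamma_1}{\Gamma_2}$, $\jstyping{\Gamma_1}{d_1}{S_1}$, and $\jstyping{\Gamma_2,x_1{:}S_1}{\fold{\elets{D'}{e}}}{T}$, the induction hypothesis on the last premise yields $\jstyping{\Gamma_2,x_1{:}S_1,c}{\fold{\elets{D'}{\elet{\econsume{c}{e}}{x'}{x'}}}}{T}$; after moving $c$ in front of $x_1{:}S_1$ we reassemble with \textsc{(T-Let)} under the extended split $\jctxsharing{\Gamma,c}{\Gamma_1}{(\Gamma_2,c)}$.

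The genuinely computational content is the base case, which is a three-rule derivation; the main obstacle is rather the bookkeeping around the interleaved structural rules. Concretely, one needs the no-schema-type lemma so that the induction hypothesis (stated for a fixed non-schema result type $T$) applies to the premises of the structural cases, and one needs weakening and free-potential exchange available as routine structural properties of \typesys so that each of \textsc{(S-Subtype)}, \textsc{(S-Relax)}, \textsc{(S-Transfer)} and the \textsc{(T-Let)} reassembly go through once the extra $c$ is introduced.
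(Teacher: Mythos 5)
Your proposal is correct and follows essentially the same route as the paper's own proof: induction on the length of $D$, with the base case assembled from \textsc{(T-Consume-P)} and \textsc{(T-Let)}, and the inductive step obtained by inverting the outer \textsc{(T-Let)}, applying the induction hypothesis to the body, and reassembling (the paper uses \textsc{(S-Transfer)} where you shuffle the free potential $c$ past the binding). Your extra care about the interleaved structural rules \textsc{(S-Subtype)}, \textsc{(S-Relax)}, \textsc{(S-Transfer)} and the no-schema observation makes explicit what the paper's one-line ``by inversion'' silently assumes, but it is the same argument.
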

\begin{proof}
  By induction on the length of $D$.
  \begin{itemize}
    \item $D=\cdot$: We have $\fold{\elets{\cdot}{e}} = e$ and $\fold{\elets{x' \gets \econsume{c}{e}}{x'}} = \elet{\econsume{c}{e}}{x'}{x'}$.
    By $\jstyping{\Gamma}{e}{T}$ we know $\jstyping{\Gamma,c}{\econsume{c}{e}}{T}$ by \textsc{(T-Consume-P)}.
    Therefore by \textsc{(T-Let)} we derive $\jstyping{\Gamma,c}{\elet{\econsume{c}{e}}{x'}{x'}}{T}$.
    
    \item $D = x_1 \gets e_1; D'$:
    We have $\fold{\elets{x_1 \gets e_1;D'}{e}} = \elet{e_1}{x_1}{\fold{\elets{D'}{e}}}$.
    By inversion on
    \[
    \jstyping{\Gamma}{\elet{e_1}{x_1}{\fold{\elets{D'}{e}}}}{T}
    \] we know there exist $\Gamma_1,\Gamma_2,S_1$ such that $\jctxsharing{\Gamma}{\Gamma_1}{\Gamma_2}$, $\jstyping{\Gamma_1}{e_1}{S_1}$ and $\jstyping{\Gamma_2,\bindvar{x_1}{S_1}}{\fold{\elets{D'}{e}}}{T}$.
    Thus by I.H. we have $\jstyping{\Gamma_2,\bindvar{x_1}{S_1},c}{\fold{\elets{D';x'\gets\econsume{c}{e}}{x'}}}{T}$.
    Again by \textsc{(T-Let)} and \textsc{(T-Transfer)} we derive $\jstyping{\Gamma,c}{\elet{e_1}{x_1}{\fold{\elets{D';x' \gets \econsume{c}{e}}{x'}}}}{T}$, i.e., $\jstyping{\Gamma,c}{\fold{\elets{D;x'\gets\econsume{c}{e}}{x'}}}{T}$.
  \end{itemize}
\end{proof}

\begin{lemma}\label{lem:atomic-synth}
	If \jafill{\Gamma}{\hole{e}}{T}{\elets{D}{a}}, 
  then \jstyping{\Gamma}{\fold{\elets{D}{a}}}{T}.
\end{lemma}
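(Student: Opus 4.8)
\textbf{Proof plan for Lemma~\ref{lem:atomic-synth}.}
The plan is to prove the two mutually recursive statements (the synthesis judgment \jsynth{\Gamma}{S}{e} implies \jstyping{\Gamma}{e}{S}, i.e.\ Theorem~\ref{Thm:SynthSound}, and the atomic synthesis judgment in this lemma) simultaneously by induction on the derivation of the synthesis/atomic-synthesis judgment. For this lemma specifically we induct on the derivation of \jafill{\Gamma}{\hole{e}}{T}{\elets{D}{a}}, appealing to the soundness theorem for \jsynth{} on the subderivations that synthesize non-atomic subexpressions (e.g.\ the \hat{a} argument in \textsc{(AFill-App)}), and conversely Theorem~\ref{Thm:SynthSound} appeals to this lemma through \textsc{(Syn-Atom)}. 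So the real statement being proved is the conjunction, and the reader should understand ``by mutual induction''.

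The key steps, case by case on the last rule of the atomic-synthesis derivation: For the base rules \textsc{(ASyn-Var)}, \textsc{(ASyn-True)}, \textsc{(ASyn-False)}, \textsc{(ASyn-Nil)}, we have $D = \cdot$ so $\fold{\elets{\cdot}{a}} = a$ and the premise is literally \jstyping{\Gamma}{a}{T}, done. For \textsc{(ASyn-App)}: the first premise gives, by the induction hypothesis, \jstyping{\Gamma}{\fold{\elets{D_1}{x}}}{\tarrowm{\_}{\tbot}{T}{1}}; the second premise, handled by the \textsc{(AFill-*)} rules, yields the result — so the content is really in the \textsc{(AFill-App)}, \textsc{(AFill-App-SimpAtom)}, \textsc{(AFill-Let)}, \textsc{(AFill-Let-Emp)}, \textsc{(AFill-Transfer)} cases. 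For \textsc{(AFill-Let)}: unfold \elets{x \gets e_1;D}{\hole{e}_2} into \elet{e_1}{x}{\elets{D}{\hole{e}_2}}; the induction hypothesis on the last premise gives \jstyping{\Gamma_2,x:T_1}{\fold{\elets{D_2}{a}}}{T}, then combine with \jstyping{\Gamma_1}{e_1}{T_1} via \textsc{(T-Let)} (using $\jctxsharing{\Gamma}{\Gamma_1}{\Gamma_2}$), observing that $\fold{\elets{x\gets e_1;D_2}{a}} = \elet{e_1}{x}{\fold{\elets{D_2}{a}}}$. For \textsc{(AFill-Transfer)}: apply the induction hypothesis under $\Gamma'$ and then \textsc{(S-Transfer)}. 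For \textsc{(AFill-App)} and \textsc{(AFill-App-SimpAtom)}, the induction hypothesis (and, for \textsc{(AFill-App)}, Theorem~\ref{Thm:SynthSound} on the non-scalar argument premise) gives us that the underlying application \eapp{x}{\hat{a}} (resp.\ $\fold{\elets{D}{\eapp{x}{a}}}$) is well-typed at $T$; the only new ingredient is wrapping it in a \econsumename and binding it to a fresh variable $x'$, which is exactly Proposition~\ref{prop:insert-tick} applied with $c = 1$ — this converts \jstyping{\Gamma}{\fold{\elets{D}{\eapp{x}{a}}}}{T} into \jstyping{\Gamma,1}{\fold{\elets{D;x'\gets\econsume{1}{\eapp{x}{a}}}{x'}}}{T}, matching the context $\Gamma,1$ in the conclusion.

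I would also need to check that the typing-side premises embedded in the rules (such as \jstyping{\Gamma}{\fold{\elets{D}{\eapp{x}{a}}}}{T} in \textsc{(AFill-App-SimpAtom)}, or the \jstyping{\Gamma}{x}{\cdots} premises) are discharged directly since they are already typing judgments, and that well-formedness side-conditions propagate; these are routine. The main obstacle is getting the mutual induction set up cleanly — making sure that every recursive synthesis premise is at a structurally smaller derivation so that invoking Theorem~\ref{Thm:SynthSound} (or the atomic-synthesis IH) is legitimate, and handling the ``flat let'' bookkeeping so that the \fold{} of a concatenated binding sequence agrees with the nested let-expression produced by the typing rules. Once the \fold/\elets\ correspondence is pinned down and Proposition~\ref{prop:insert-tick} is in hand, each case reduces to re-applying the corresponding typing rule (\textsc{T-Let}, \textsc{T-App}, \textsc{T-App-SimpAtom}, \textsc{S-Transfer}) to the outputs of the induction hypotheses.
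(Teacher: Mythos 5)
Your proposal is correct and matches the paper's proof: induction on the atomic-synthesis derivation, with the base rules discharged by $\fold{\elets{\cdot}{a}}=a$, the let and transfer cases by \textsc{(T-Let)} and \textsc{(S-Transfer)}, and the application cases by exactly Proposition~\ref{prop:insert-tick} with $c=1$. The only (inessential) difference is that you set this up as a mutual induction with Theorem~\ref{Thm:SynthSound}, whereas the paper proves this lemma standalone---the typing facts you would obtain from the soundness theorem (e.g.\ $\jstyping{\Gamma}{\eapp{x}{\hat{a}}}{T}$ in \textsc{(AFill-App)}) are already carried as explicit premises of the rules---and only afterwards derives the synthesis soundness theorem, which appeals to this lemma one-directionally through \textsc{(Syn-Atom)}.
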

\begin{proof}
By induction on the derivation of \jafill{\Gamma}{\hole{e}}{T}{\elets{D}{a}}.
\begin{itemize}
  \item
  $\footnotesize
  \Rule{ASyn-Var}{ \jstyping{\Gamma}{x}{T} }{ \jasynth{\Gamma}{T}{\elets{\cdot}{x}} }
  $
  
  We have $\fold{\elets{\cdot}{x}} = x$ and thus conclude $\jstyping{\Gamma}{x}{T}$ by the premise.
  
  Cases \textsc{(ASyn-True)}, \textsc{(ASyn-False)}, \textsc{(ASyn-Nil)} are similar to this case.
  
  \item
 $\footnotesize
 \Rule{ASyn-Cons}{ \jasynth{\Gamma}{T}{\elets{D_h}{a_h}} \\ \jafill{\Gamma}{\elets{D_h}{\econs{a_h}{\ehole}}}{\trefined{\tlist{T}}{\psi}{\phi}}{\elets{D}{a}}  }{ \jasynth{\Gamma}{\trefined{\tlist{T}}{\psi}{\phi}}{\elets{D}{a}} }
 $  
 
 By I.H. on the second premise.
  
  \item
  $\footnotesize
  \Rule{ASyn-App}{ \jasynth{\Gamma}{\tarrowm{\_}{?}{T}{1}}{\elets{D_1}{x }}  \\ \jafill{\Gamma}{\elets{D_1}{\eapp{x}{\ehole}}}{T}{\elets{D}{x'}} }{ \jasynth{\Gamma}{T}{\elets{D}{x'}} }
  $
  
  By I.H. on the second premise.
  
  \item
  $\footnotesize
  \Rule{AFill-Cons}{ \jatyping{\Gamma}{a_h}{T} \\ \jasynth{\Gamma}{\trefined{\tlist{T}}{\psi'}{\phi'}}{\elets{D}{a_t}} \\ \jstyping{\Gamma}{\fold{\elets{D}{\econs{a_h}{a_t}}}}{\trefined{\tlist{T}}{\psi}{\phi}} \\ \psi' = \subst{\nu+1}{\nu}{\psi} \\ \phi' = \subst{\nu+1}{\nu}{\phi} }{ \jafill{\Gamma}{\econs{a_h}{\ehole}}{\trefined{\tlist{T}}{\psi}{\phi}}{\elets{D}{\econs{a_h}{a_t}}} }
  $
  
  By the third premise.

  \item
  $\footnotesize
  \Rule{AFill-App-SimpAtom}{ \jstyping{\Gamma}{x}{\tarrowm{y}{T_1}{T'}{1}} \\ \tscalar{T_1} \\ \jasynth{\Gamma}{T_1}{\elets{D}{a}} \\ \jstyping{\Gamma}{\fold{\elets{D}{\eapp{x}{a}}}}{T} }{ \jafill{\Gamma,1}{\eapp{x}{\ehole}}{T}{\elets{D; x' \gets \econsume{1}{\eapp{x}{a}} }{x'}} }
  $
    
  Appeal to \propref{insert-tick}.
  
  \item
  $\footnotesize
  \Rule{AFill-App}{ \jstyping{\Gamma}{x}{\tarrowm{\_}{T_1}{T}{1}} \\ T_1~\text{non-scalar} \\ \jsynth{\Gamma}{T_1}{\hat{a}} \\ \jstyping{\Gamma}{\eapp{x}{\hat{a}}}{T} }{ \jafill{\Gamma,1}{\eapp{x}{\ehole}}{T}{\elets{x' \gets \econsume{1}{\eapp{x}{\hat{a}}}}{x'}}}
  $ 
  
  Appeal to \propref{insert-tick}.
  
  \item
  $\footnotesize
  \Rule{AFill-Let}{ \jctxsharing{\Gamma}{\Gamma_1}{\Gamma_2} \\ \jstyping{\Gamma_1}{e_1}{S_1} \\ \jafill{\Gamma_2,\bindvar{x}{S_1}}{\elets{D}{\hole{e_2}}}{T}{\elets{D_2}{a}} }{ \jafill{\Gamma}{\elets{x \gets e_1;D}{\hole{e_2}}}{T}{\elets{ x\gets e_1;D_2}{a}} }
  $
  
  By I.H. on the third premise, we have \[
  \jstyping{\Gamma_2,\bindvar{x}{S_1}}{\fold{\elets{D_2}{a}}}{T}.\]
  
  Since $\fold{\elets{x \gets e_1;D_2}{a}} = \elet{e_1}{x}{\fold{\elets{D_2}{a}}}$, we conclude by \textsc{(T-Let)}.
  
  \item
  $\footnotesize
  \Rule{AFill-Let-Emp}{ \jafill{\Gamma}{\hole{e}}{T}{\elets{D}{a}} }{ \jafill{\Gamma}{\elets{\cdot}{\hole{e}}}{T}{\elets{D}{a}} }
  $
  
  By I.H. on the premise.
  
  \item
  $\footnotesize
  \Rule{AFill-Transfer}{ \jprop{\Gamma}{  \pot{\Gamma} = \pot{\Gamma'} } \\ \jafill{\Gamma'}{\hole{e}}{T}{\elets{D}{a}} }{ \jafill{\Gamma}{\hole{e}}{T}{\elets{D}{a}} }
  $
  
  By I.H. on the second premise, we have $\jstyping{\Gamma'}{\fold{\elets{D}{a}}}{T}$.
  
  Thus we derive $\jstyping{\Gamma}{\fold{\elets{D}{a}}}{T}$ by \textsc{(S-Transfer)}.
  
\end{itemize}
\end{proof}

\begin{lemma}\label{lem:synth}
If \jfill{\Gamma}{\hole{e}}{S}{e}, then \jstyping{\Gamma}{e}{S}.
\end{lemma}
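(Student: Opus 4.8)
\textbf{Proof plan for Lemma~\ref{lem:synth}.}
The plan is to proceed by induction on the derivation of $\jfill{\Gamma}{\hole{e}}{S}{e}$, mirroring the structure of the analogous argument for the atomic synthesis judgment (Lemma~\ref{lem:atomic-synth}), which I will freely invoke whenever a premise of the form $\jafill{\Gamma}{\hole{e}}{T}{\elets{D}{a}}$ appears. The soundness-of-synthesis theorem (\autoref{Thm:SynthSound}) is exactly the instance of this lemma with $\hole{e} = \ehole$, so once the lemma is proved the theorem follows immediately. For each synthesis rule I read off the premises, apply the induction hypothesis to the sub-derivations of the synthesis and hole-filling judgments, and then assemble a typing derivation using the corresponding typing rule from \autoref{fig:typingrules}.

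First I would dispatch the ``leaf'' and structural rules. For \textsc{(Syn-Imp)} the premise $\jprop{\Gamma}{\bot}$ together with the implicit well-formedness of $T$ gives $\jstyping{\Gamma}{\eimp}{T}$ by \textsc{(T-Imp)}. For \textsc{(Syn-Gen)}, the I.H. on $\jsynth{\Gamma,\alpha}{S}{e}$ and the premises $\jval{e}$, $\jsharing{\Gamma,\alpha}{S}{S}{S}$ give $\jstyping{\Gamma}{e}{\forall\alpha.S}$ by \textsc{(S-Gen)}. For \textsc{(Syn-Fix)} and \textsc{(Syn-Abs-Lin)} the I.H. on the body plus the context self-sharing (resp. the trivial multiplicity-$1$ condition) yield the result by \textsc{(T-Fix)} (resp. \textsc{(T-Abs-Lin)} with $m=1$). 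For \textsc{(Syn-Atom)}, the premise $\jasynth{\Gamma}{T}{\elets{D}{a}}$ gives $\jstyping{\Gamma}{\fold{\elets{D}{a}}}{T}$ directly by Lemma~\ref{lem:atomic-synth}.

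Next come the cases that manipulate templates: \textsc{(Syn-Cond)}, \textsc{(Syn-MatL)}, and the ``Fill'' rules. For \textsc{(Fill-Cond)}, the I.H. on the two branch sub-derivations (under contexts extended with $x$ and $\neg x$) combined with $\jatyping{\Gamma}{x}{\tbool}$ yields $\jstyping{\Gamma}{\econd{x}{e_1}{e_2}}{T}$ by \textsc{(T-Cond)}; note $\calI(x) = x$ so the path conditions line up. For \textsc{(Fill-MatL)}, similarly, using the context split, $\jatyping{\Gamma_1}{x}{\tlist{T}}$, the I.H. on the nil- and cons-branches, and \textsc{(T-MatL)} (the refinement $x = 0$ resp.\ $x = x_t + 1$ matches $\calI$ on the scrutinee). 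For \textsc{(Fill-Let)} and \textsc{(Fill-Let-Emp)}, the I.H. on the body under $\Gamma_2, x:S_1$, the typing premise $\jstyping{\Gamma_1}{e_1}{S_1}$, and the context split give $\jstyping{\Gamma}{\elet{e_1}{x}{e_2}}{T}$ by \textsc{(T-Let)}. Finally, \textsc{(Syn-Cond)} and \textsc{(Syn-MatL)} just chain: the first premise is an atomic synthesis judgment handled by Lemma~\ref{lem:atomic-synth}, and the second is a hole-filling judgment handled by the I.H., so the conclusion follows without invoking a typing rule directly.

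The main obstacle I anticipate is bookkeeping rather than conceptual: making sure that the flat-let sugar $\elets{D}{\hole{e}}$ unfolds correctly through the ``Fill'' rules so that the reconstructed expression $e$ really is $\fold{\elets{D}{\hole{e}\text{ completed}}}$, and that at each step the context sharing $\jctxsharing{\Gamma}{\Gamma_1}{\Gamma_2}$ accounts for exactly the potential consumed by the let-bound definitions in $D$. This is where the proof must appeal to Lemma~\ref{lem:atomic-synth} (and, through it, Proposition~\ref{prop:insert-tick} for the $\econsume{1}{\cdot}$ wrappers on applications) to know that the normalized E-terms generated inside $D$ are well-typed in their respective context fragments. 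I expect no difficulty from the typing side, since every synthesis rule was designed as a ``mode-annotated'' version of a typing rule; the care is entirely in threading the contexts and the binding sequences $D$ through the mutual recursion.
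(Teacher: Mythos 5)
Your proposal is correct and follows essentially the same route as the paper's proof: induction on the derivation of $\jfill{\Gamma}{\hole{e}}{S}{e}$, discharging each rule by the corresponding typing rule, handling \textsc{(Syn-Atom)} (and, transitively, the $\econsume{1}{\cdot}$ wrappers via Prop.~\ref{prop:insert-tick}) through Lemma~\ref{lem:atomic-synth}, and for \textsc{(Syn-Cond)}/\textsc{(Syn-MatL)} simply applying the induction hypothesis to the hole-filling premise.
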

\begin{proof}
By induction on the derivation of \jfill{\Gamma}{\hole{e}}{S}{e}.
\begin{itemize}
  \item
  $\footnotesize
  \Rule{Syn-Imp}{ \jprop{\Gamma}{\bot} }{ \jsynth{\Gamma}{T}{\eimp} }
  $
  
  We derive $\jstyping{\Gamma}{\eimp}{T}$ by \textsc{(T-Imp)}.
  
  \item
  $\footnotesize
  \Rule{Syn-Cond}{ \jasynth{\Gamma}{\tbool}{\elets{D}{x}} \\ \jfill{\Gamma}{\elets{D}{\econd{x}{\ehole}{\ehole}}}{T}{e} }{ \jsynth{\Gamma}{T}{e} }
  $
  
  By I.H. on the second premise.
  
  \item
  $\footnotesize
  \Rule{Syn-MatL}{ \jwftype{\Gamma}{T} \\ \jasynth{\Gamma}{\tlist{T}}{\elets{D}{x}} \\ \jfill{\Gamma}{\elets{D}{\ematl{x}{\ehole}{x_h}{x_t}{\ehole}}}{T}{e} }{ \jsynth{\Gamma}{T}{e} }
  $
  
  By I.H. on the third premise.
  
  \item
  $\footnotesize
  \Rule{Syn-Fix}{ \jsynth{\Gamma,\bindvar{f}{(\tarrow{x}{T_x}{T} )},\bindvar{x}{T_x}}{T}{e} \\ \jctxsharing{\Gamma}{\Gamma}{\Gamma} \\ \jwftype{\Gamma}{\p{\tarrow{x}{T_x}{T} } } }{ \jsynth{\Gamma}{(\tarrow{x}{T_x}{T} )}{\efix{f}{x}{e}} }
  $
    
  By I.H. on the first premise, we have $\jstyping{\Gamma,\bindvar{f}{(\tarrow{x}{T_x}{T} )},\bindvar{x}{T_x}}{e}{T}$.
  
  Thus we derive $\jstyping{\Gamma}{\efix{f}{x}{e}}{(\tarrow{x}{T_x}{T} )}$.
  
  \item
  $\footnotesize
  \Rule{Syn-Abs-Lin}{ \jsynth{\Gamma,\bindvar{x}{T_x}}{T}{e} \\ \jwftype{\Gamma}{T_x} }{ \jsynth{\Gamma}{\p{\tarrowm{x}{T_x}{T}{1} }}{\eabs{x}{e}} }
  $
  
  By I.H. on the first premise, we have $\jstyping{\Gamma,\bindvar{x}{T_x}}{e}{T}$.
  
  Thus we derive $\jstyping{\Gamma}{\eabs{x}{e}}{\tarrowm{x}{T_x}{T}{1}}$ by \textsc{(T-Abs-Lin)}.
  
  \item
  $\footnotesize
  \Rule{Syn-Gen}{ \jsharing{\Gamma,\alpha}{S}{S}{S} \\ \jsynth{\Gamma,\alpha}{S}{e} \\ \jval{e} }{ \jsynth{\Gamma}{\forall\alpha.S}{e} }
  $
  
  By I.H. on the second premise, we have $\jstyping{\Gamma,\alpha}{e}{S}$.
  
  Thus we derive $\jstyping{\Gamma}{e}{\forall\alpha.S}$ by \textsc{(S-Gen)}.
  
  \item
  $\footnotesize
  \Rule{Fill-Cond}{ \jatyping{\Gamma}{x}{\tbool} \\ \jsynth{\Gamma,x}{T}{e_1} \\ \jsynth{\Gamma,\neg x}{T}{e_2} }{ \jfill{\Gamma}{\econd{x}{\ehole}{\ehole}}{T}{\econd{x}{e_1}{e_2}} }
  $
  
  By I.H. on the second premise, we have $\jstyping{\Gamma,x}{e_1}{T}$.
  
  By I.H. on the third premise, we have $\jstyping{\Gamma,\neg x}{e_2}{T}$.
  
  Thus we derive $\jstyping{\Gamma}{\econd{x}{e_1}{e_2}}{T}$ by \textsc{(T-Cond)}.
  
  \item
  $\footnotesize
  \Rule{Fill-MatL}{ \jctxsharing{\Gamma}{\Gamma_1}{\Gamma_2} \\ \jatyping{\Gamma_1}{x}{\tlist{T}} \\ \jsynth{\Gamma_2,x=0}{T}{e_1} \\ \jsynth{\Gamma_2,x_h:T,x_t:\tlist{T},x=x_t+1}{T}{e_2} }{ \jfill{\Gamma}{\ematl{x}{\ehole}{x_h}{x_t}{\ehole}}{T}{\ematl{x}{e_1}{x_h}{x_t}{e_2}} }
  $
  
  By I.H. on the third premise, we have $\jstyping{\Gamma_2,x=0}{e_1}{T}$.
  
  By I.H. on the fourth premise, we have $\jstyping{\Gamma_2,x_h:T,x_t:\tlist{T},x=x_t+1}{e_2}{T}$.
  
  Thus we derive $\jstyping{\Gamma}{\ematl{x}{e_1}{x_h}{x_t}{e_2}}{T}$.
  
  \item
  $\footnotesize
  \Rule{Fill-Let}{ \jctxsharing{\Gamma}{\Gamma_1}{\Gamma_2} \\ \jstyping{\Gamma_1}{e_1}{S_1} \\ \jfill{\Gamma_2,\bindvar{x}{S_1}}{\elets{D}{\hole{e_2}}}{T}{e_2} \\ \jwftype{\Gamma}{T} }{ \jfill{\Gamma}{\elets{D; x \gets e_1}{\hole{e_2}}}{T}{\elet{e_1}{x}{e_2}} }
  $
  
  By I.H. on the third premise, we have $\jstyping{\Gamma_2,\bindvar{x}{S_1}}{e_2}{T}$.
  
  Thus we derive $\jstyping{\Gamma}{\elet{e_1}{x}{e_2}}{T}$ by \textsc{(T-Let)}.
  
  \item
  $\footnotesize
  \Rule{Fill-Let-Emp}{ \jfill{\Gamma}{\hole{e}}{T}{e} }{ \jfill{\Gamma}{\elets{\cdot}{\hole{e}}}{T}{e} }
  $
  
  By I.H. on the premise.
  
  \item
  $\footnotesize
  \Rule{Syn-Atom}{ \jasynth{\Gamma}{T}{\elets{D}{a}} }{ \jsynth{\Gamma}{T}{\fold{\elets{D}{a}}} }
  $
  
  Appeal to \lemref{atomic-synth}.
\end{itemize}
\end{proof}

\begin{theorem}[Soundness of Synthesis]\label{Thm:SynthSound}
If \jsynth{\Gamma}{S}{e}, then \jstyping{\Gamma}{e}{S}.
\end{theorem}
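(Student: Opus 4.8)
The statement \jsynth{\Gamma}{S}{e} \(\Rightarrow\) \jstyping{\Gamma}{e}{S} is exactly \lemref{synth}, so the proof amounts to assembling the lemmas already proved in this appendix. The plan is to prove the two mutually recursive judgments together by simultaneous induction on the derivation: \lemref{atomic-synth} for atomic synthesis \jafill{\Gamma}{\hole{e}}{T}{\elets{D}{a}} \(\Rightarrow\) \jstyping{\Gamma}{\fold{\elets{D}{a}}}{T}, and \lemref{synth} for plain synthesis \jfill{\Gamma}{\hole{e}}{S}{e} \(\Rightarrow\) \jstyping{\Gamma}{e}{S}. The structure mirrors the synthesis rules of \autoref{fig:synthrules} one-for-one: each synthesis rule has a corresponding typing rule obtained by erasing the template machinery, so the inductive step is to invoke the induction hypothesis on each synthesis premise, recover the needed typing premises, and apply the matching \typesys rule.

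First I would handle the atomic synthesis judgment. The base cases \textsc{(ASyn-Var/True/False/Nil)} are immediate since \(\fold{\elets{\cdot}{a}} = a\) and the premise already gives the typing. The cases \textsc{(ASyn-Cons)}, \textsc{(ASyn-App)}, \textsc{(AFill-Let-Emp)} just forward the IH on the recursive premise. \textsc{(AFill-Cons)} reads off the third premise directly. \textsc{(AFill-Let)} needs \textsc{(T-Let)} applied to the IH on the body together with the typing premise for \(e_1\) and the context-sharing premise. \textsc{(AFill-Transfer)} applies \textsc{(S-Transfer)} to the IH. The only cases requiring real work are \textsc{(AFill-App)} and \textsc{(AFill-App-SimpAtom)}, which wrap the synthesized application in \(\econsume{1}{\cdot}\) and a fresh let-binding; here I would appeal to the key auxiliary fact \propref{insert-tick}, which states that if \jstyping{\Gamma}{\fold{\elets{D}{e}}}{T} then \jstyping{\Gamma,c}{\fold{\elets{D; x' \gets \econsume{c}{e}}{x'}}}{T} for \(c \ge 0\); that proposition itself is proved by a short induction on the length of \(D\) using \textsc{(T-Consume-P)}, \textsc{(T-Let)}, and \textsc{(T-Transfer)}.

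Then I would do the plain synthesis judgment. \textsc{(Syn-Imp)} is \textsc{(T-Imp)}. \textsc{(Syn-Cond)}, \textsc{(Syn-MatL)} forward the IH on the hole-filling premise. \textsc{(Syn-Fix)}, \textsc{(Syn-Abs-Lin)}, \textsc{(Syn-Gen)} apply the IH on the body and then \textsc{(T-Fix)}, \textsc{(T-Abs-Lin)}, \textsc{(S-Gen)} respectively. \textsc{(Fill-Cond)}, \textsc{(Fill-MatL)}, \textsc{(Fill-Let)} use the IH on each branch/body premise plus the atomic-typing and context-sharing premises, closing with \textsc{(T-Cond)}, \textsc{(T-MatL)}, \textsc{(T-Let)}. \textsc{(Fill-Let-Emp)} forwards the IH. \textsc{(Syn-Atom)} appeals to \lemref{atomic-synth}. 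Note the synthesis rules as presented in the appendix carry slightly more side conditions than in \autoref{fig:synthrules} (extra \jwftype{} premises, the \jval{e} premise in \textsc{(Syn-Gen)}), which is precisely what makes the corresponding typing rules applicable.

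\textbf{Main obstacle.} The routine cases are genuinely routine, so the real content is in the \textsc{(AFill-App)}/\textsc{(AFill-App-SimpAtom)} handling and specifically \propref{insert-tick}: one must carefully track how inserting a \(\econsume{1}{\cdot}\) inside a flat-let prefix interacts with context splitting across the prefix. The subtlety is that \(\fold{\elets{D}{e}}\) unfolds into a nest of lets whose typing derivation repeatedly splits the context via \textsc{(T-Let)}, and the added free-potential binding \(c\) must be threaded down to the innermost position where the \econsumename sits — which is why the induction is on the length of \(D\) and why \textsc{(S-Transfer)} is needed to reshuffle \(c\) past the intermediate splits. Once \propref{insert-tick} is in hand, everything else is a mechanical rule-by-rule match, and the simultaneous induction on synthesis/atomic-synthesis derivations closes cleanly because every recursive premise is structurally smaller.
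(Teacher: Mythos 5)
Your proposal is correct and follows essentially the same route as the paper: the paper proves the theorem via a lemma for each of the two synthesis judgments (the atomic one first, using exactly the tick-insertion proposition you identify for the \textsc{(AFill-App)} cases, then the main one by rule-by-rule induction appealing to the atomic lemma in \textsc{(Syn-Atom)}). The only cosmetic difference is that the paper avoids a fully simultaneous induction because the \textsc{(AFill-App)} rules already carry the needed typing judgment as an explicit premise, but your mutual-induction framing would work equally well.
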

\begin{proof}
By Lemma~\ref{lem:synth}.
\end{proof}

\fi

\end{document}
